\newtheorem{theorem}{Theorem}
\newtheorem{proposition}{Proposition}
\newtheorem{corollary}{Corollary}
\newtheorem*{corollary*}{Corollary}
\newtheorem*{theorem*}{Theorem}
\def\theequation{{\arabic{section}}.{\arabic{equation}}}
\def\cD{\mathcal{D}}
\newcommand{\nc}{\newcommand}
\nc{\dsp}{\displaystyle}
\nc{\R}{\Bbb{R}}
\nc{\Z}{\Bbb{Z}}
\nc{\Pp}{\Bbb{P}}
\nc{\Ap}{\Bbb{A}}
\nc{\Wp}{\Bbb{W}}
 \nc{\brho}{\boldsymbol \rho}
\nc{\va}{\vec{\boldsymbol \mu}}
\nc{\ve}{\vec{\boldsymbol \epsilon}}
\nc{\bk}{{\bf k}}
\nc{\vrho}{\vec{\brho}} 
\nc{\vr}{\vec{\bf r}}
\nc{\bx}{{\bf x}}
\nc{\vx}{\vec{\bf x}}
\nc{\om}{\omega}
\nc{\brhoi}{\brho^{\cal I}}
\nc{\bzetai}{\bzeta^{\cal I}}
\nc{\vzeta}{\vec{\bzeta}}
\nc{\vzetai}{\vec{\bzeta}^{\cal I}}
\nc{\cI}{{\cal I}}
\nc{\cJ}{{\cal J}}
\nc{\cF}{{\cal F}}
\nc{\cW}{{\cal W}}
\nc{\cA}{{\cal A}}
\nc{\cL}{{\cal L}}
\nc{\cS}{{\cal S}}
\nc{\cC}{{\cal C}}
\nc{\cN}{{\cal N}}
\nc{\vrhoi}{\vec{\brho}^{\,\cal I}}
\nc{\xii}{\xi^{\cI}}
\nc{\etai}{\eta^{\cI}}
\nc{\la}{\lambda}
\nc{\de}{\delta}
\nc{\ep}{\varepsilon}
\nc{\vu}{\vec{\bf u}}
\nc{\bu}{{\bf u}}
\nc{\vui}{\vec{\bf u}^{\cI}}
\nc{\bui}{{\bf u}^{\cI}}
\nc{\bt}{{\bf t}}
\nc{\vt}{\vec{\bt}}
\nc{\bn}{{\bf n}}
\nc{\vn}{\vec{\bn}}
\nc{\bm}{{\bf m}}
\nc{\vm}{\vec{\bm}}
\nc{\vrp}{\vec{{\bf r}'}}
\nc{\vrc}{\vec{{\bf r}^c_p}}
\nc{\ts}{\tilde s}
\nc{\os}{\overline s}
\nc{\tom}{\tilde \om}
\nc{\tO}{\tilde \Omega}
\nc{\tS}{\tilde S}
\nc{\oS}{\overline S}
\nc{\vrhos}{\vrho_{\star}}
\nc{\vrhosi}{\vrho_{\star}^{\cI}}
\nc{\brhos}{\brho_{\star}}
\nc{\brhosi}{\brhos^{\cI}}
\nc{\vms}{\vm_{\star}}
\nc{\vmi}{\vm_{_{\cI}}}
\nc{\vM}{\vec{\bf M}}
\nc{\vMi}{\vM_{_{\cI}}}
\nc{\Ppi}{\Pp_{\cI}}
\nc{\vxi}{\vec{\bxi}}
\nc{\bK}{{\bf K}}
\nc{\bmi}{\bm_{_{\cI}}}
\nc{\Pppi}{\Ppi^p}
\nc{\be}{{\bf e}}
\nc{\bep}{{\bf e}^p}
\renewcommand{\hat}{\widehat}
\DeclareMathOperator{\Tr}{Tr}
\begin{document}
	\title{Synthetic aperture imaging and motion estimation using tensor methods}\author{Matan Leibovich\footnotemark[1], George Papanicolaou\footnotemark[2], and Chrysoula Tsogka\footnotemark[3]
	}\renewcommand{\thefootnote}{\fnsymbol{footnote}}
	\footnotetext[2]{Institute for Computational and Mathematical Engineering, Stanford University, Stanford, CA 94305. \\\   (matanle@stanford.edu)} 
	\footnotetext[2]{Department of
		Mathematics, Stanford University, Stanford, CA 94305.
		(papanicolalou@stanford.edu)}\footnotetext[3]{Department of Applied Mathematics,
  University of California, Merced, 5200 North Lake Road, Merced, CA
  95343 (ctsogka@ucmerced.edu)}\date{}
	\maketitle  \date{}
	\maketitle 
	
	%\tableofcontents
	
% ------------------------------
\begin{abstract}
		We consider a synthetic aperture imaging  configuration, such as synthetic aperture radar (SAR), where we want to first separate reflections from moving targets from those coming from a stationary background, and then to image separately the moving and the stationary reflectors. For this purpose, we introduce a representation of the data as a third order tensor formed from data coming from partially overlapping sub-apertures. We then apply a tensor robust principal component analysis (TRPCA) to the tensor data which separates them into the parts coming from the stationary and moving reflectors. Images are formed with the separated data sets. Our analysis shows a distinctly improved performance of TRPCA, compared to the usual matrix case. In particular, the tensor decomposition can identify motion features that are undetectable when using the conventional motion estimation methods, including matrix RPCA. We illustrate the performance of the method with numerical simulations in the X-band radar regime.
\end{abstract}
% ------------------------------
%%%%%%%%%%%%%%%%%%%%%%%%%%%%%%%%%%%%%%%%%%%%%%%%%%%%%%%%%%%%%%%%%	
\section{Introduction}
	The problem of separating the echoes of moving targets from those of a stationary background in synthetic aperture radar (SAR) imaging is important because different imaging methods need to be employed in each case.  We show that the separation is improved considerably when the data is recast in tensor form, and the convex optimization problem of tensor robust principal component analysis (TRPCA) is used. TRPCA requires the generalization of matrix norms to tensor form, the challenge being to use a suitable tensor nuclear norm. We use a Fourier based tensor nuclear norm which is well suited for capturing the motion over multiple scales of variation. We first review briefly SAR imaging,  then the application of RPCA for motion detection, and follow with a statement of the main results of this paper. 
	
	\subsection{The synthetic aperture radar imaging problem}
	Synthetic aperture radar is used extensively in satellite and airborne imaging for many different applications \cite{moreira2013tutorial}. The main idea behind SAR is to combine coherently the information obtained with a single transmitter-receiver that is probing the medium from multiple locations. Thus, one can form a synthetic aperture and achieve high resolution images of reflectivity, even though the single receiver is incapable of resolving the scattered wavefronts. This is particularly relevant for airborne radar platforms.
	
	Data is collected by a moving platform, with a {\em slow-time} $s$ dependent position $\vr(s)$, emitting a sequence of {\em fast-time} broadband pulses $f(t)$, and recording the echoes corresponding to each pulse. The pulses have a limited time support, with a pulse repetition interval $\Delta s$, so that echoes from different pulses do not overlap.  The data collected are denoted by $D(s,t)$, the $t$- dependent series of echoes received from a pulse transmitted at $\vr(s)$. A schematic of a SAR imaging configuration is shown in Figure~\ref{fig:alpha}.
	
\begin{figure}[htbp]
	\centering
	\hspace{0em}\includegraphics[width=0.5\columnwidth]{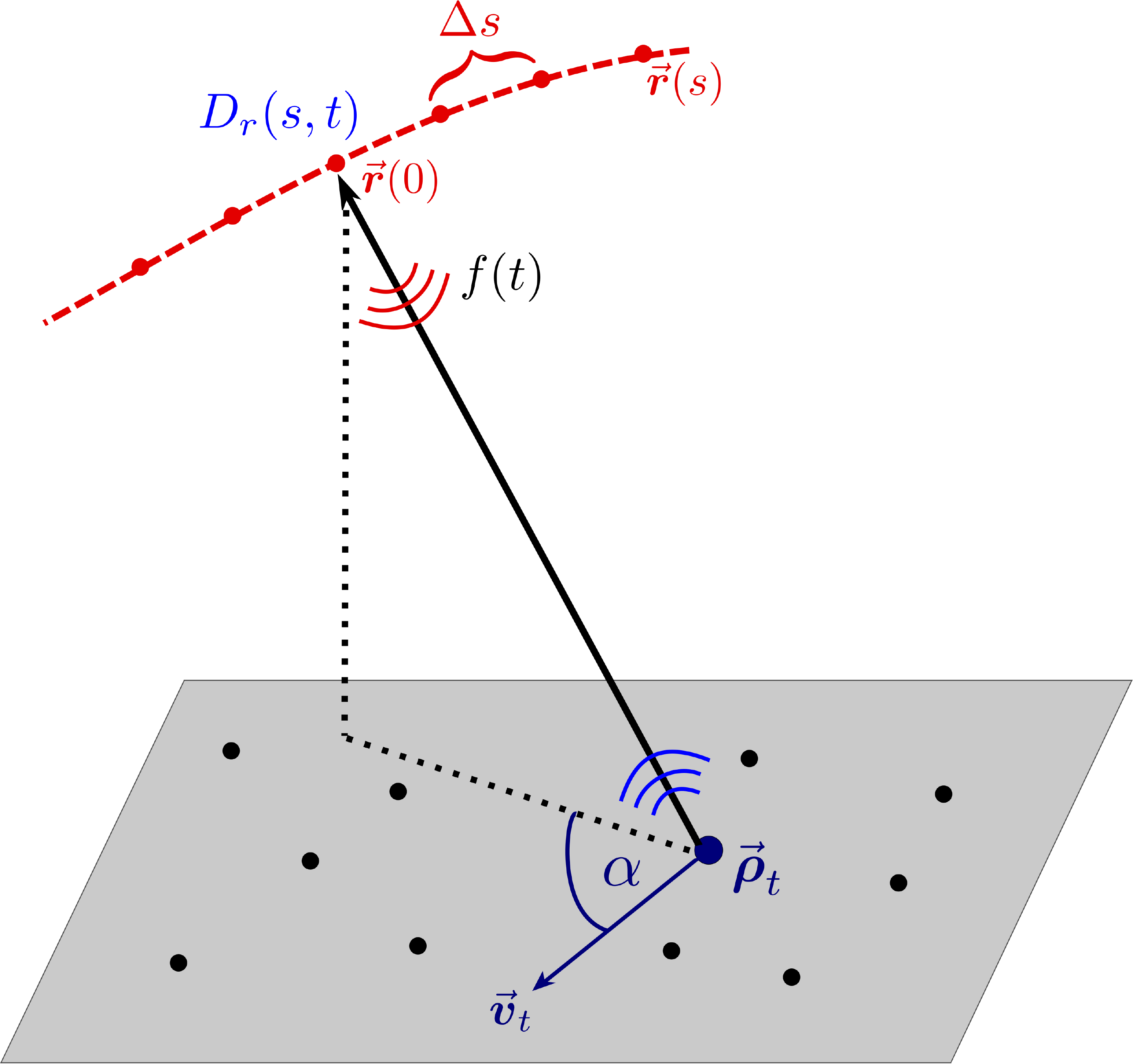}
	\caption{Synthetic aperture imaging configuration. At every time $s$ the airborne platform emits a pulse $f(t)$ and records the reflections $D(t,s)$ coming from the imaging region. There are 10 stationary point scatterers, and a single  moving target. The angle $\alpha$ in which the target is moving is relative to the vertical platform-target  plane at $s=0$, $\vr(0)$.}  	
	\label{fig:alpha}
\end{figure}
To maximize the power emitted, the probing pulses are long, of support $t_{c} \gg 1/B$ where $B$
denotes the bandwidth. They are linear frequency modulated chirps. To 
re-concentrate the energy of the reflected echoes to an interval of size $1/B$ 
they are convolved with the complex conjugate of the time-reversed 
emitted pulse. This is the pulse compression step.
Since the reflections relevant for imaging cover a limited area of support $\ll\Delta s$, a range compression is done as well, that is, 
we remove from the data the large phase $\omega \tau(s,\vrho_o)$ where $\vrho_o$ is a reference point.  
Pulse and range compression together give the down-ramped data, 
\begin{equation}
D_r(s,t) = \int \mathrm{d}t'
D\left(s,t-t'+\tau(s,\vrho_o)\right)
\overline{f(-t')} = \int \frac{\mathrm{d}\omega}{2 \pi} \overline{
	\widehat{f}(\omega)} \widehat{D}(s,\omega) e^{-\mathrm{i} \omega
	[t+\tau(s,\vrho_o)]}.
\label{eq:down-ramping}
\end{equation}
Here, $\tau(s,\vrho_o)$ is the round-trip travel time between the platform location at slow time $s$, $\vr(s)$ and
the reference point location $\vrho_o$,   
\begin{equation}
\dsp \tau(s,\vrho_o) = 2 \frac{\| \vr(s) - \vrho_o \|}{c},
\label{eq:tau}
\end{equation}
with $c$ the speed of light. 

The SAR data matrix $\cD \in {\mathbb{R}}^{(n+1)\times(m+1)}$ is actually obtained in discrete samples of $D_r(s,t)$. 
\begin{equation}
\cD_{il} =D_r\left(s_{i-\frac{n}{2}-1},
t_{l-1}\right), \quad i = 1, \ldots, n+1, ~ ~ l = 1, 
\ldots, m + 1,
\label{eq:MOD6}
\end{equation}
with slow times $s_j$ defined by
\begin{equation}
s_j = j \Delta s,\quad j= -n/2,\ldots, n/2 
\end{equation}
and fast times $t_l$ defined as
\begin{equation}
t_l = l \Delta t,\quad l=1, \ldots, m. 
\end{equation}
Here we assumed that the pulse repetition rate $\Delta s$ is an integer multiple of $\Delta t$ and set $m= \Delta s/ \Delta t$.
The SAR image is formed by summing coherently the down-ramped data $D_r(s_j,t)$ 
back-propagated to the imaging point $\vrho$ using the travel times differences 
$\tau(s_j,\vrho)-\tau(s_j,\vrho_o)$,
\begin{equation}
\dsp  I^{\text SAR}(\vrho) = \sum_{j=-n/2}^{n/2} 
D_r(s_j,\tau(s_j,\vrho)-\tau(s_j,\vrho_o)) .
\label{eq:SARfunctional}
\end{equation}

The SAR image processing \eqref{eq:SARfunctional} assumes that only reflections from stationary targets are contained in the down-ramped data $D_r(s,t)$. Consequently, if moving targets, as illustrated in Figure~\ref{fig:alpha},  are present in the region to be imaged their reflections are not correctly back-propagated and this results in blurred images affected by the reflectivity and the velocity of the moving targets. For a complex scene with many stationary and moving targets the image may be severely distorted and neither the stationary nor the moving targets may be imaged or tracked. To address this issue several motion estimation and separation strategies have been developed, which we now review briefly.

The oldest and most widely used approach is the Displaced Phase Center Antenna (DPCA) method in which two synchronized antennas are used, following the same trajectory with a small time delay. By subtracting the data traces collected at the two antennas the echoes due to the stationary background are essentially eliminated.  This approach does not aim at image formation as part of motion detection, but requires the necessary hardware to be in place to record the extra data. We refer to the classical handbook on SAR \cite{stimson} and a review report from the Lincoln Laboratory on this technique \cite{muehe2000displaced}, which has been used to improve the performance of moving-target-indicators radars since the 1950s. 

Other well known approaches are autofocus based algorithms \cite{fienup, barbarossa1998ambiguityautofocus,li2007} or more generally space-time adaptive processing algorithms \cite{ender1996, wang2006}.  For a recent review on sparsity driven techniques  for SAR imaging of scenes containing moving objects we refer to \cite{Willsky2014}. All these algorithms for moving target detection rely on forming a preliminary image first, and then detecting motion by sharpening features in the image. 

In this paper we take a different approach and rely on robust principal component analysis to solve the SAR data separation problem. Our approach exploits properties of the raw-data matrix so as to detect motion. It does not require the formation of the SAR image or the use of any special hardware. Only the standard monostatic single transmitter/receiver SAR data are used.  
Robust principal component analysis is traditionally used to separate signals from noise. Here we use to it to decompose a matrix into its low rank and sparse parts. The sparse part is not noise as in traditional RPCA \cite{candesRPCA, Zhou2010StablePC} but it is the signal corresponding to the moving targets echoes. The idea of using RPCA for SAR data separation was first proposed in \cite{borcea2013synthetic} and further developed and analyzed in \cite{leib2018RPCA} where optimal parameters were derived for achieving robust separation in SAR. We explain next how RPCA can be used in SAR.
	
\subsection{Robust principal component analysis for SAR data separation}

RPCA, or low rank plus sparse decomposition, was originally applied in video processing \cite{candesRPCA}. RPCA uses the fact that the moving targets and stationary background would generate data structures with different spectral properties. Indeed, the background data form a low rank matrix, while the moving objects echoes correspond to a sparse matrix. These {\em data structures} can be decomposed by solving the convex  optimization problem 	
\begin{equation}
\begin{split}
& \min_{L,S\in\mathbb C^{n_1\times n_2}}
\quad ||L||_*+ \eta ||S||_1\\
& \text{subject to} \quad L+S=D. 
\end{split}
\label{pca2}
\end{equation}
Here $||L||_*$ denotes the nuclear norm, that is the sum of the singular values of $L$, and $||S||_1$ is the matrix $\ell_1$-norm of $S$.  Assuming the matrix $D \in \mathbb{R}^{n_1 \times n_2}$ is the sum of a low rank matrix, $L_o$, and a sparse matrix, $S_o$, then, under some additional  conditions (cf. \cite{candesRPCA}), this optimization \eqref{pca2} recovers $L_o$ and $S_o$ exactly. 
	
RPCA has found applications in a variety of problems in imaging and image processing, such as denoising, feature extraction, and data recovery \cite{guo2016video,cetin2019joint,waters2011sparcs}. The main idea in the application of RPCA to the SAR problem is that one can identify the stationary background as the low rank component of the SAR data matrix, and the moving targets as the sparse component \cite{borcea2013motion}. 	
\begin{figure}[htbp!]
\begin{subfigure}[t]{0.325\textwidth}
	\includegraphics[width=1\columnwidth]{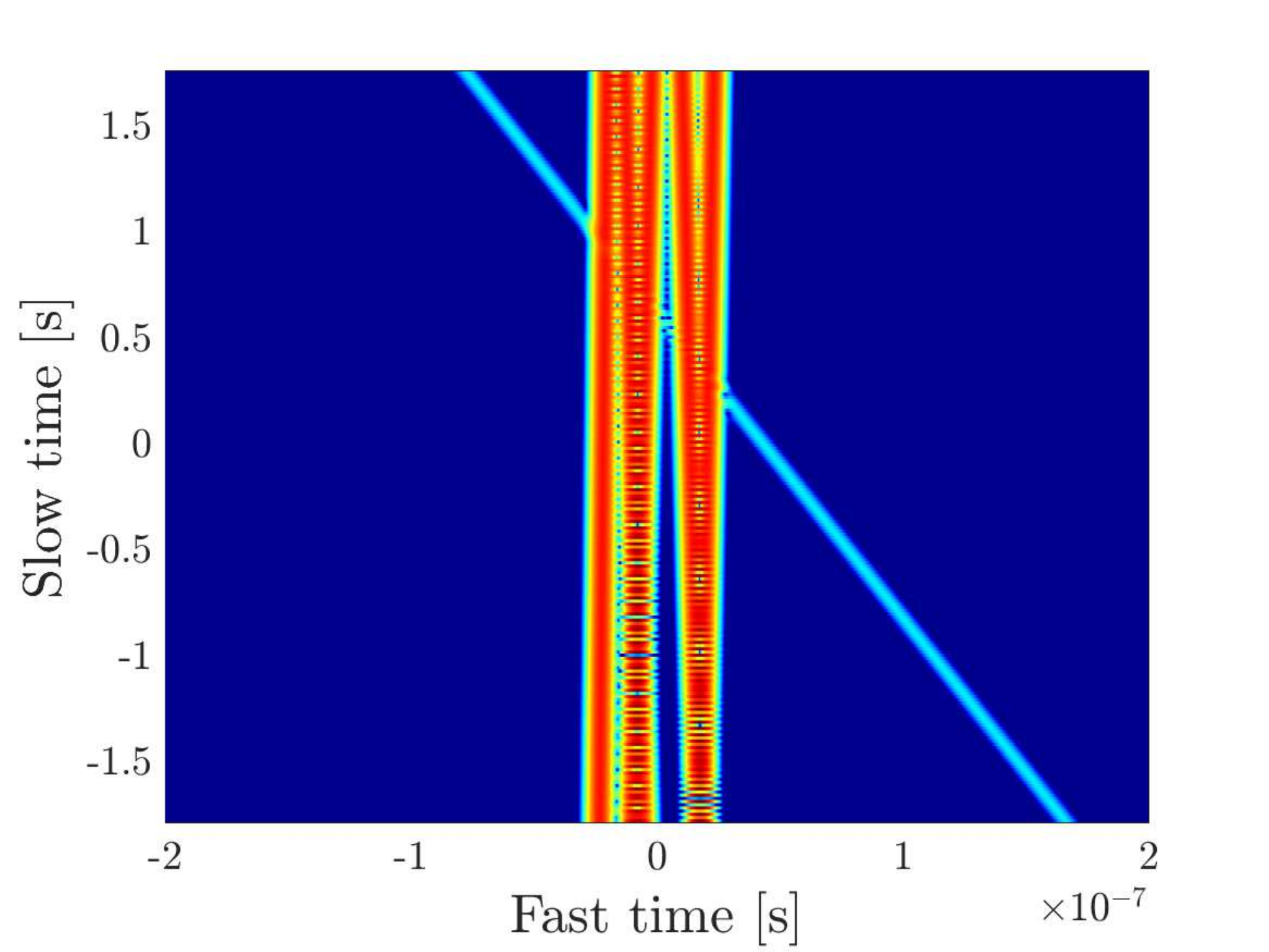}
	\caption{} 
\end{subfigure}
\begin{subfigure}[t]{0.325\textwidth}
	\includegraphics[width=1\columnwidth]{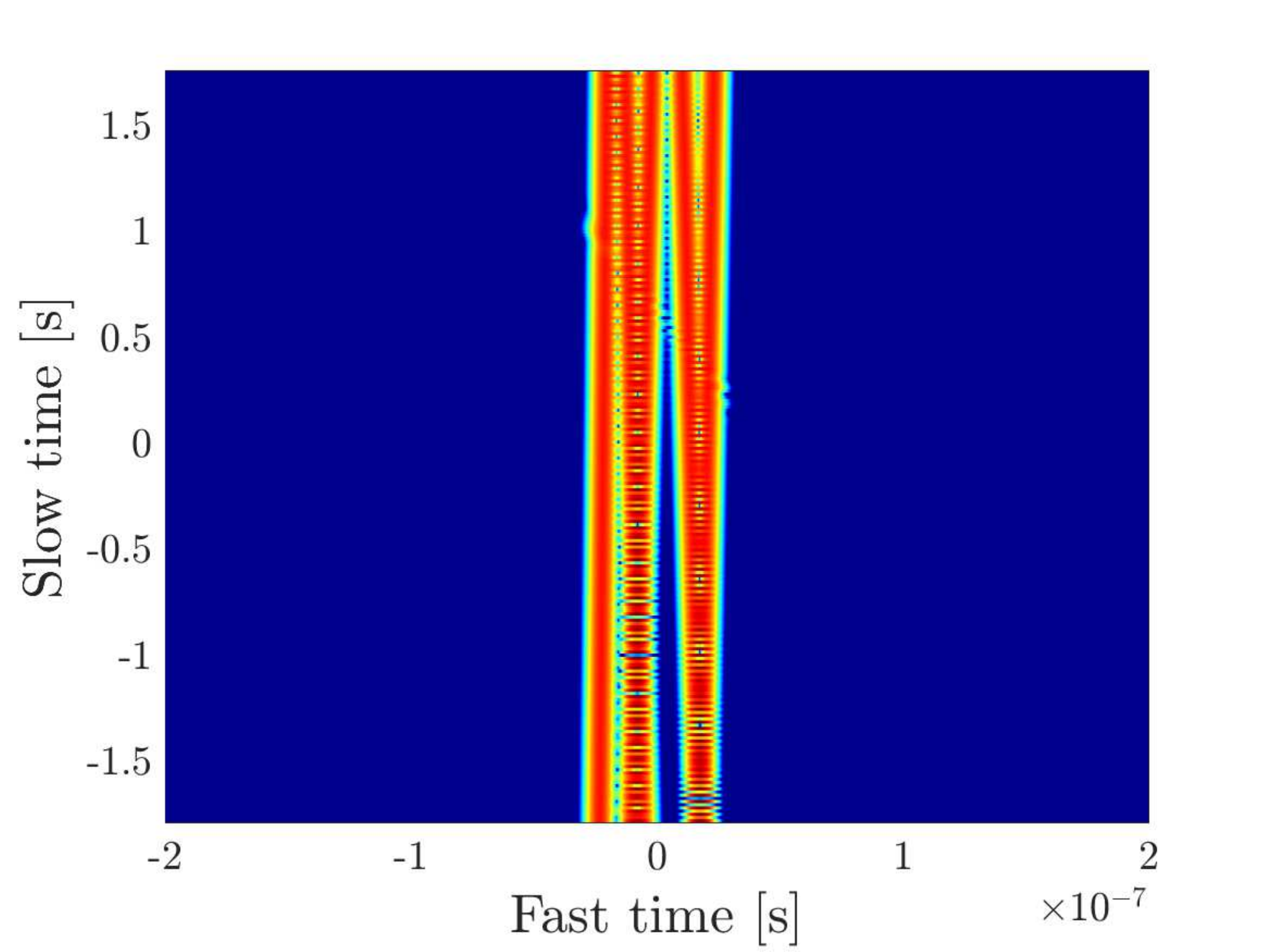}
	\caption{} 
\end{subfigure}
\begin{subfigure}[t]{0.325\textwidth}
	\includegraphics[width=1\columnwidth]{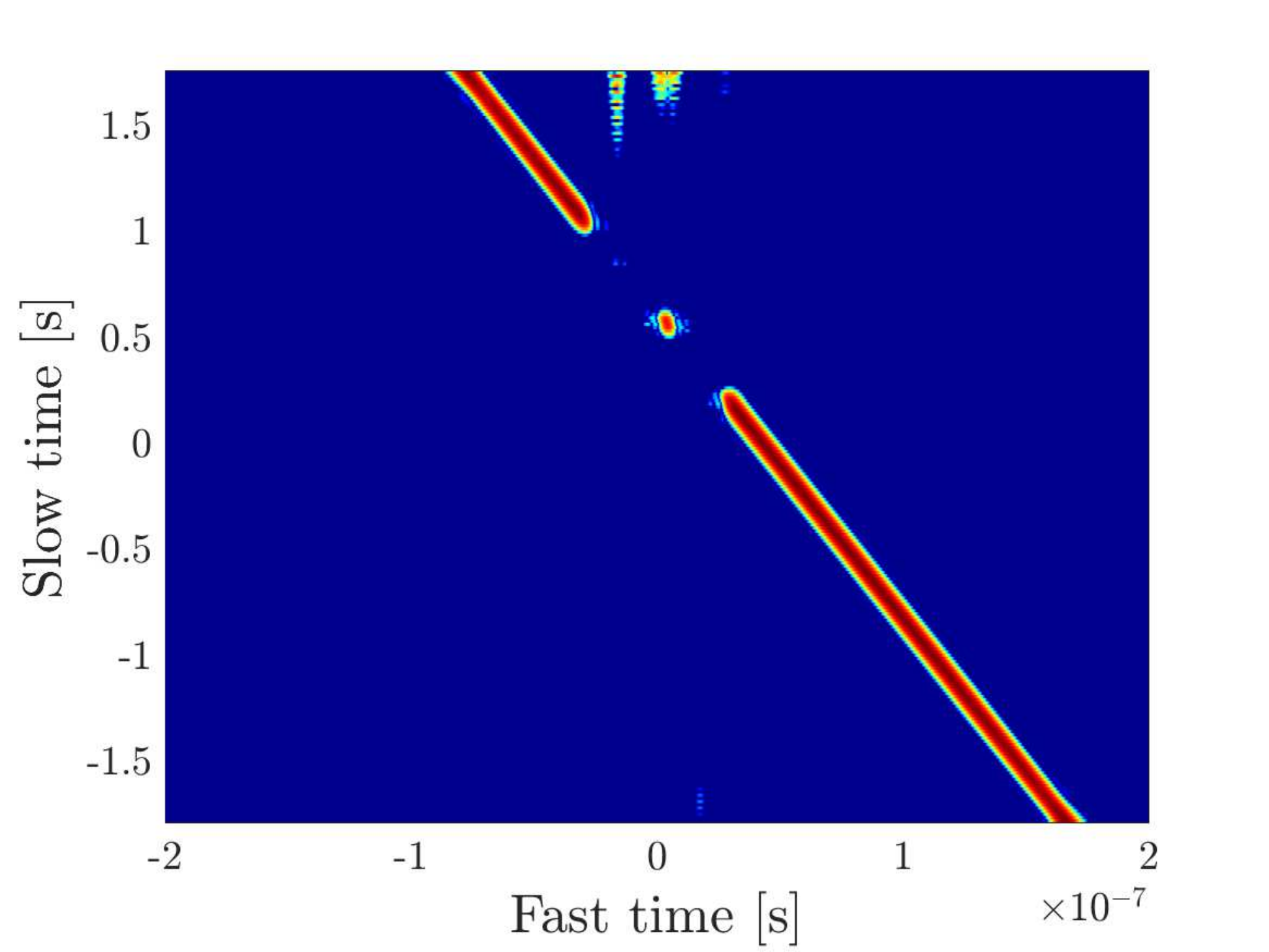}
	\caption{} 
\end{subfigure}
\caption{Example of RPCA applied to the data matrix $D_r(s,t)$. The moving target has velocity $\pmb{v}_t=15$m/s, $\alpha = 0$ and there are 10 stationary targets.
The moving target's reflectivity is 10\% of the reflectivity of the stationary scatterers. $(a)$ Original data matrix. The echoes corresponding to the moving target is weak but still visible in the data; $(b)$ RPCA: the low rank part for the stationary background; $(c)$ RPCA: the sparse part for the echoes from the moving target.  We observe that good separation is achieved with RPCA when using the optimal value for $\eta$ in \eqref{pca2}. This is the case for fast moving targets as in this example.}
\label{fig:DLS_opt}
\end{figure}

In \cite{leib2018RPCA}, we explored the performance of RPCA for the SAR data problem. We showed that one can determine optimal parameters for achieving robust separation. An example of SAR data separation achieved using RPCA  is illustrated in Figure \ref{fig:DLS_opt}. Here we use the simulation setup shown in Figure \ref{fig:alpha} with the moving target velocity $\pmb{v}_t=15$m/s and $\alpha = 0$. The moving target's reflectivity is 10\% of the reflectivity of the stationary scatterers. We observe that good separation is achieved with RPCA when using the optimal value for $\eta$ in \eqref{pca2}. This is the case for fast moving targets as in this example. 	
RPCA has proven to be an efficient way to detect and separate moving targets in SAR data. The nuclear norm is a good indicator of motion, since the data traces associated with moving targets are supported over a larger number of columns, compared to stationary ones. However, the algorithm has its limitations, which we discuss next. 
	
The column support of the target's echoes is determined by the possible values that the travel time difference, $\Delta\tau(s_j)=\tau(s_j,\vrho)-\tau(s_j,\vrho_o)$, takes in the aperture. 
In some cases, motion does not necessarily translate to an increased column support. 
More precisely, it was shown in 
\cite{leib2018RPCA} that the performance of RPCA depends on $N(\vec{\pmb{v}}_t)$, which is an estimate of the number of columns spanned by the target's echoes. 
To first order, we can approximate $N(\vec{\pmb{v}}_t)$ by
\begin{equation}
\begin{split}
N(\vec{\pmb{v}}_t)&\approx\frac{4S(a) }{\Delta t }\frac{\vr(0)-\vrho_o}{\|\vr(0)-\vrho_o\|}\cdot \frac{\vec{\pmb{v}}_t}{c}\\
&=\frac{4S(a) }{\Delta t }\frac{\pmb{v}_t}{c}\cos\alpha,
\end{split}
\label{eq:Nvt_1}
\end{equation}
where $S(a)$ is the total slow time aperture size. 
We observe that $N(\vec{\pmb{v}}_t)$ depends not only on the magnitude of the moving target's velocity, $\pmb{v}_t$, but also on the relative direction in which the target is moving, with respect to the platform, $\alpha$. The direction $\alpha=0$ is for targets moving parallel to the direction to the platform. Their traces will exhibit the largest variance in the value of $\Delta \tau(s)$. The direction $\alpha=\pi/2$ describes targets moving perpendicular to the direction to the platform, and their $\Delta \tau(s)$ will show much lass variance as a function of $s$. 

As can be seen in Figure~\ref{fig:D_alpha_ex}, targets moving at directions with $\alpha\neq0$ that exhibit smaller variations in the values $\Delta \tau(s)$ will be harder to detect using matrix RPCA. 
On the other hand, the associated data traces still behave differently than the ones of a stationary background. Specifically we can see that the traces are non linear.
\begin{figure}[htbp!]
	\centering
	\begin{subfigure}[t]{0.25\textwidth}
		\includegraphics[width=1\columnwidth]{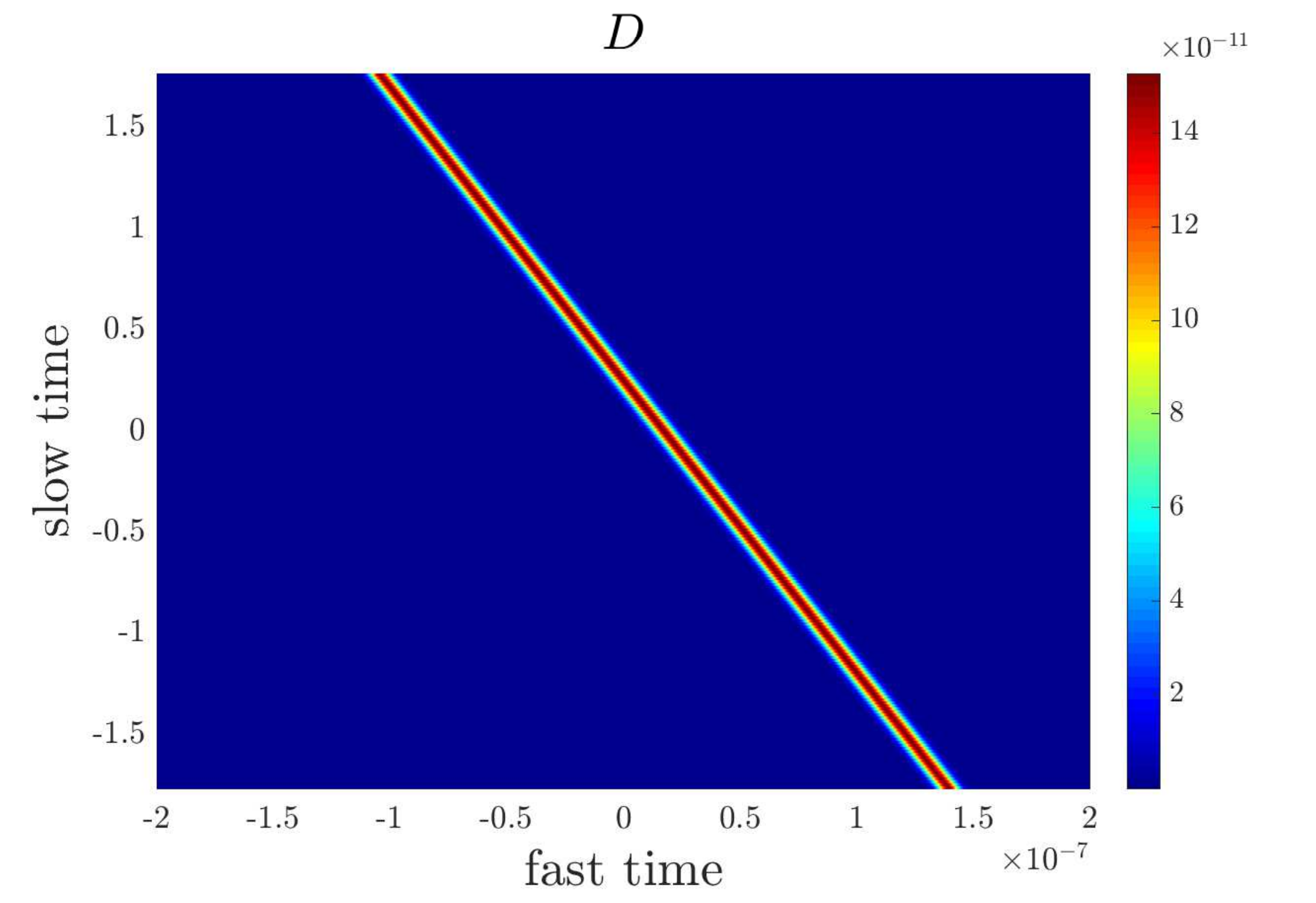}
		\caption{$|D_r(s,t)|$, $\alpha=0$}
		\label{fig:D_ex_alpha_0}
	\end{subfigure}
	\begin{subfigure}[t]{0.25\textwidth}
		\includegraphics[width=1\columnwidth]{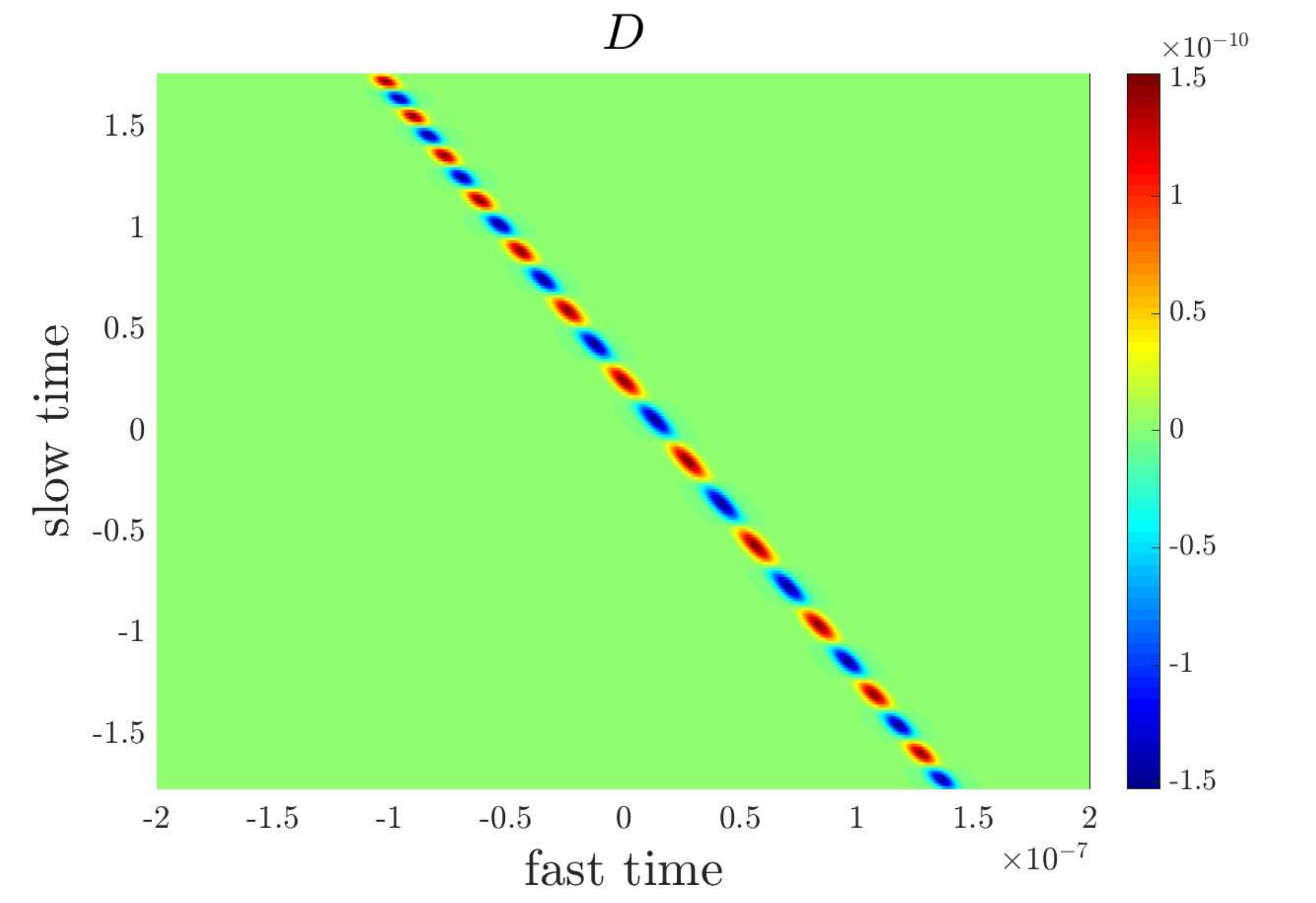}
		\caption{$\Re \{D_r(s,t)\}$, $\alpha=0$}
		\label{fig:D_ex_alpha_0_R}
	\end{subfigure}
	
	\begin{subfigure}[t]{0.25\textwidth}
		\includegraphics[width=1\columnwidth]{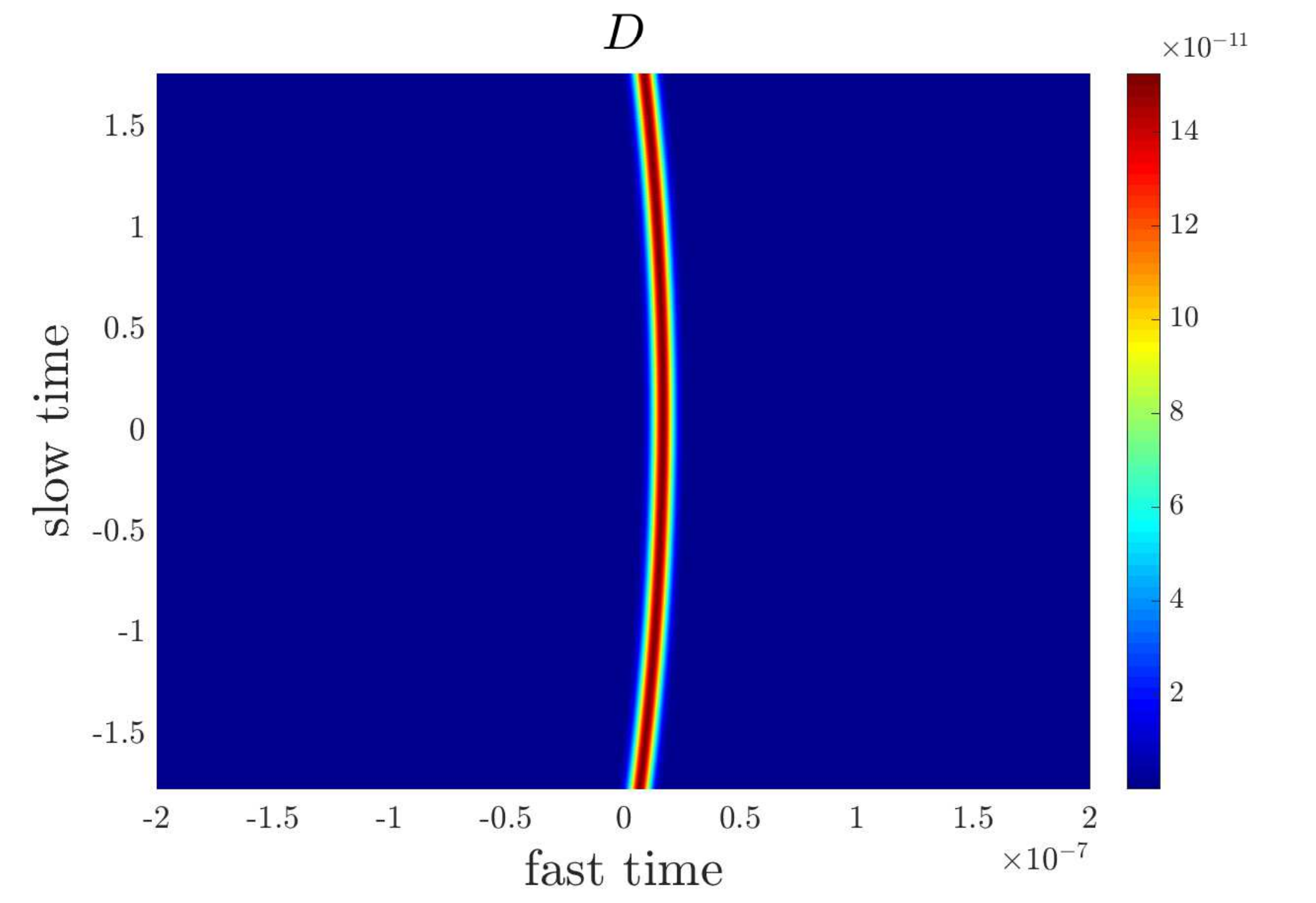}
		\caption{$|D_r(s,t)|$, $\alpha=\pi/2$}
		\label{fig:D_ex_alpha_pi_2}
	\end{subfigure}
	\begin{subfigure}[t]{0.25\textwidth}
		\includegraphics[width=1\columnwidth]{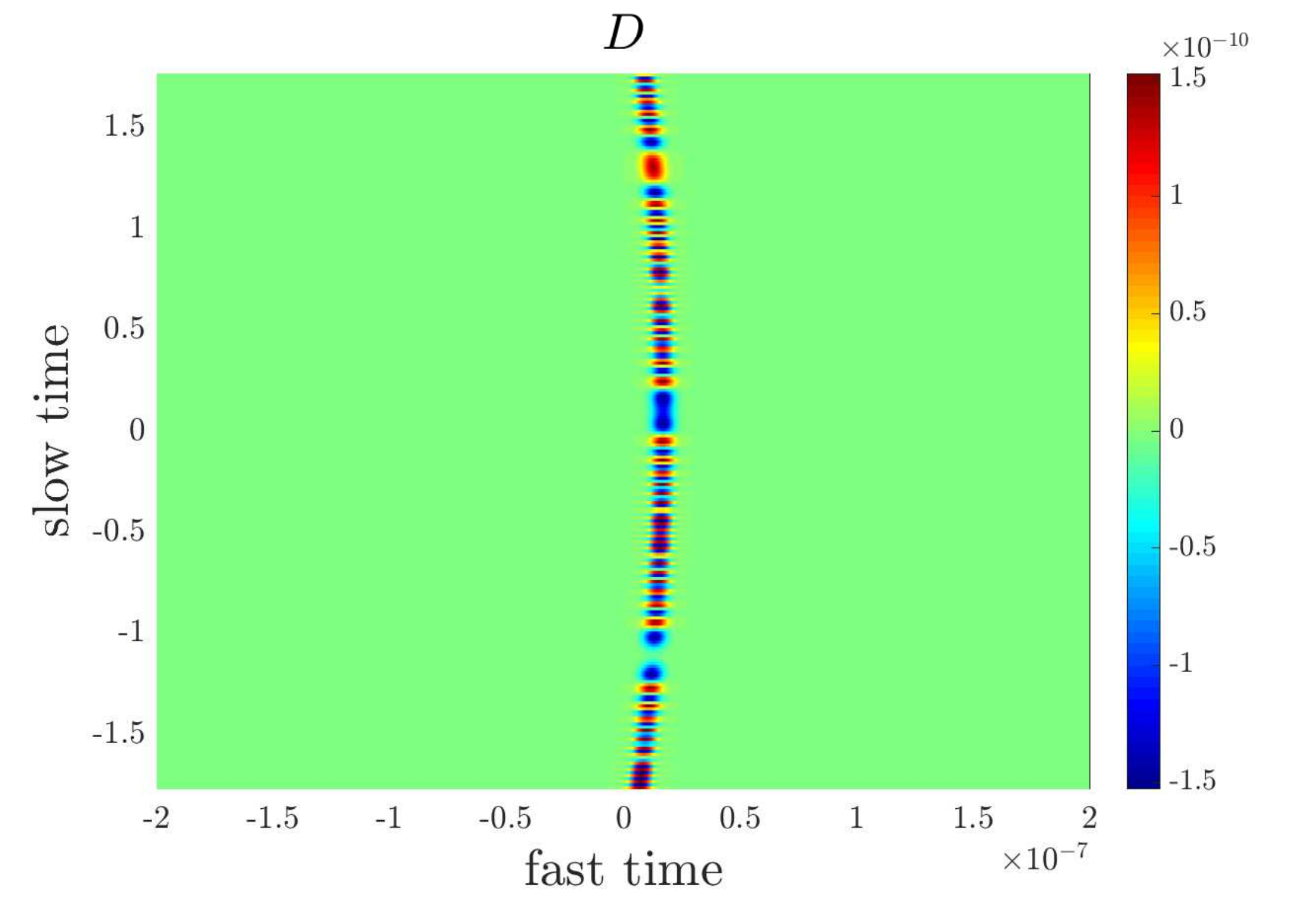}
		\caption{$\Re \{D_r(s,t)\}$, $\alpha=\pi/2$}
		\label{fig:D_ex_alpha_pi_2_R}
	\end{subfigure}
\caption{Effect of the direction of motion for a target moving at $\pmb{v}_t=15m/s$. When $\alpha=0$, top, the column support is large, and the phase ($\Delta\tau(s)$) is nearly linear. With $\alpha=\pi/2$, bottom, the column support is much smaller, but the phase is non linear.}
\label{fig:D_alpha_ex}
\end{figure}
	
\subsection{Main result of the paper: TRPCA for SAR data}
The limitations of RPCA for motion estimation in SAR motivates a new look into possible extensions and modifications of the algorithm. Tensor based methods have been of great interest because high dimensional data arise naturally as tensors in more and more applications \cite{cichocki2015tensor}. For the SAR problem, this is motivated by the possibility of detecting and estimating more complex moving target behavior, such as non-linearity in the phase, by representing the data in a higher dimension. This is achieved here by dividing the large synthetic aperture into smaller, overlapping sub-apertures.

The problem of imaging moving targets in SAR can also be viewed as an image registration process. That is, a process that provides a precise correspondence between two or more images of the same object captured from different locations, at different times, or using different sensors. From this perspective, the introduction of sub-apertures for data processing is a natural one.

In this paper we use the SAR data, and the choice of sub-aperture and overlap size will be motivated differently.
We recast the SAR data matrix as a third order tensor $\mathcal{A}$ by dividing the large synthetic aperture into smaller overlapping sub-apertures, indexed by $\ell$
    \begin{equation}
    \begin{split}
A^{(\ell)}(s,t)=D_r(s\ell\vartheta s_{\text{sub}},t),\quad s\in[0,s_{\text{sub}}],\\ A^{(\ell)}=\mathcal{A}(\cdot,\cdot,\ell)\in \mathbb{R}^{n_1\times n_2}, \quad \ell=0,...,n_3-1.
\end{split}
\label{eq:mat_to_tensor}
\end{equation}
Here $s_{\text{sub}}$ denotes the sub-aperture size, and $\vartheta$ is the overlap size, a number between 0 and 1. An illustration of tensor representation of the SAR data is given in Figure~\ref{fig:ar_tensor}. The value of the hyper-parameters that define the tensor representation, i.e. $s_{\text{sub}}$ and $\vartheta$, affect the performance of TRPCA. A detailed analysis that allows us to determine the optimal value for $s_{\text{sub}}$ and $\vartheta$ is carried out in Section~\ref{sec:3}.
\begin{figure}[htbp!]
		\centering
		\includegraphics[width=0.6\textwidth]{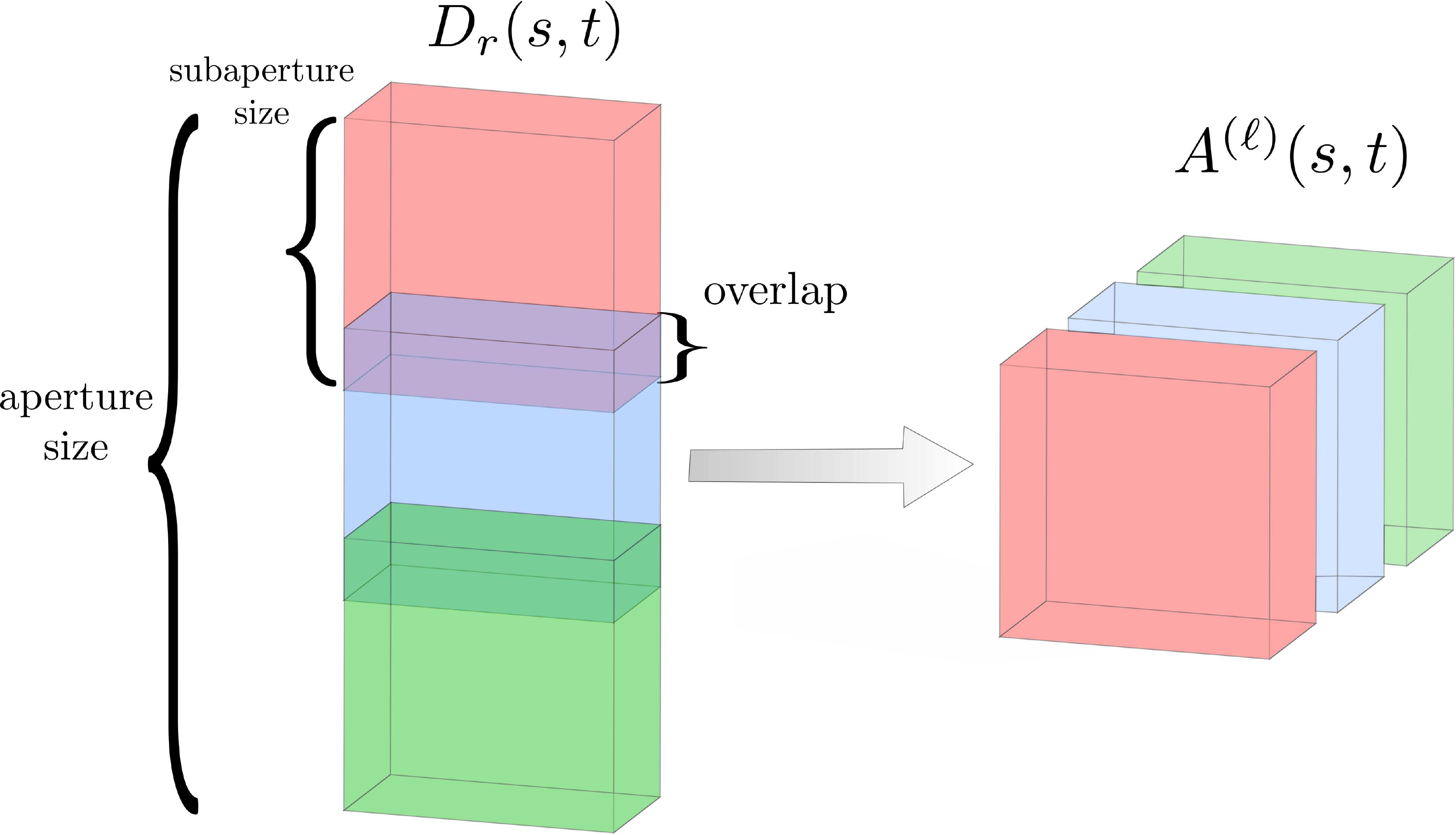}
	\caption{Schematic of SAR data tensor representation. For motion detection purposes, the large aperture data matrix $D_r(s,t)$ is converted to a 3rd order tensor $A^{(\ell)}(s,t)$. The tensor is composed of partially overlapping sub-apertures of the data $D_r(s,t)$.}
\label{fig:ar_tensor}
\end{figure}
We then define the TRPCA algorithm, using a specific extension of the nuclear norm for third-order tensors \cite{lu2016tensor}, and solve a tensor based RPCA optimization problem for complex valued third order tensors:
\begin{equation}
\begin{split}
& \min_{\cL,\cS\in\mathbb C^{n_1\times n_2\times n_3}}
\quad ||\cL||_{*,\mathcal F}+ \eta ||\cS||_1,\\
& \text{subject to} \quad \cL+\cS=\cA
\end{split}
\label{pca_ten}
\end{equation}
$\|\cdot\|_{*,\mathcal F}$, defined in \eqref{eq:tnn}, is a specific extension of the tensor nuclear norm which involves performing a Fourier Transform with respect to the sub-aperture index $\ell$. 

To evaluate the performance of the algorithm, we compare the separation achieved by matrix and tensor based RPCA. While the performance of TRPCA is not universally better than that of regular matrix RPCA, TRPCA performs significantly better in the cases where motion is hardest to detect as illustrated in the example considered in Figure \ref{fig:RPCAperf}. Here the setup shown in Figure \ref{fig:alpha} is considered with a slowly moving target with $\pmb{v}_t=1$m/s  and $\alpha = \pi/2$. The target's reflectivity is 10\% of the reflectivity of the stationary scatterers and its echoes are barely detectable in the original data (see Figure \ref{fig:RPCAperf}-(a)). As we see from the results, i.e., figures  \ref{fig:RPCAperf}-(b) and \ref{fig:RPCAperf}-(c) almost perfect separation is achieved between the stationary and the moving targets echoes. There is some noise at the edges of the slow time window in the sparse component but this does not really affect the imaging results. 	
\begin{figure}[htbp!]
	\begin{subfigure}[t]{0.325\textwidth}
	\includegraphics[width=1\columnwidth]{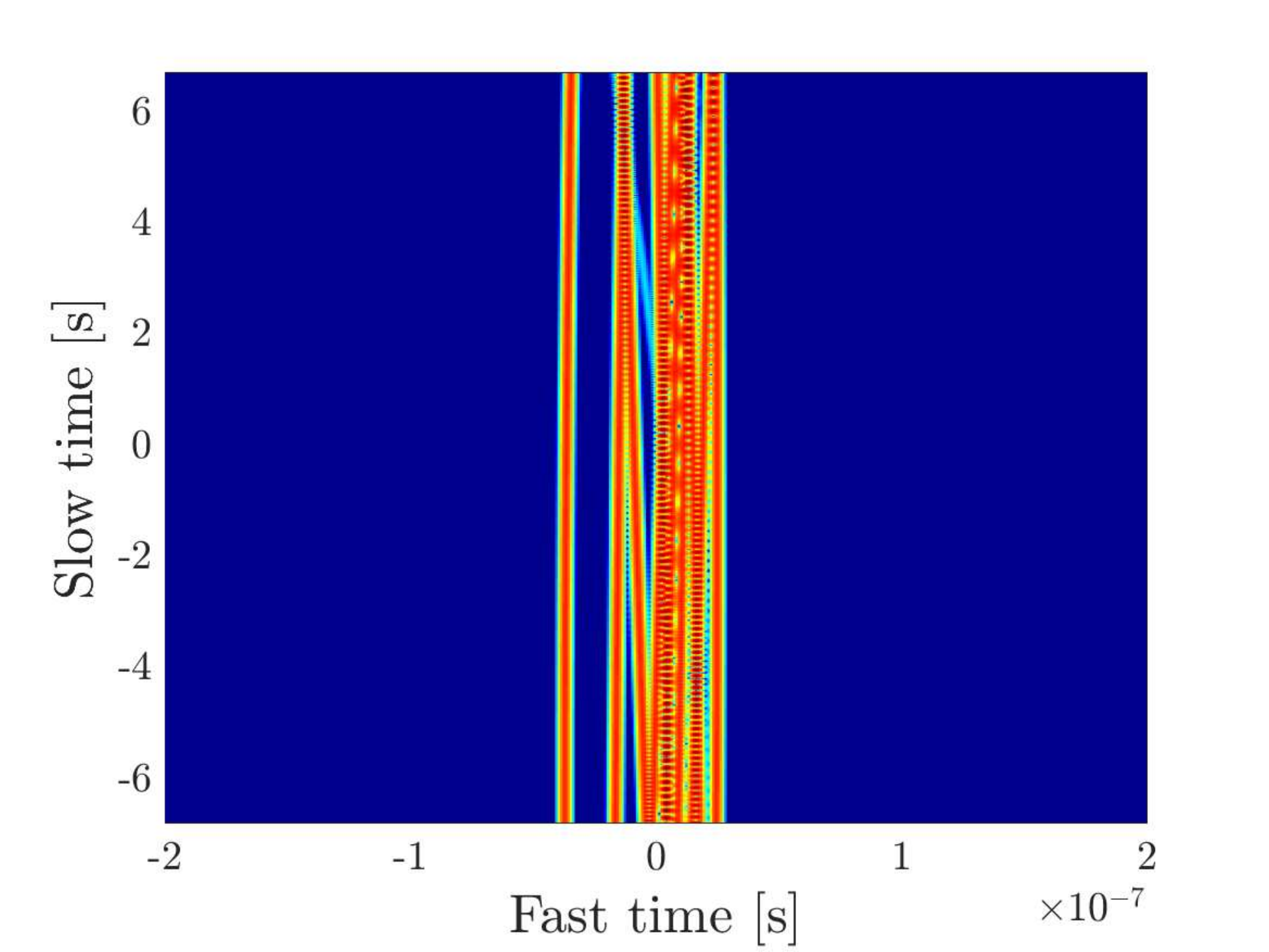}
	\caption{} 
\end{subfigure}
\begin{subfigure}[t]{0.325\textwidth}
	\includegraphics[width=1\columnwidth]{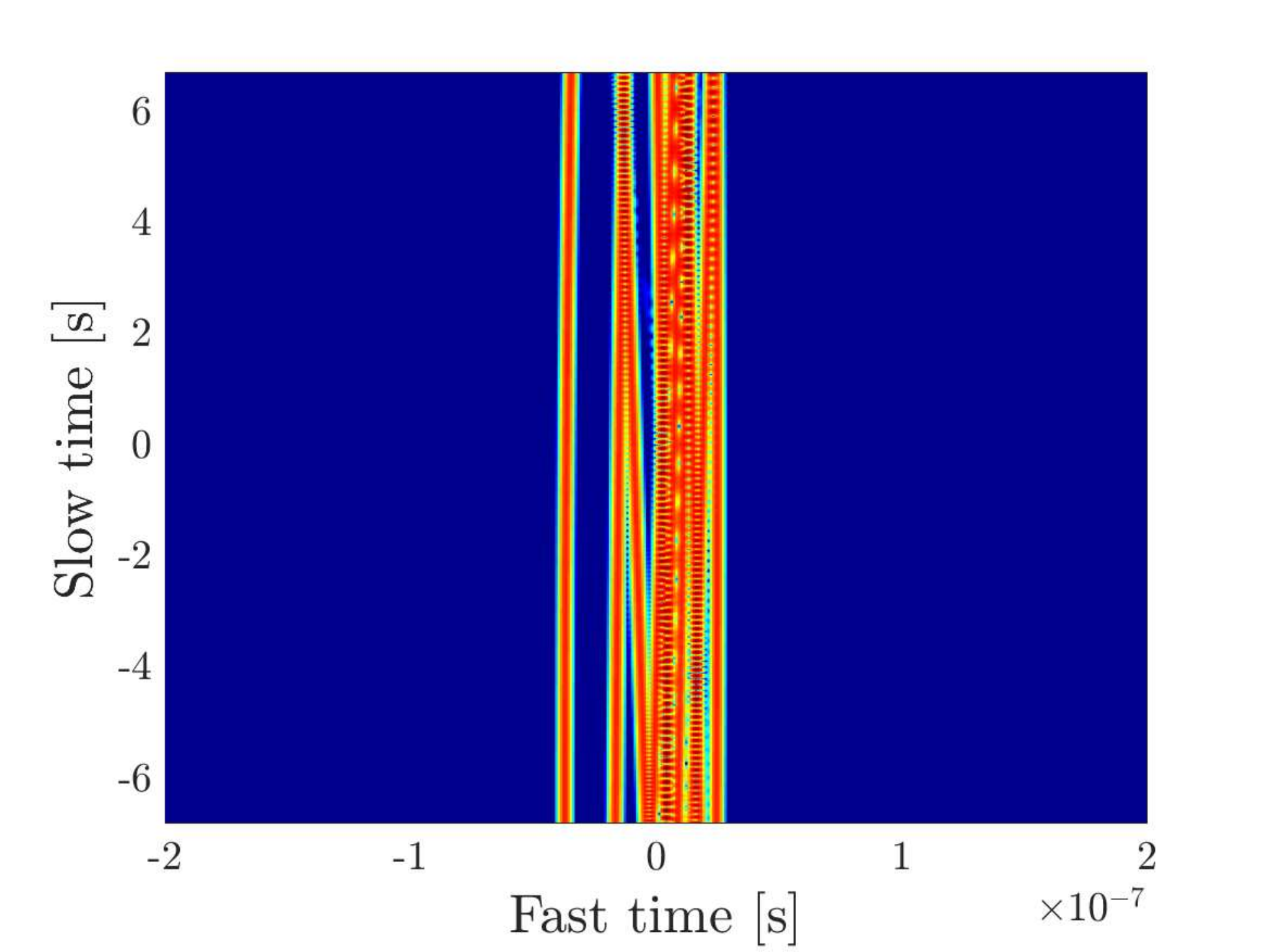}
	\caption{} 
\end{subfigure}
\begin{subfigure}[t]{0.325\textwidth}
	\includegraphics[width=1\columnwidth]{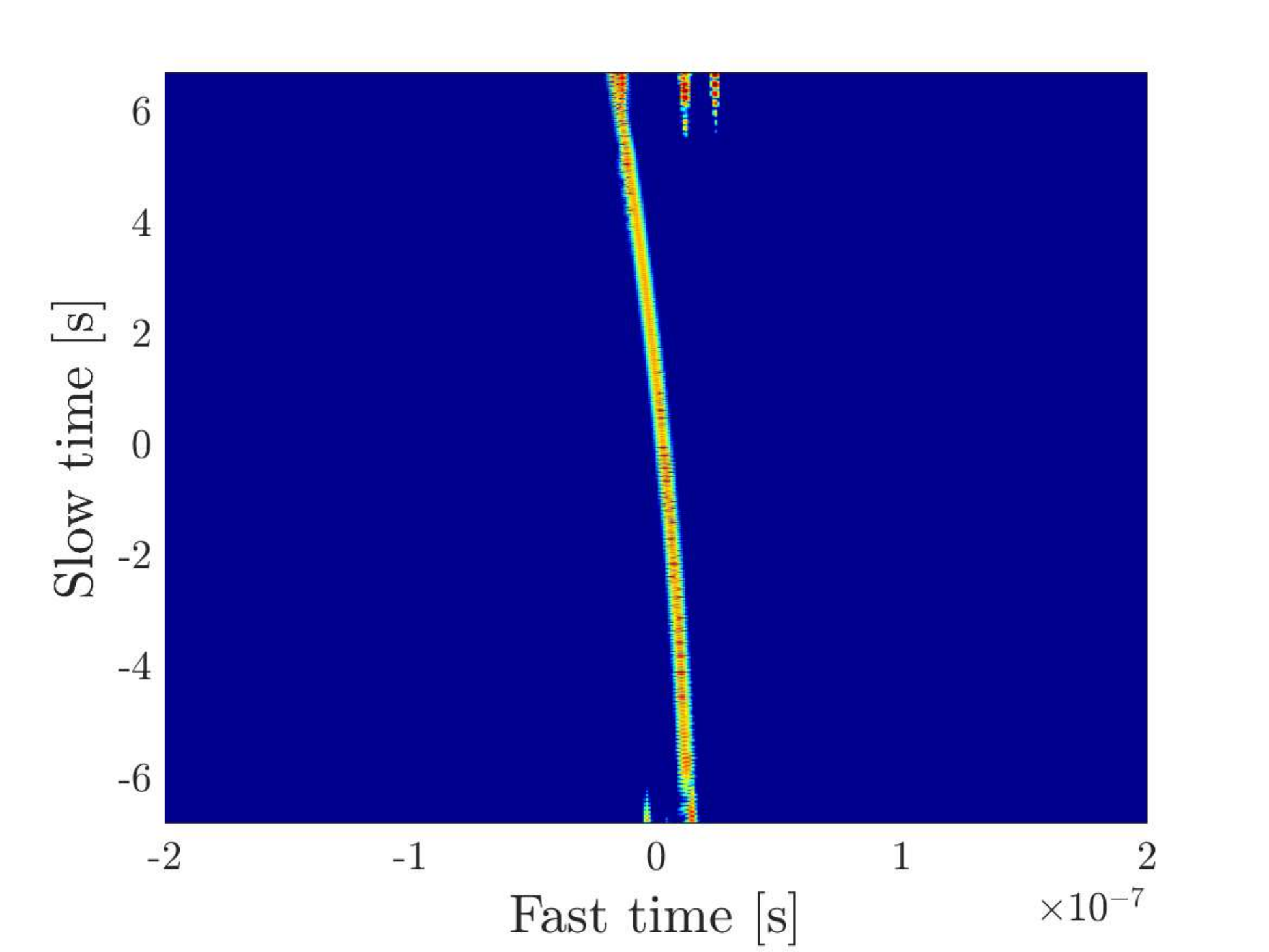}
	\caption{} 
\end{subfigure}
\caption{Example of TRPCA applied to SAR data corresponding to a slow moving target. The simulation setup shown in Figure \ref{fig:alpha} is used with the moving target velocity $\pmb{v}_t=1$m/s  and $\alpha = \pi/2$. The target's reflectivity is 10\% of the reflectivity of the stationary scatterers. $(a)$ Original data matrix. The echoes of the weak and slow moving target are barely discernible in the data; $(b)$ TRPCA result: the low rank part corresponding to the stationary background; $(c)$ RPCA result: the sparse part corresponding to the moving target; We observe that TRPCA achieves good data separation in a very challenging setting.}
\label{fig:RPCAperf}
\end{figure}

These observations are further explained in the paper where (i) we provide analysis of the nuclear norms for SAR data tensors corresponding to stationary and moving targets; (ii) we present lower and upper bounds on the values of the tensor nuclear norm, and show how these bounds can be used to explain the observed phenomena; (iii)~we prove that the bounds are achieved for limiting edge cases, and that the stationary background and the moving target tend towards those cases.

To summarize, the results of this paper demonstrate that SAR data naturally admit a tensor representation, and provide an example to the added benefit of using the tensor decomposition. In an operational system TRPCA could be combined with other methods such as DPCA or matrix RPCA to detect motion in cases where other methods underperform. Further research will go into other possible extensions of RPCA to tensor form, application of tensor representation to other imaging problems such as iSAR and satellite imaging, and optimization of the algorithm's parameters for robust performance.
	
	The rest of the paper is structured as follows: In Section~\ref{sec:2} we present the TRPCA algorithm for the SAR data problem. 
	In Section~\ref{sec:3} we  investigate the effect of the hyper-parameters on the performance of TRPCA, both numerically, and through analysis of idealized cases. We prove bounds on the tensor nuclear norm, and study how these bounds are related to the observed performance. 
	In Section~\ref{sec:4} we present numerical examples of data separation and the corresponding imaging results which illustrate the advantages of tensor over matrix RPCA for SAR problems. 
	We end in Section~\ref{sec:5} with our conclusions. 
%%%%%%%%%%%%%%%%%%%%%%%%%%%%%%%%%%%%%%%%%%%%%%%%%%%%%%%%%%%%%%%%% 
\section{TRPCA for SAR data} \label{sec:2}
%%%%%%%%%%%%%%%%%%%%%%%%%%%%%%%%%%%%%%%%%%%%%%%%%%%%%%%%%%%%%%%%%
In this section we describe the TRPCA algorithm for the SAR data problem, starting with discussion of the tensor nuclear norm.
\subsection{TRPCA formulation}
 \label{sec:mat_recast}
Using the representation of the SAR data as the tensor $\mathcal{A}$ defined in \eqref{eq:mat_to_tensor}, 
we want to solve the following tensor RPCA optimization 
\begin{equation}
\min\limits_{\cL+\cS=\mathcal{A}} \|\cL\|_{*}+\eta\|\cS\|_1,
\label{eq:ten_rpca_1}
\end{equation}
where $\mathcal{L}$ and $\mathcal{S}$ are third order tensors, and $\|\cdot\|_{*}$ is some generalization of the nuclear norm to third order tensors. 
In Appendix~\ref{app:tensor_decomp}, we review common tensor decomposition methods, and the extension of the nuclear norm to tensor form as a relaxed rank estimate.
	 
A natural definition of the nuclear norm \cite{friedland2018nuclear}  that extends to higher dimensions is 
\begin{equation}
\|\mathcal{A}\|_{*,\mathcal T}=\inf\left\{\sum\limits_{i=1}^r|\lambda_i| \hspace{0.5em}\Big| \hspace{0.5em}\mathcal{A}=\sum\limits_{i=1}^r \lambda_i u^1_i \otimes u^2_i\otimes\cdots \otimes u_i^d,\|u_i^j\|=1,r\in\mathbb{N}\right\}. 
\label{eq:nuc_norm_matrix_alt_0}
\end{equation}
Note that the $u_i^j$ do not need to be orthogonal.
As outlined in \cite{friedland2018nuclear}, \eqref{eq:nuc_norm_matrix_alt_0} is in general an NP hard problem to compute. 
Therefore we look for alternative, more tractable definitions of the nuclear norm.
	
The simplest alternative would be to take the matrix panels with respect to a specific dimension and compute the matrix SVD on every panel separately so that, for example,
\begin{eqnarray}
&\mathcal{A}\in\mathbb{R}^{n_1\times n_2 \times n_3}, \quad A^{(\ell)}=A(\cdot,\cdot,\ell)=U^{(\ell)}\Sigma^{(\ell)}V^{(\ell)H},\quad \ell=1,\cdots,n_3,\\
&\|\mathcal{A}\|_{*,\mathcal D}=\sum\limits_{\ell=1}^{n_3}\|A^{(\ell)}\|_*=\sum\limits_{\ell=1}^{n_3} \sum\limits_{j=1}^{r_\ell} \sigma^{(\ell)}_j.
\label{eq:nuc_dec}
\end{eqnarray}
This definition retains some of the properties of singular values, for example the Hilbert-Schmidt norm must equal the sum of singular values,
\begin{equation}
\|\mathcal{A}\|_F^2=\sum\limits_{i_1=1}^{n_1}\sum\limits_{i_2=1}^{n_2}\sum\limits_{i_3=1}^{n_3}|a_{i_1,i_2,i_3}|^2=\sum\limits_{i=1}^r \sigma_i^2=\sum\limits_{\ell=1}^{n_3}\sum\limits_{j=1}^{r_\ell}\left[\sigma^{(\ell)}_j\right]^2.
\end{equation}
We later show that \eqref{eq:nuc_dec} serves as an upper bound to \eqref{eq:nuc_norm_matrix_alt_0}. 
	The decomposition in \eqref{eq:nuc_dec} is {\em decoupled} in a sense (hence the superscript $\mathcal D$), since the singular values of different panels are computed independently. Using this definition in the RPCA algorithm, would result in the regular matrix RPCA on every sub-aperture. 
	
	Another possible extension was introduced in \cite{braman2010third,kilmer2011factorization,kilmer2013third}, and used for RPCA in \cite{zhang2014novel,lu2016tensor}. This is based on an extension of matrix multiplication to incorporate a circular convolution with respect to the third dimension, 
\begin{equation}
C=\mathcal{A}*\mathcal{B}, \quad C^{(\ell)}=\frac{1}{n_3}\sum\limits_{p=0}^{n_3-1}A^{(p)}B^{(\ell-p)|n_3}.
\end{equation}
This is equivalent to representing the third order tensor as a block-circulant matrix, with the tensor multiplication homeomorphic to the regular matrix multiplication
\begin{equation}
\mathcal{A}\in\mathbb{R}^{n_1\times n_2 \times n_3}\rightarrow\frac{1}{\sqrt{n_3}} \begin{pmatrix} A^{(0)}&A^{(1)}&\cdots& A^{(n_3-1)}\\A^{(n_3-1)}&A^{(0)}&\cdots&A^{(n_3-2)}\\&&\ddots&\\A^{(1)}&A^{(2)}&\cdots&A^{(0)}\end{pmatrix}\in\mathbb{C}^{n_1n_3\times n_2n_3}.
\label{eq:block_circ}
\end{equation}
Block-circulant matrices can be block-diagonalized by a Discrete Fourier Transform (DFT) matrix, 
\begin{equation}
\frac{1}{\sqrt{n_3}} \begin{pmatrix} A^{(0)}&A^{(1)}&\cdots& A^{(n_3-1)}\\A^{(n_3-1)}&A^{(0)}&\cdots&A^{(n_3-2)}\\&&\ddots&\\A^{(1)}&A^{(2)}&\cdots&A^{(0)}\end{pmatrix}\overset{\mathcal{F}}{\longrightarrow }
\begin{pmatrix} \hat{A}^{(0)}&&&\\&\hat{A}^{(1)}&&\\&&\ddots&\\&&&\hat{A}^{(n_3-1)}\end{pmatrix},
\label{eq:block_circ_2}
\end{equation}
which is equivalent to
\begin{equation}
\begin{split}
&\hat{\mathcal{A}}=\mathcal{F}_3\mathcal{A}, \quad \hat{A}^{(k)}=\frac{1}{\sqrt{n_3}}\sum\limits_{\ell=0}^{n_3-1}\omega_{n_3}^{\ell k}A^{(\ell)},\quad \omega_{n_3}=e^{\frac{i2\pi}{n_3}},\quad\hat{A}^{(k)}\in \mathbb{R}^{n_1\times n_2}, k=0,\dots,n_3-1.
\end{split}
\end{equation}
Thus, another estimate of the nuclear norm is given by
\begin{equation}
\|\mathcal{A}\|_{*,\mathcal F}=\sum\limits_{k=0}^{n_3-1}\|\hat{A}^{(k)}\|_{*}.
\label{eq:tnn}
\end{equation}
This last definition of the tensor nuclear norm given by {\eqref{eq:tnn} proves to be well suited for the SAR motion detection problem.  We show this by a performance analysis, carried out in Section~\ref{sec:3}.
\subsection{TRPCA algorithm}
We can now recast \eqref{eq:ten_rpca_1} as
\begin{equation}
	\min\limits_{\cL+\cS=\mathcal{A}} \hspace{1em} \|\cL\|_{*,\mathcal F}+\eta_\mathcal{F} \|\cS\|_1=	\min\limits_{\cL+\cS=\mathcal{A}}
	\sum\limits_{\ell=0}^{n_3-1}\|\hat{L}^{(\ell)}\|_*+\eta\|S^{(\ell)}\|_1, 
\label{eq:ten_rpca_2}
\end{equation}
where $L^{(\ell)}$ and $S^{(\ell)}$ are the sub-apertures of $\cL$ and $\cS$, respectively.
As for the matrix case \cite{lin2010augmented}, we can solve the constrained optimization problem by an augmented Lagrangian
\begin{equation}
L(\cL,\cS,\mathcal{Y},\mu)=\|\cL\|_{*,\mathcal F}+\eta_\mathcal{F} \|\cS\|_1+\langle \mathcal{Y}, \mathcal{A}-\cL -\cS \rangle + \frac{\mu}{2}\|\mathcal{A}-\cL -\cS\|_F^2,
\end{equation}
where the Hilbert-Schmidt norm and inner product are the natural element-wise product extensions of the matrix case. For more details see Appendix \eqref{eq:ten_inner}.
We can again solve this iteratively, using the Alternating Direction Method of Multipliers (ADMM), noting that the Hilbert-Schmidt penalty term is, by Parseval's theorem, invariant under DFT,
\begin{equation}
\|\mathcal{A}-\cL -\cS\|_F^2=\sum\limits_{\ell=0}^{n_3-1}\|A^{(\ell)}-L^{(\ell)} -S^{(\ell)}\|_F^2=\sum\limits_{p=0}^{n_3-1}\|\hat{A}^{(p)}-\hat{L}^{(p)} -\hat{S}^{(p)}\|_F^2.
\end{equation}
	 Thus, solving by Alternating Direction Method of Multipliers (ADMM), the $\cL_k$ minimization step involves singular value thresholding in the Fourier domain, and the $\cS_{k}$ minimization uses element wise thresholding in the real domain. Thresholding is done via the  operator $\Theta_\lambda$
 \begin{equation}
 \Theta_{\lambda}(a)=e^{i \arg a} \max(|a|-\lambda,0).
 \end{equation}
 
The algorithm is outlined in Algorithm~\ref{alg:inexact_alm}, and is a modification of the Inexact ALMM RPCA algorithm introduced in \cite{lin2010augmented}.
\begin{algorithm}[htbp]
		\caption{$\cD=\cL+\cS$ Inexact ALM method }
		\begin{algorithmic}[1]
			\setstretch{1}
			\STATE{\textbf{Input}: Observation tensor $\cA\in \mathbb{C}^{n_1\times n_2\times n_3}$}
			%	\STATE{$Y_0=\cD/J(\cD);\cS_o=0;\mu_0>0;\rho>1;k=0\quad J(\cD) = \max \left(
			%	\|\cD\|_2, \eta^{-1}\|\cD\|_\infty\right)$}
			\STATE{$\mu_0=\max\limits_{\ell}\|A^{(\ell)}\|_2,\hspace{1em}\rho=1.4$}
			\WHILE{not converged}
			\STATE{$\bar{\cA}=\mathcal{F}_3(\cA-\cS_{k}-\mu_k^{-1}\mathcal{Y}_k)$}
			\FOR{$\ell=0,...,n_3-1$}
			\STATE{$\left[\bar{U}^{(\ell)}_{k+1},\hspace{0.2em}\bar{\Sigma}^{(\ell)}_{k+1},\hspace{0.2em}\bar{V}^{(\ell)}_{k+1}\right]=\text{svd}\left(\hat{A}^{(\ell)}\right)$}
			\STATE{$\hat{L}^{(\ell)}_{k+1}=\bar{U}^{(\ell)}_{k+1}\hspace{0.3em}\Theta_{\mu_k^{-1}}\hspace{-0.4em}\left[\bar{\Sigma}^{(\ell)}_{k+1}\right]\hspace{0.3em}\bar{V}^{(\ell)*}_{k+1}$}
			\ENDFOR
			\STATE{${\cL}_{k+1}=\mathcal{F}^{-1}_3(\bar{\cL}_{k+1})$}
			\STATE{$\cS_{k+1}=\Theta_{\eta\mu_k^{-1}}\hspace{-0.4em}\left[\cA-\cL_{k+1}+\mu_k^{-1}\mathcal Y_k\right]$}
			\STATE{$\mathcal{Y}_{k+1}=\mathcal{Y}_k+\mu_k\left(\cA-\cL_{k+1}-\cS_{k+1}\right)$}
			\STATE{ $\mu_{k+1}=\rho \mu_k$}
			\STATE{$k\rightarrow k+1$}
			\ENDWHILE
			\RETURN $\cL_{k+1},\cS_{k+1}$
		\end{algorithmic}
		\label{alg:inexact_alm}
\end{algorithm}
%%%%%%%%%%%%%%%%%%%%%%%%%%%%%%%%%%%%%%%%%%%%%%%%%%%%%%%%%%%%%%%%%%%%%%%%%%%%%%
\section{TRPCA performance analysis}  \label{sec:3}
%%%%%%%%%%%%%%%%%%%%%%%%%%%%%%%%%%%%%%%%%%%%%%%%%%%%%%%%%%%%%%%%%

We follow here the same approach as in \cite{leib2018RPCA} in order to analyze the performance of TRPCA. The key idea is that there is a finite range of values for $\eta_\mathcal{F}$ in the objective \eqref{eq:ten_rpca_2} that are admissible $  \eta_{\min,\mathcal{F}} \le \eta_{\mathcal{F}} \le \eta_{\max,\mathcal{F}}$. 
Indeed, if $\eta_\mathcal{F}$ is too large then the nuclear norm term might be small even for moving targets, i.e. $\|\mathcal{S}\|_*\le \eta \|\cS\|_1$. 	
	If, on the ohter hand, $\eta_\mathcal{F}$ is too small the $\ell_1$ term might be small even for the stationary background $\|\mathcal{L}\|_*\le \eta \|\cL\|_1$. 
Thus, we can estimate the quantities, $\eta_{\max,\mathcal{F}}$ and $\eta_{\min,\mathcal{F}}$  by
	\begin{equation}
	\begin{split} 
	\eta_{\max,\mathcal{F}}=\sup_{  \mathcal{S} \ {\rm moving\ targets}} \frac{\|\mathcal{S} \|_{*,\mathcal F}}{\|\mathcal{S} \|_1},\\
	\eta_{\min,\mathcal{F}}=\inf_{ \mathcal{L} \ {\rm stationary\ background}} \frac{\|\mathcal{L} \|_{*,\mathcal F}}{\|\mathcal{L}\|_1}.
	\label{eq:eta_minmax1}
	\end{split}
	\end{equation}
The classes of moving target and low rank data structures can be defined in several ways. Following \cite{leib2018RPCA}, we use our data model, defined in \ref{app:data_model}, and choose representatives of each to use in simulation.

	We wish to choose $\eta$ small enough so that the $\ell_1$ term is favorable for moving targets and large enough such that the nuclear term is favorable for the stationary background. We can define an objective which balances both requirements
		\begin{equation}
	F(\eta)=\frac{\eta_{\min,\mathcal{F}}}{\eta}+\frac{\eta}{\eta_{\max,\mathcal{F}}}
	\end{equation}
	With the optimal value
\begin{equation}
 \eta^*_{\mathcal{F}}=\sqrt{\eta_{\max,\mathcal{F}}\eta_{\min,\mathcal{F}}}.
	\end{equation}
Moreover, we expect that the larger the ratio $\eta_{\max,\mathcal{F}}/\eta_{\min,\mathcal{F}}$ the better the achieved separation, as the objective would have a wider range of admissible $\eta$'s.

We would like to use TRPCA under settings that increase this ratio, i.e.,  get the smallest possible nuclear norm for the stationary background and the largest possible nuclear norm for the moving target. Our objective in this section is to use this ratio so as to determine optimal values for the hyper-parameters, i.e. the sub-aperture and overlap sizes. To compute $\eta_{\max,\mathcal{F}}$ and $\eta_{\min,\mathcal{F}}$ as defined in  \eqref{eq:eta_minmax1} we would need to consider all possible scenarios of stationary and moving targets. To get first an insight for how the hyper-parameters affect $\eta_{\max,\mathcal{F}}$ and $\eta_{\min,\mathcal{F}}$, we define them for any specific example of a SAR data tensor $ \mathcal{A} = \mathcal{A}_L+\mathcal{A}_S$ as
\begin{equation}
	\eta_{\max,\mathcal{F}}=\frac{\|\mathcal{A}_S\|_{*,\mathcal F}}{\|\mathcal{A}_S\|_1},\\\quad \eta_{\min,\mathcal{F}}=\frac{\|\mathcal{A}_L\|_{*,\mathcal F}}{\|\mathcal{A}_L\|_1}.
	\label{eq:eta_minmax}
	\end{equation}
For reference, we also investigate these quantities for the decoupled case, that is,
	 	\begin{equation}
	 \eta_{\max,\mathcal{D}}=\frac{\|\mathcal{A}_S\|_{*,\mathcal D}}{\|\mathcal{A}_S\|_1},\\\quad \eta_{\min,\mathcal{D}}=\frac{\|\mathcal{A}_L\|_{*,\mathcal D}}{\|\mathcal{A}_L\|_1},
	 \label{eq:eta_minmax_dec}
	 \end{equation}
and compare their behavior to \eqref{eq:eta_minmax}.

In what follows we first consider in Section \ref{sec:hyper_sim} specific SAR data scenarios and observe how the quantities in \eqref{eq:eta_minmax} and  \eqref{eq:eta_minmax_dec} depend on the hyper-parameters and the moving target's trajectory. Observing that the main quantity that determines these ratios is the tensor nuclear norm, we introduce in Section \ref{sec:tnn_analysis} tensor norm inequalities that help us analyze the performance of TRPCA  for general SAR data. 
  
\subsection{TRPCA performance analysis for specific SAR data scenarios}
\label{sec:hyper_sim}
	We consider a stationary background with $10$ point scatterers and a single point moving target, with $\pmb{v}_t=1$m/s. The moving target's reflectivity is 10\% of the reflectivity of the other, stationary targets. We vary the target's trajectory angle $\alpha$ with respect to $\hat{x}$ in the 2D plane, that is,
\begin{equation}
\vec{\pmb{v}}_t=\pmb{v}_t[\cos\alpha,\sin\alpha,0],
\end{equation} 
between $\alpha=0$ and $\alpha=\pi/2$ using a step size  $\Delta \alpha=\pi/16$. A schematic of the simulation setting in given in Figure~\ref{fig:alpha}. The parameters of the simulation are as follows: The total aperture size $s_{\text{tot}}$ is fixed at $11.5$s and the platform is moving at $200$m/s, so that the effective aperture size is $2,300$m. This yields two data matrices $D_{L}$ and $D_S$ associated with the stationary background and moving target respectively. 
	
	We next let the sub-aperture size $s_{\text{sub}}$ take the values $[0.005s_{\text{tot}},0.01s_{\text{tot}},0.02s_{\text{tot}},0.03s_{\text{tot}},\dots,0.3s_{\text{tot}}]$. For each sub-aperture size, we change the overlap $\vartheta$, as a fraction of the sub-aperture size, to be $0.1,0.2,\dots,0.9$. For each of these configurations we create the tensor data structures $\mathcal{A}_L,\mathcal{A}_S$, out of $D_L,D_S$, according to \eqref{eq:mat_to_tensor}. 
	The number of subapertures $n_3$ is determined by the other parameters through the following formula
	\begin{equation}
	n_3=1+\left\lceil \frac{s_{\text{tot}}-s_{\text{sub}}}{(1-\vartheta)s_{\text{sub}}}\right\rceil.
	\label{eq:n3}
	\end{equation}
	In Figure~\ref{fig:n3_ex} we illustrate how $n_3$ varies as a function of the overlap $\vartheta$ for different sub-aperture sizes.	
	\begin{figure}[htbp!]
		\centering
		\hspace{0em}\includegraphics[width=0.6\columnwidth]{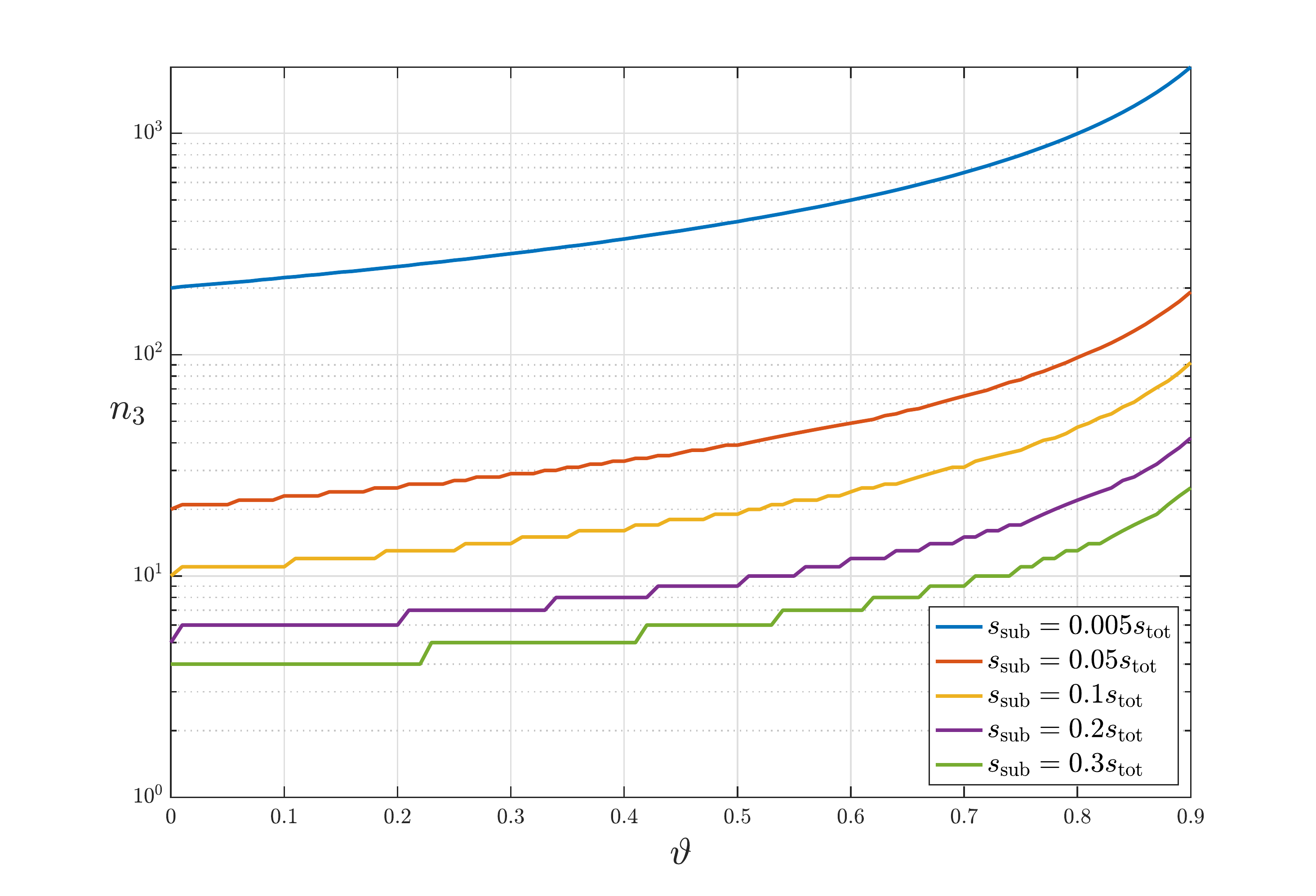}
		\caption{Illustration of how $n_3$ varies as a function of $\vartheta$ in \eqref{eq:n3} 
		for a fixed sub-aperture size and total aperture. 
		$n_3$ increases as the size of the sub-aperture decreases and as $\vartheta$ increases.}
		\label{fig:n3_ex}
	\end{figure}	
	For each configuration, we compute the $\ell_1$ norm and the nuclear norm, for both the decoupled and the tensor forms and plot the ratio of the quantities in  \eqref{eq:eta_minmax_dec} and \eqref{eq:eta_minmax} as function of $\alpha$ and the tensor hyper-parameters in figures~\ref{fig:eta_plot_dec} and ~\ref{fig:eta_plot} respectively. 	
	\begin{figure}[htbp!]
		\centering
		\begin{subfigure}[t]{0.22\textwidth}
			\includegraphics[width=1\columnwidth]{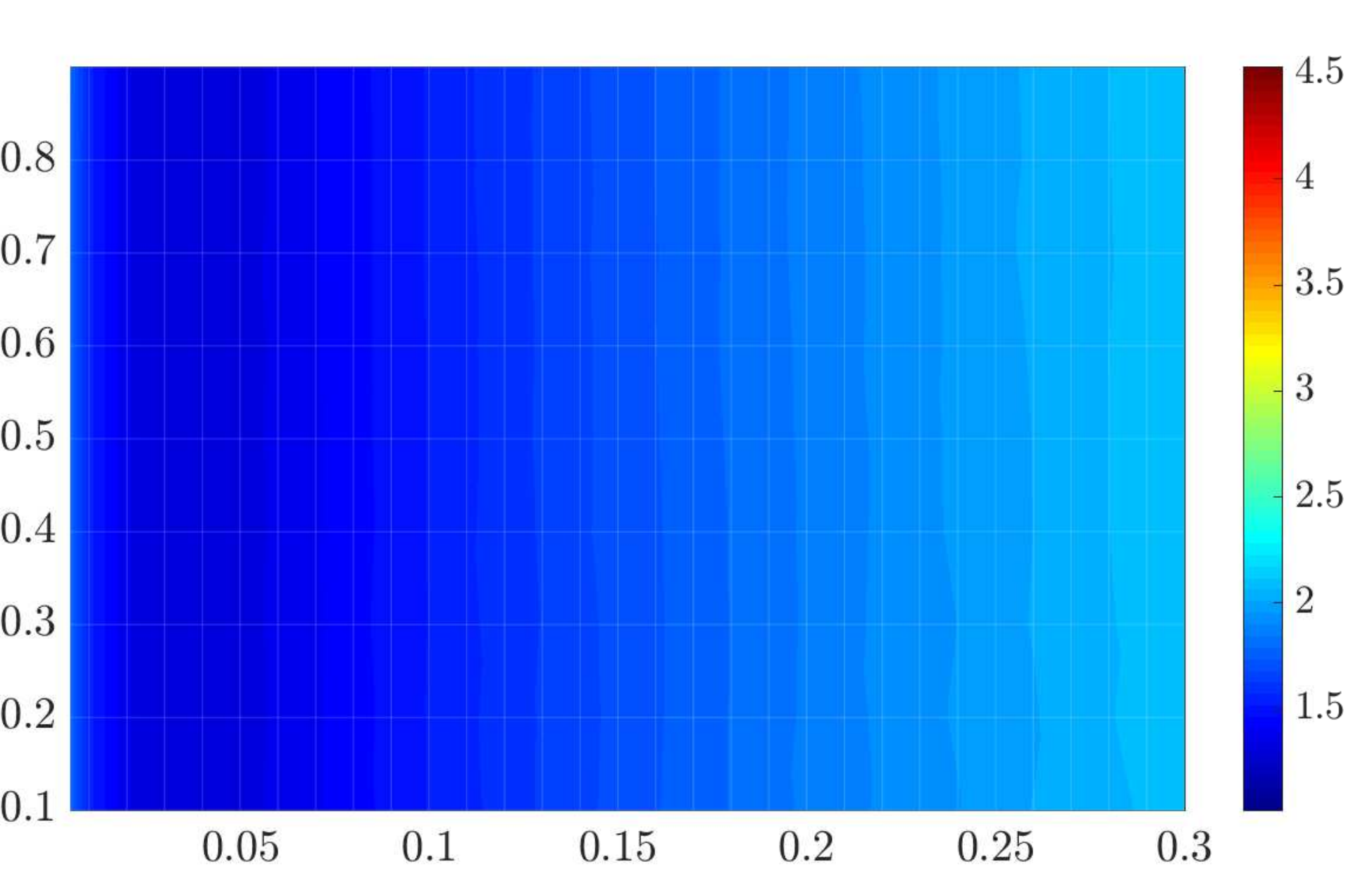}
			\caption*{$\alpha=0$} 
			\label{}
		\end{subfigure}
		\begin{subfigure}[t]{0.22\textwidth}
			\centering
			\includegraphics[width=1\columnwidth]{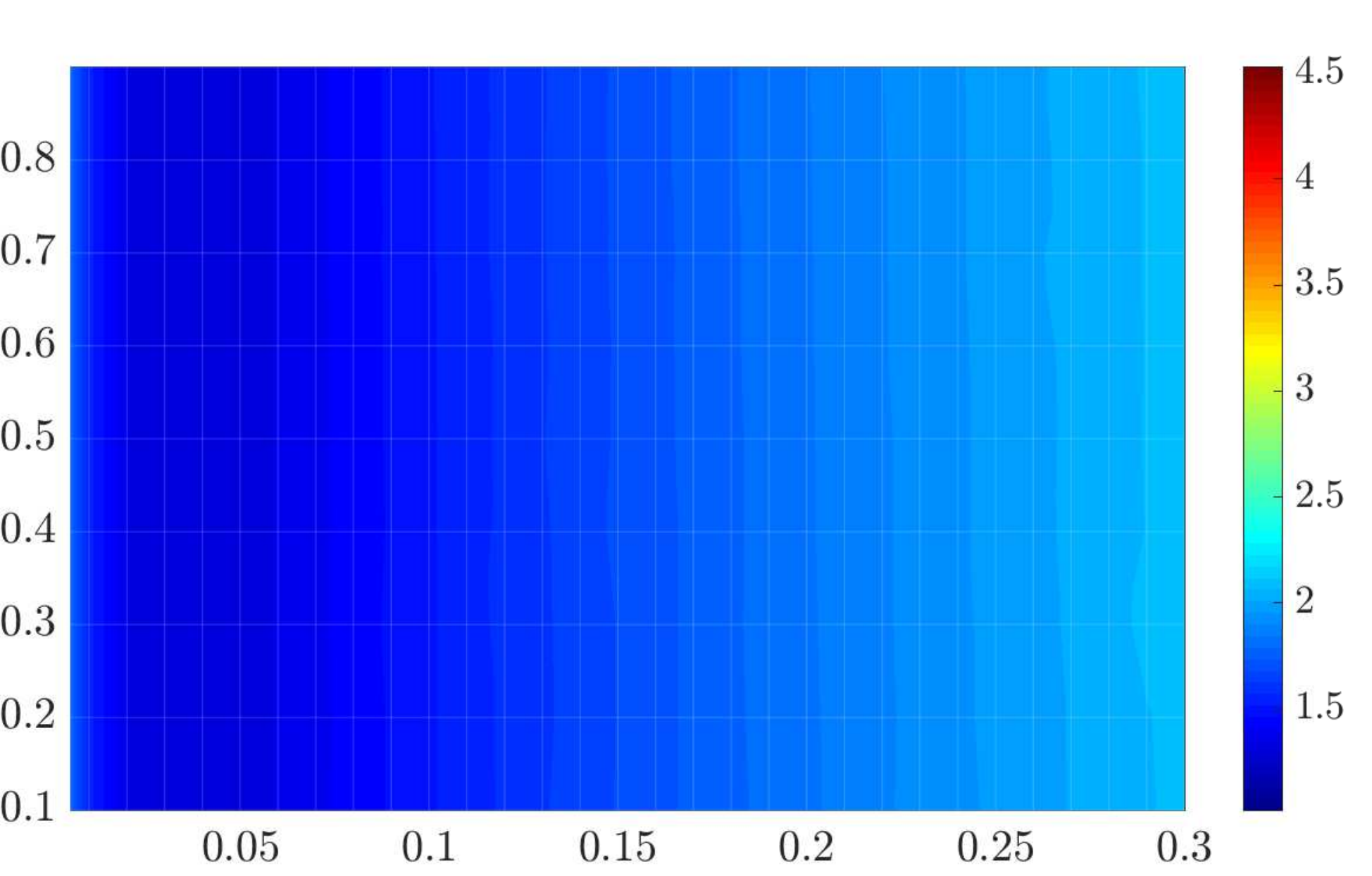}
				\caption*{$\alpha=\pi/16$} 
			\label{}
		\end{subfigure}
		\begin{subfigure}[t]{0.22\textwidth}
			\includegraphics[width=1\columnwidth]{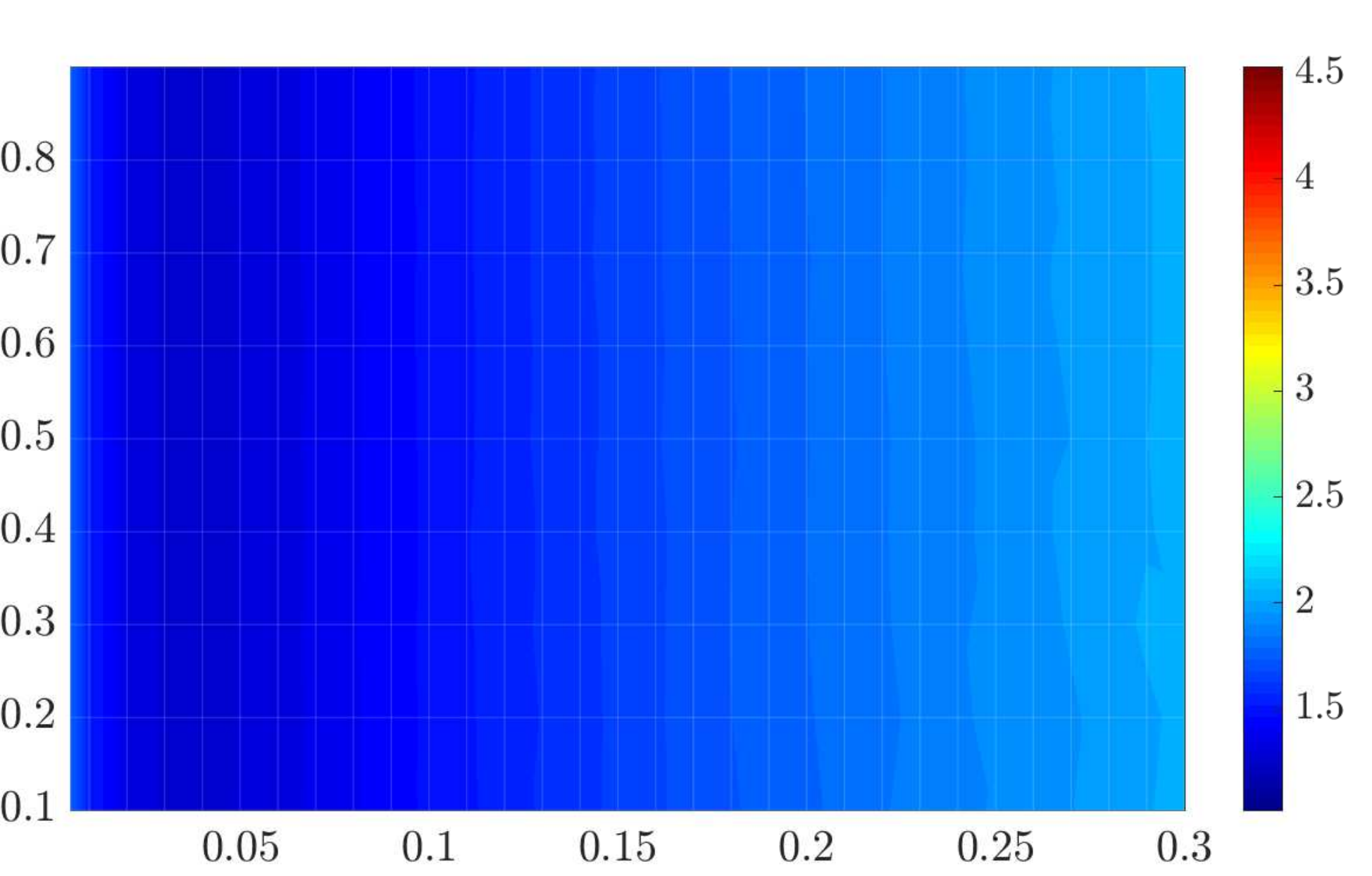}
				\caption*{$\alpha=\pi/8$} 
			\label{}
		\end{subfigure} 
		\begin{subfigure}[t]{0.22\textwidth}
			\includegraphics[width=1\columnwidth]{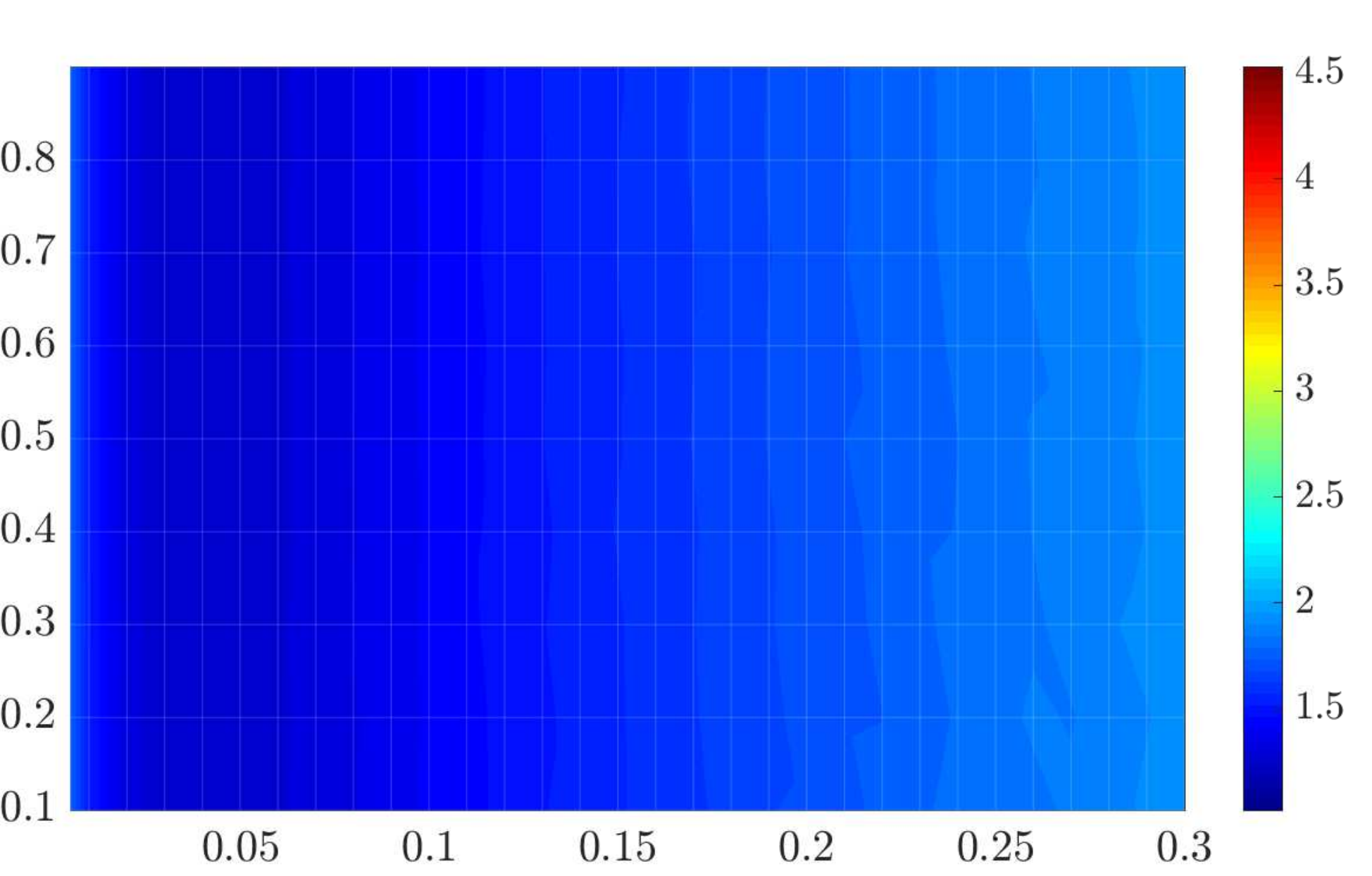}
				\caption*{$\alpha=3\pi/16$} 
			\label{}
		\end{subfigure}
		\begin{subfigure}[t]{0.22\textwidth}
			\centering
			\includegraphics[width=1\columnwidth]{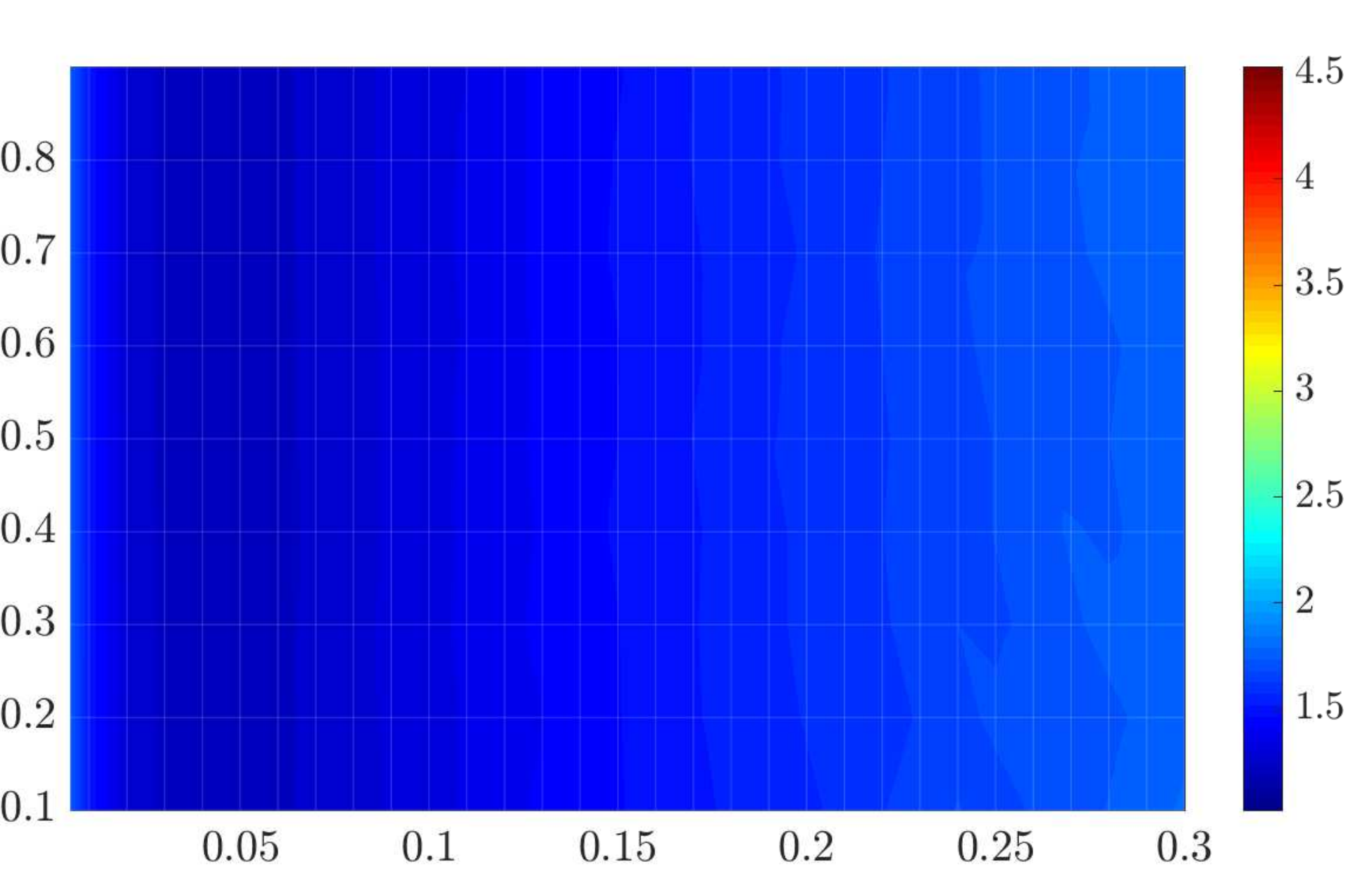}
				\caption*{$\alpha=\pi/4$} 
			\label{}
		\end{subfigure}
		\begin{subfigure}[t]{0.22\textwidth}
			\includegraphics[width=1\columnwidth]{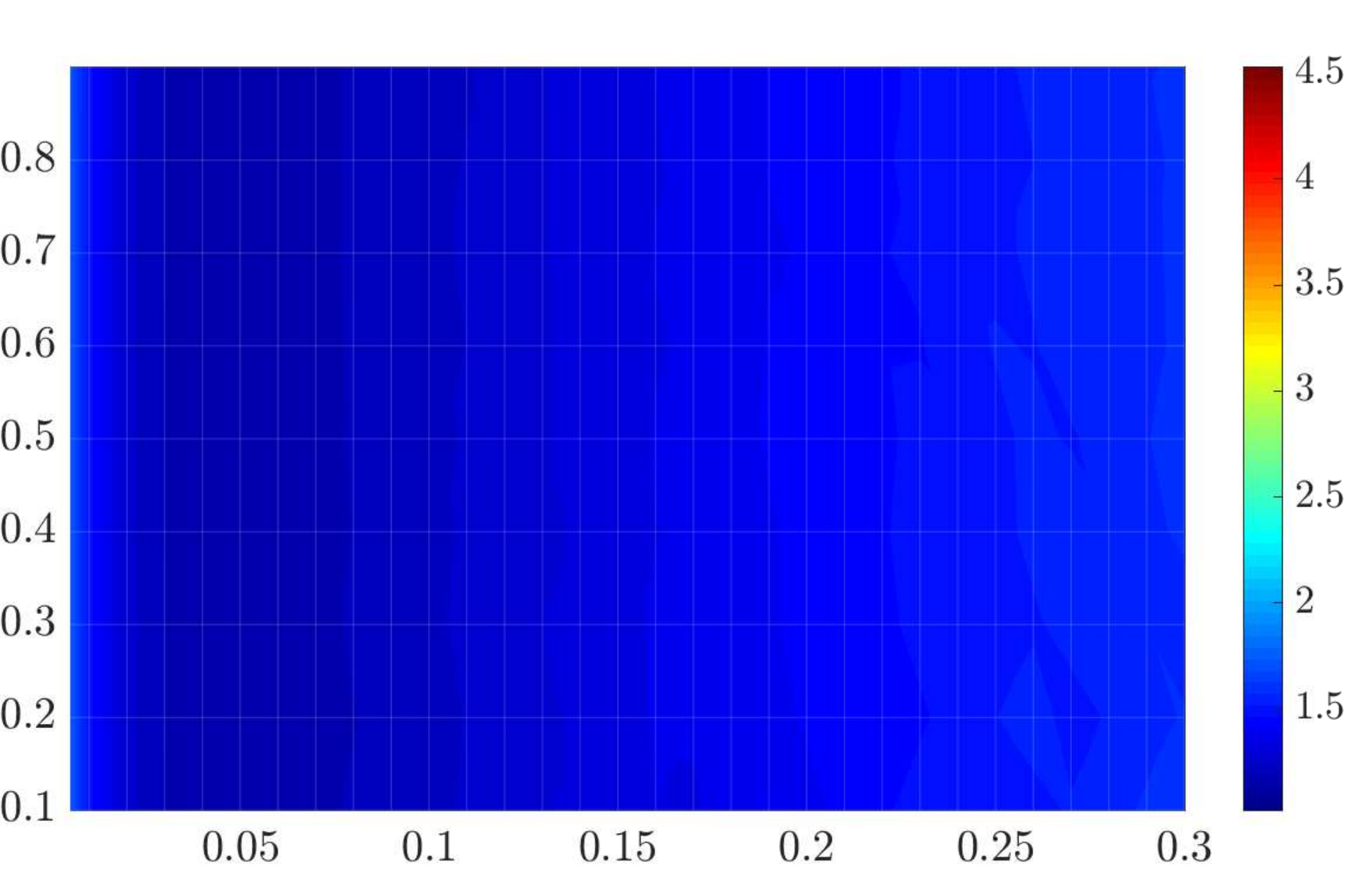}
				\caption*{$\alpha=5\pi/16$} 
			\label{}
		\end{subfigure} 
		\begin{subfigure}[t]{0.22\textwidth}
			\includegraphics[width=1\columnwidth]{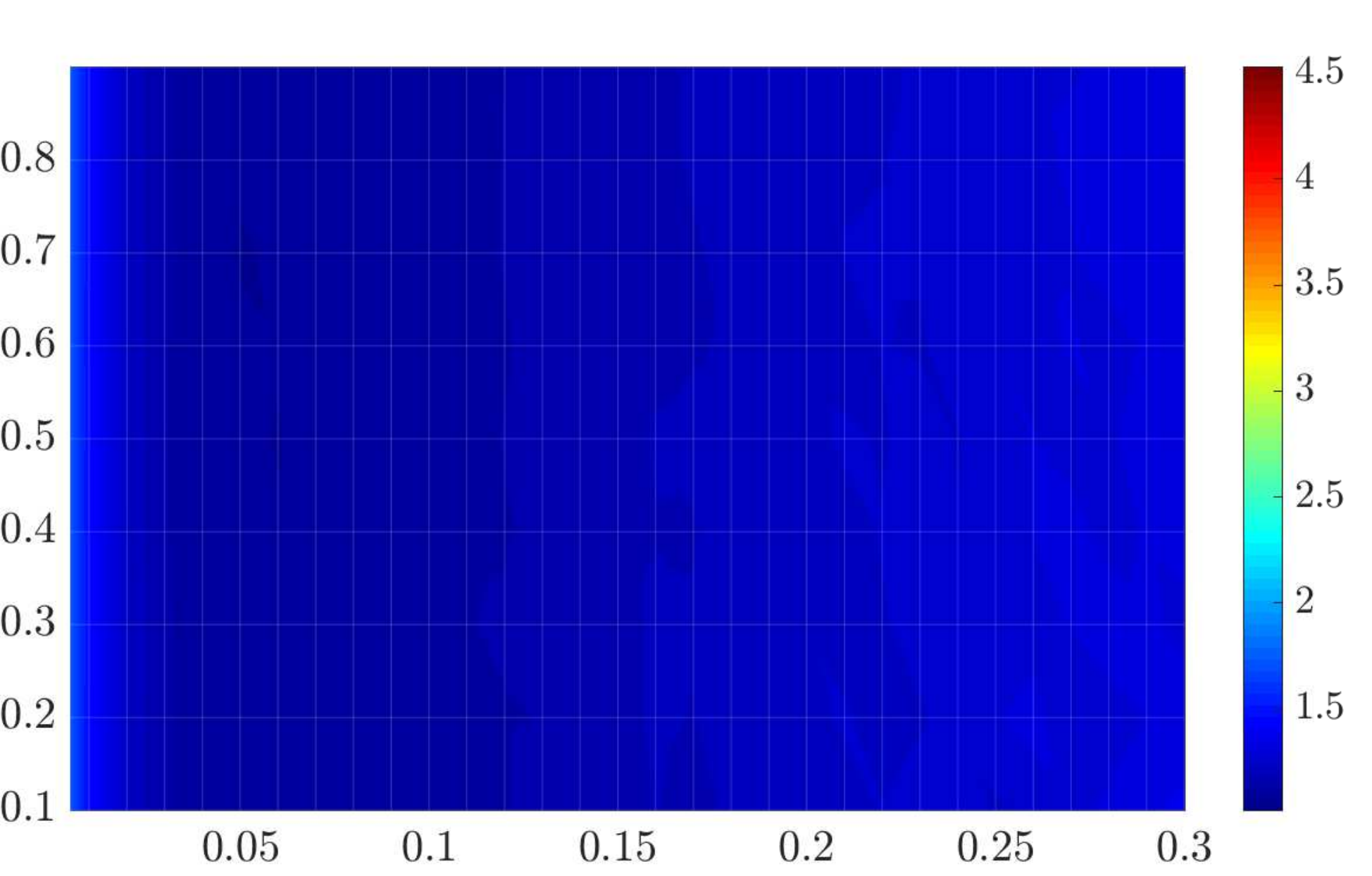}
			\caption*{$\alpha=3\pi/8$} 
				\end{subfigure}
		\begin{subfigure}[t]{0.22\textwidth}
			\centering
			\includegraphics[width=1\columnwidth]{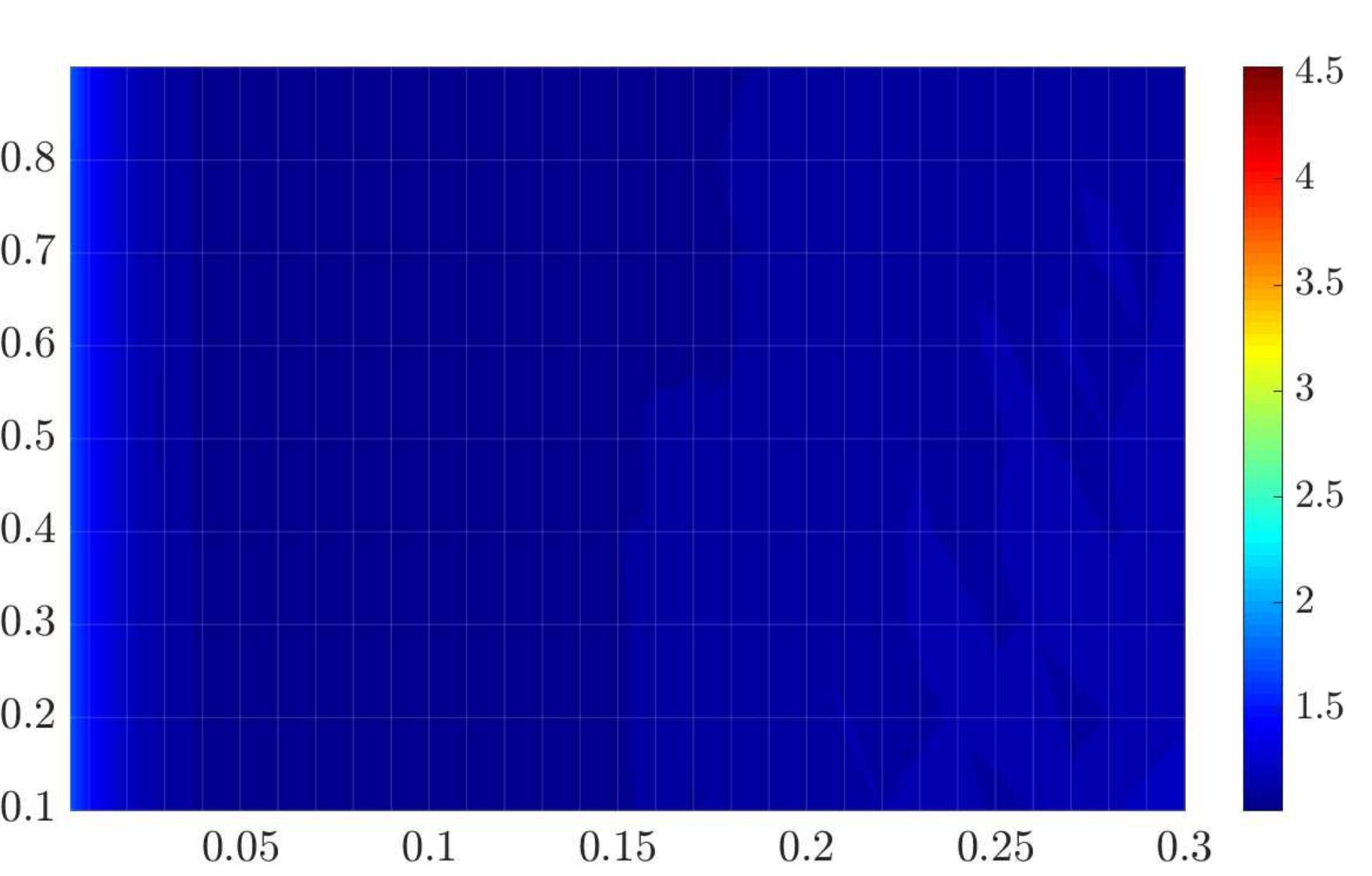}
			\caption*{$\alpha=7\pi/16$} 
			\label{}
		\end{subfigure}
		\caption{Ratio of decoupled nuclear norm and $\ell_1$ norm as a function of data hyper-parameters. We plot $\eta_{\max,\mathcal{D}}/\eta_{\min,\mathcal{D}}$ defined in \eqref{eq:eta_minmax_dec},  for varying hyper-parameters. The $x$-axis is the sub-aperture size and the $y$-axis is the overlap size. We observe that the ratio is small and favors using larger apertures, with almost no dependence on the sub-aperture overlap. We can see a slight decrease in value as $\alpha$ increases.}
		\label{fig:eta_plot_dec}
	\end{figure}	
For the decoupled form (see Figure~\ref{fig:eta_plot_dec}) we observe a weak dependence on the overlap, while the ratio tends to grow with the sub-aperture size. The maximal value for the ratio is achieved for smaller angles, in consistency with results of regular matrix RPCA. 
	
	The tensor case illustrated in Figure~\ref{fig:eta_plot}, presents strong angular dependence, favoring different hyper-parameter configurations at different angles: smaller angles tend to achieve optimal ratio for large apertures with low overlap. However as $\alpha$ increases, the optimal parameter configuration tends towards smaller sub-apertures with higher overlap. The color scale is the same in figures ~\ref{fig:eta_plot_dec} and \ref{fig:eta_plot} indicating that smaller values are obtained in the decoupled case.
		\begin{figure}[htbp!]
		\centering
		\begin{subfigure}[t]{0.22\textwidth}
			\includegraphics[width=1\columnwidth]{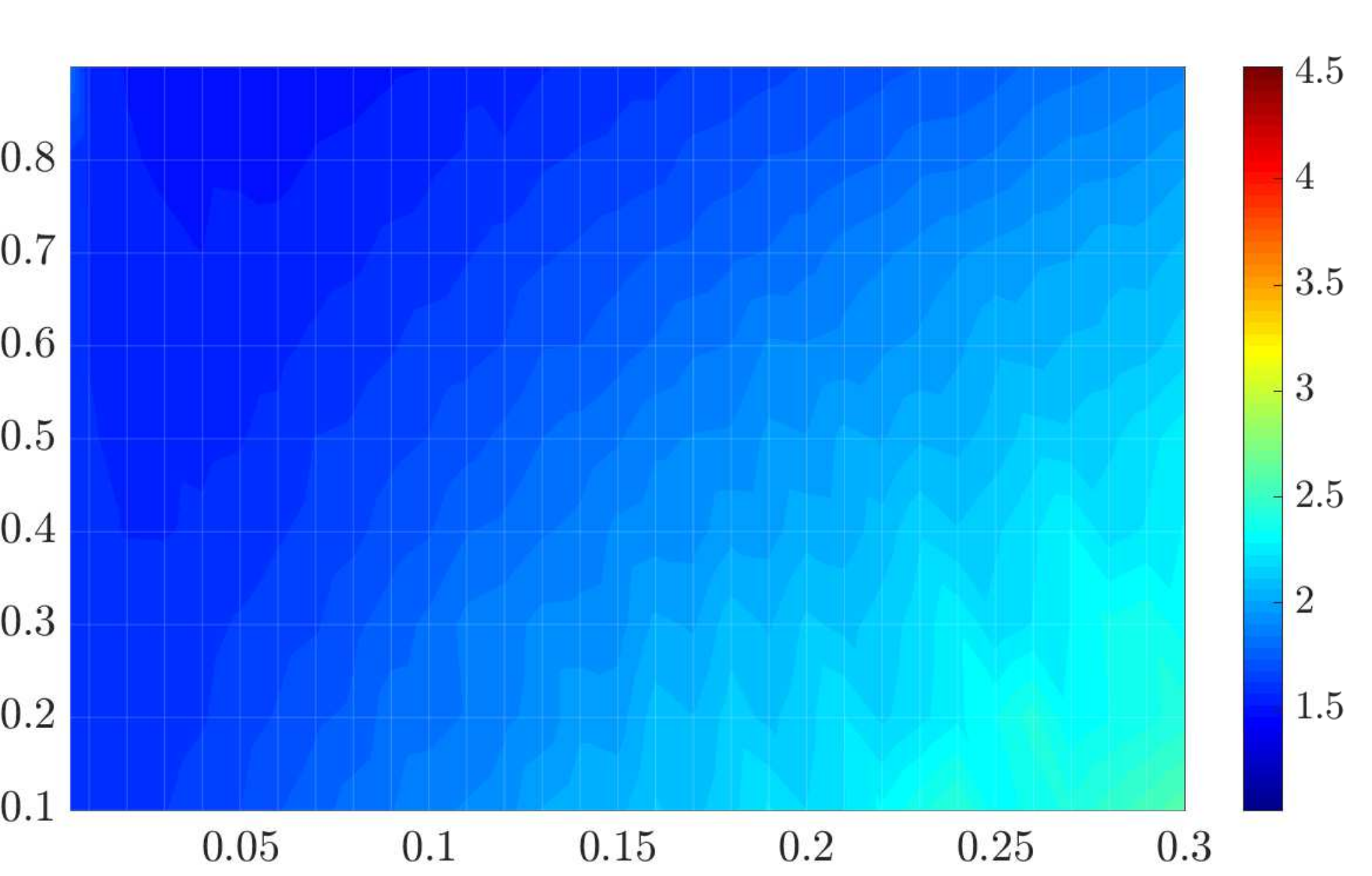}
			\caption*{$\alpha=0$} 
			\label{}
		\end{subfigure}
		\begin{subfigure}[t]{0.22\textwidth}
			\centering
			\includegraphics[width=1\columnwidth]{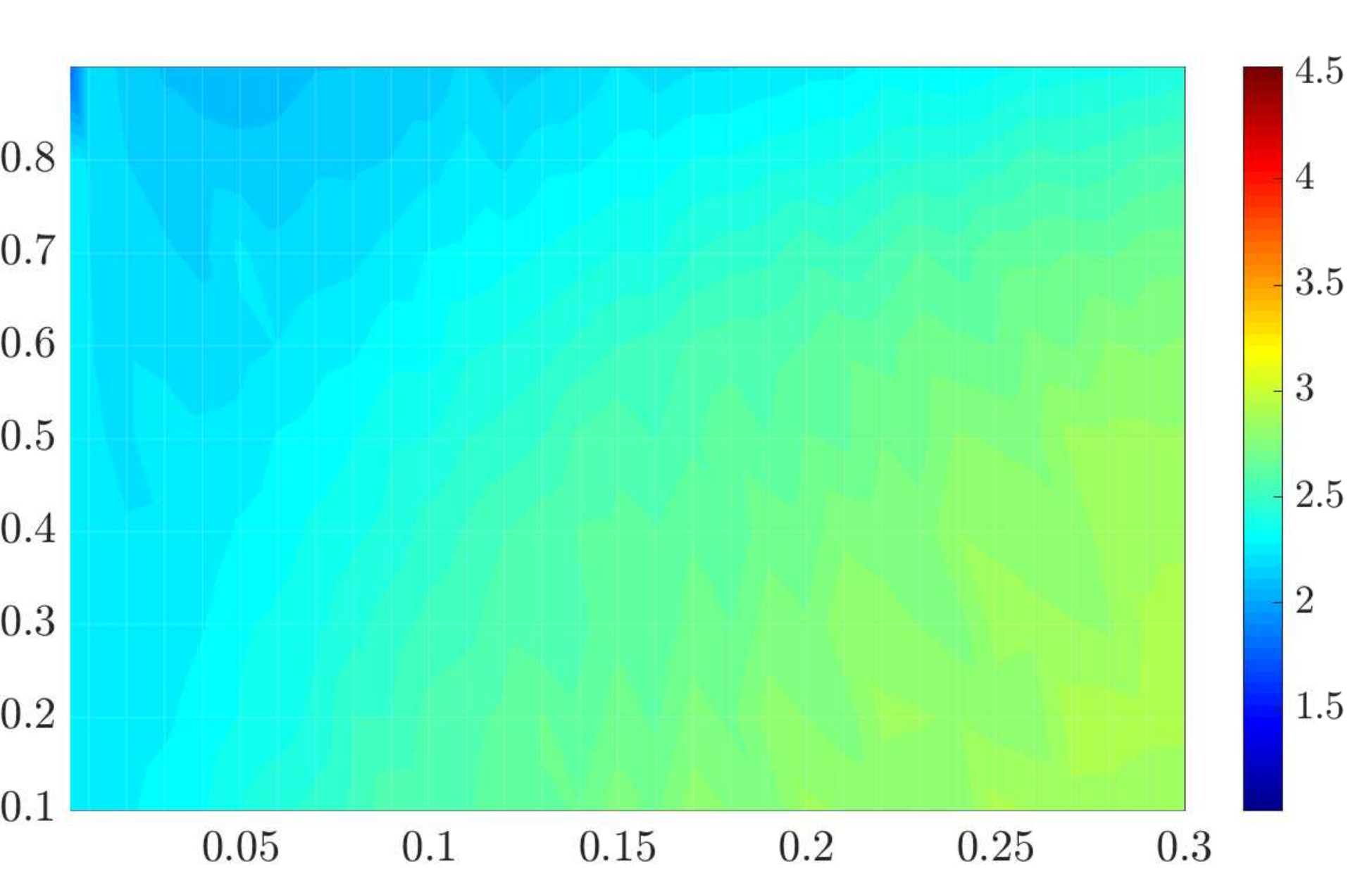}
			\caption*{$\alpha=\pi/16$} 
			\label{}
		\end{subfigure}
		\begin{subfigure}[t]{0.22\textwidth}
			\includegraphics[width=1\columnwidth]{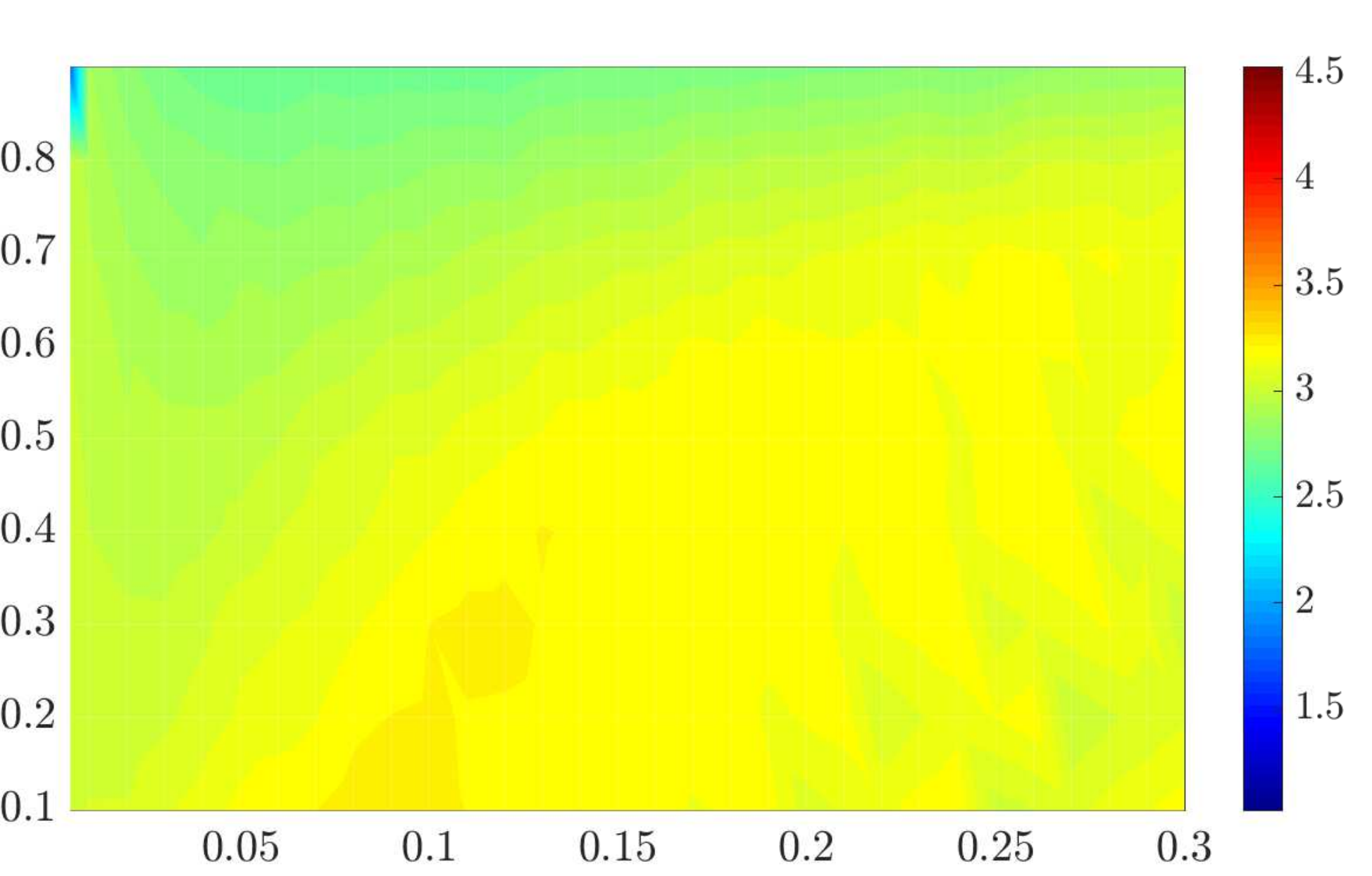}
			\caption*{$\alpha=\pi/8$} 
			\label{}
		\end{subfigure} 
		\begin{subfigure}[t]{0.22\textwidth}
			\includegraphics[width=1\columnwidth]{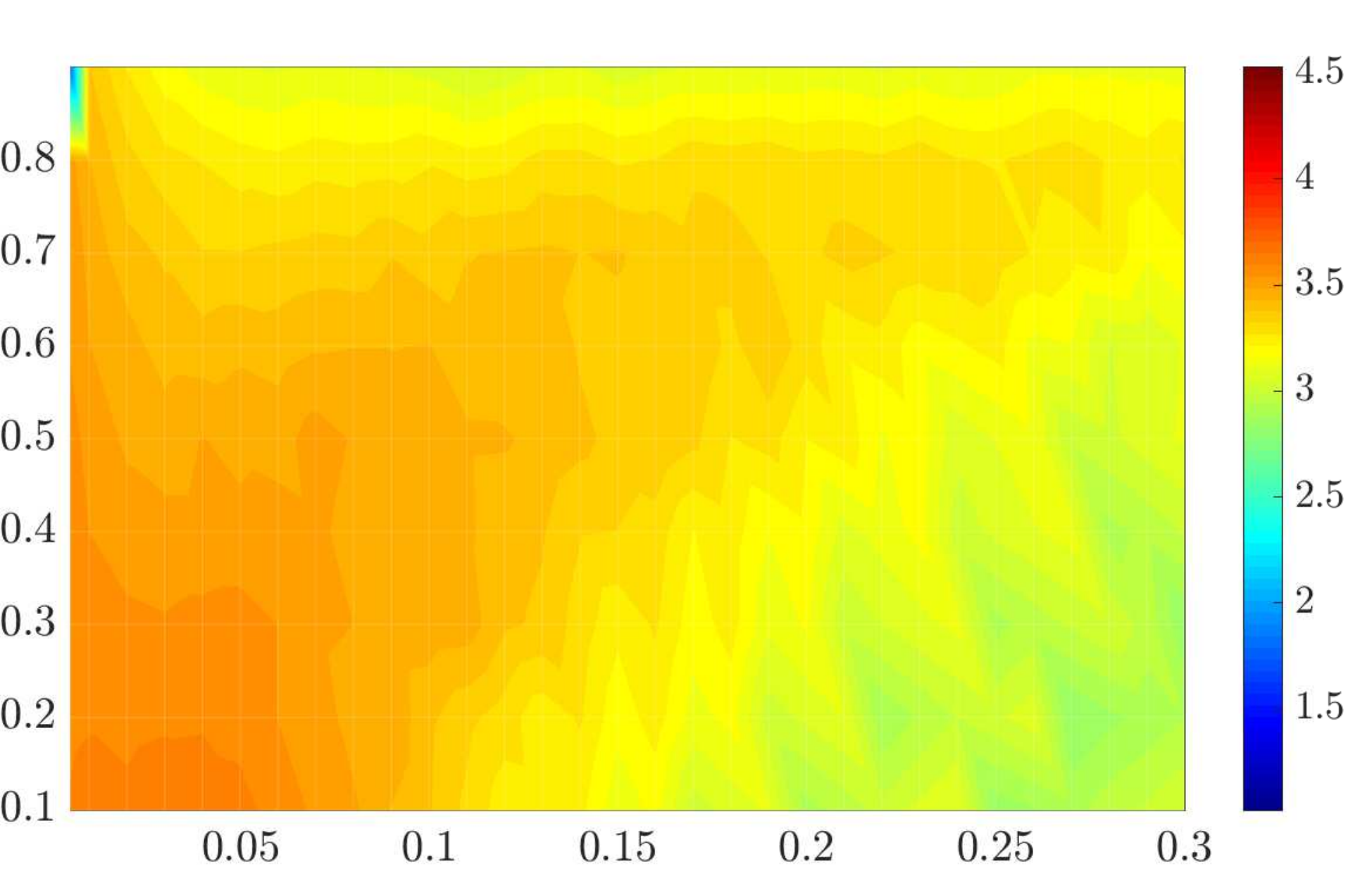}
			\caption*{$\alpha=3\pi/16$} 
			\label{}
		\end{subfigure}
		\begin{subfigure}[t]{0.22\textwidth}
			\centering
			\includegraphics[width=1\columnwidth]{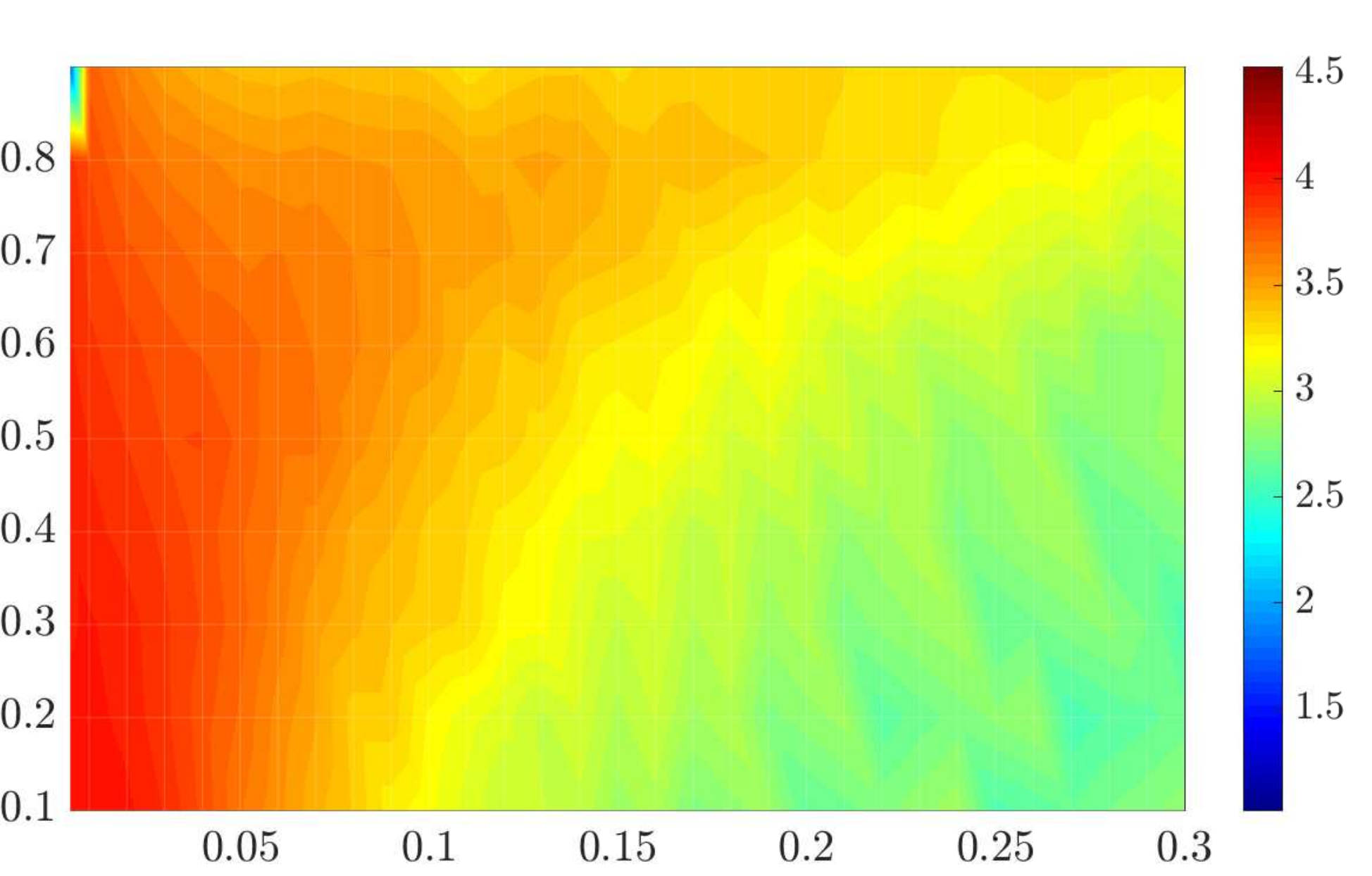}
			\caption*{$\alpha=\pi/4$} 
			\label{}
		\end{subfigure}
		\begin{subfigure}[t]{0.22\textwidth}
			\includegraphics[width=1\columnwidth]{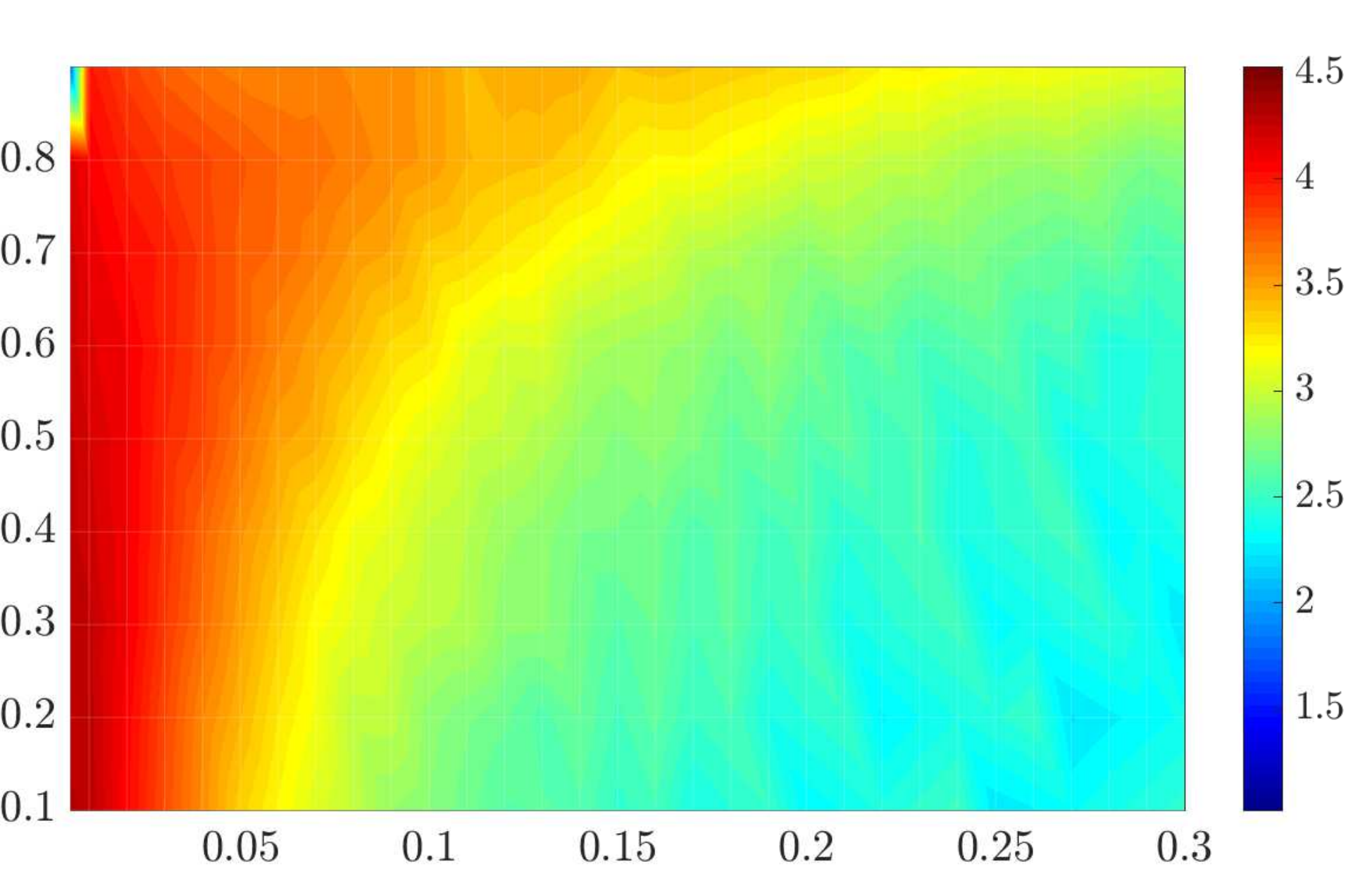}
			\caption*{$\alpha=5\pi/16$} 
			\label{}
		\end{subfigure} 
		\begin{subfigure}[t]{0.22\textwidth}
			\includegraphics[width=1\columnwidth]{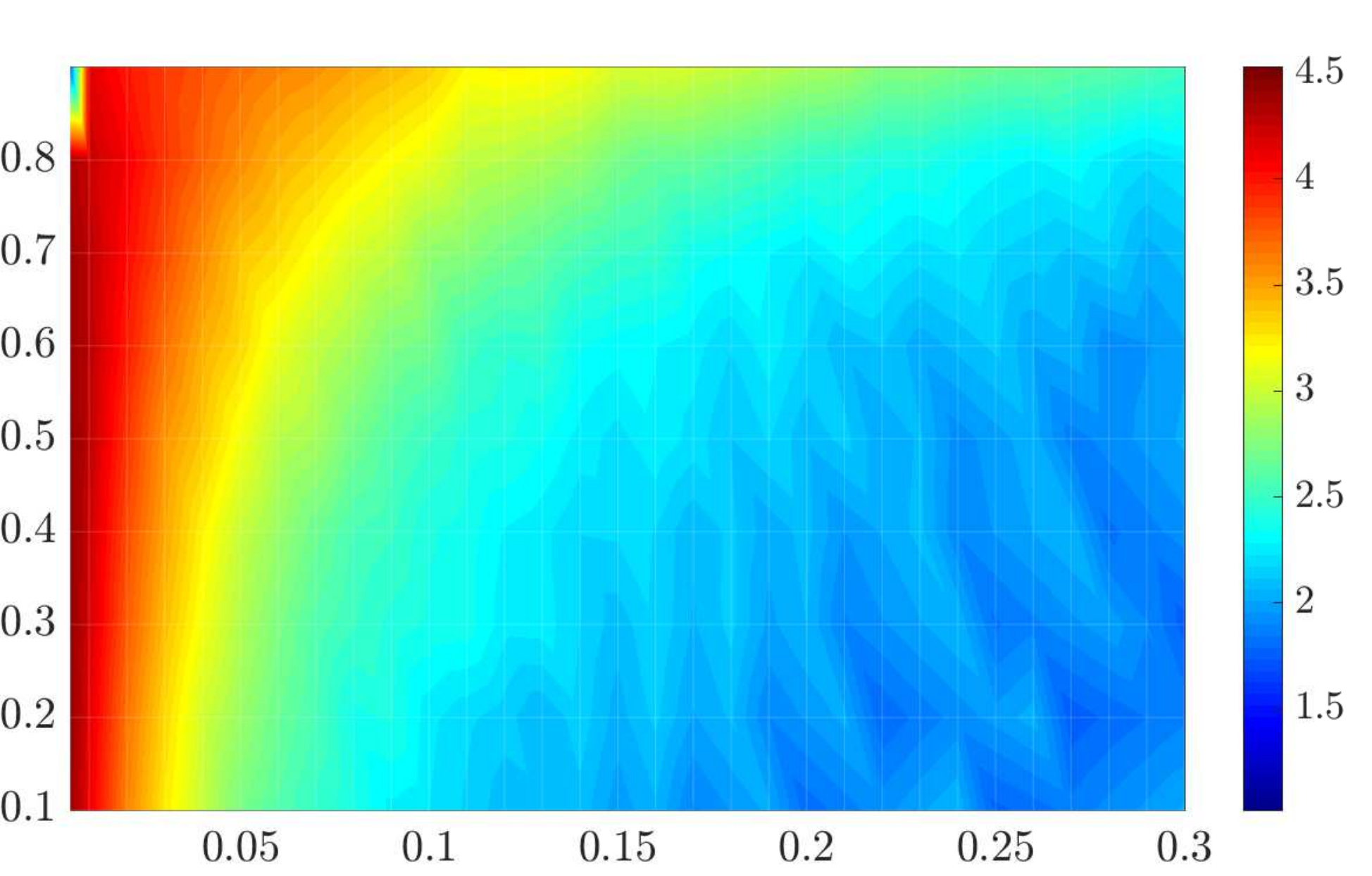}
			\caption*{$\alpha=3\pi/8$} 
		\end{subfigure}
		\begin{subfigure}[t]{0.22\textwidth}
			\centering
			\includegraphics[width=1\columnwidth]{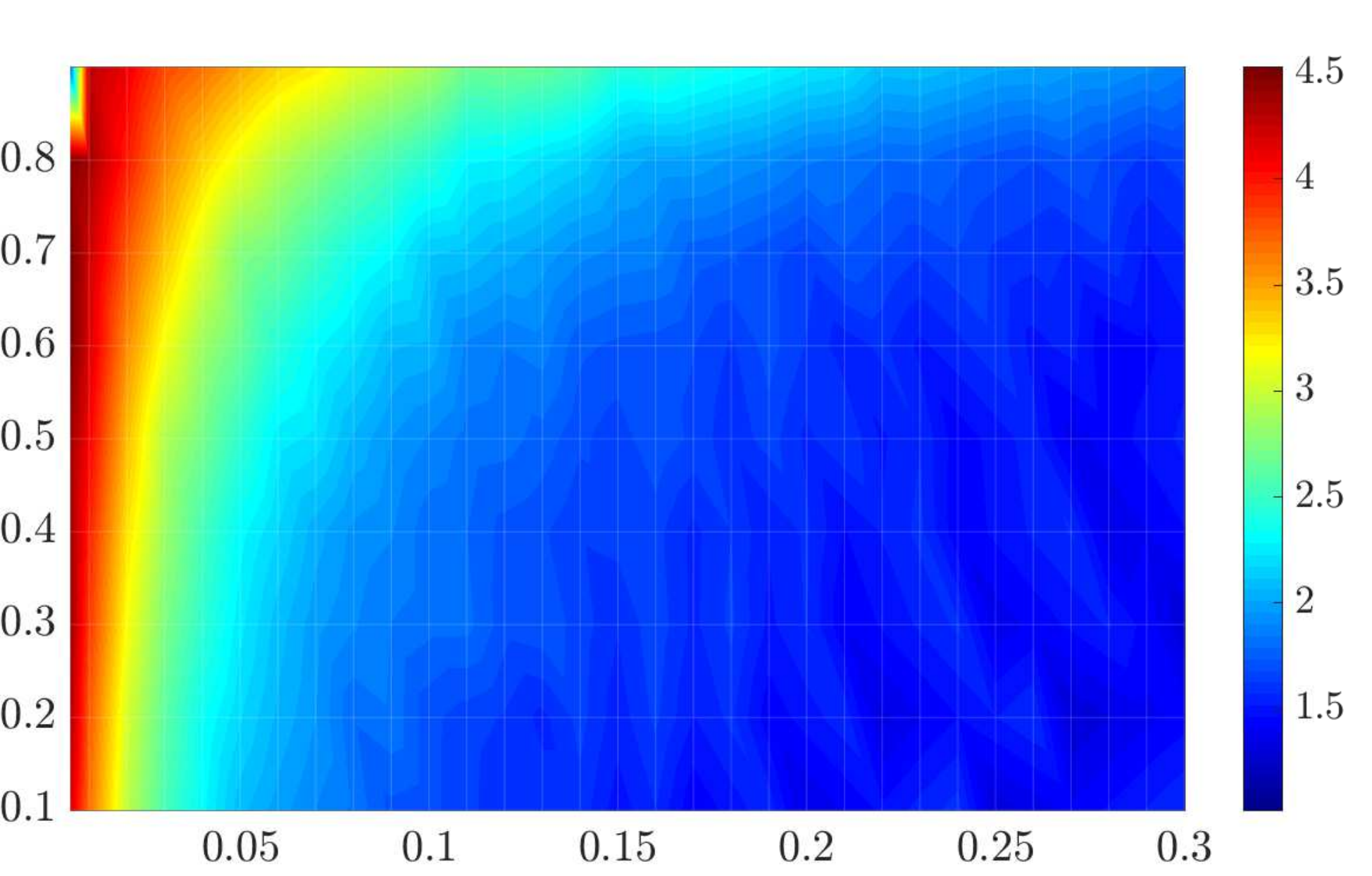}
			\caption*{$\alpha=7\pi/16$} 
			\label{}
		\end{subfigure}
		\caption{Ratio of tensor nuclear norm and $\ell_1$ norm as a function of data hyoerparameters. We plot $\eta_{\max,\mathcal{F}}/\eta_{\min,\mathcal{F}}$, defined in \eqref{eq:eta_minmax}, for varying hyper-parameters. The $x$ axis and $y$ axis are as in Figure~\ref{fig:eta_plot_dec}. We observe that the dependence on the hyper-parameter changes with the target's direction. For small angles, larger values are achieved for large sub-apertures with low overlap. For larger angles, the optimal parameters shift towards smaller sub-apertures with higher overlap. The most robust behavior is achieved around $\pi/4$. An interpretation of these results is given in Section~\ref{sec:interpret}}
		\label{fig:eta_plot}
	\end{figure}

In order to gain more insight, we break down the terms in \eqref{eq:eta_minmax}, namely the $\ell_1$ and nuclear norms of each term, both for the decoupled and tensor forms, and observe their variation in the configuration space.
	Let us first consider the ratio of the background and moving target's $\ell_1$ norm, as illustrated in Figure~\ref{fig:L1_ratio}. We can see that  the $\ell_1$ norm is insensitive to the direction the target is moving as well as to the hyper-parameters. 
	\begin{figure}[htbp!]
	\centering
	\begin{subfigure}[t]{0.22\textwidth}
		\includegraphics[width=1\columnwidth]{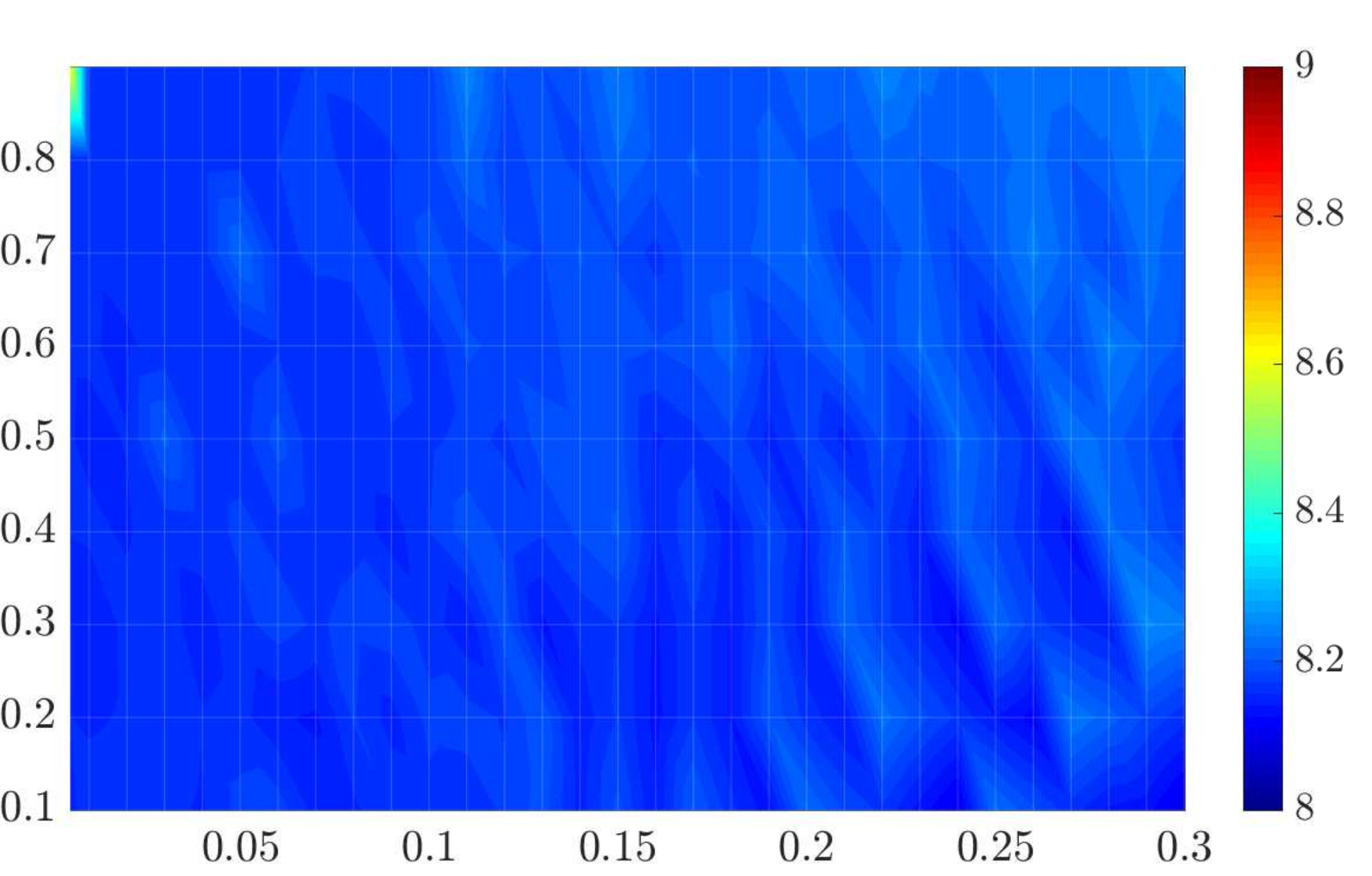}
		\caption*{$\alpha=0$} 
		\label{}
	\end{subfigure}
	\begin{subfigure}[t]{0.22\textwidth}
		\centering
		\includegraphics[width=1\columnwidth]{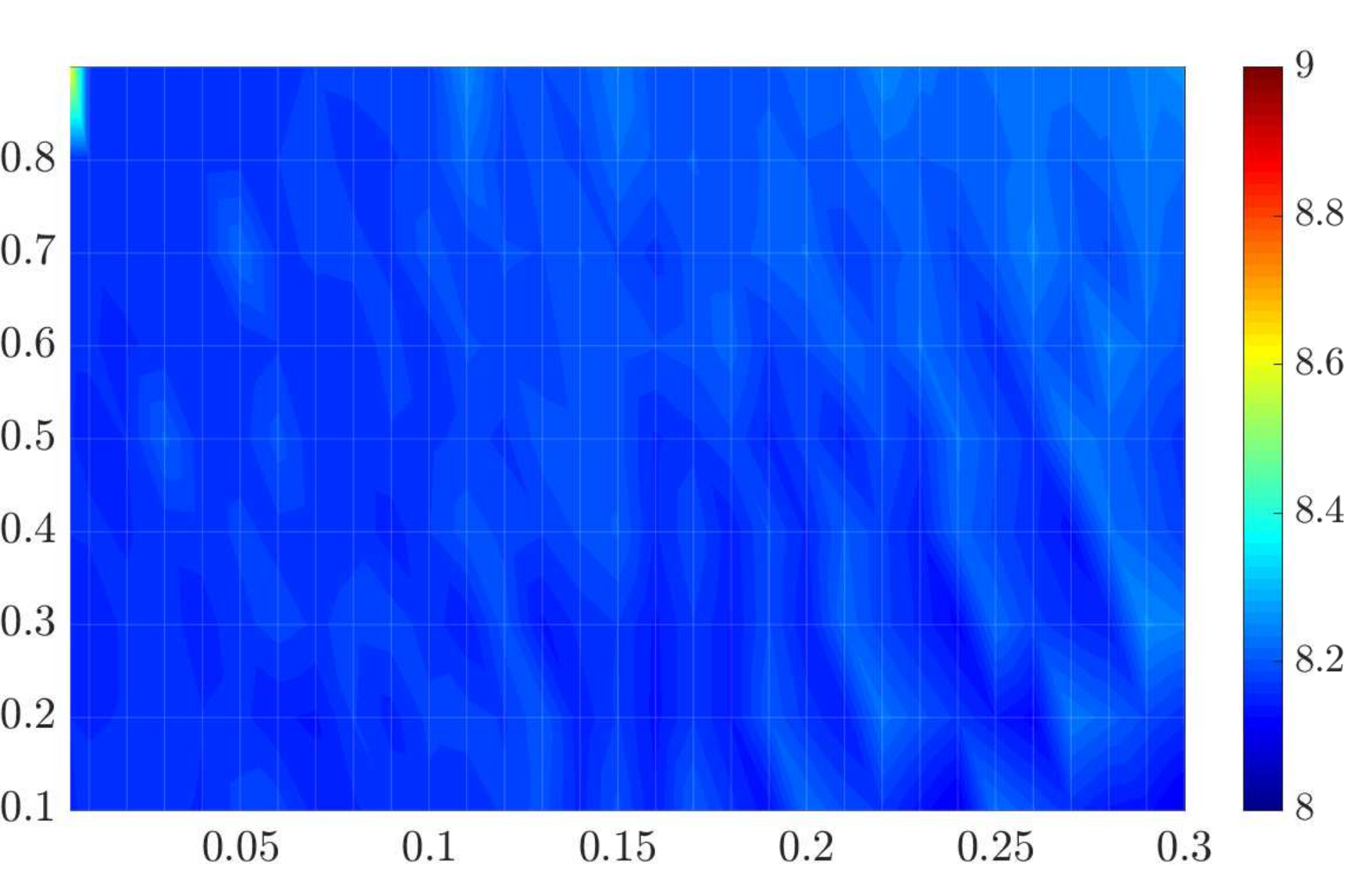}
		\caption*{$\alpha=\pi/16$} 
		\label{}
	\end{subfigure}
	\begin{subfigure}[t]{0.22\textwidth}
		\includegraphics[width=1\columnwidth]{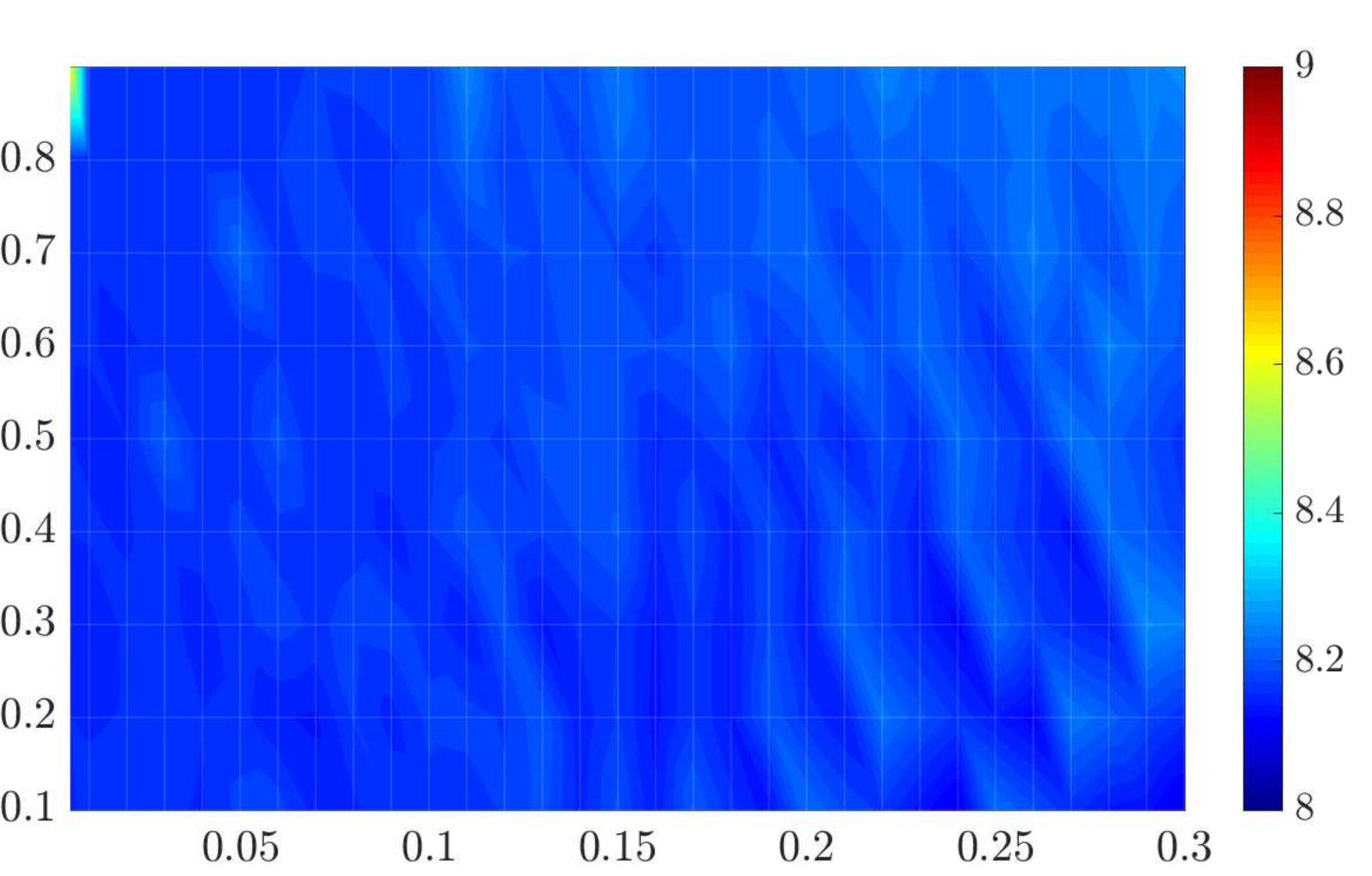}
		\caption*{$\alpha=\pi/8$} 
		\label{}
	\end{subfigure}
	\begin{subfigure}[t]{0.22\textwidth}
		\includegraphics[width=1\columnwidth]{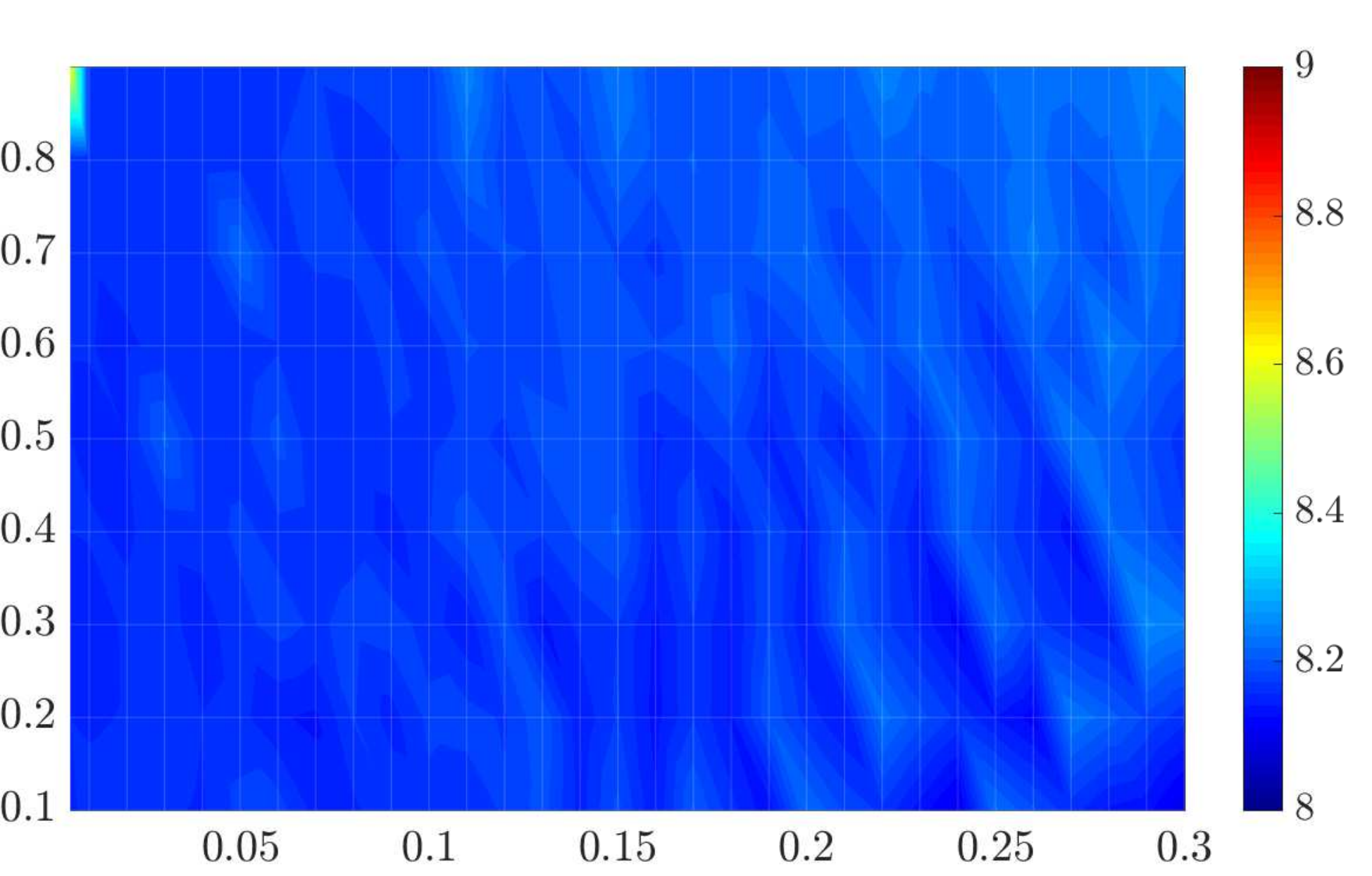}
		\caption*{$\alpha=3\pi/16$} 
		\label{}
	\end{subfigure}
	\begin{subfigure}[t]{0.22\textwidth}
		\centering
		\includegraphics[width=1\columnwidth]{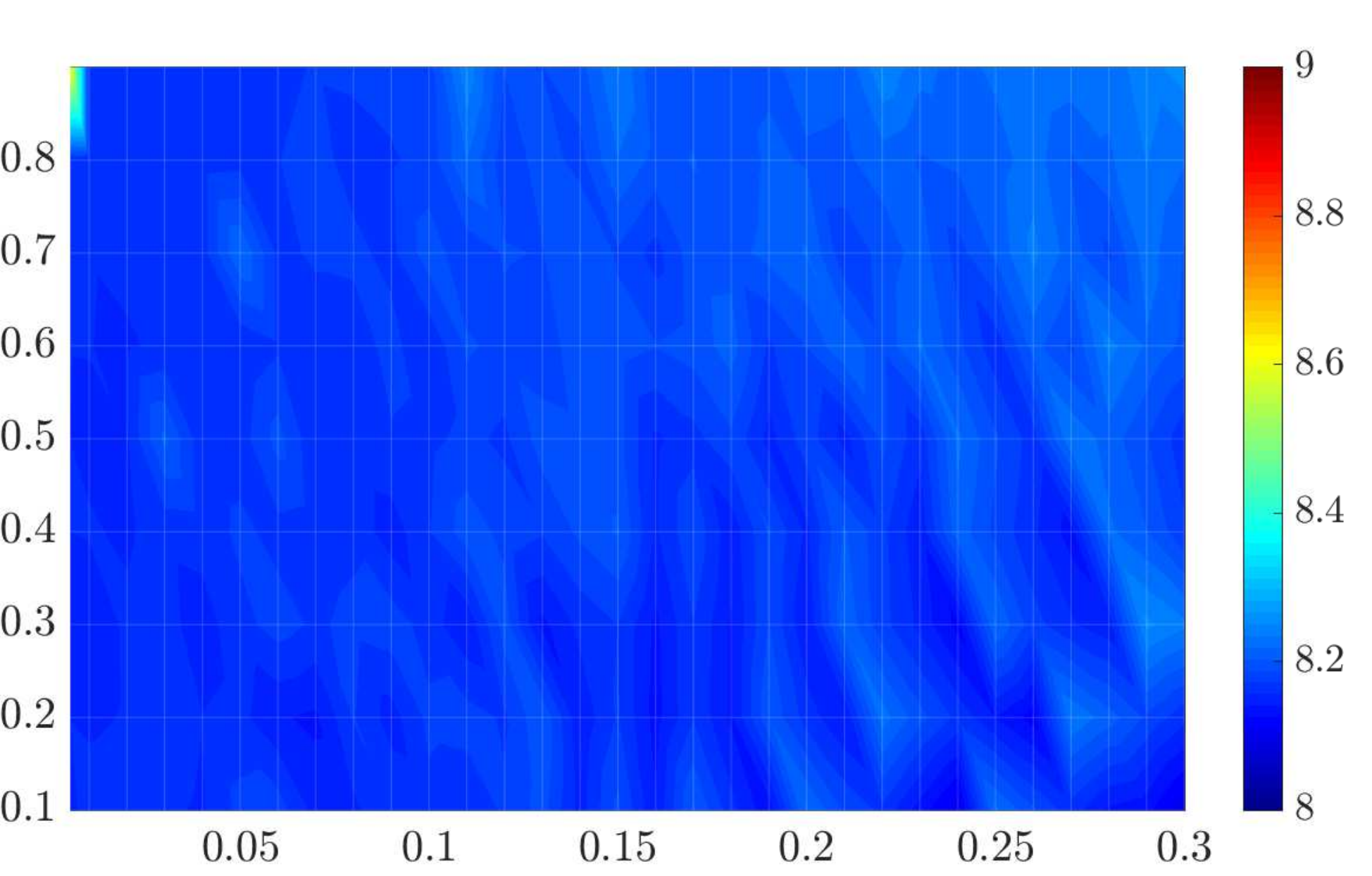}
		\caption*{$\alpha=\pi/4$} 
		\label{}
	\end{subfigure}
	\begin{subfigure}[t]{0.22\textwidth}
		\includegraphics[width=1\columnwidth]{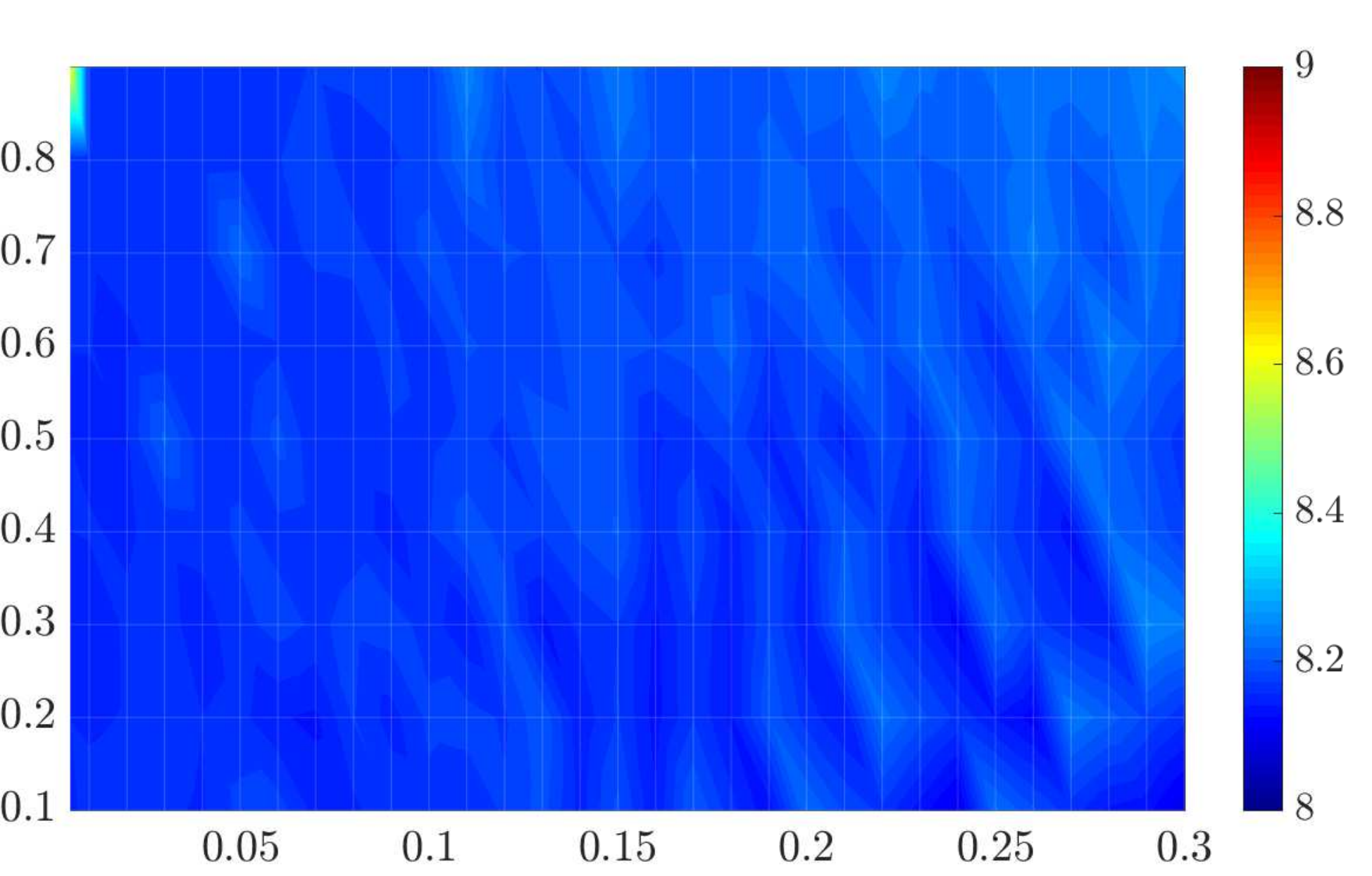}
		\caption*{$\alpha=5\pi/16$} 
		\label{}
	\end{subfigure} 
	\begin{subfigure}[t]{0.22\textwidth}
		\includegraphics[width=1\columnwidth]{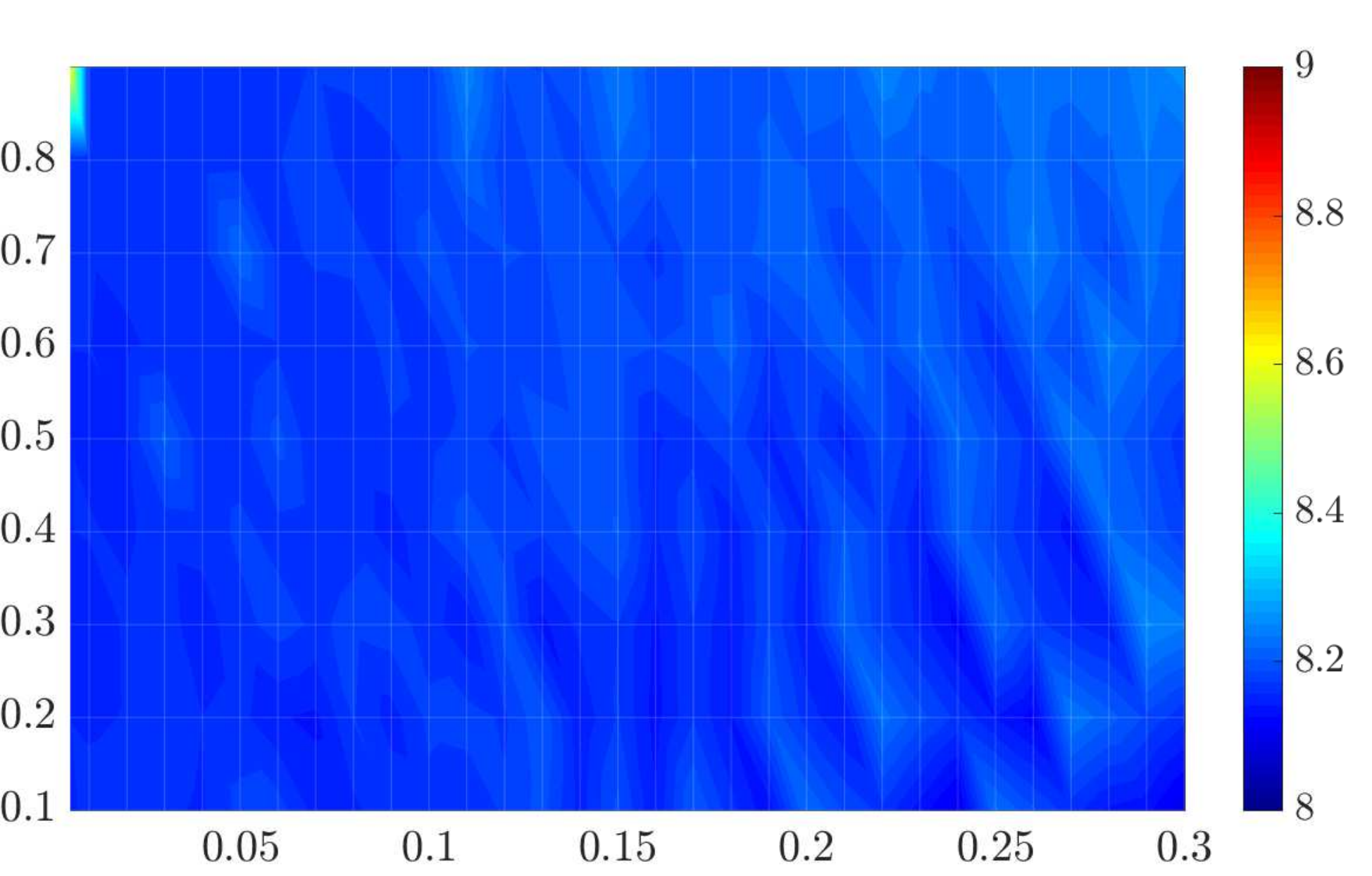}
		\caption*{$\alpha=3\pi/8$} 
	\end{subfigure}
	\begin{subfigure}[t]{0.22\textwidth}
		\centering
		\includegraphics[width=1\columnwidth]{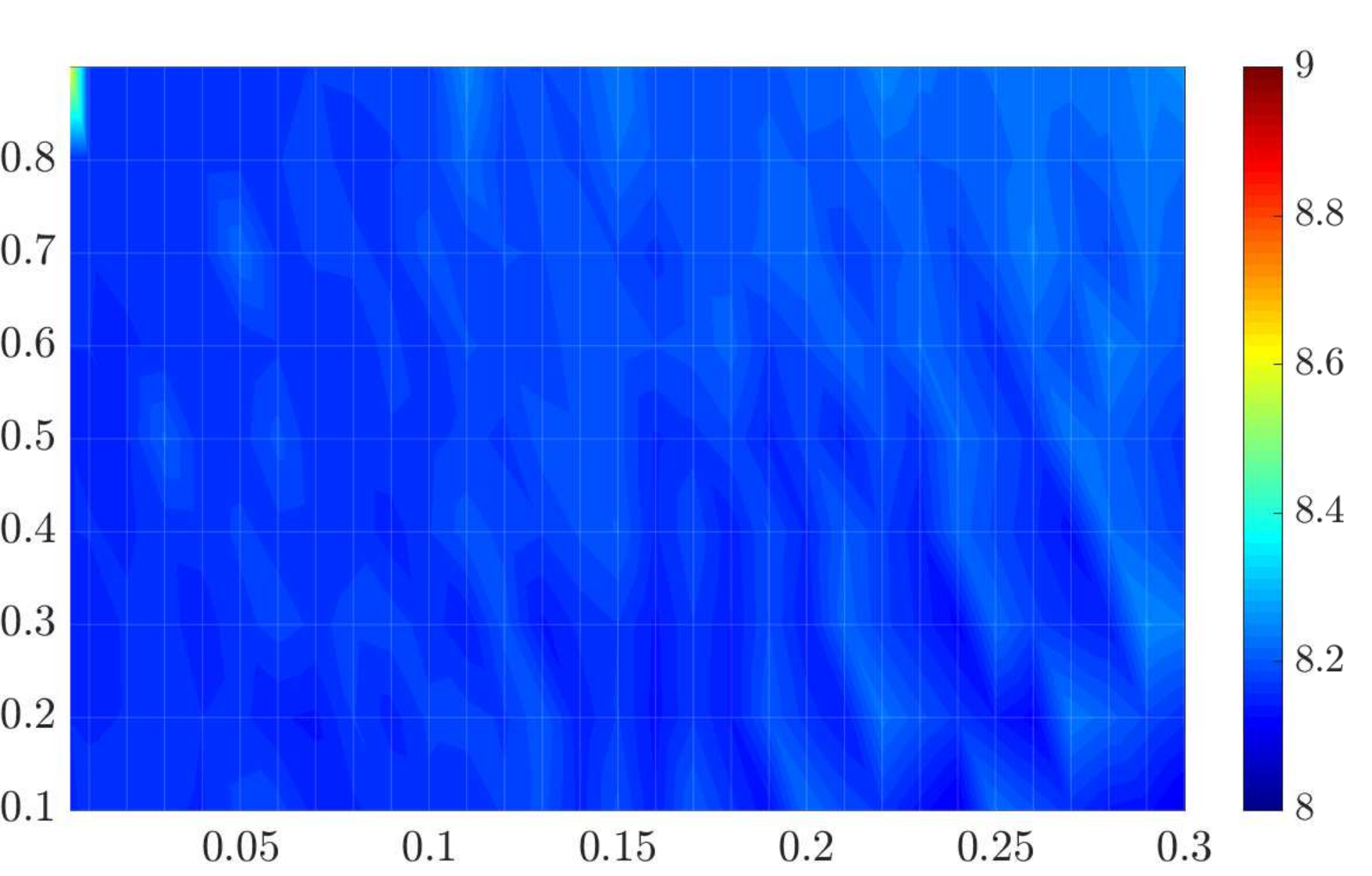}
		\caption*{$\alpha=7\pi/16$} 
		\label{}
	\end{subfigure}
		\caption{$\ell_1$ norm ratio for moving and stationary target as a function of hyper-parameters. We plot the ratio of the background's and moving target's $\ell_1$ norm for different directions, for varying hyper-parameters. The $x$ axis and $y$ axis are as in Figure~\ref{fig:eta_plot_dec}. As we can see, there is little variation and the $\ell_1$ norm values remain similar for all directions.}
\label{fig:L1_ratio}
\end{figure}
    
   We next wish to observe the effect of the different parameters on the nuclear norm. To better understand the effect on the tensor nuclear norm, we look at the ratio between the tensor nuclear norm and the decoupled nuclear norm. Rather than looking at the norms separately, this ratio indicates what is the added benefit of the tensor nuclear norm, factoring out single aperture effects.  
   We plot the ratio of nuclear and decoupled norms for the stationary background as function of the hyper-parameters in Figure~\ref{fig:nucS_ratio_0}. Here we observe strong dependence.
 The tensor nuclear norm is actually smaller than the decoupled norm for small sub-apertures with high overlap.
	\begin{figure}[htbp!]
		\centering
		\includegraphics[width=0.3\columnwidth]{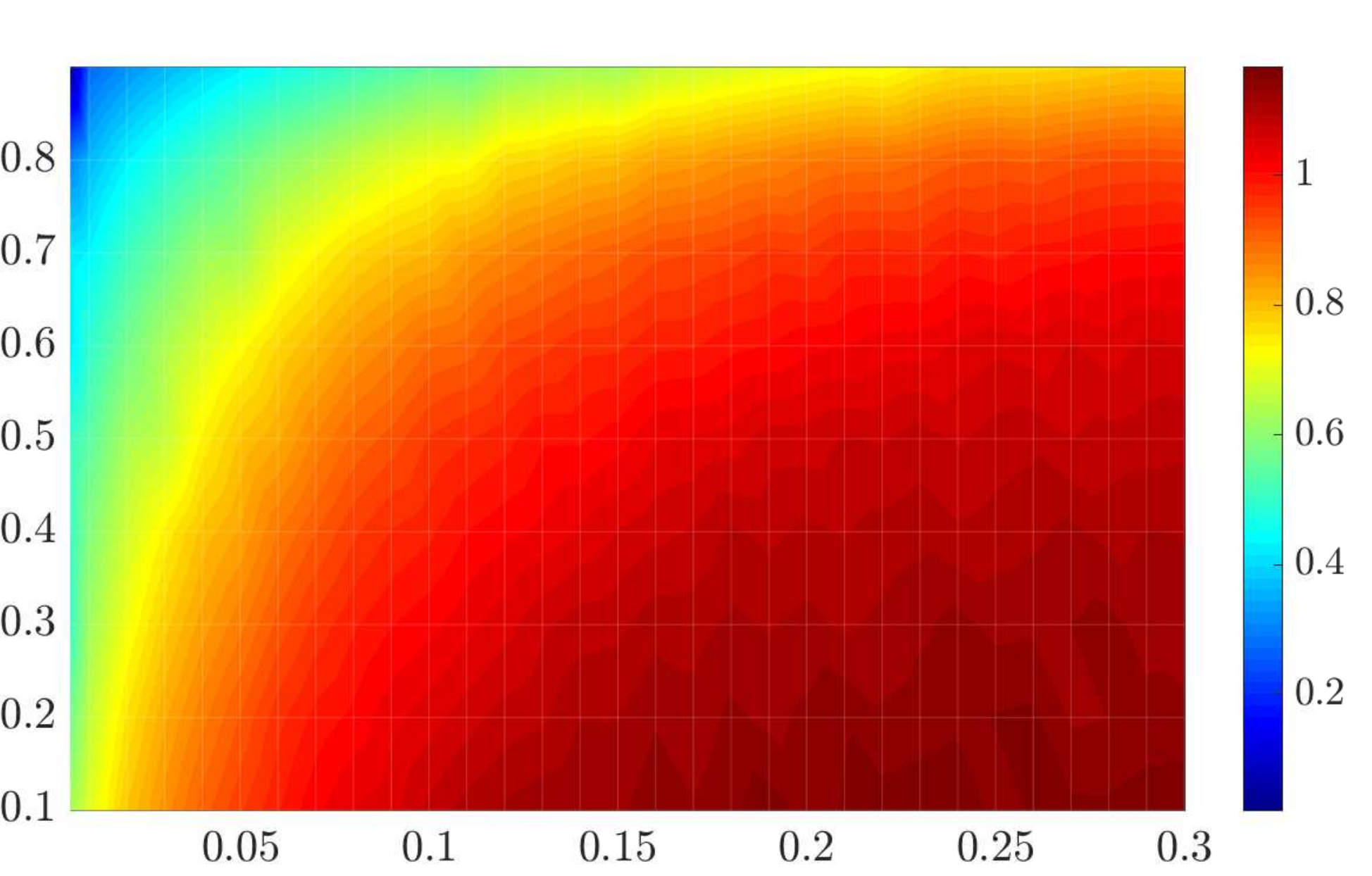}
		\caption{Tensor vs. decoupled nuclear norm for a stationary background, as a function of hyper-parameters. We plot the ratio of tensor and decoupled nuclear norm $\|\mathcal A\|_{*,\mathcal{F}}/\|\mathcal A\|_{*,\mathcal{D}}$ for a stationary background containing 10 targets, for varying hyper-parameters. The $x$ axis and $y$ axis are as in Figure~\ref{fig:eta_plot_dec}. We see that for every sub-aperture size, the ratio decreases with the overlap. An interpretation of this result is given in Section~\ref{sec:interpret} }
		\label{fig:nucS_ratio_0}
	\end{figure}
    
 We next plot the ratio of the nuclear tensor and decoupled norms for the moving target in Figure~\ref{fig:nucV_ratio}. In this case we observe strong dependence on both the direction and the hyper-parameters. The highest ratio is achieved for small sub-apertures with low overlap, but the value, and rate of variation show strong angular dependence. 
	\begin{figure}[htbp!]
	\centering
	\begin{subfigure}[t]{0.22\textwidth}
		\includegraphics[width=1\columnwidth]{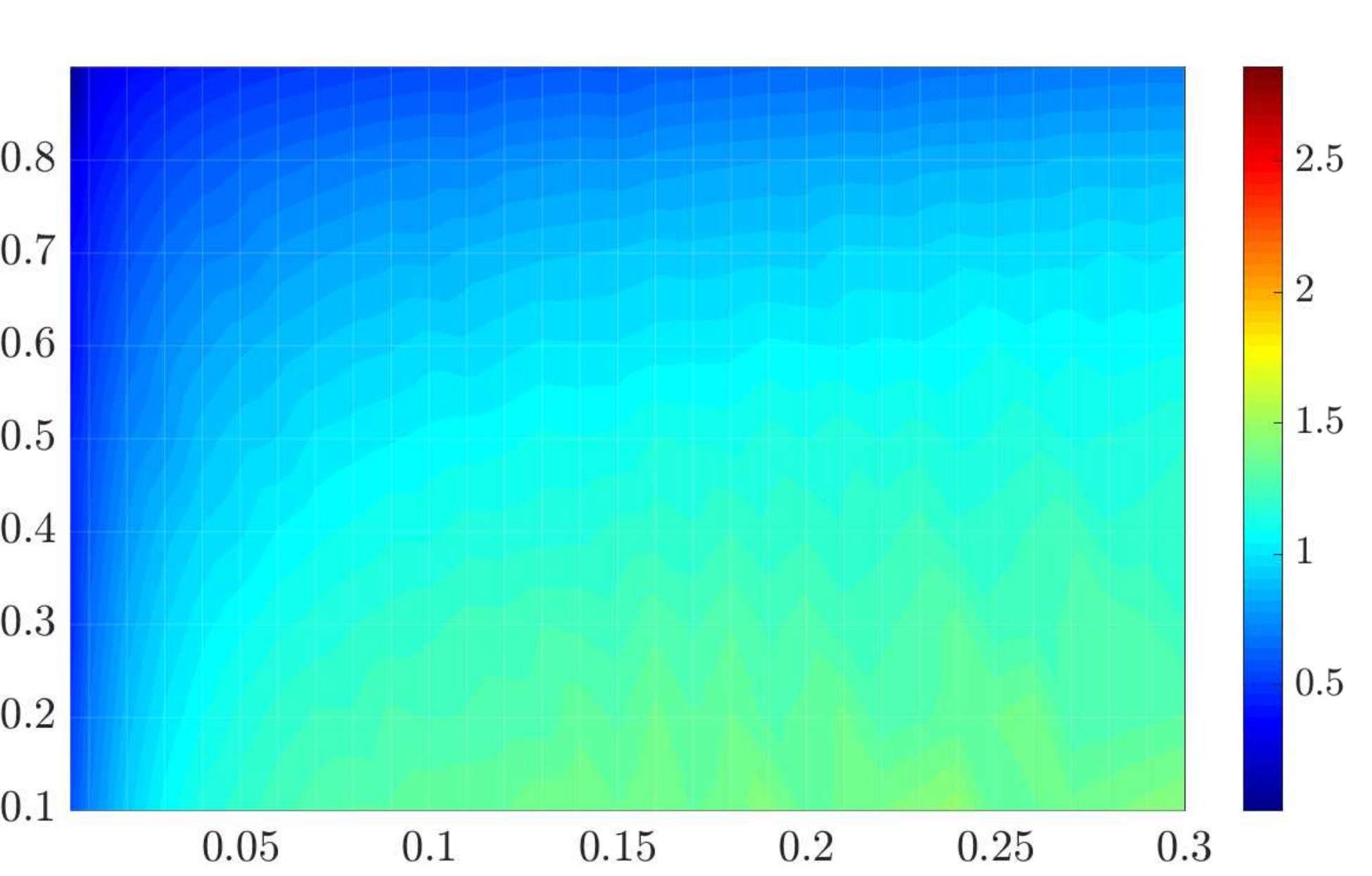}
		\caption*{$\alpha=0$} 
		\label{}
	\end{subfigure}
	\begin{subfigure}[t]{0.22\textwidth}
		\centering
		\includegraphics[width=1\columnwidth]{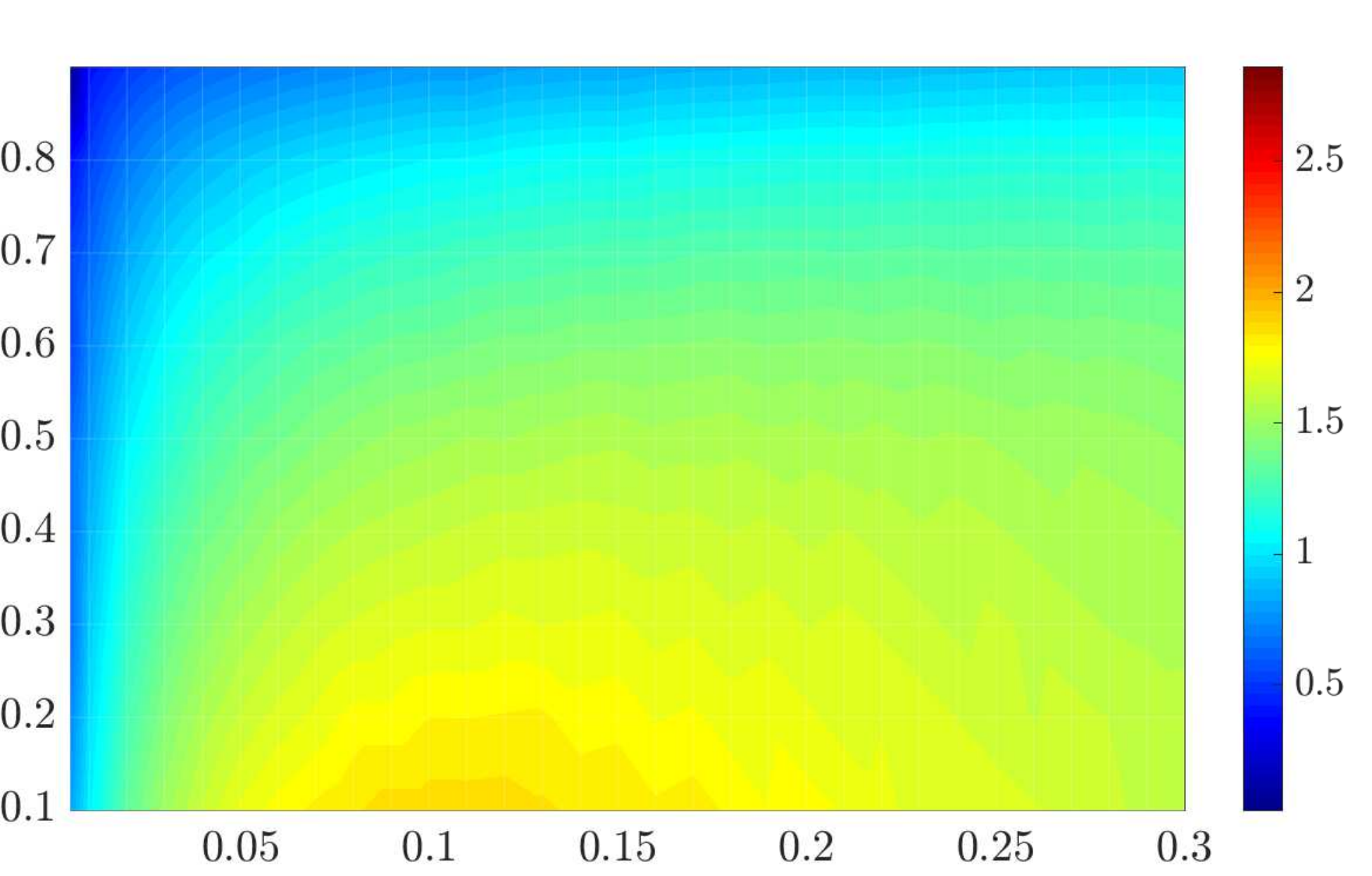}
		\caption*{$\alpha=\pi/16$} 
		\label{}
	\end{subfigure}
	\begin{subfigure}[t]{0.22\textwidth}
		\includegraphics[width=1\columnwidth]{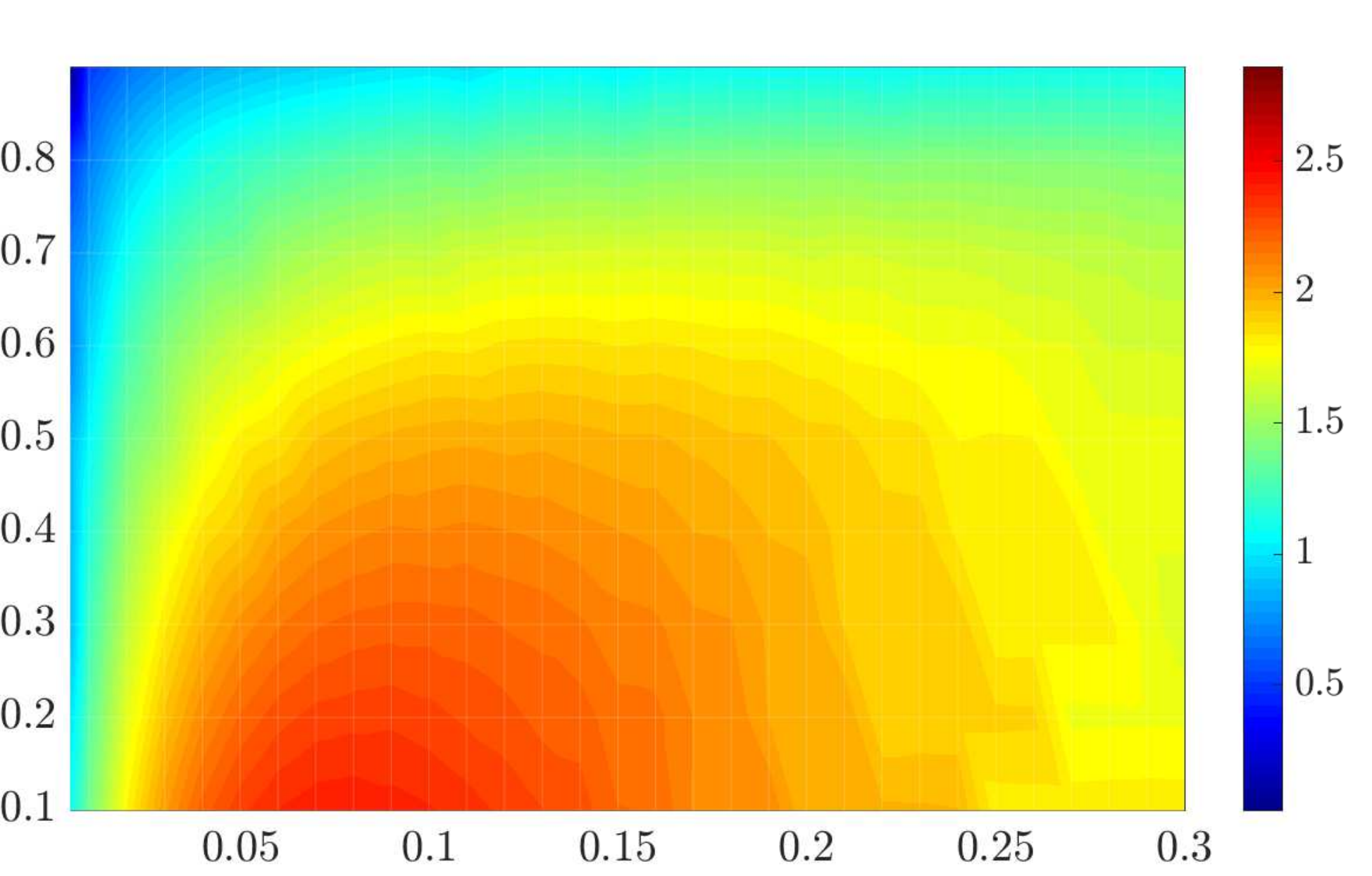}
		\caption*{$\alpha=\pi/8$} 
		\label{}
	\end{subfigure} 
		\begin{subfigure}[t]{0.22\textwidth}
		\includegraphics[width=1\columnwidth]{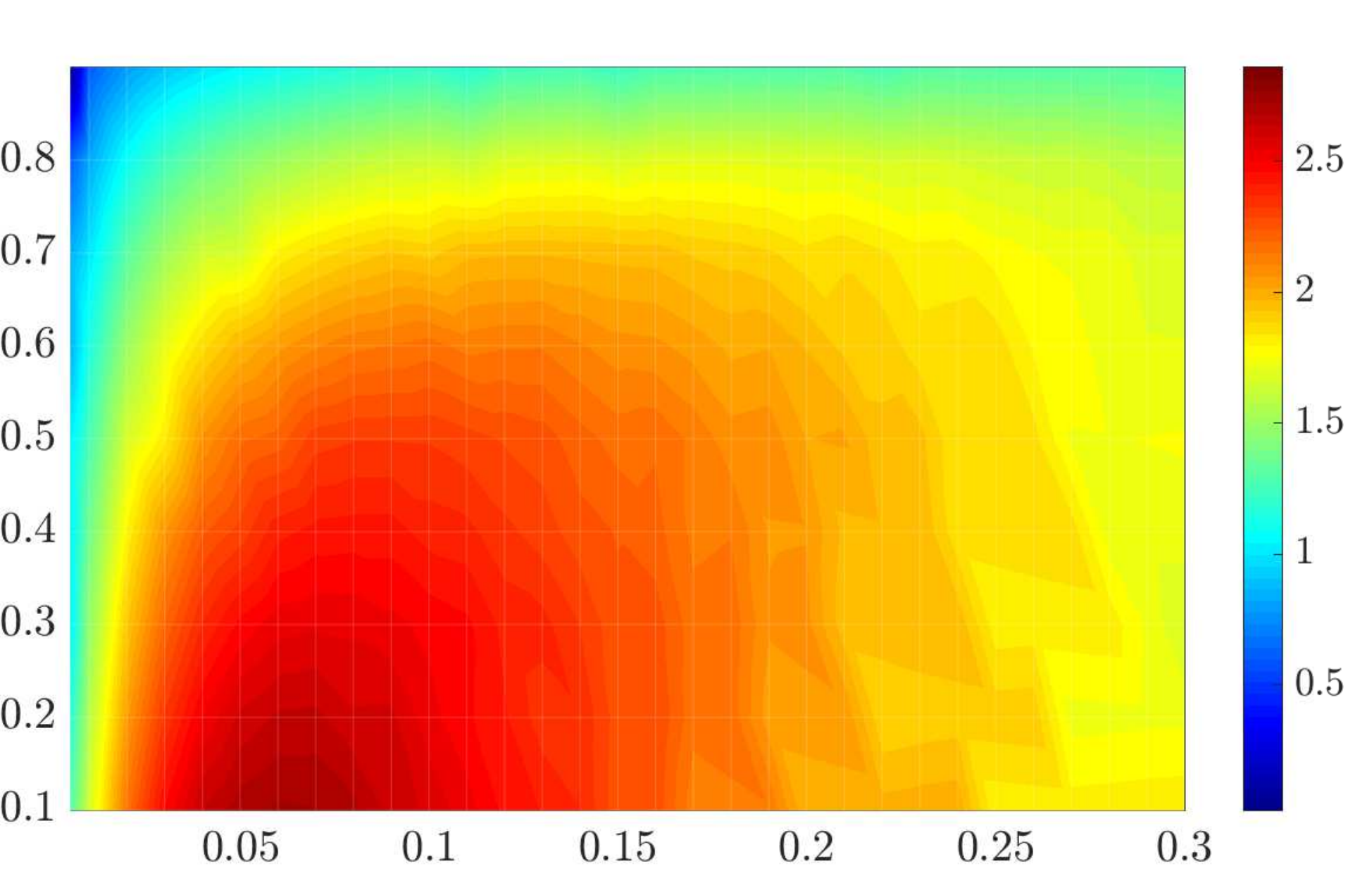}
		\caption*{$\alpha=3\pi/16$} 
		\label{}
	\end{subfigure}
	\begin{subfigure}[t]{0.22\textwidth}
		\centering
		\includegraphics[width=1\columnwidth]{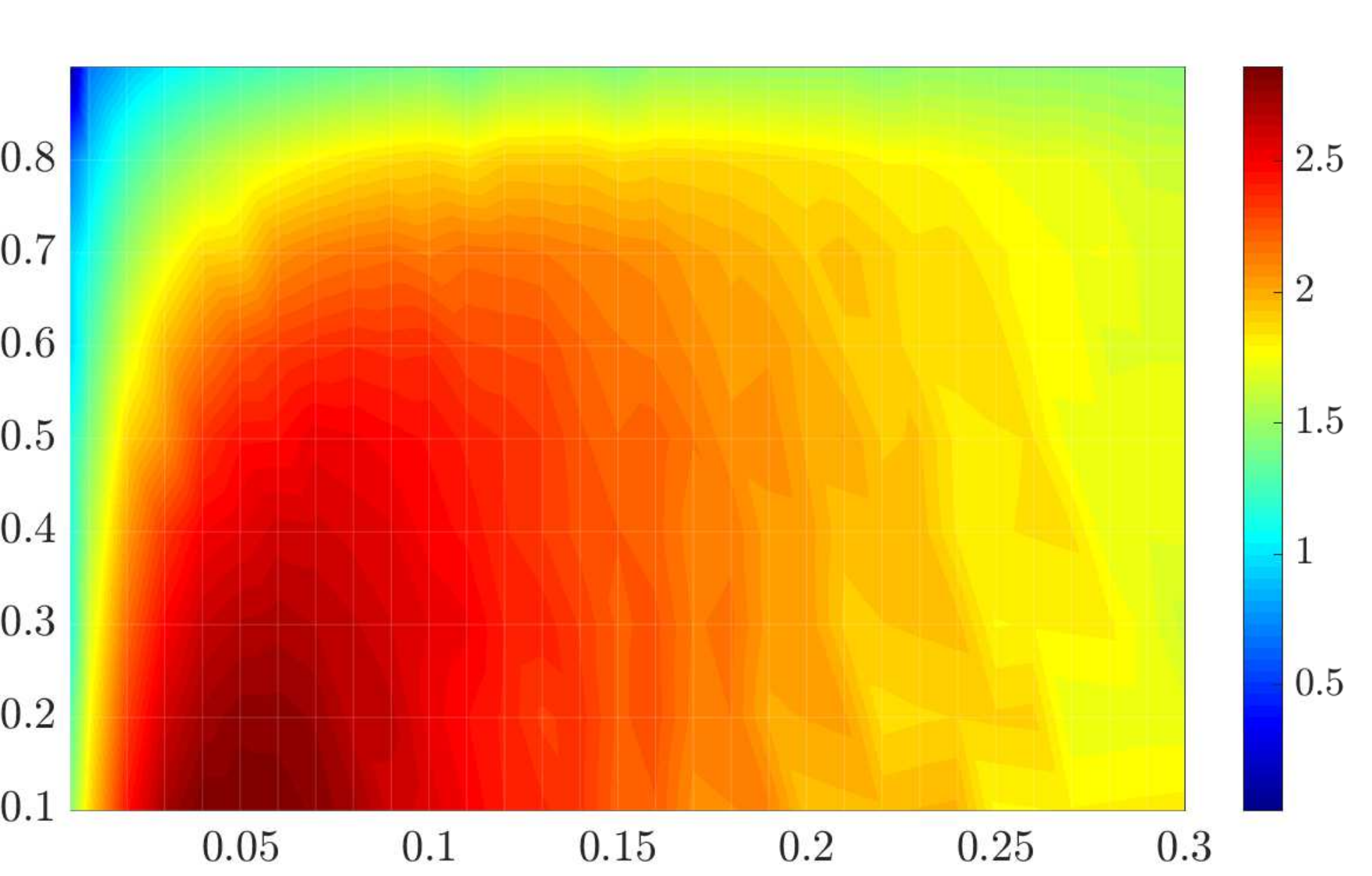}
		\caption*{$\alpha=\pi/4$} 
		\label{}
	\end{subfigure}
	\begin{subfigure}[t]{0.22\textwidth}
		\includegraphics[width=1\columnwidth]{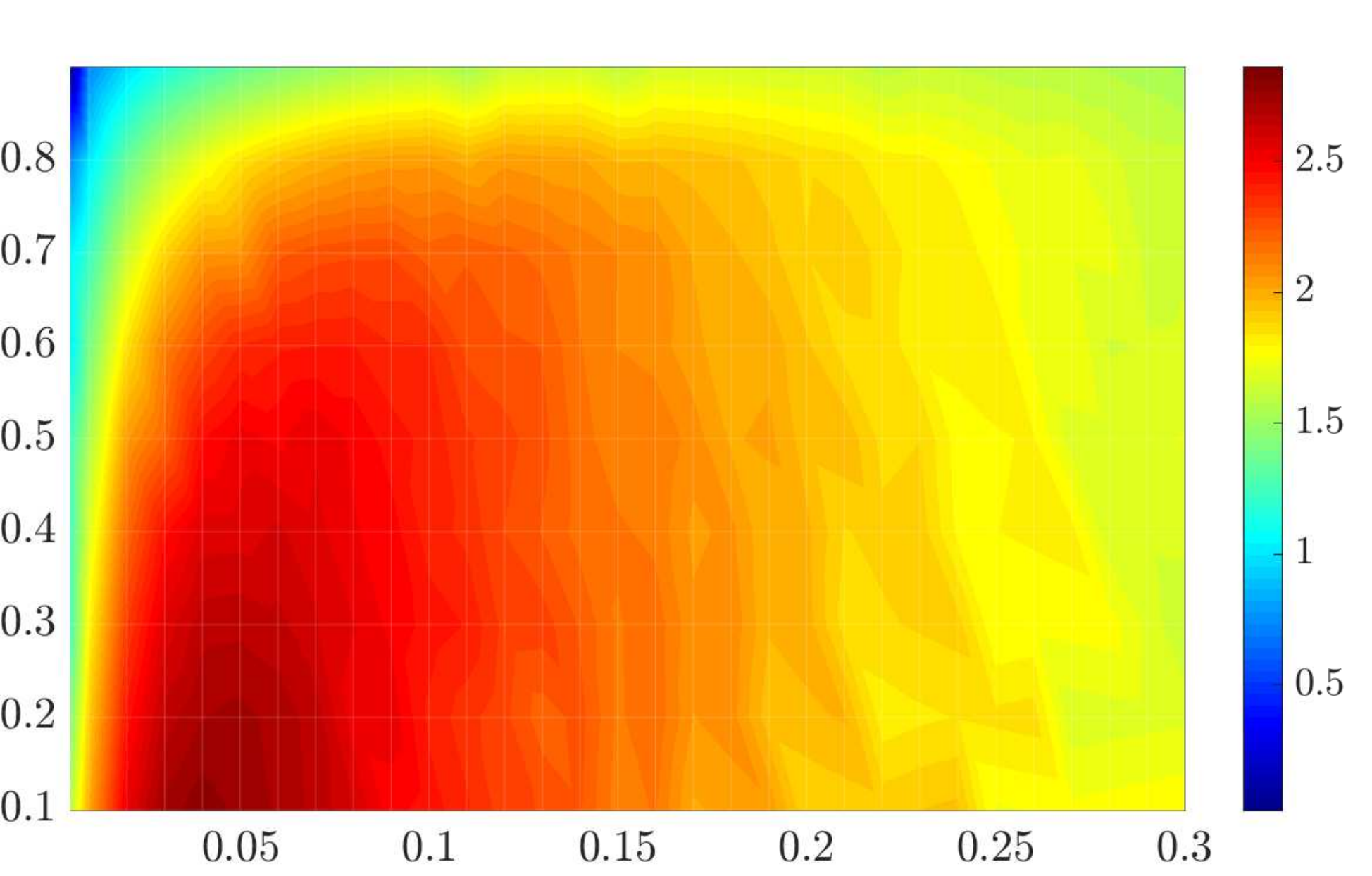}
		\caption*{$\alpha=5\pi/16$} 
		\label{}
	\end{subfigure} 
		\begin{subfigure}[t]{0.22\textwidth}
		\includegraphics[width=1\columnwidth]{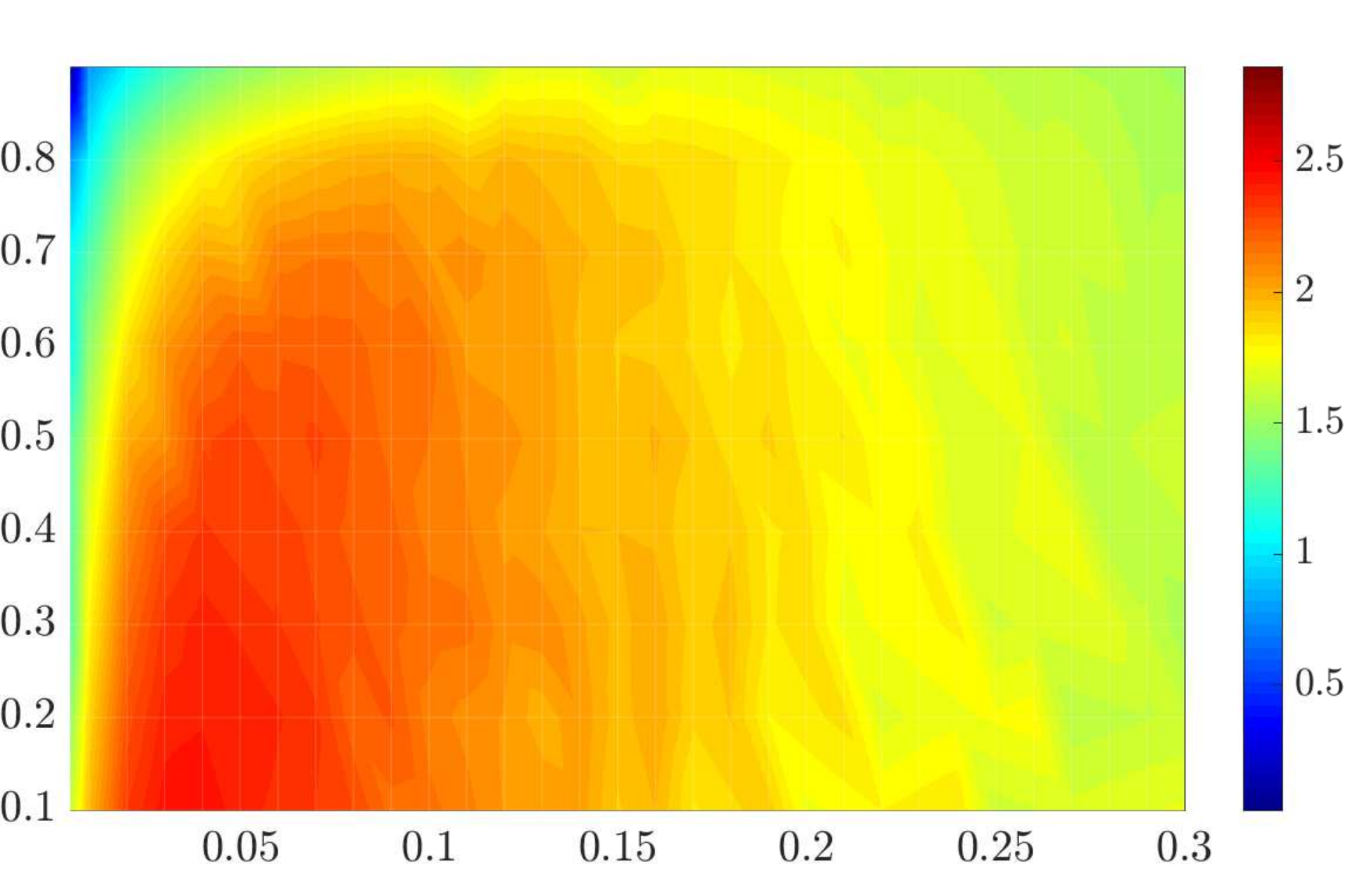}
		\caption*{$\alpha=3\pi/8$} 
		\end{subfigure}
	\begin{subfigure}[t]{0.22\textwidth}
		\centering
		\includegraphics[width=1\columnwidth]{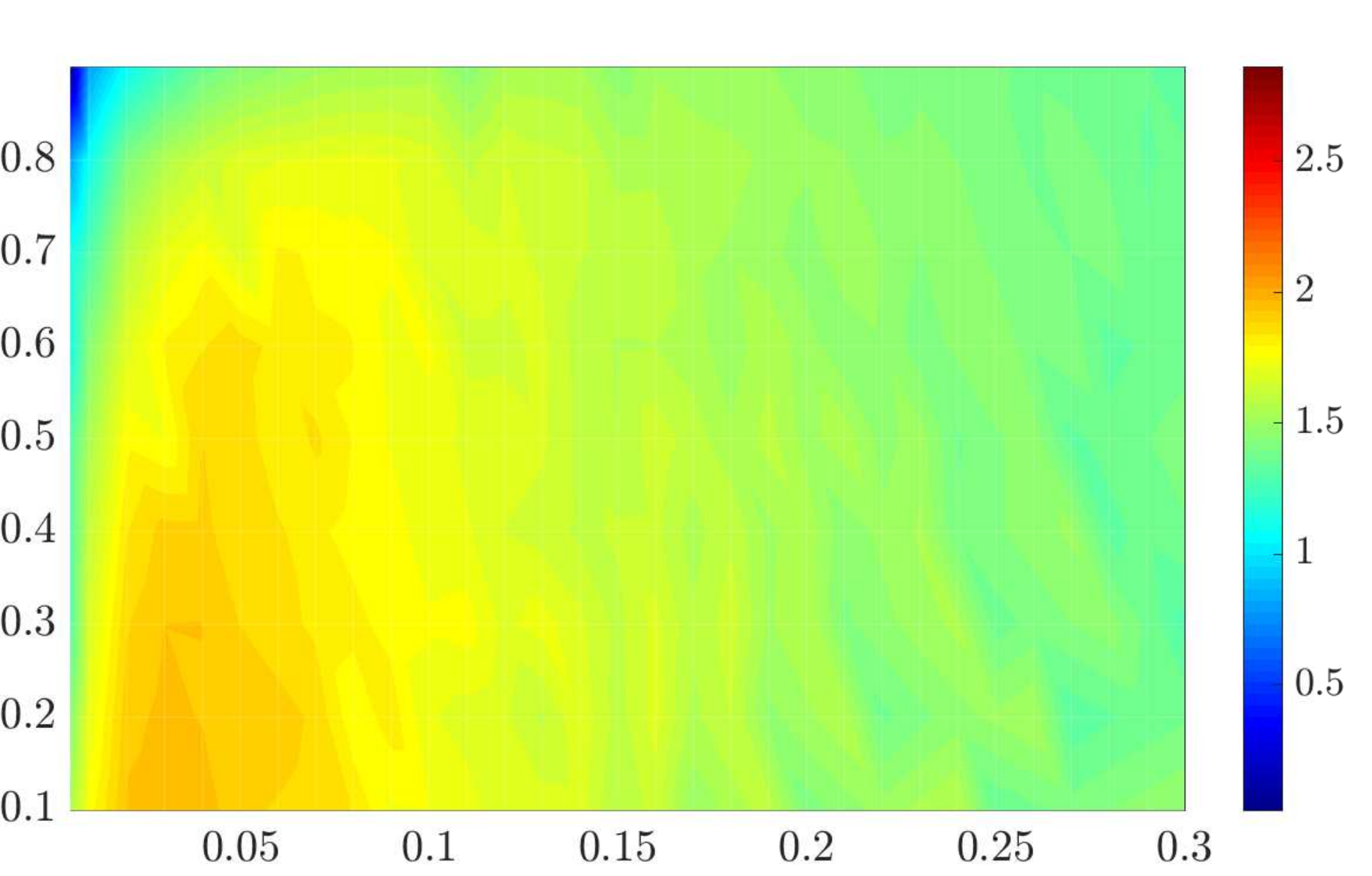}
		\caption*{$\alpha=7\pi/16$} 
		\label{}
	\end{subfigure}
				\caption{Tensor vs. decoupled nuclear norm for a moving target, as a function of hyper-parameters. We plot the ratio of tensor and decoupled nuclear norm $\|\mathcal A\|_{*,\mathcal{F}}/\|\mathcal A\|_{*,\mathcal{D}}$ for a moving target for varying hyper-parameters. The $x$ axis and $y$ axis are as in Figure~\ref{fig:eta_plot_dec}. We see that the highest gain is achieved around $\pi/4$, with a low overlap. Smaller angles favor larger sub-apertures while larger angles favor smaller sub-apertures. An interpretation of these results is given in Section~\ref{sec:interpret}. }
	\label{fig:nucV_ratio}
	\end{figure}

These results present distinct patterns and raise the following questions: 
\begin{itemize} 
	\item[Q1.] Why, as illustrated in Figure~\ref{fig:nucS_ratio_0} high overlap decreases the ratio of tensor to decoupled nuclear norm for the stationary target? 
   	\item[Q2.] What affects the differences in the nuclear norm between moving vs. stationary targets?
	\item[Q3.] Why, as illustrated in Figure~\ref{fig:eta_plot}, is high $\eta_{\max}/\eta_{\min}$ ratio, least sensitive to hyper-parameters, achieved for $\alpha=\pi/4$?  
\end{itemize}
To answer these questions we need to explain how the values of the hyper-parameters and the angle $\alpha$ affect the tensor nuclear norm. This requires a better understanding of the tensor data structure for stationary and moving targets. 
  	
To get a better insight on the role of $\alpha$, we plot in Figure~\ref{fig:ex_demo} the data traces for a moving target for $\alpha=0,\ \pi/4$ and $\pi/2$.
 The sub-aperture size and the overlap are fixed to $0.1$.  In the left column of Figure~\ref{fig:ex_demo} we plot the raw time-domain data for five sub-apertures while in the second column we plot the data after performing the Fourier transform along the sub-aperture index. In the third column we show the row inner product matrix $\hat{A}^{(k)H}\hat{A}^{(k)}$ for a single panel after Fourier.  

Our observations are the following: In the raw data, we notice very little variation in slopes for $\alpha=0$ and $\alpha=\pi/4$. For $\alpha=\pi/2$ variation is noticeable. After performing Fourier, for $\alpha=0$ there is little variation over the entire aperture, i.e., the slope is not changing rapidly between sub-apertures. For $\alpha=\pi/4$ the column support is similar to $\alpha=0$ while for $\alpha=\pi/2$ variation over the total aperture is apparent. The column range is also a lot smaller in this case. The third column of Figure~\ref{fig:ex_demo} illustrates that the orthogonality between columns of matrix $\hat{A}^{(k)}$ increases as the angle $\alpha$ increases. Indeed, this is manifested by the suppression of the diagonal elements in $\hat{A}^{(k)H}\hat{A}^{(k)}$. We also see that the column range is suppressed for $\pi/2$ (bottom right plot).		

\begin{figure}[htbp!]
	\centering
	{\vskip 17mm\hskip -18.2cm$\alpha=0$\vskip -17mm}
	\begin{subfigure}[t]{0.325\textwidth}
		\centering
		\includegraphics[width=1\columnwidth]{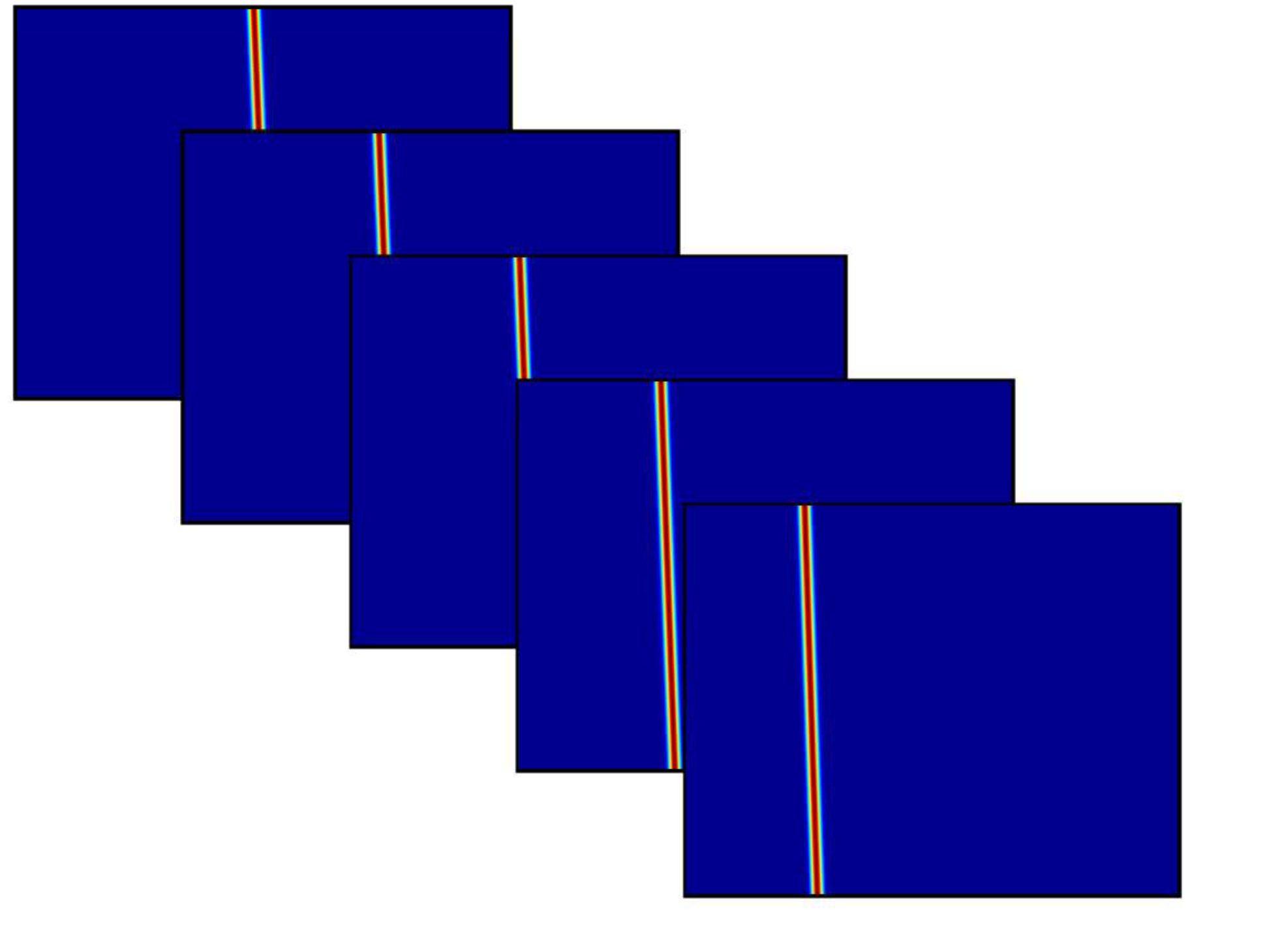}
		%\caption{} 
		\label{fig:DT_0}
	\end{subfigure}
	\begin{subfigure}[t]{0.325\textwidth}
		\centering
		\includegraphics[width=1\columnwidth]{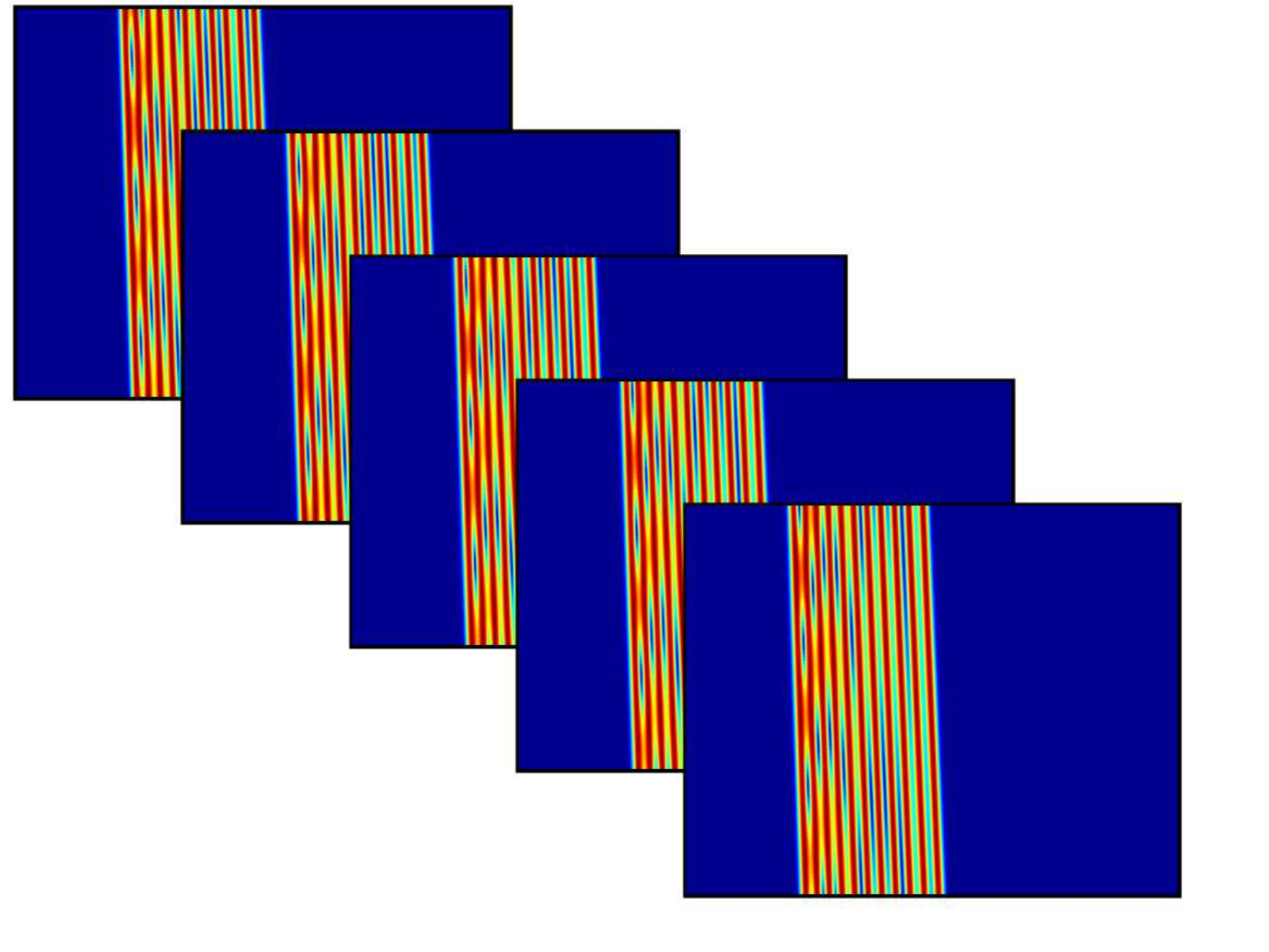}
		%\caption{} 
		\label{fig:DTf_0}
	\end{subfigure}
	\begin{subfigure}[t]{0.325\textwidth}
		\centering
		\includegraphics[width=1\columnwidth]{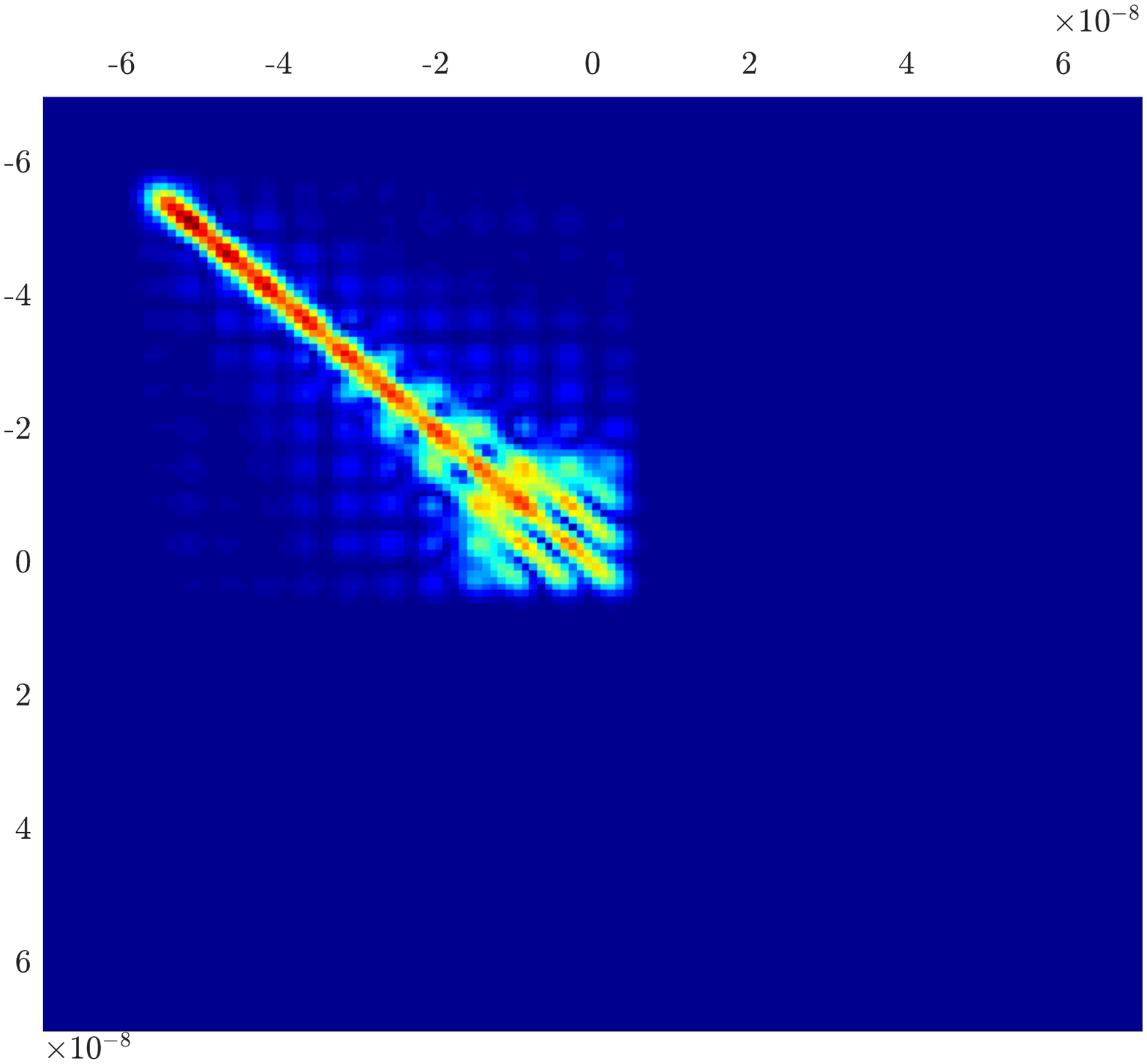}
		%	\caption{}
		\label{fig:DDT_0}
	\end{subfigure}
	{\vskip 17mm\hskip -18.2cm$\alpha=\pi/4$\vskip -17mm}
	\begin{subfigure}[t]{0.325\textwidth}
		\centering
		\includegraphics[width=1\columnwidth]{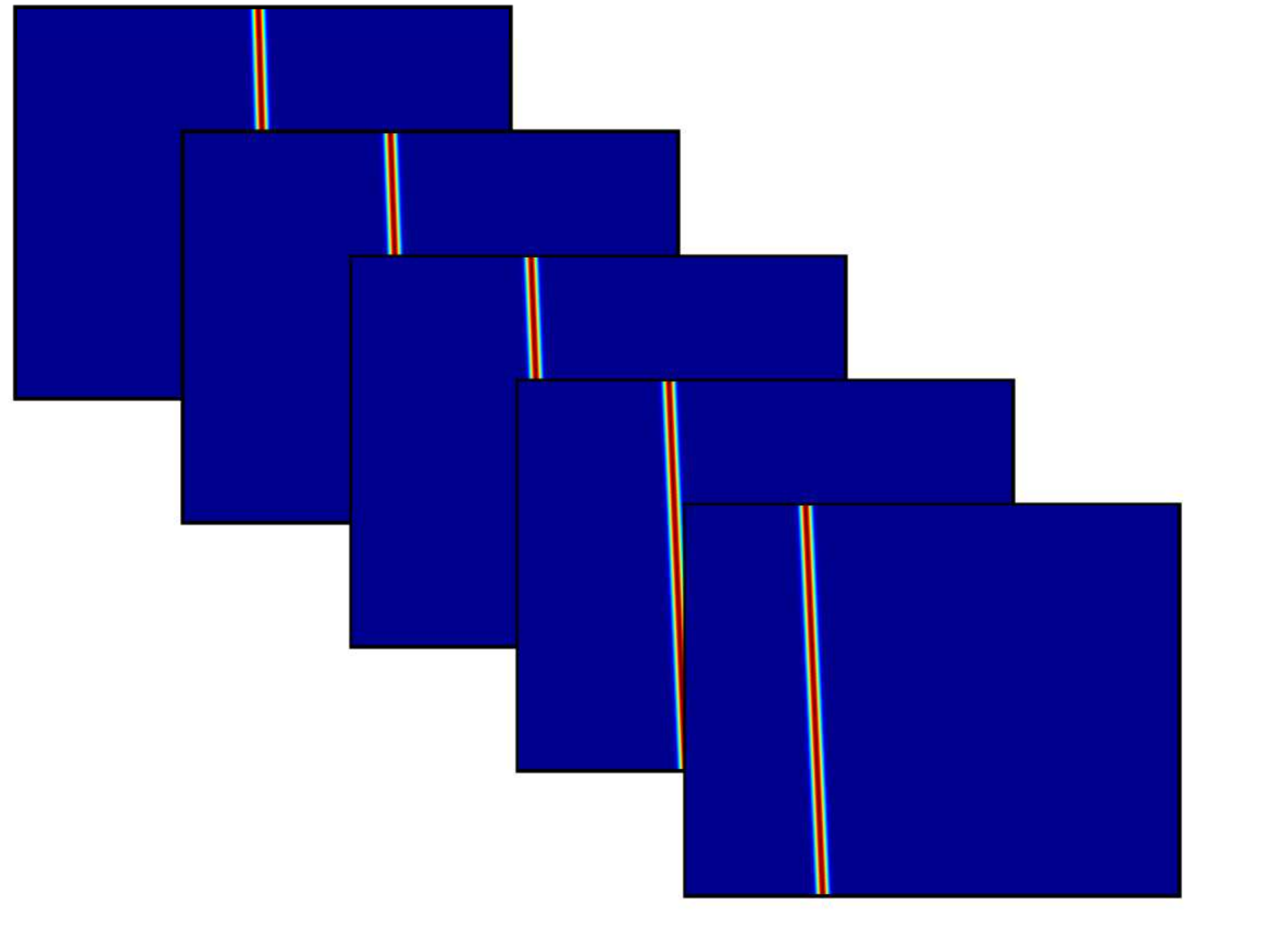}
		%\caption{} 
		\label{fig:DT_pi_4}
	\end{subfigure}
	\begin{subfigure}[t]{0.325\textwidth}
		\centering
		\includegraphics[width=1\columnwidth]{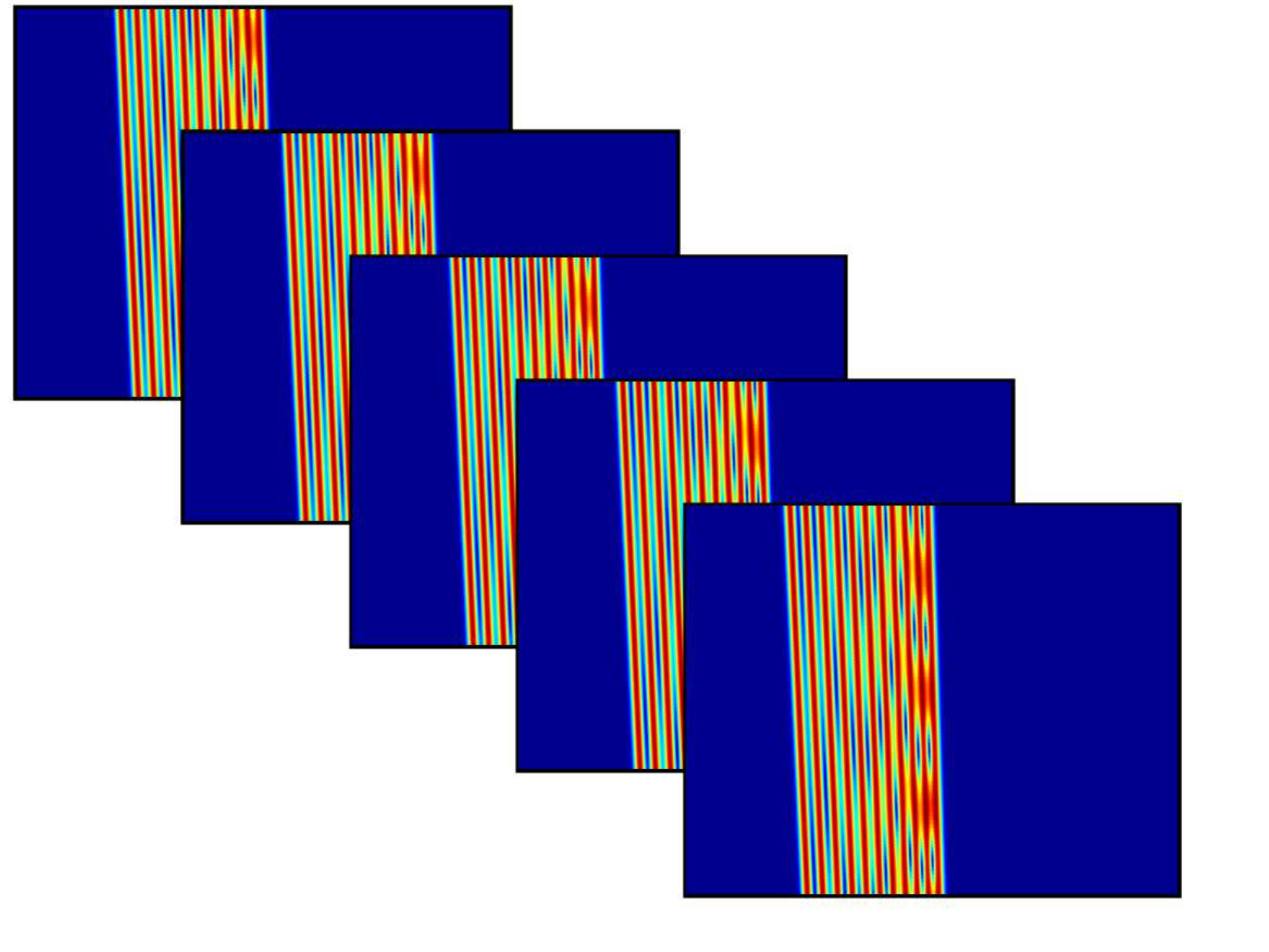}
		%\caption{} 
		\label{fig:DTf_pi_4}
	\end{subfigure}
	\begin{subfigure}[t]{0.325\textwidth}
		\centering
		\includegraphics[width=1\columnwidth]{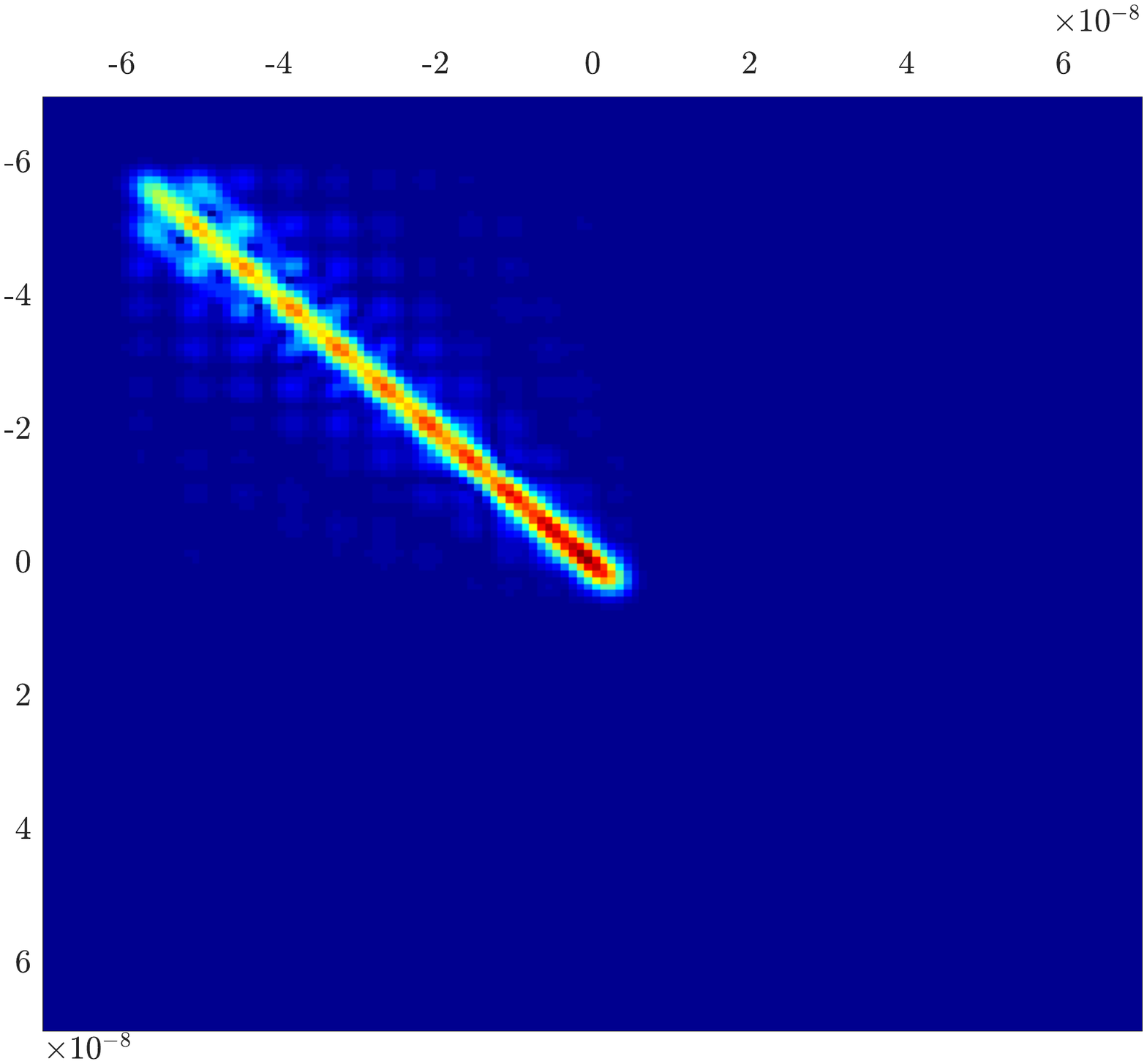}
		%\caption{}
		\label{fig:DDT_pi_4}
	\end{subfigure}
	{\vskip 17mm\hskip -18.2cm$\alpha=\pi/2$\vskip -17mm}
	\begin{subfigure}[t]{0.325\textwidth}
		\includegraphics[width=1\columnwidth]{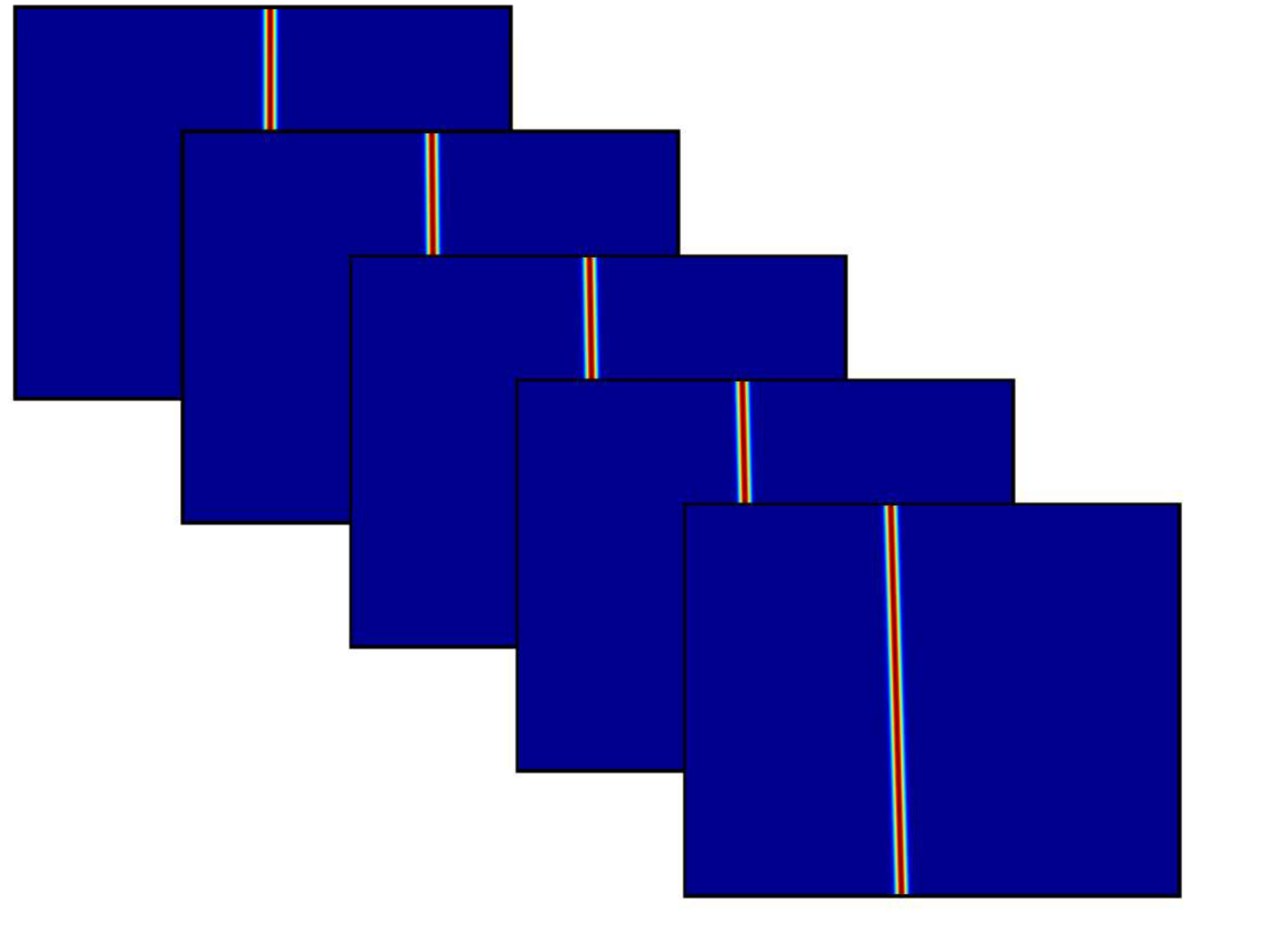}
		\caption{} 
		\label{fig:DT_pi_2}
	\end{subfigure}
	\begin{subfigure}[t]{0.325\textwidth}
		\includegraphics[width=1\columnwidth]{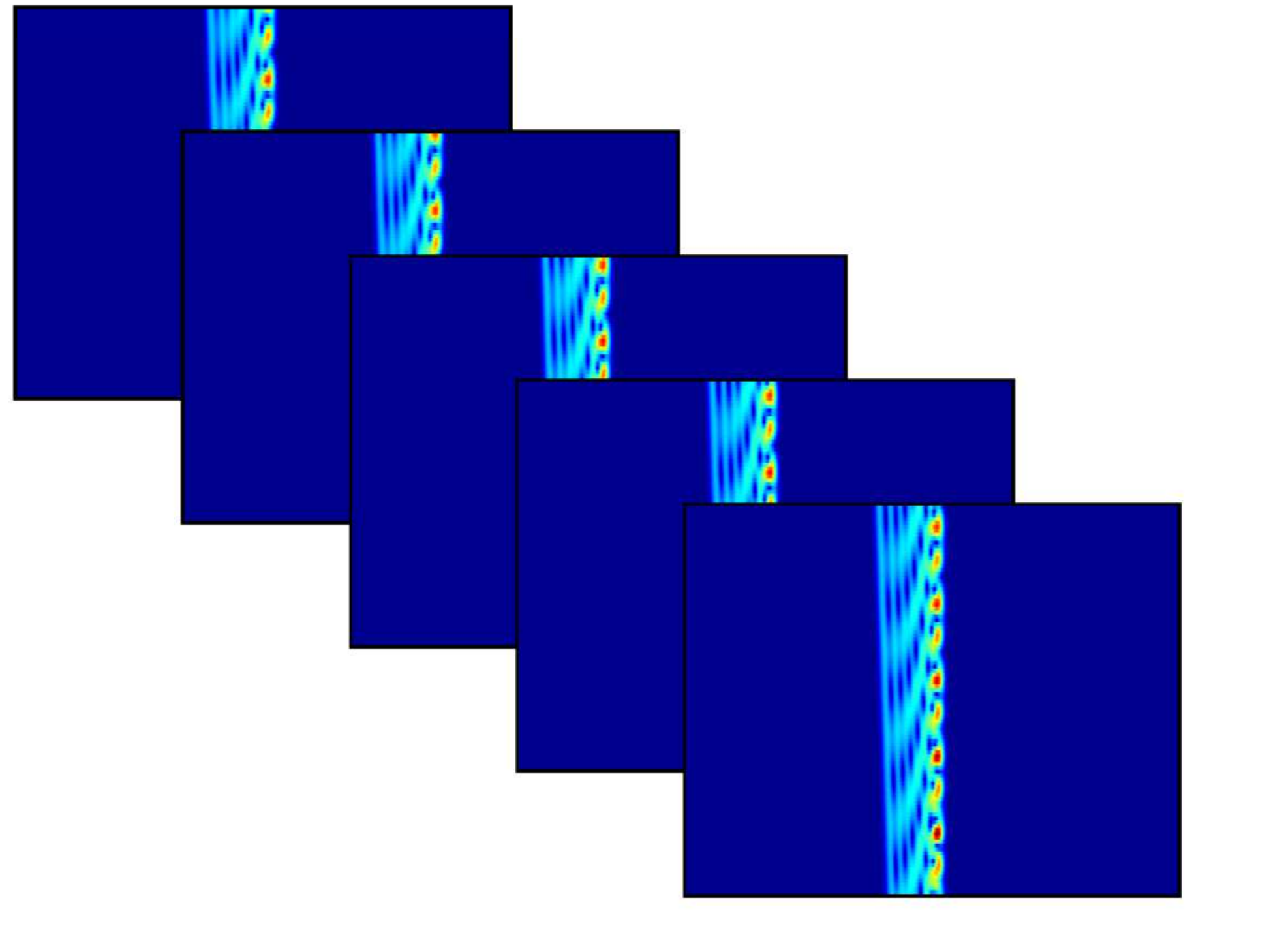}
		\caption{} 
		\label{fig:DTf_pi_2}
	\end{subfigure}
	\begin{subfigure}[t]{0.325\textwidth}
		\centering
		\includegraphics[width=1\columnwidth]{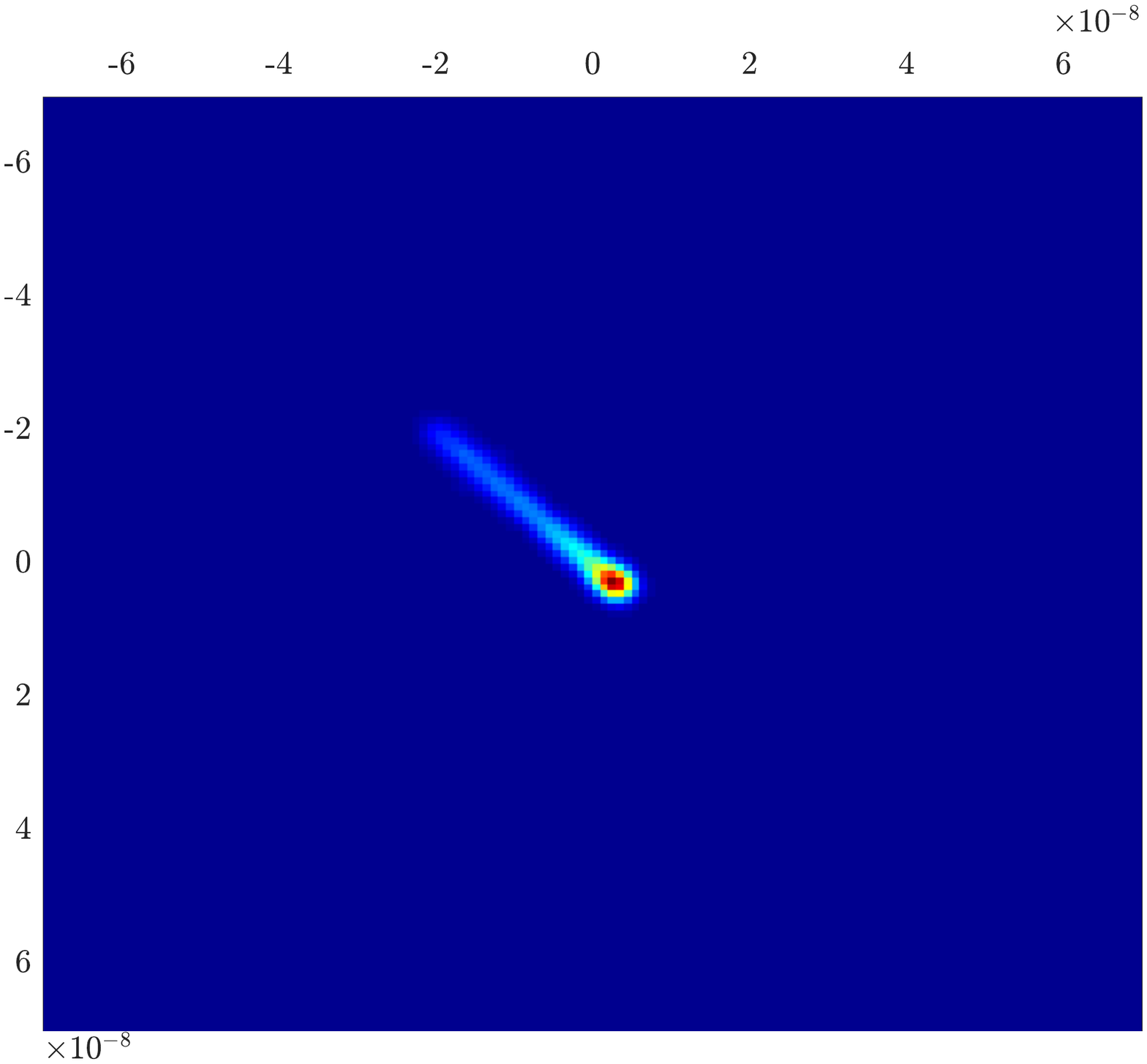}
		\caption{}
		\label{fig:DDT_pi_2}
	\end{subfigure}
	\caption{
		Behavior of SAR data of a moving target for different directions;
		$(a)$ Raw data on five sub-apertures for a moving target. Notice very little variation in slopes for $\alpha=0$ and $\alpha=\pi/4$. For $\alpha=\pi/2$ variation is noticeable even before performing Fourier; $(b)$ Data after performing Fourier with respect to the sub-aperture index. For $\alpha=0$ the slope is not changing rapidly between sub-apertures. For $\alpha=\pi/4$ the column support is similar to $\alpha=0$. For $\alpha=\pi/2$ variation over the total aperture is apparent. The column range is also a lot smaller; $(c)$ $\hat{A}^{(k)H}\hat{A}^{(k)}$ for a single panel after Fourier. For $\alpha=0$, the column range is large, but there are a lot off-block diagonal terms (the inner product of different sub-apertures). For $\alpha=\pi/4$ the column range is similar to $\alpha=0$, but off-block diagonal terms are suppressed, suggesting the sub-apertures are closer to being orthogonal. For $\alpha=\pi/2$, off-block diagonal terms are greatly suppressed. Notice that the total column range is also reduced. We give qualitative explanation to the effect each of this observations has on the tensor nuclear norm in Section~\ref{sec:interpret}. } 
		\label{fig:ex_demo}
\end{figure}

These empirical observations require further analysis, provided in the following section where we also answer the questions Q1 to Q3.
\subsection{TRPCA analysis for general SAR data}
\label{sec:tnn_analysis}
In this section we first give an abstract result for the upper and lower bounds of the tensor nuclear norm. Then we show that the lower bound is obtained for an ideally stationary background with no variation between ${A}^{(k)}$ for the sub-apertures while the upper bound is attained for an ideal moving target with exact orthogonality in the columns of $\hat{A}^{(k)}$. We close this section by discussing why the Fourier based tensor nuclear norm is optimal for the SAR data separation problem. 

\paragraph{A general result on the bounds of the tensor nuclear norm $\| \cdot\|_{*,\mathcal{F}}$.}
We have the following Proposition. 
\begin{proposition}
	The tensor nuclear norm, as defined in \eqref{eq:tnn}, has the following lower and upper bounds
	\begin{equation}
	\left(\sum_{i=0}^{n_3-1} \|A^{(i)}\|_*^2\right)^{1/2}\le \|\mathcal A\|_{*,\mathcal{F}}\le \sqrt{n_3} \sum\limits_{i=0}^{n_3-1}\|A^{(i)}\|_*
	\label{eq:nuc_fourier_bounds}
	\end{equation}
When all matrices $A^{(i)}$ are the same then
	\begin{equation}
	\|\mathcal{A}\|_{*,\mathcal{F}} = \left(\sum_{i=0}^{n_3-1} \|A^{(i)}\|_*^2\right)^{1/2}
	\label{eq:L_behavior}
	\end{equation}
	while when the columns of the different matrices $A^{(i)}$ are orthogonal then
	\begin{equation}
	\|\mathcal{A}\|_{*\mathcal{F}} = \sqrt{n_3} \sum\limits_{i=0}^{n_3-1}\|A^{(i)}\|_* .
	\label{eq:S_behavior}
	\end{equation}
	\label{prop:prop_1}
\end{proposition}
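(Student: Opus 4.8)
Everything follows from two elementary facts: the triangle inequality and positive homogeneity of the matrix nuclear norm, $\|X+Y\|_*\le\|X\|_*+\|Y\|_*$ and $\|cX\|_*=|c|\,\|X\|_*$, together with the observation that $\mathcal{A}\mapsto\hat{\mathcal{A}}=\mathcal{F}_3\mathcal{A}$ is (a unitary rescaling of) the DFT along the third mode, hence invertible with
\[
A^{(\ell)}=\frac{1}{\sqrt{n_3}}\sum_{k=0}^{n_3-1}\omega_{n_3}^{-\ell k}\,\hat{A}^{(k)},\qquad \ell=0,\dots,n_3-1 .
\]
I would treat the four assertions of the Proposition in order: first the upper bound, then the lower bound, then the two equality cases.

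\emph{The two bounds.} For the upper bound, apply the triangle inequality to $\hat{A}^{(k)}=\tfrac{1}{\sqrt{n_3}}\sum_{\ell}\omega_{n_3}^{\ell k}A^{(\ell)}$ and use $|\omega_{n_3}^{\ell k}|=1$, which gives $\|\hat{A}^{(k)}\|_*\le\tfrac{1}{\sqrt{n_3}}\sum_{\ell=0}^{n_3-1}\|A^{(\ell)}\|_*$ for each $k$; summing over the $n_3$ values of $k$ yields $\|\mathcal{A}\|_{*,\mathcal F}=\sum_k\|\hat{A}^{(k)}\|_*\le\sqrt{n_3}\sum_\ell\|A^{(\ell)}\|_*$. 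For the lower bound, apply the same inequality to the inverse relation displayed above: $\|A^{(\ell)}\|_*\le\tfrac{1}{\sqrt{n_3}}\sum_k\|\hat{A}^{(k)}\|_*=\tfrac{1}{\sqrt{n_3}}\|\mathcal{A}\|_{*,\mathcal F}$ for every $\ell$; squaring, summing over the $n_3$ values of $\ell$, and taking square roots gives $\big(\sum_\ell\|A^{(\ell)}\|_*^2\big)^{1/2}\le\|\mathcal{A}\|_{*,\mathcal F}$.

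\emph{Equality cases.} If all slices coincide, $A^{(i)}=A$, then the geometric sum $\sum_{\ell=0}^{n_3-1}\omega_{n_3}^{\ell k}$ equals $n_3$ for $k=0$ and $0$ otherwise, so $\hat{A}^{(0)}=\sqrt{n_3}\,A$ and $\hat{A}^{(k)}=0$ for $k\ne 0$; hence $\|\mathcal{A}\|_{*,\mathcal F}=\sqrt{n_3}\,\|A\|_*=\big(\sum_{i=0}^{n_3-1}\|A^{(i)}\|_*^2\big)^{1/2}$, i.e.\ the lower bound is attained. For the orthogonal case, I would use that if the column spaces of the $A^{(i)}$ are pairwise orthogonal and, as happens for the idealized moving-target data where distinct sub-apertures occupy disjoint fast-time windows, their row spaces are pairwise orthogonal as well, then one may pick bases in which the $A^{(i)}$ sit in mutually orthogonal row/column blocks. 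In these bases $\hat{A}^{(k)}=\tfrac{1}{\sqrt{n_3}}\sum_\ell\omega_{n_3}^{\ell k}A^{(\ell)}$ is, up to left and right multiplication by isometries, the block-diagonal matrix $\mathrm{diag}\big(\omega_{n_3}^{0\cdot k}A^{(0)},\dots,\omega_{n_3}^{(n_3-1)k}A^{(n_3-1)}\big)$, whose multiset of singular values is the union of those of the $A^{(\ell)}$ (the unimodular phases do not matter). Thus $\|\hat{A}^{(k)}\|_*=\tfrac{1}{\sqrt{n_3}}\sum_\ell\|A^{(\ell)}\|_*$ for every $k$, and summing over $k$ gives $\|\mathcal{A}\|_{*,\mathcal F}=\sqrt{n_3}\sum_\ell\|A^{(\ell)}\|_*$, the upper bound.

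\emph{Expected obstacle.} The bounds and the first equality case are routine. The delicate point is the second equality case, and specifically pinning down the orthogonality hypothesis at the right level of generality: column-orthogonality of the $A^{(i)}$ alone is \emph{not} enough to force the upper bound (a rank-one example whose slices share a common row space already violates it), so I would either strengthen the hypothesis to pairwise orthogonal row \emph{and} column spaces, or verify directly from the SAR data model of Appendix~\ref{app:data_model} that the representative moving-target slices satisfy both. I would also double-check the edge cases where $\sum_i \operatorname{rank} A^{(i)}$ exceeds $n_1$ or $n_2$, so that genuine orthogonality cannot hold and the statement is only meant in the limiting/approximate sense used in the numerics.
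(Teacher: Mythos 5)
Your proof is correct and takes a genuinely different, more elementary route to the two bounds than the paper does. The paper identifies $\|\mathcal A\|_{*,\mathcal F}$ with the nuclear norm of the block-circulant matrix \eqref{eq:block_circ}, splits that matrix into block columns $C_i$, and applies Theorem~\ref{theorem_block} twice (whose lower bound requires the duality argument with the spectral norm). You instead apply the triangle inequality and homogeneity of $\|\cdot\|_*$ directly to the forward DFT $\hat{A}^{(k)}=\tfrac{1}{\sqrt{n_3}}\sum_\ell\omega_{n_3}^{\ell k}A^{(\ell)}$ for the upper bound, and to the inverse DFT $A^{(\ell)}=\tfrac{1}{\sqrt{n_3}}\sum_k\omega_{n_3}^{-\ell k}\hat{A}^{(k)}$ for the lower bound (giving $\|A^{(\ell)}\|_*\le n_3^{-1/2}\|\mathcal A\|_{*,\mathcal F}$ for each $\ell$, then summing squares). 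This is shorter, self-contained, and dispenses with Theorem~\ref{theorem_block} entirely for \eqref{eq:nuc_fourier_bounds}; what the paper's route buys in exchange is that the intermediate objects $C_i$ make the two equality cases fall out of Corollary~\ref{col:cor_1} with no extra work. Your treatment of \eqref{eq:L_behavior} is identical to the paper's. Your ``expected obstacle'' about \eqref{eq:S_behavior} is a legitimate and correct observation, not a misunderstanding: with only $A^{(i)H}A^{(j)}=0$ one gets $\hat{A}^{(k)H}\hat{A}^{(k)}=\tfrac{1}{n_3}\sum_\ell A^{(\ell)H}A^{(\ell)}$ for every $k$, and $\Tr\bigl[(\sum_\ell A^{(\ell)H}A^{(\ell)})^{1/2}\bigr]=\sum_\ell\|A^{(\ell)}\|_*$ requires the row spaces to be mutually orthogonal as well (your rank-one example with a shared row space kills the claim as literally stated). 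The paper's own derivation of this case, in Section~\ref{sec:mov_anal} and Section~\ref{sec:interpret}, implicitly assumes exactly what you propose: disjoint column supports of the slices (orthogonal row spaces) together with orthogonality of the corresponding data-trace columns, so that $\hat{A}^{(k)}$ becomes a concatenation to which Corollary~\ref{col:cor_1}(\ref{col:cor_1_b}) applies. Your block-diagonalization argument under the strengthened hypothesis is correct, as is your remark that the hypothesis is only realizable when $\sum_i\operatorname{rank}A^{(i)}$ does not exceed either dimension.
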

To prove Proposition~\ref{prop:prop_1} we need the following Theorem and Corollary. 
	\begin{theorem}
		 For matrices $A_1,A_2,\dots,A_k,\quad A_i\in\mathbb{C}^{m\times n_i}$, the following holds for the matrix  
			\begin{equation}
			\pmb{A}=[A_1,A_2,\dots,A_k]\in \mathbb{C}^{m\times N},\quad N=\sum\limits_{i=1}^k n_i
			\label{eq:nuc_norm_mat}
			\end{equation}
			\begin{equation}
			\left(\sum\limits_{i=1}^k\|A_i\|_*^2\right)^{1/2}\le	\|\pmb{A}\|_*\le \sum\limits_{i=1}^k\|A_i\|_*.
			\label{eq:nuc_norm_bounds}
			\end{equation}
			\label{theorem_block}
		\end{theorem}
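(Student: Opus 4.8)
The plan is to prove the two inequalities in \eqref{eq:nuc_norm_bounds} separately: the right-hand bound by subadditivity of the nuclear norm, and the left-hand bound by exhibiting an explicit dual certificate. For the upper bound I would write $\pmb{A} = \sum_{i=1}^k \widetilde{A}_i$, where $\widetilde{A}_i \in \mathbb{C}^{m\times N}$ coincides with $A_i$ on the $i$-th block of columns and is zero on all other columns. Appending zero columns to a matrix leaves its nonzero singular values unchanged, so $\|\widetilde{A}_i\|_* = \|A_i\|_*$, and since $\|\cdot\|_*$ is a norm, the triangle inequality gives $\|\pmb{A}\|_* \le \sum_{i=1}^k \|\widetilde{A}_i\|_* = \sum_{i=1}^k \|A_i\|_*$.

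For the lower bound I would use the dual representation $\|\pmb{A}\|_* = \sup\{\,\Re\Tr(\pmb{A}^H B) : B\in\mathbb{C}^{m\times N},\ \|B\|_2\le 1\,\}$, where $\|\cdot\|_2$ denotes the spectral norm. For each index $i$ with $A_i\ne 0$, fix a reduced SVD $A_i = U_i\Sigma_i V_i^H$ and set $B_i = U_i V_i^H$, so that $\|B_i\|_2 = 1$ and $\Tr(A_i^H B_i) = \Tr(\Sigma_i) = \|A_i\|_*$ (take $B_i = 0$ when $A_i=0$). For any nonnegative weights $c_1,\dots,c_k$ with $\sum_i c_i^2 = 1$, define the column concatenation $B = [\,c_1 B_1,\, c_2 B_2,\, \dots,\, c_k B_k\,]\in\mathbb{C}^{m\times N}$. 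Then $BB^H = \sum_{i=1}^k c_i^2 B_i B_i^H \preceq \sum_{i=1}^k c_i^2 I = I$ in the Loewner order, since $B_i B_i^H \preceq \|B_i\|_2^2 I \preceq I$; hence $\|B\|_2\le 1$, and
\[
\|\pmb{A}\|_* \ge \Re\Tr(\pmb{A}^H B) = \sum_{i=1}^k c_i \Tr(A_i^H B_i) = \sum_{i=1}^k c_i\|A_i\|_* .
\]
Maximizing the right-hand side over the unit sphere $\sum_i c_i^2 = 1$ by Cauchy--Schwarz (optimal choice $c_i = \|A_i\|_*/(\sum_j\|A_j\|_*^2)^{1/2}$, with $\pmb{A}=0$ handled trivially) gives $\|\pmb{A}\|_* \ge \left(\sum_{i=1}^k\|A_i\|_*^2\right)^{1/2}$.

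The only step that needs a little care is checking that the certificate $B$ has spectral norm at most one; exploiting the column-block structure to write $BB^H = \sum_i c_i^2 B_i B_i^H$ and bounding it in the Loewner order makes this immediate, so I do not expect a serious obstacle. The remaining ingredients — invariance of the singular values under zero-padding, the trace identity $\Tr(A_i^H U_iV_i^H)=\|A_i\|_*$, and the Cauchy--Schwarz optimization over the $c_i$ — are routine. As a byproduct, equality in the right-hand bound corresponds to taking $c_i$ all equal and requires the $B_i$ to align, while equality in the left-hand bound corresponds to the $B_iB_i^H$ being "orthogonal enough'' that $BB^H$ is exactly a multiple of the identity on the relevant subspace; this is consistent with the edge cases \eqref{eq:L_behavior}--\eqref{eq:S_behavior} asserted in Proposition~\ref{prop:prop_1}.
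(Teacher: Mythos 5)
Your proof is correct and follows essentially the same route as the paper: the upper bound via zero-padding the blocks and the triangle inequality, and the lower bound via duality of the nuclear and spectral norms followed by a Cauchy--Schwarz optimization over the block weights $c_i$. The only real difference is cosmetic: you verify feasibility of one explicit dual certificate directly, through $BB^H=\sum_i c_i^2 U_iU_i^H\preceq I$, whereas the paper first proves the general block estimate $\bigl\|[Y_1,\dots,Y_k]\bigr\|_\sigma\le\bigl(\sum_i\|Y_i\|_\sigma^2\bigr)^{1/2}$ and then maximizes over the resulting feasible set; your version is slightly more economical but rests on the same underlying inequality.
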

	\begin{proof}
	The proof is given in Appendix~\ref{app:cor_proof}
	\end{proof}
\begin{corollary}
	For $A_1,\dots,A_k$ in \eqref{eq:nuc_norm_mat}
	\begin{enumerate}[(a)]
		\item If all the matrices are mutually orthogonal  
		\begin{equation}
		\sum\limits_{i=1}^k \text{rank}(A_i)\le m, \quad A_i^HA_j=0 , \quad \forall i\neq j,
		\end{equation}
		the upper bound of \eqref{eq:nuc_norm_bounds} is attained.
		\label{col:cor_1_b}
		\item If $A_i=\beta_iA, \hspace{0.5em} \beta_i\in\mathbb{C}$, then the lower bound of \eqref{eq:nuc_norm_bounds} is attained and
		\begin{equation}
		\|\pmb{A}\|_*=\|\pmb{\beta}\|_2\|A\|_*,\quad \|\pmb{\beta}\|_2=\left(\sum\limits_{i=1}^k|\beta_i|^2\right)^{1/2}.
		\end{equation}	
		\label{col:cor_1_c}
	\end{enumerate}
	\label{col:cor_1}
\end{corollary}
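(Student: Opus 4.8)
The plan is to prove both equality cases by passing to a compact singular value decomposition of each block, reading off the singular values of $\pmb A$ directly, and then comparing the resulting sum of singular values against the two bounds supplied by Theorem~\ref{theorem_block}. In both parts the work is to exhibit an explicit SVD (or the Gram matrix $\pmb A\pmb A^H$) whose singular values can be pooled from those of the individual blocks.

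For part~\ref{col:cor_1_b}, I would first observe that the hypothesis $A_i^H A_j=0$ for $i\neq j$ is precisely the statement that the column spaces (ranges) of the blocks are mutually orthogonal subspaces of $\mathbb C^m$; since nonzero mutually orthogonal subspaces have dimensions summing to at most $m$, the stated rank condition $\sum_i\mathrm{rank}(A_i)\le m$ is in fact implied by orthogonality and merely records feasibility. Writing a compact SVD $A_i=U_i\Sigma_i V_i^H$ with $U_i\in\mathbb C^{m\times r_i}$ and $r_i=\mathrm{rank}(A_i)$, the columns of $U_i$ lie in $\mathrm{Range}(A_i)$, so they are orthonormal within each block and across different blocks. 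I would then assemble
\begin{equation}
U=[U_1,\dots,U_k],\qquad \Sigma=\mathrm{blkdiag}(\Sigma_1,\dots,\Sigma_k),\qquad V=\mathrm{blkdiag}(V_1,\dots,V_k),
\end{equation}
where $U$ and $V$ have orthonormal columns and $\Sigma$ is diagonal with nonnegative entries, and verify by block multiplication that $\pmb A=U\Sigma V^H$. This is a valid compact SVD of $\pmb A$, so its singular values are exactly the pooled collection $\{\sigma^{(i)}_j\}_{i,j}$, whence $\|\pmb A\|_*=\sum_{i,j}\sigma^{(i)}_j=\sum_i\|A_i\|_*$, which is the upper bound of \eqref{eq:nuc_norm_bounds}.

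For part~\ref{col:cor_1_c}, the argument is a direct computation. Substituting $A_i=\beta_i A$ into the Gram matrix gives
\begin{equation}
\pmb A\pmb A^H=\sum_{i=1}^k(\beta_iA)(\beta_iA)^H=\Big(\sum_{i=1}^k|\beta_i|^2\Big)AA^H=\|\pmb\beta\|_2^2\,AA^H,
\end{equation}
so the squared singular values of $\pmb A$ are $\|\pmb\beta\|_2^2$ times those of $A$; taking square roots and summing yields $\|\pmb A\|_*=\|\pmb\beta\|_2\|A\|_*$. It then remains only to check that this coincides with the lower bound of \eqref{eq:nuc_norm_bounds}, which is immediate from $\big(\sum_i\|A_i\|_*^2\big)^{1/2}=\big(\sum_i|\beta_i|^2\big)^{1/2}\|A\|_*=\|\pmb\beta\|_2\|A\|_*$.

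The routine ingredients are the block-multiplication identity $\pmb A=U\Sigma V^H$ and the Gram-matrix calculation. The one step deserving care is the claim in part~\ref{col:cor_1_b} that the assembled $U$ has orthonormal columns: this is exactly where the orthogonality hypothesis and the feasibility of placing $\sum_i r_i$ orthonormal vectors inside $\mathbb C^m$ are both invoked, and it is the crux that upgrades the generic upper bound of Theorem~\ref{theorem_block} to an equality.
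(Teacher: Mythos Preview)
Your proof is correct. For part~(a) your explicit block SVD and the paper's computation via the block-diagonal Gram matrix $\pmb A^H\pmb A$ are two sides of the same coin: both extract the pooled singular values from the orthogonality hypothesis, and the argument is essentially the same.

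For part~(b) you take a genuinely shorter route than the paper. The paper works with the $N\times N$ Gram matrix $\pmb A^H\pmb A$, recognizes it as the Kronecker product $\mathcal B\otimes A^HA$ with $\mathcal B_{ij}=\beta_i^*\beta_j$, diagonalizes the rank-one matrix $\mathcal B$, and then block-diagonalizes via $T\otimes I$. Your choice to use the other Gram matrix $\pmb A\pmb A^H=\sum_i|\beta_i|^2AA^H=\|\pmb\beta\|_2^2AA^H$ collapses the whole computation to a scalar rescaling in one line, bypassing the Kronecker machinery entirely. The paper's route has the minor advantage of making the full spectrum of $\pmb A^H\pmb A$ (including the zero eigenvalues) visible, but for the nuclear norm identity your argument is cleaner and more direct.
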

	\begin{proof}
	The proof is given in Appendix~\ref{app:cor_proof}
\end{proof}

We use these general results to prove \eqref{eq:nuc_fourier_bounds}. 
\begin{proof}[Proof of \eqref{eq:nuc_fourier_bounds}]	
Write 
\begin{equation}
\frac{1}{\sqrt{n_3}} \begin{pmatrix} A^{(0)}&A^{(1)}&\cdots& A^{(n_3-1)}\\A^{(n_3-1)}&A^{(0)}&\cdots&A^{(n_3-2)}\\&&\ddots&\\A^{(1)}&A^{(2)}&\cdots&A^{(0)}\end{pmatrix}
=\Big[C_1,C_2,\dots,C_{n_3}\Big], \quad C_i=\frac{1}{\sqrt{n_3}}\begin{pmatrix} A^{(i-1)}\\A^{(i-2)}\\\vdots\\A^{(i)}\end{pmatrix}.
\end{equation}
We now apply Theorem~\ref{theorem_block} twice.
First, for $\pmb{A}=C_i^T$,$\|A^T\|_*=\|A\|_*$, we get
\begin{equation}
\frac{1}{\sqrt{n_3}}\left(\sum\limits_{i=0}^{n_3-1}\|A^{(i)}\|_*^2\right)^{1/2}\le \|C_i\|_*\le \frac{1}{\sqrt{n_3}}\sum\limits_{i=0}^{n_3-1}\|A^{(i)}\|_*.
\label{eq:block_proof_1}
\end{equation}
Then, for $\pmb{A}=\Big[C_1,C_2,\dots,C_{n_3}\Big]$, we get
\begin{equation}
\left(\sum\limits_{i=1}^{n_3}\|C_i\|_*^2\right)^{1/2}\le \|\mathcal{A}\|_{*,\mathcal F}\le \sum\limits_{i=1}^{n_3}\|C_i\|_* .\label{eq:block_proof_2}
\end{equation}
Plugging the lower and upper bounds of \eqref{eq:block_proof_1} in to the lower and upper bound of \eqref{eq:block_proof_2}, gives the result.
\end{proof}

We next show how are \eqref{eq:L_behavior} and \eqref{eq:S_behavior} obtained by estimating the nuclear norm $\|\cdot\|_{*,\mathcal F}$ for stationary and moving targets. We show that, for high overlap, the $\|\cdot\|_{*,\mathcal F}$  norm decreases for stationary targets, while low overlap, increases the norm for moving targets.
	\paragraph{$\|\mathcal A\|_{*,\mathcal F}$ for stationary targets.}
	\label{sec:stat_anal}
Let us consider a stationary background, whose returns do not change between pulses. This means that the data traces are the same in every sub-aperture
\begin{equation}
A^{(i)}=A, \quad i=0,\dots ,n_3-1.
\end{equation}
The total decoupled nuclear norm is
\begin{equation}
\|\mathcal{A}\|_{*,\mathcal D}=\sum\limits_{i=0}^{n_3-1}\|A^{(i)}\|_*=n_3\|A\|_*.
\end{equation}
On the other hand for $\|\mathcal A\|_{*,\mathcal F}$, $\hat{A}^{(i)}$ will have a distinct form
since, with respect to the third dimension, we are performing DFT over a constant vector,
\begin{equation}
\hat{A}^{(k)}_{ij}=\frac{1}{\sqrt{n_3}}\sum\limits_{\ell=0}^{n_3-1}\omega_{n_3}^{\ell k}A^{(\ell)}_{ij}=A_{ij}\frac{1}{\sqrt{n_3}}\sum\limits_{\ell=0}^{n_3-1}\omega_{n_3}^{\ell k}=\sqrt{n_3}A_{ij}\delta_{k0}.
\end{equation}
i.e., after performing the DFT in the third dimension, the sub-apertures have the form
\begin{equation}
\hat{A}^{(k)}=\begin{cases}\sqrt{n_3}A,&k=0\\0, &k\neq 0\end{cases}.
\end{equation}
 Thus, the total nuclear norm becomes
\begin{equation}
\|\mathcal{A}\|_{*,\mathcal F}=\sum\limits_{k=0}^{n_3-1}\|\hat{A}^{(k)}\|_*=\sqrt{n_3}\|A\|_*=\left(\sum\limits_{i=0}^{n_3-1}\|A^{(i)}\|_*^2\right)^{1/2}.
\end{equation}
This is indeed the lower bound of \eqref{eq:nuc_fourier_bounds}. Remark also that since the decoupled nuclear norm is always $\sum\limits_{i=0}^{n_3-1}\|A^{(i)}\|_*$, 
we get an effective decrease of the nuclear norm by a factor of $\sqrt{n_3}$, with respect to the decoupled norm.
	
	We assumed here a completely stationary background. In reality the echoes from stationary targets do vary along the aperture. However, the variation rate along different sub-apertures, both in support and in phase, is much slower for stationary data traces. Hence, larger overlap would be beneficial, as it guarantees that when performing DFT, the sub-apertures are slowly varying, thus suppressing possible amplitude variation and phase decoherence, that will {\em smear out} the energy over multiple panels after performing DFT. This analysis answers Q1 and is in agreement with the results in Figure~\ref{fig:nucS_ratio_0} where we observed that as the overlap of the sub-apertures increases, $\|\mathcal A\|_{*\mathcal F}/\|\mathcal A\|_{*\mathcal D}$ decreases.

\paragraph{$\|\mathcal A\|_{*,\mathcal F}$ for moving targets.}
\label{sec:mov_anal}
To answer Q2 and Q3, we consider the case of a single moving target, whose data traces are approximately linear. We further assume no overlap between the sub-apertures, so that there is no column support overlap between different sub-apertures. This is a good approximation for targets moving linearly in parallel to the projection of  $\vr(0)-\vrho_o$  on the 2D plane ($\alpha=0$, see Figure~\ref{fig:D_ex_alpha_0}).  In this case, every panel would be a translation of the same matrix, without any column overlap. Therefore, we can write
\begin{equation}
A^{(\ell)}=\left[\underbrace{0,\dots,0}_{{\ell-1} \text{ times}},A,0\dots,0 \right]
\end{equation}
i.e., the sub-apertures are copies of the same matrix, supported on disjoint subsets of columns at every sub-aperture.
After performing DFT, the $k$th  panel would be
\begin{equation}
\hat{A}^{(k)}=\frac{1}{\sqrt{n_3}}\left[A, \omega_{n_3}^k A,\omega_{n_3}^{2k}A, \dots,\omega_{n_3}^{(n_3-1)k} A,\right],\quad \omega_{n_3}=e^{i2\pi /n_3}.
\label{eq:form_moving_tensor}
\end{equation}
Notice that $\hat{A}^{(k)}$ has the same structure as $\pmb{A}$ in Corollary~\ref{col:cor_1}.\ref{col:cor_1_c}, with $\beta_j=\frac{1}{\sqrt{n_3}}\omega_{n_3}^{jk}$. Since $\|\pmb{\beta}\|_2=1$ we have $\|\hat{A}^{(k)}\|_*=\|A\|_*$, and the values of the tensor nuclear norm and decoupled nuclear norm are the same. 

This seems in accordance with the results of Figure~\ref{fig:nucV_ratio}, where we see very little improvement of the nuclear norm for the tensor over the decoupled norm for most hyper-parameter choices when $\alpha=0$. 

But then, why is the performance improving for $\alpha>0$ ? We explain this in the following section.

\subsection{Interpretation of the nuclear norm angle dependence in SAR data}
\label{sec:interpret}
As we demonstrated in Figure~\ref{fig:D_alpha_ex}, and show in greater detail  in Appendix~\ref{app:cross_terms}, in general the data traces are not linear, and the inner product between different sub-apertures is small. However, their deviation from linearity greatly depends on the direction in which the target is moving. Targets that are moving at angles different than $0$ result in higher variation in the slope of the data traces between different sub-apertures, which leads to effective orthogonality between data traces of different sub-apertures. This leads to an increase in the nuclear norm, achieving the upper bound as in Corollary \ref{col:cor_1}\ref{col:cor_1_b}
\begin{equation}
\|\hat{A}^{(k)}\|_*=\frac{1}{\sqrt{n_3}}\sum\limits_{i=0}^{n_3-1}\|A^{(i)}\|_*
\end{equation}
This explains why $\alpha\neq 0$ improves the nuclear norm ratio, since when the upper bound is achieved the tensor nuclear norm would see an improvement by a factor of $\sqrt{n_3}$ over the decoupled form, since in the decoupled form 
\begin{equation}
\|A\|_{*,\mathcal{D}}=\sum\limits_{i=0}^{n_3-1} \|A^{(i)}\|_*.
\end{equation}
While in the tensor form it will be
\begin{equation}
\|A\|_{*,\mathcal{F}}=\sum\limits_{k=0}^{n_3-1}\|\hat{A}^{(k)}\|_*=\sum\limits_{k=0}^{n_3-1} \sum\limits_{i=0}^{n_3-1}\frac{1}{\sqrt{n_3}}\|A^{(i)}\|_*=\sqrt{n_3}\sum\limits_{i=0}^{n_3-1} \|A^{(i)}\|_*,
\end{equation}
and indeed the upper bound of \eqref{eq:nuc_fourier_bounds}.

 However, our numerical results suggest that the improvement that is least sensitive to the choice of the hyper-parameters is around $\pi/4$. To explain this, let us recall that in our analysis, we assumed that after performing DFT, we can partition the data traces originating from different sub-apertures, i.e., that there is no column support overlap between the data of different sub-apertures. However, in practice there is always a column support overlap, since the sub-apertures have non zero overlap. 
 In the case where all data traces have the same linear slope, the sub-aperture overlap (i.e., the number of overlapping rows between successive sub-apertures) determines the fixed column overlap, proportional  to the number of overlapping rows by the slope. However, as the data traces become more curved, the effective slope varies across different sub-apertures and might even change sign, leading to an increase in the column overlap, no longer determined by the sub-aperture overlap. 
 This reduces the possible increase in nuclear norm, as the effective column range does not grow at the same rate with increasing number of sub-apertures. Thus, there is a trade off between the non-linearity of the traces, which implies orthogonality, and their column range support, leading to the most robust performance of the TRPCA around $\alpha=\pi/4$, as is observed in Figure~\ref{fig:nucV_ratio}. This is also abserved in Figure~\ref{fig:ex_demo}, where we can see that for $\alpha=\pi/2$ the column range is limited and, when looking at the panels of $\hat{A}^{(k)}$ after performing DFT, there is significant variation in the overlap, compared to $\alpha=0$ and $\alpha=\pi/4$.

\paragraph{Optimality of $\|\cdot\|_{*,\mathcal F}$ for SAR data separation.}
\label{subsec:yier_nuclear_effective}
The performance of RPCA in the SAR context improves when the ratio between the nuclear norm of the background and the moving target increases. The main motivation in seeking a tensor based representation is to enhance the low-rank and sparse/full-rank structure of the background and moving targets respectively. 

The results of Corollary~\ref{col:cor_1} are a particular case of a more general result, proven in \cite{li2016bounds}. That result states that the nuclear norm of a tensor $\mathcal{T}$, can be bound by the nuclear norm of any regular partition of $\mathcal T$ (defined in  \cite{li2016bounds}) $\mathcal{T}_1,\mathcal{T}_2,\dots\mathcal{T}_k$ by
\begin{equation}
\left(\sum\limits_{i=1}^k \|\mathcal{T}_i\|_*^2\right)^{1/2}\le \|\mathcal{T}\|_{*,\mathcal T}\le \sum\limits_{i=1}^k \|\mathcal{T}_i\|_*
\end{equation} 
Specifically, in our case the general statement applies when we choose the partition of $\mathcal A$ to be the decomposition of the data into different sub-apertures  $A^{(i)}$, giving
\begin{equation}
\left(\sum\limits_{i=0}^{n_3-1}\|A^{(i)}\|_*^2\right)^{1/2}\le	\|\mathcal{A}\|_{*,\mathcal T}\le \sum\limits_{i=0}^{n_3-1}\|A^{(i)}\|_*.
\label{eq:nuc_norm_bounds_ten}
\end{equation}
Comparing \eqref{eq:nuc_norm_bounds_ten} to \eqref{eq:nuc_fourier_bounds} we note that the Fourier based nuclear norm achieves the lower bound for a completely stationary background, while it achieves a factor of $\sqrt{n_3}$ times the upper bound for non linear targets, moving rapidly. It is in this sense, that the Fourier method is optimal for the SAR data separation problem. 
%%%%%%%%%%%%%%%%%%%%%%%%%%%%%%%%%%%%%%%%%%%%%%%%%%%%%%%%%%%%%%%%%
\section{Numerical results}  \label{sec:4}
In this section we use the same setup as before (cf. section~\ref{sec:hyper_sim}) to generate synthetic SAR data for different configurations. We then use TRPCA and compare its performance with two other methods: matrix RPCA over the entire data, and decoupled RPCA over the different sub-apertures. We first show the data separation results in Section \ref{sec:4.1} and then the corresponding images in Section \ref{sec:4.2}.
\subsection{Data separation results}
\label{sec:4.1}
We consider a stationary background with 15 point scatterers and a target moving at a slow velocity of $1m/s$. We construct the data matrix $D_L$ and data tensor $\mathcal{A}_L$ for the background and do the same with $D_S$ and $\mathcal{A}_S$ for the moving target. The hyper-parameters are the same as in the previous section. We vary the target's direction $\alpha$ between $0$, $\pi/4$ and $\pi/2$ and perform RPCA in three forms:
\begin{enumerate}[(a)]
	\item TRPCA with $\eta$ set to its optimal value:
	\begin{equation}
	\eta^*_{\mathcal F}=\sqrt{\eta_{\max,\mathcal{F}}\eta_{\min,\mathcal{F}}},\quad \eta_{\max,\mathcal{F}}=\frac{\|\mathcal{A}_S\|_{*,\mathcal F}}{\|\mathcal{A}_S\|_1},\eta_{\min,\mathcal{F}}=\frac{\|\mathcal{A}_L\|_{*,\mathcal F}}{\|\mathcal{A}_L\|_1}.
	\end{equation}	
	\item `Decoupled': We perform matrix RPCA one sub-aperture at a time. Here $\eta$ is also set to its optimal value following \cite{leib2018RPCA}:
	\begin{equation}
	\eta^{(i)*}=\sqrt{\eta^{(i)}_{\max}\eta^{(i)}_{\min}},\quad \eta^{(i)}_{\max}=\frac{\|A^{(i)}_S\|_*}{\|A^{(i)}_S\|_1},\eta^{(i)}_{\min}=\frac{\|A^{(i)}_L\|_*}{\|A^{(i)}_L\|_1}. 
	\end{equation}
	\item Matrix RPCA over the entire data matrix with $\eta$ set to its optimal value:
	\begin{equation}
	\eta^*=\sqrt{\eta_{\max}\eta_{\min}},\quad \eta_{\max}=\frac{\|D_S\|_*}{\|D_S\|_1},\eta_{\min}=\frac{\|D_L\|_*}{\|D_L\|_1}
	\end{equation}
\end{enumerate}
For methods $(a)$ and $(b)$, we need to reconstruct the separated data on the entire aperture $D_L$ and $D_S$ from the separated data on the overlapping sub-apertures. 
From \eqref{eq:mat_to_tensor}, one can reconstruct the original data matrix $D_r(s,t)$ by
\begin{equation}
D_r(s,t)=\sum\limits_{\ell\in X}\theta_\ell A^{(\ell)}(s-\ell \vartheta s_{\text{sub}}),\  \sum \limits_{\ell}\theta_\ell=1,\quad X=\{\ell \text{ s.t. } s-\ell \vartheta s_{\text{sub}}\in [0,s_{\text{sub}}]\}.
\label{eq:tensor_to_mat}
\end{equation}
We can use any linear combination of the appropriate data entries in the sub-apertures provided that the weights $\theta_l$ sum up to one. The 
choice used is the `innermost' aperture (i.e. farthest from the edges), which proves to yield stable results, 
\begin{equation}
\begin{split}
&{D}_L(s,t)=L^{(\ell^*)}(s-\ell^*\vartheta,t)\\
&\ell^*=\arg\min_\ell \ell^2+(\ell-n_3+1)^2,\quad s-\ell \vartheta \in [0,s_{\text{sub}}].
\end{split}
\label{eq:opt_l}
\end{equation}
The same is done for ${D}_S$.
	
The separation results are illustrated in Figure~\ref{fig:RPCA_results}. We observe that the performance of TRPCA improves as the angle $\alpha$ increases. TRPCA outperforms the other two methods for $\alpha\ge \pi/4$, while for small angles, all methods are struggling. 

\begin{figure}[htbp!]
	\centering
	{\vskip 17mm\hskip -18.2cm$\alpha=0$\vskip -17mm}
	\begin{subfigure}[t]{0.245\textwidth}
		\includegraphics[width=1\columnwidth]{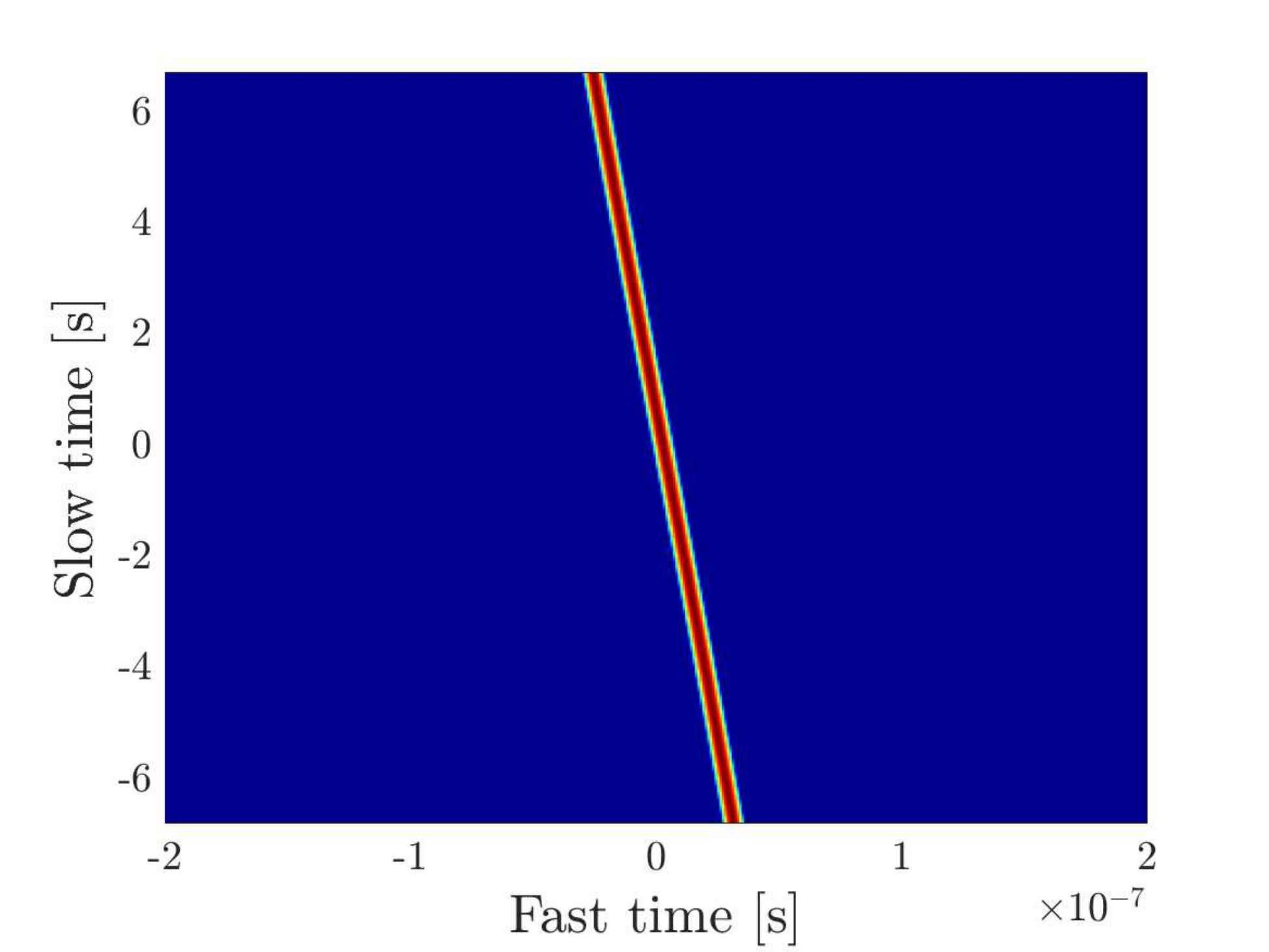}
		%\caption{} 
		\label{fig:S_orig_0}
	\end{subfigure}
	\begin{subfigure}[t]{0.245\textwidth}
	\includegraphics[width=1\columnwidth]{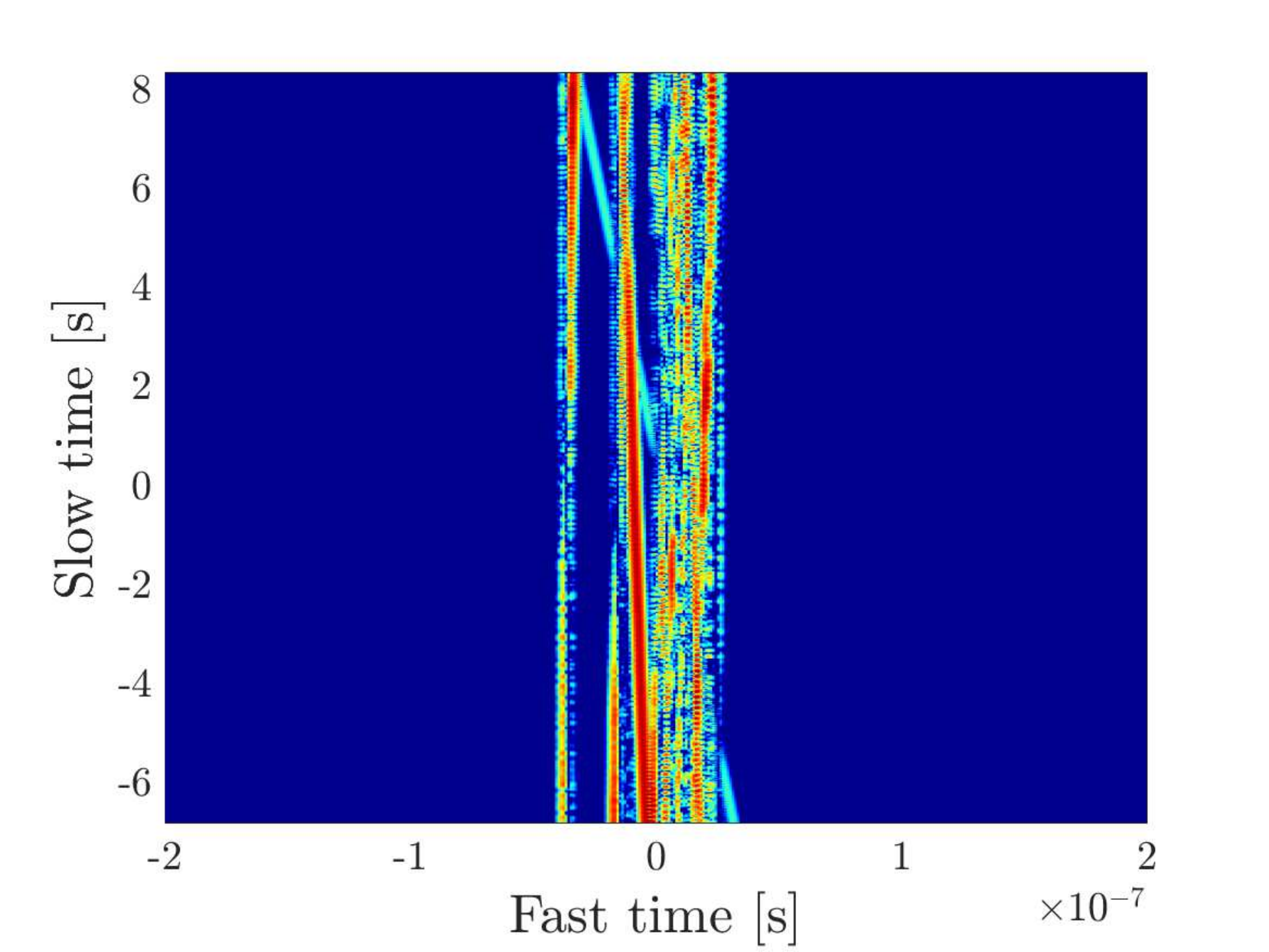}
	%\caption{} 
	\label{fig:S_ten_0}
	\end{subfigure}
		\begin{subfigure}[t]{0.245\textwidth}
		\includegraphics[width=1\columnwidth]{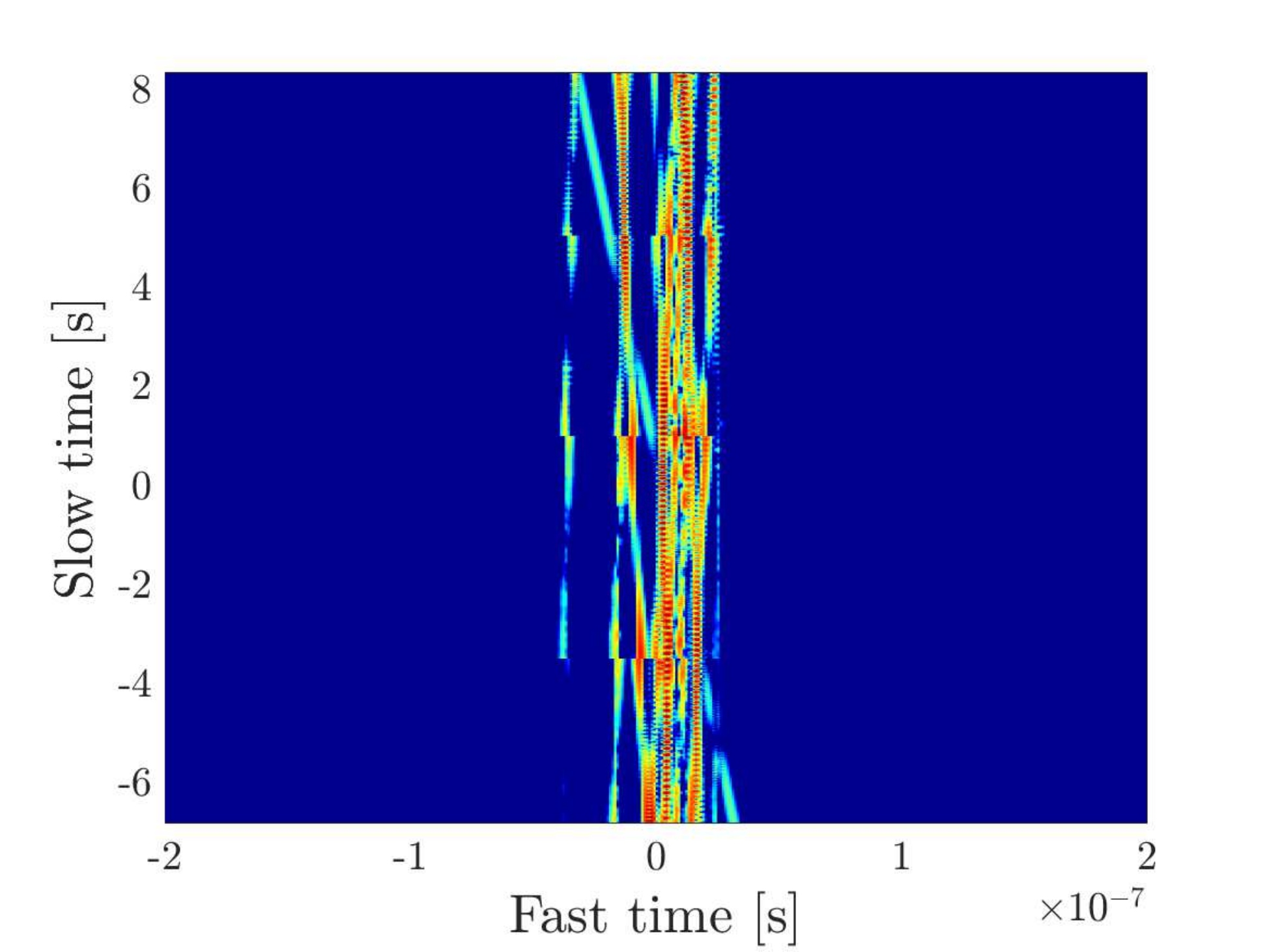}
		%\caption{} 
		\label{fig:S_dec_0}
	\end{subfigure}
		\begin{subfigure}[t]{0.245\textwidth}
		\includegraphics[width=1\columnwidth]{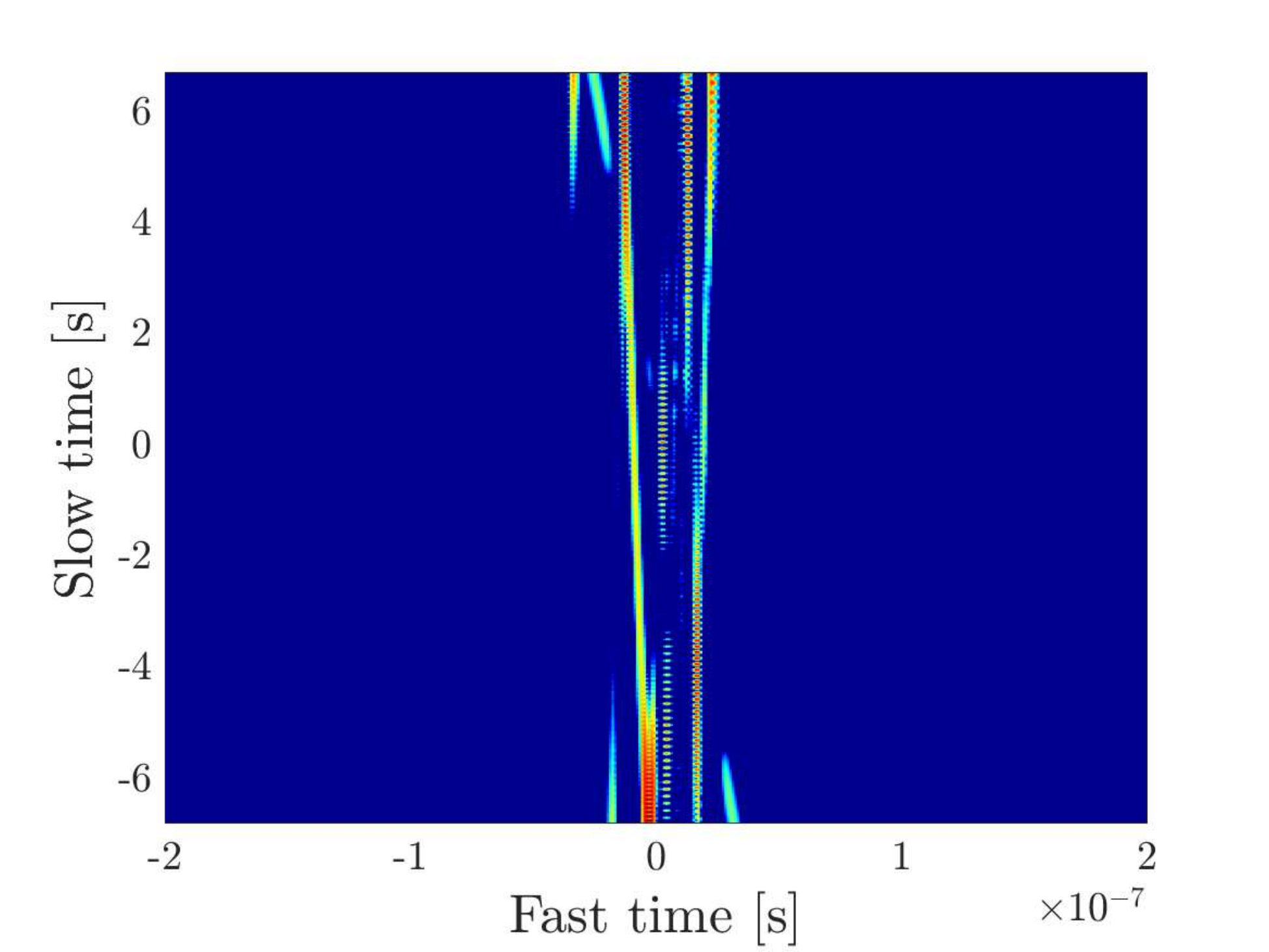}
		%\caption{} 
		\label{fig:S_tot_0}
\end{subfigure}

	{\vskip 17mm\hskip -18.2cm$\alpha=\pi/8$\vskip -17mm}
	\begin{subfigure}[t]{0.245\textwidth}
	\includegraphics[width=1\columnwidth]{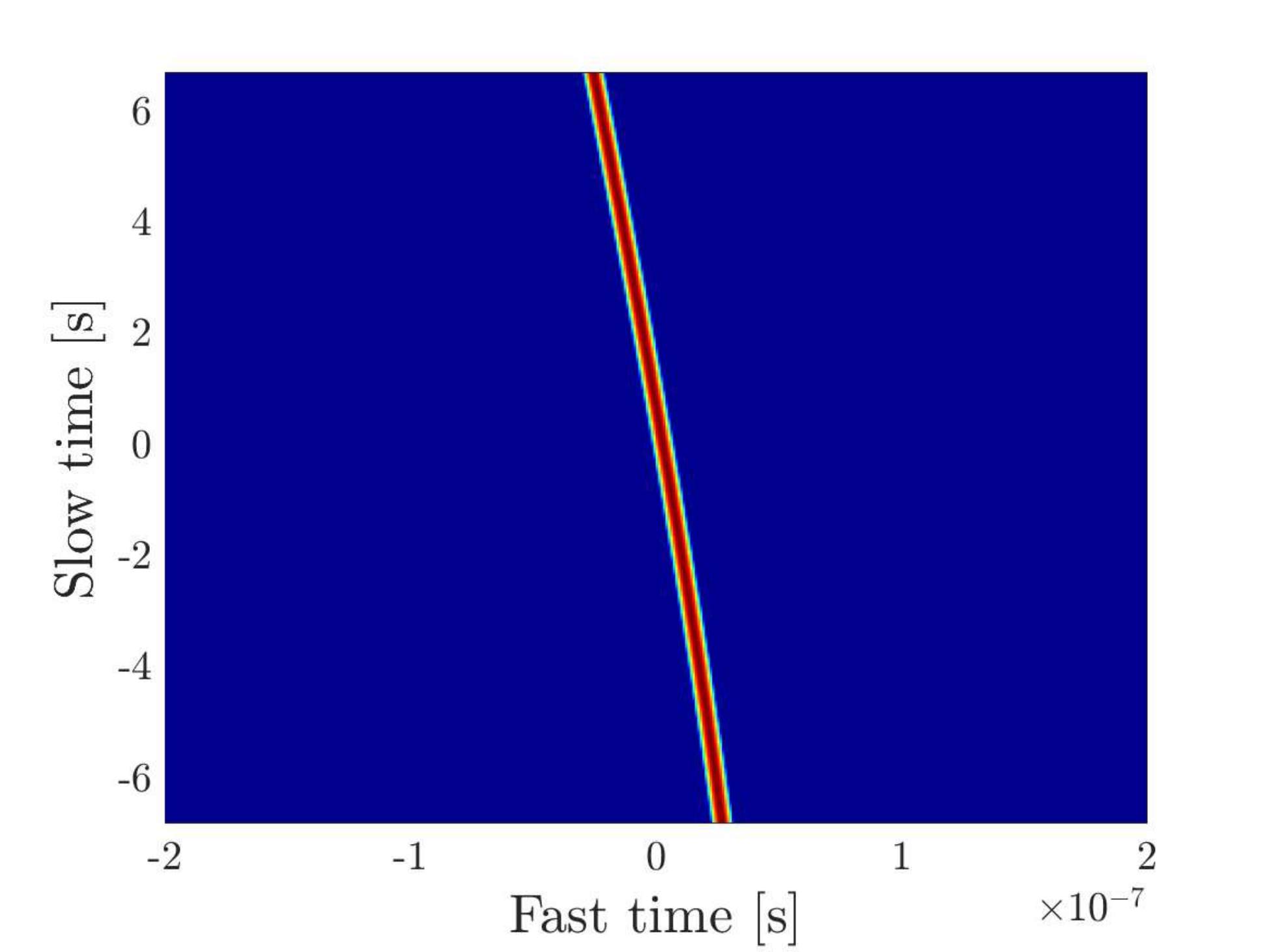}
	%\caption{} 
	\label{fig:S_orig_pi_8}
\end{subfigure}
\begin{subfigure}[t]{0.245\textwidth}
	\includegraphics[width=1\columnwidth]{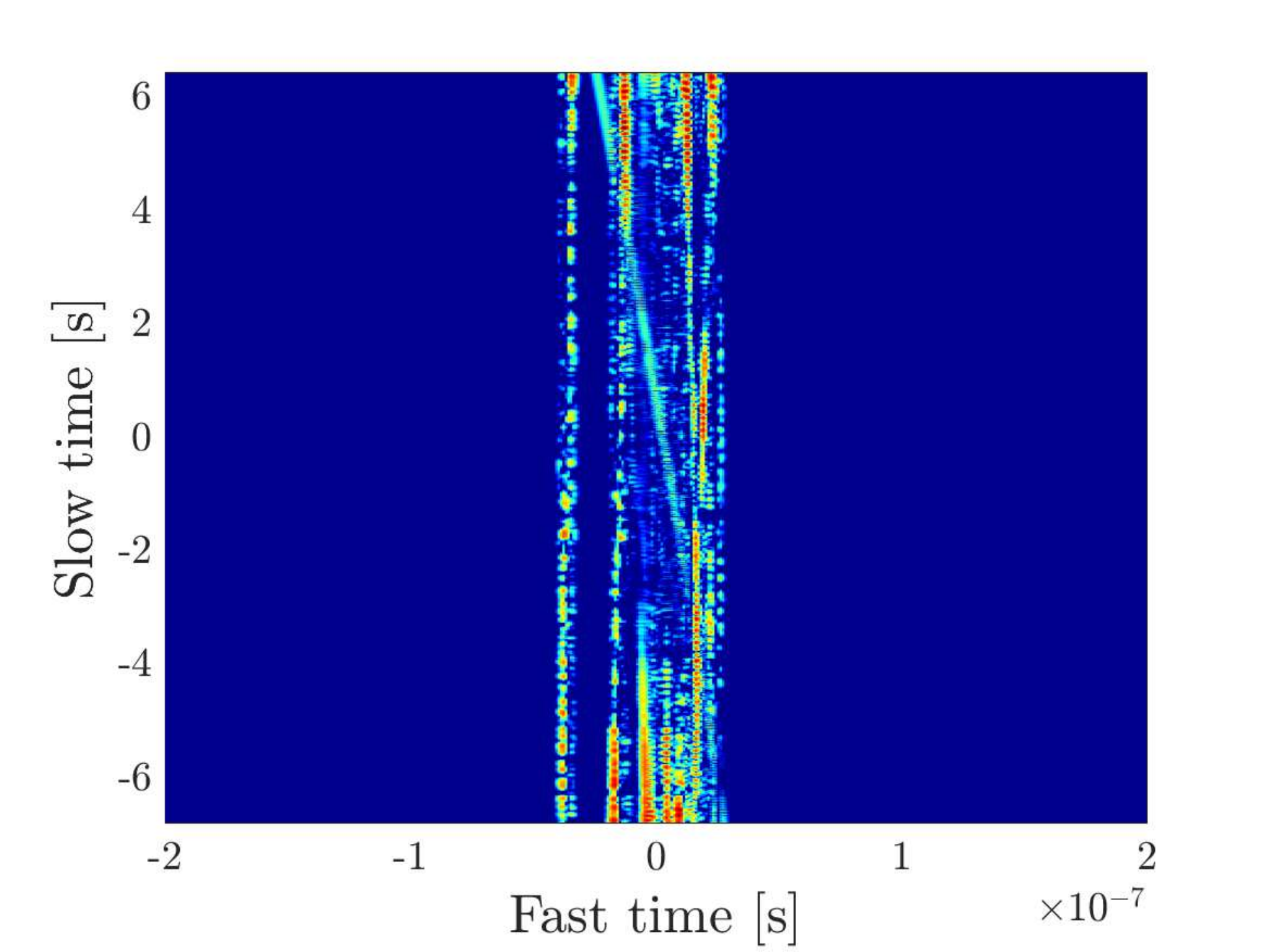}
	%\caption{} 
	\label{fig:S_ten_pi_8}
\end{subfigure}
\begin{subfigure}[t]{0.245\textwidth}
	\includegraphics[width=1\columnwidth]{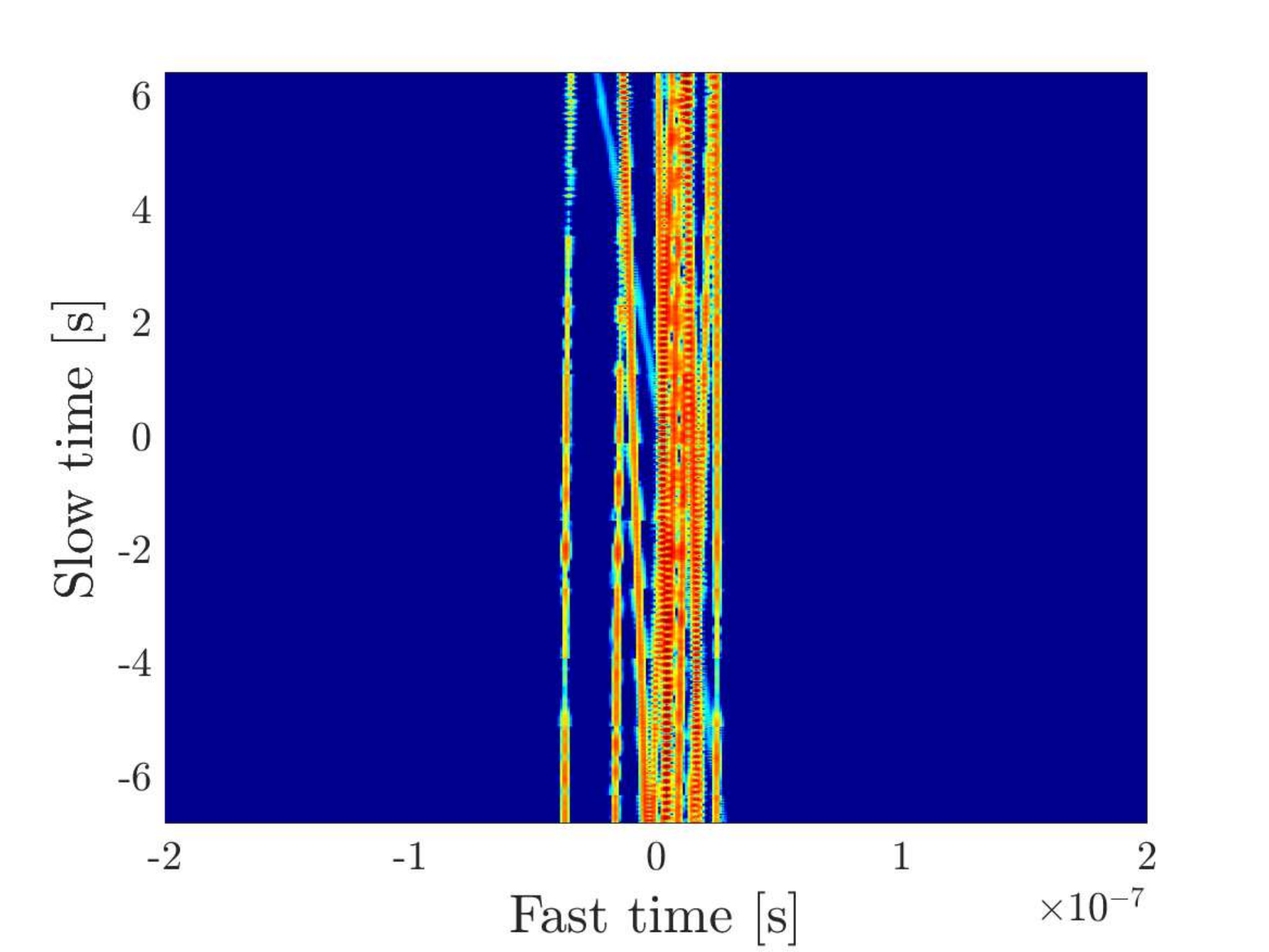}
	%\caption{} 
	\label{fig:S_dec_pi_8}
\end{subfigure}
\begin{subfigure}[t]{0.245\textwidth}
	\includegraphics[width=1\columnwidth]{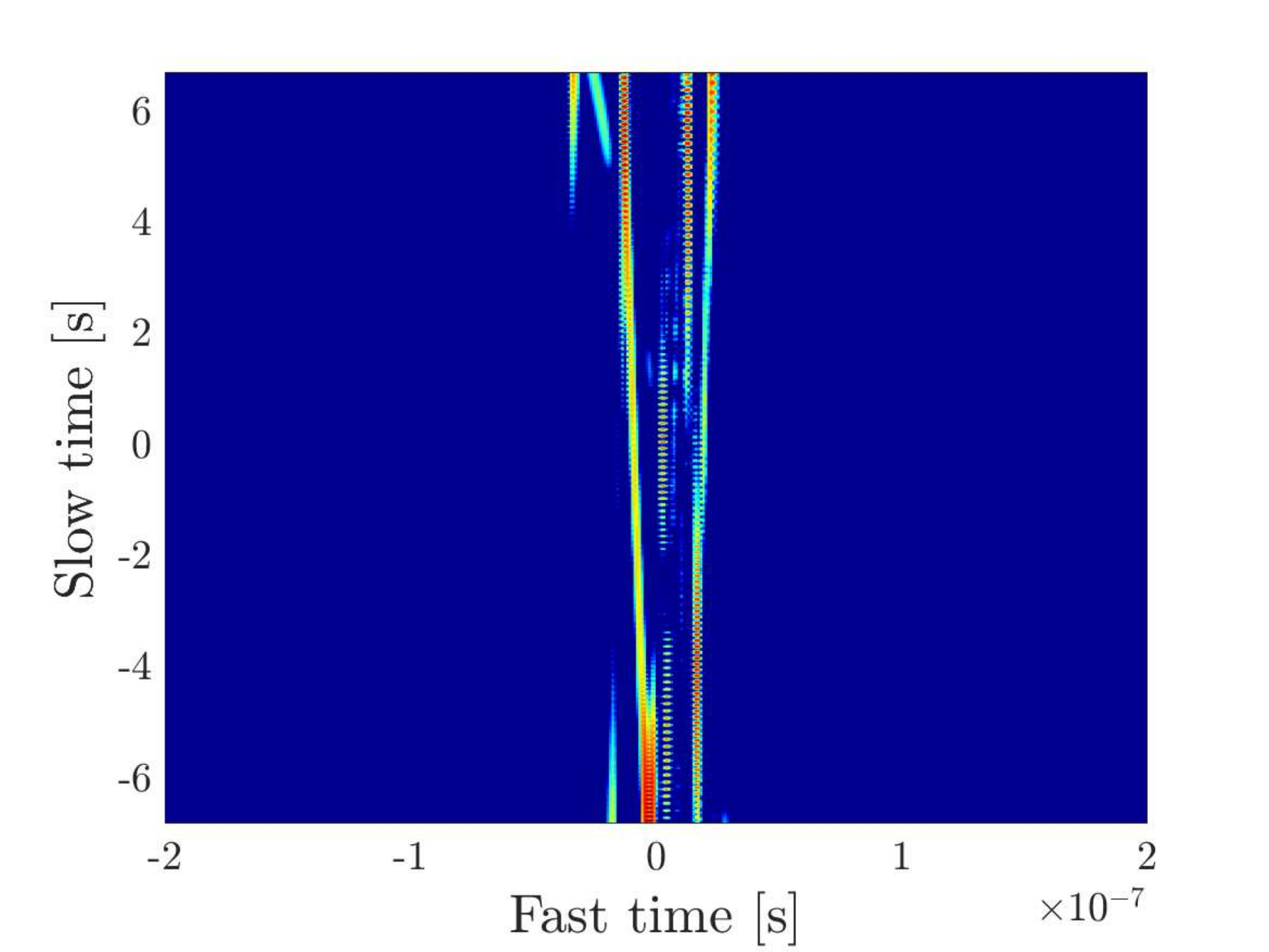}
	%\caption{} 
	\label{fig:S_tot_pi_8}
\end{subfigure}
{\vskip 17mm\hskip -18.2cm$\alpha=\pi/4$\vskip -17mm}
	\begin{subfigure}[t]{0.245\textwidth}
	\includegraphics[width=1\columnwidth]{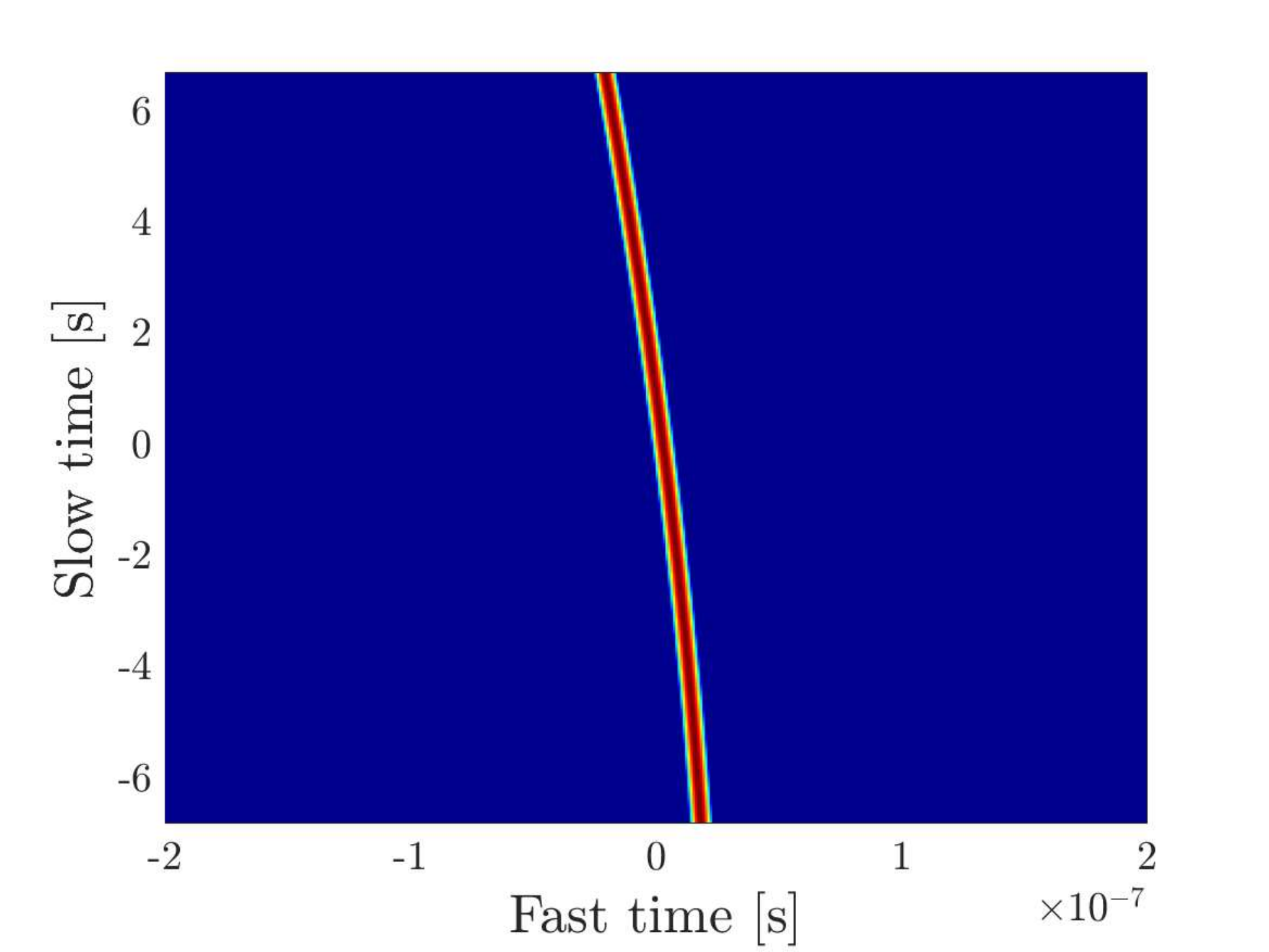}
	\caption{} 
	\label{fig:S_orig_pi_4}
\end{subfigure}
\begin{subfigure}[t]{0.245\textwidth}
	\includegraphics[width=1\columnwidth]{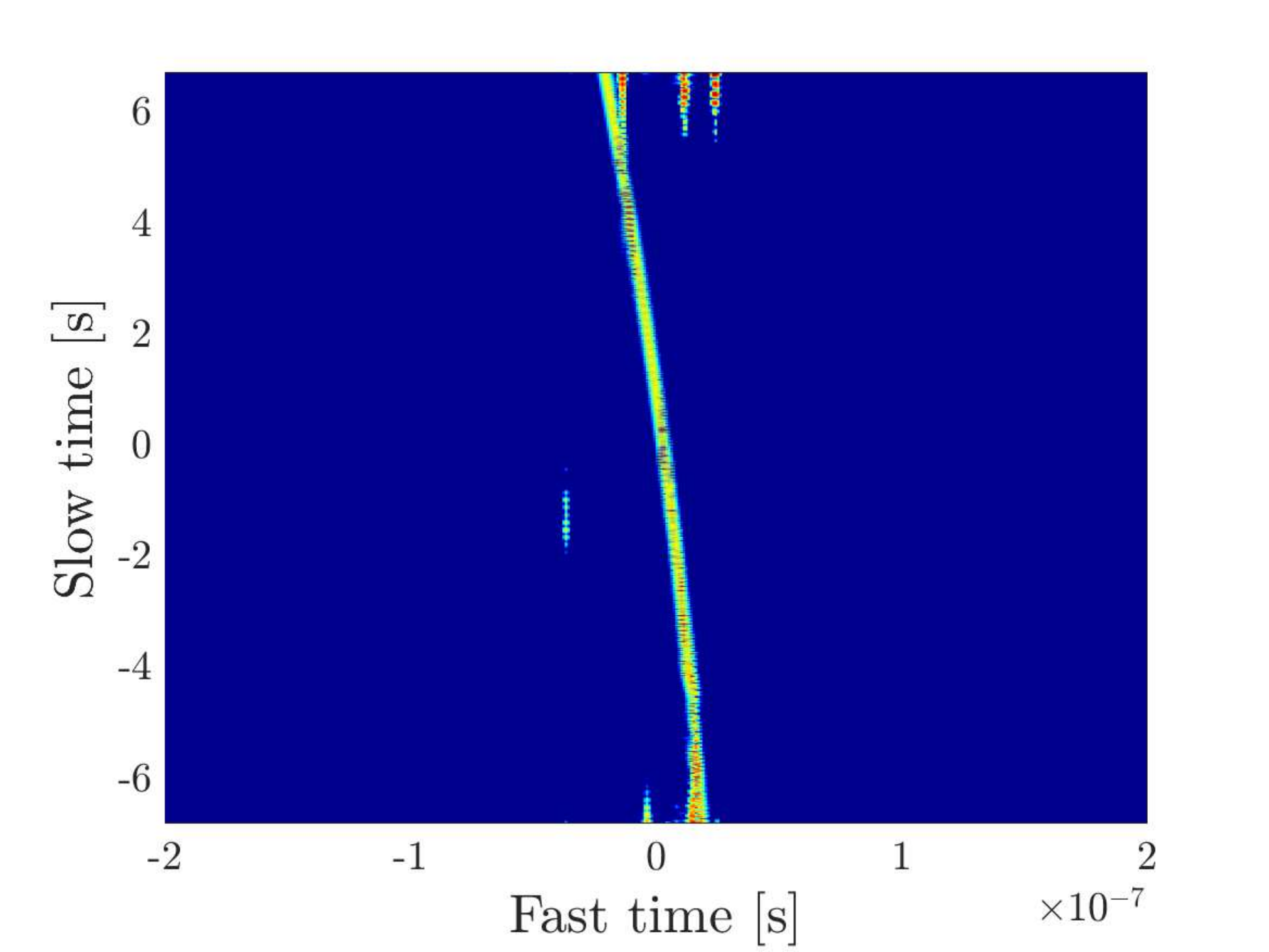}
	\caption{} 
	\label{fig:S_ten_pi_4}
\end{subfigure}
\begin{subfigure}[t]{0.245\textwidth}
	\includegraphics[width=1\columnwidth]{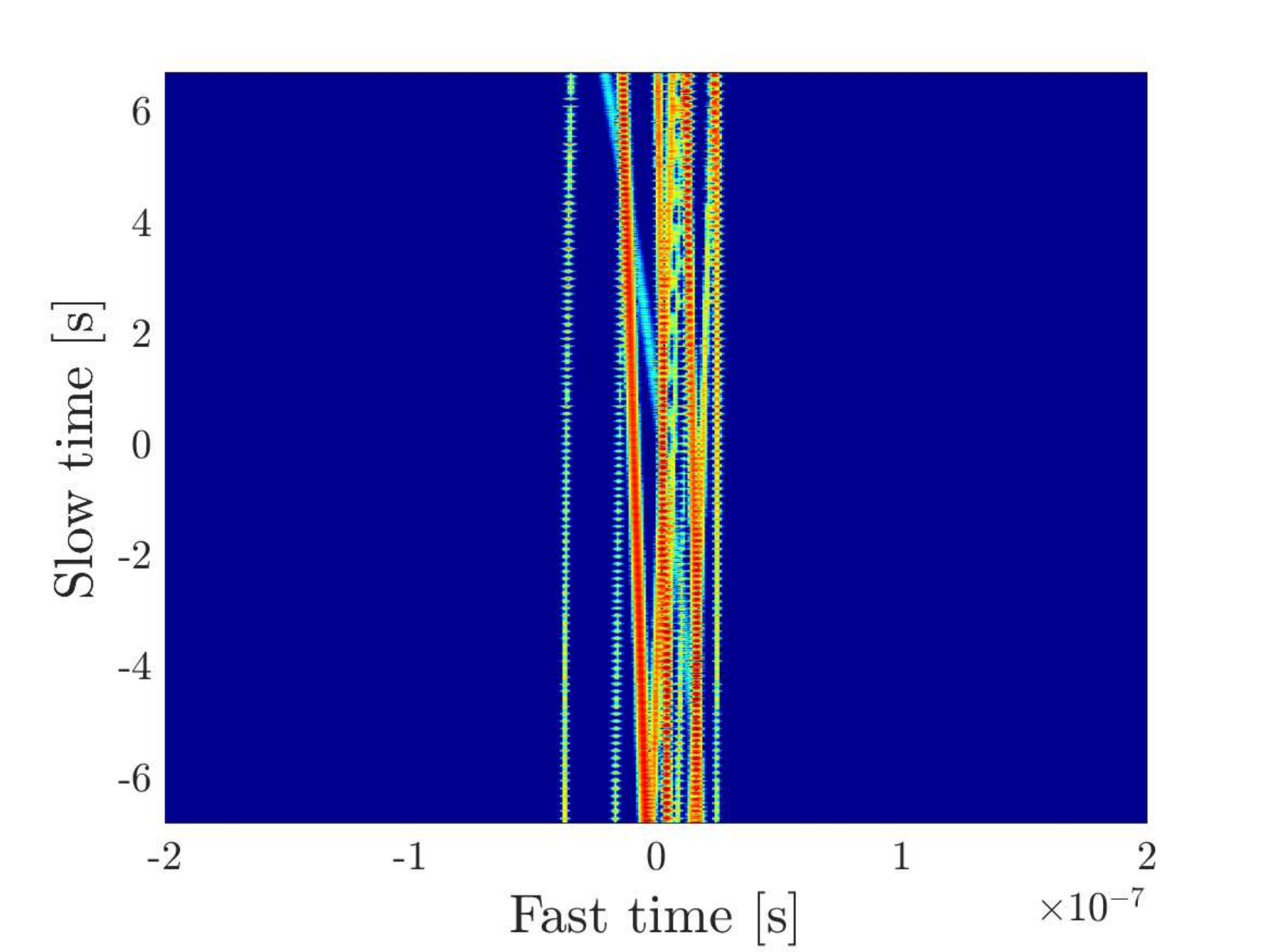}
	\caption{} 
	\label{fig:S_dec_pi_4}
\end{subfigure}
\begin{subfigure}[t]{0.245\textwidth}
	\includegraphics[width=1\columnwidth]{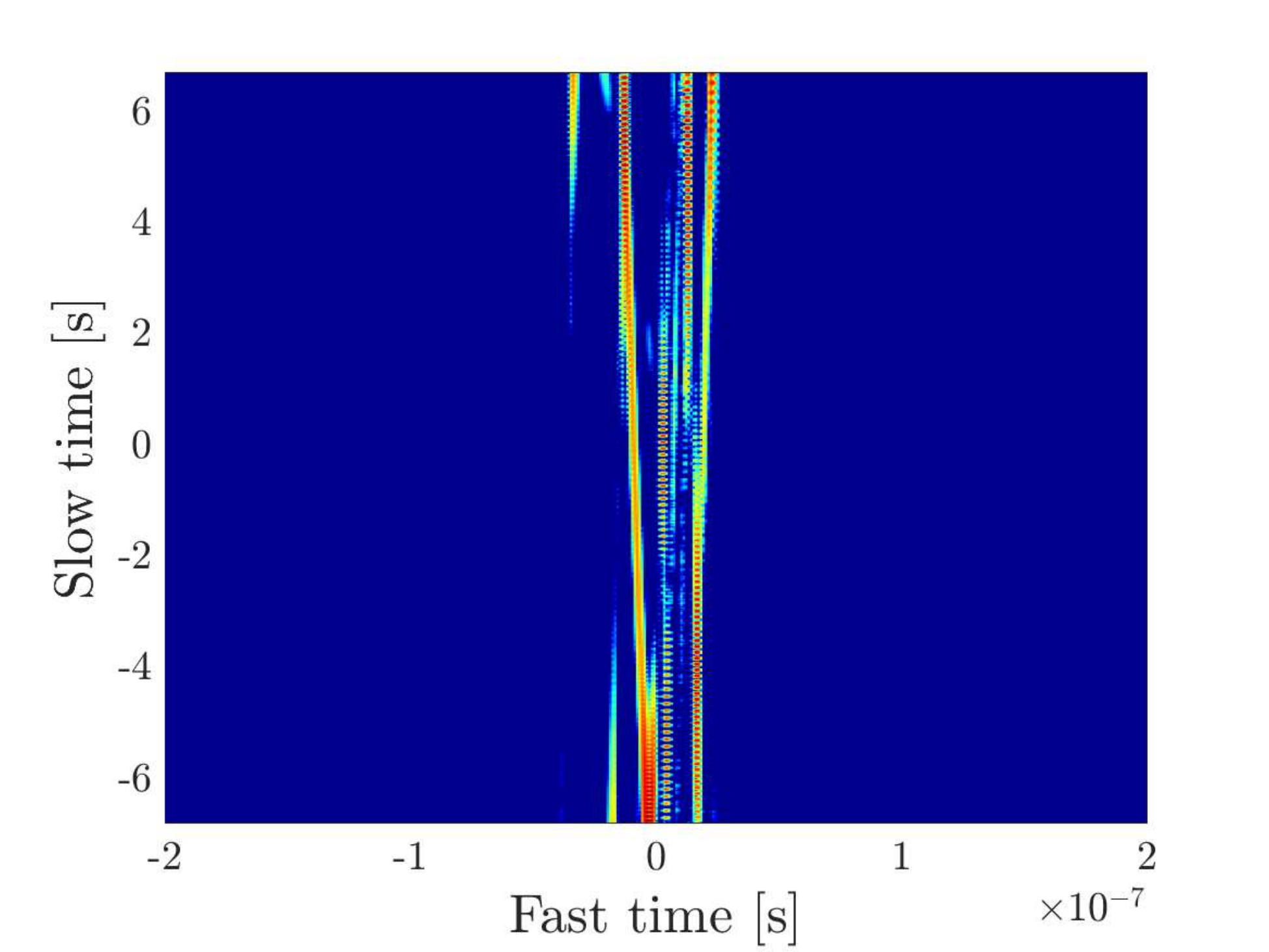}
	\caption{} 
	\label{fig:S_tot_pi_4}
\end{subfigure}
	{\vskip 17mm\hskip -18.2cm$\alpha=3\pi/8$\vskip -17mm}
	\begin{subfigure}[t]{0.245\textwidth}
		\includegraphics[width=1\columnwidth]{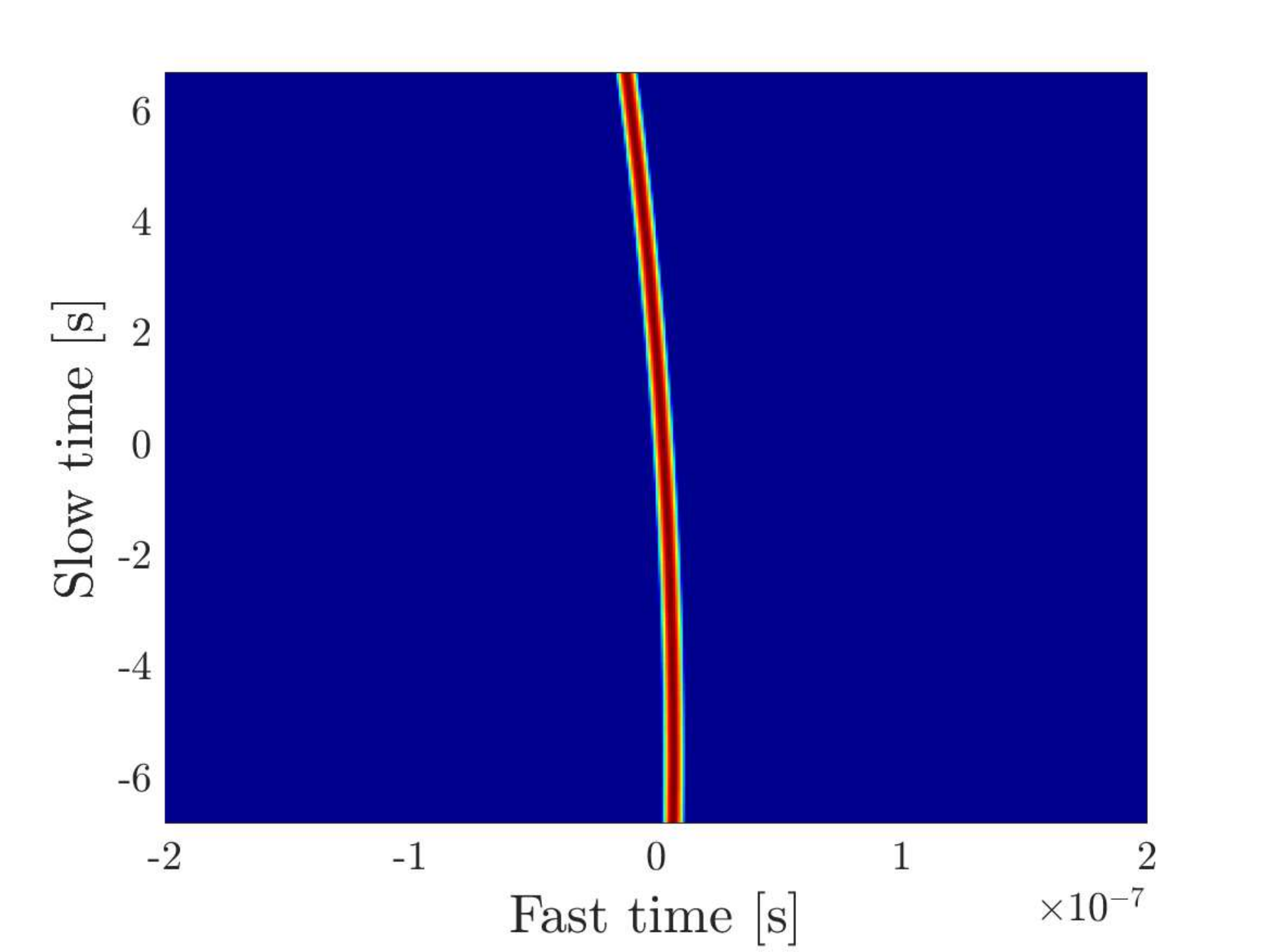}
		%\caption{} 
		\label{fig:S_orig_3pi_8}
	\end{subfigure}
	\begin{subfigure}[t]{0.245\textwidth}
		\includegraphics[width=1\columnwidth]{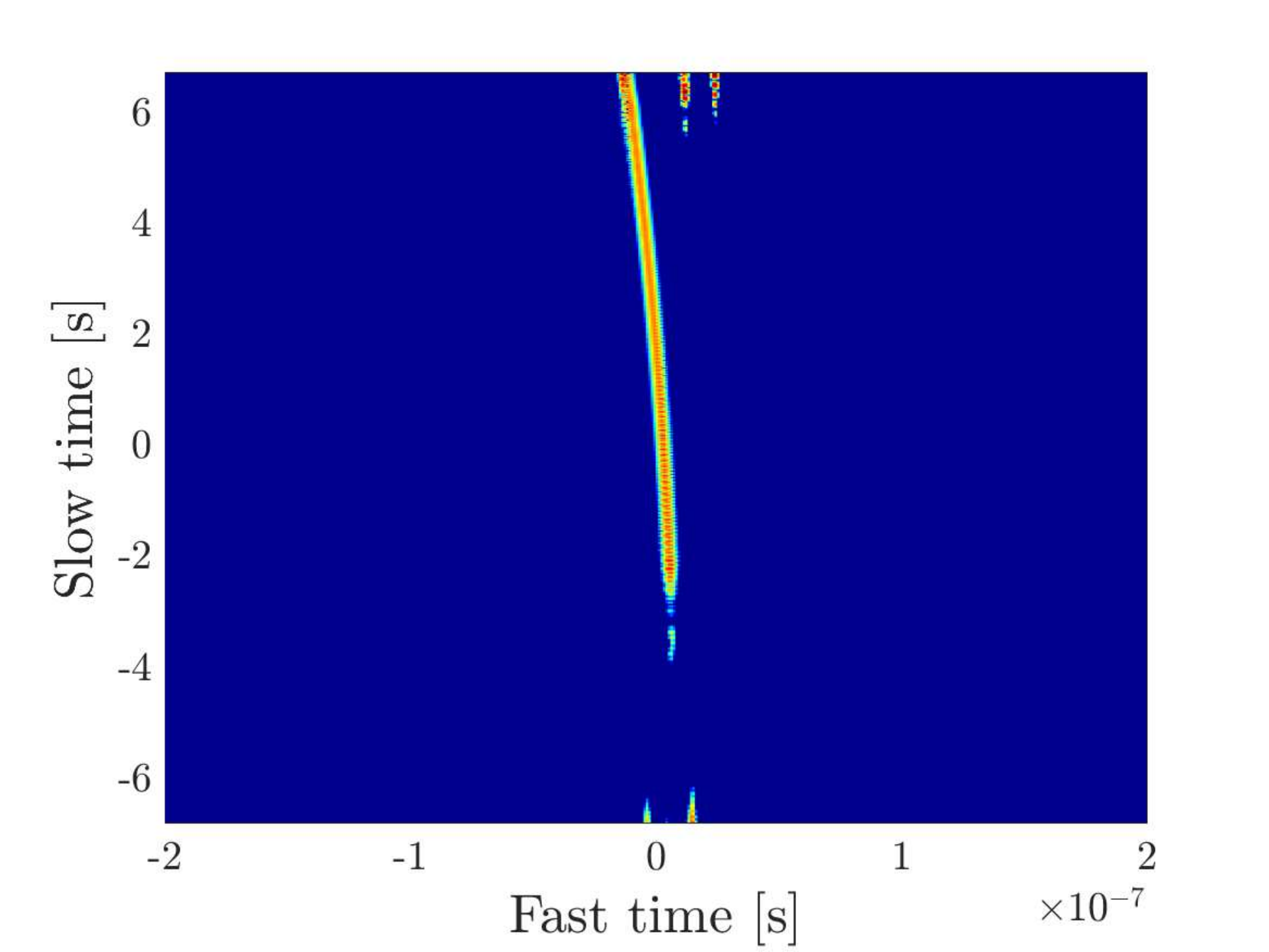}
		%\caption{} 
		\label{fig:S_ten_3pi_8}
	\end{subfigure}
	\begin{subfigure}[t]{0.245\textwidth}
		\includegraphics[width=1\columnwidth]{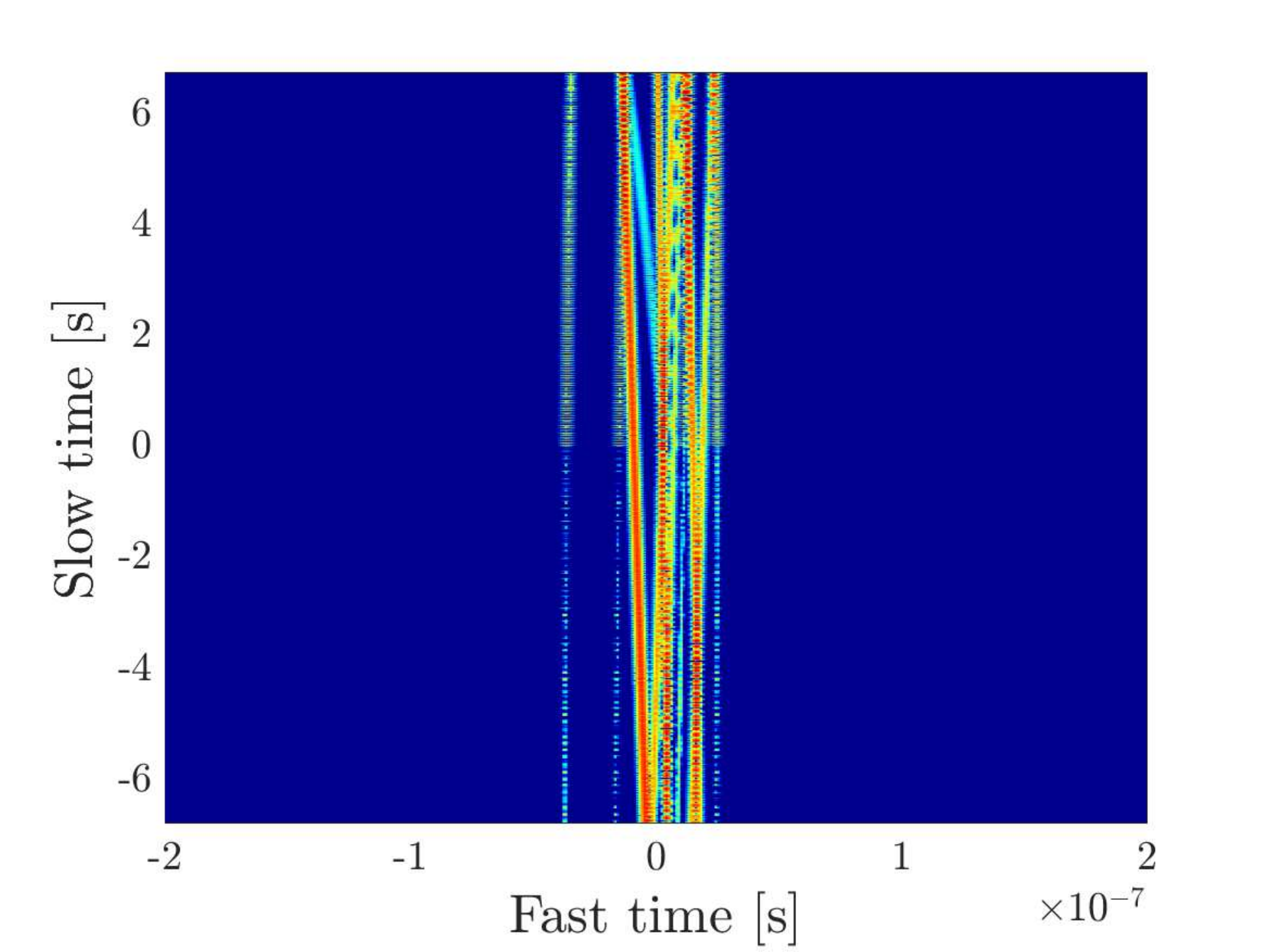}
		%\caption{} 
		\label{fig:S_dec_3pi_8}
	\end{subfigure}
	\begin{subfigure}[t]{0.245\textwidth}
		\includegraphics[width=1\columnwidth]{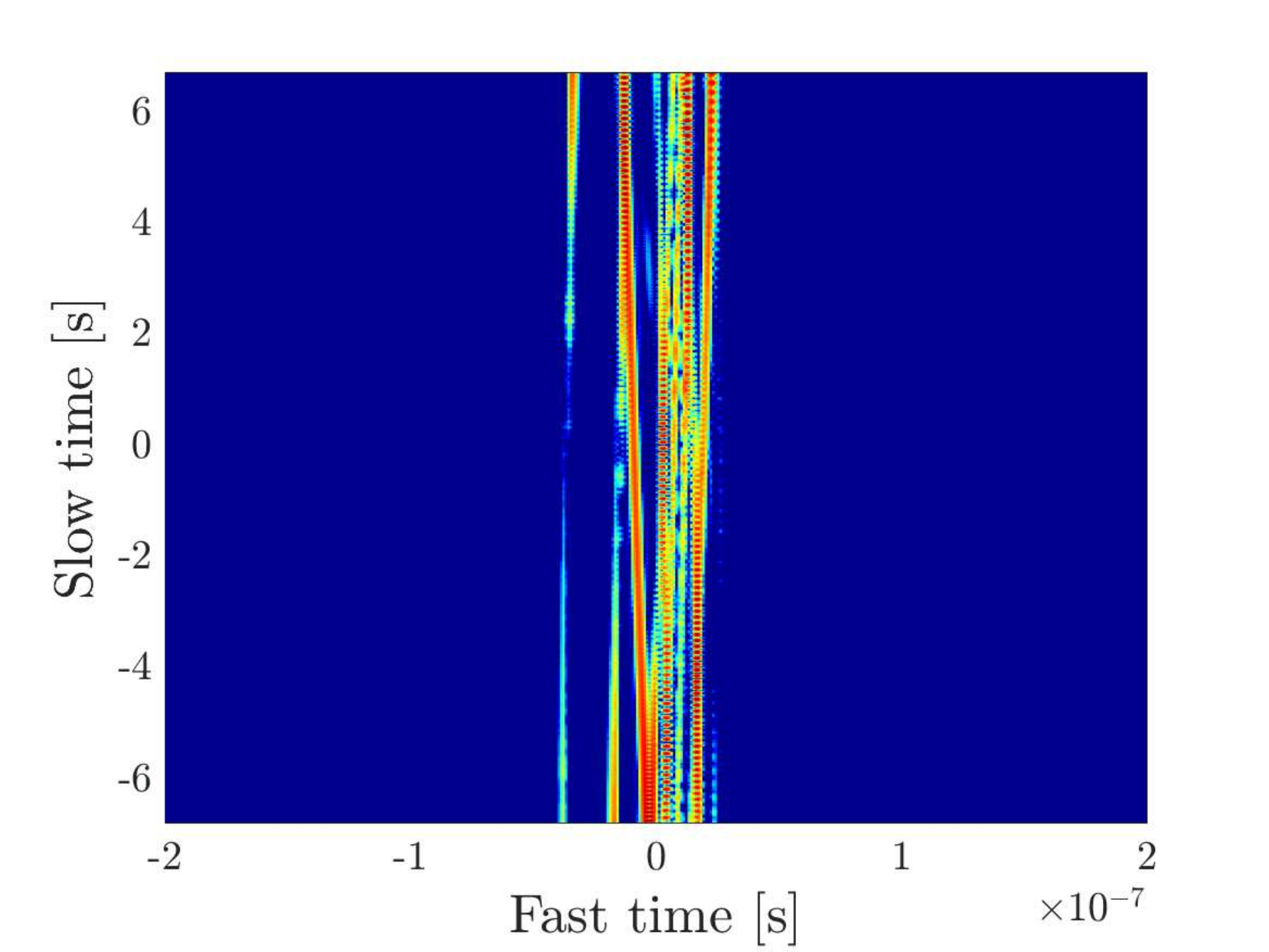}
		%\caption{} 
		\label{fig:S_tot_3pi_8}
	\end{subfigure}
	{\vskip 17mm\hskip -18.2cm$\alpha=\pi/2$\vskip -17mm}
	\begin{subfigure}[t]{0.245\textwidth}
		\includegraphics[width=1\columnwidth]{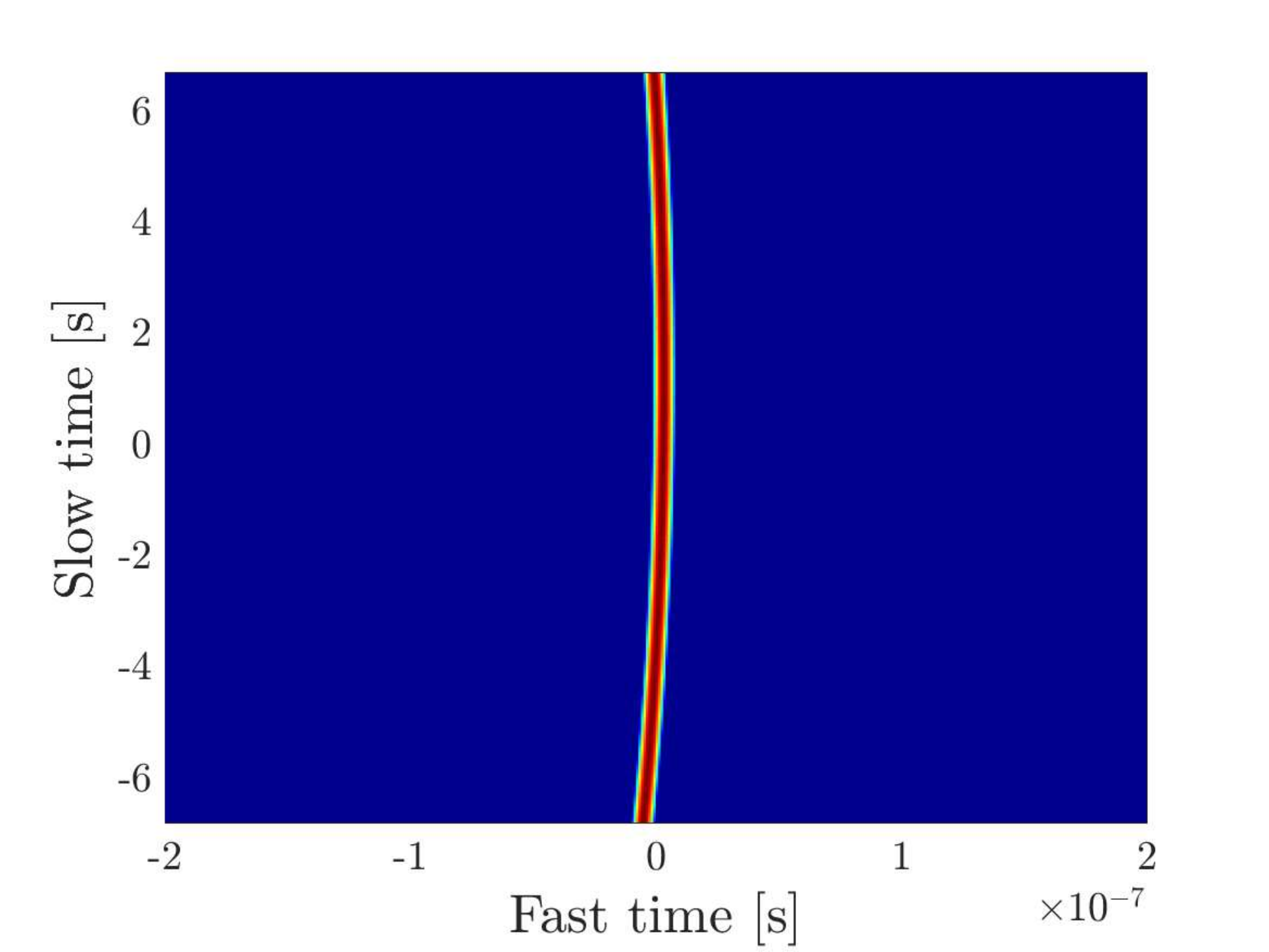}
		\caption{} 
		\label{fig:S_orig_pi_2}
	\end{subfigure}
	\begin{subfigure}[t]{0.245\textwidth}
		\includegraphics[width=1\columnwidth]{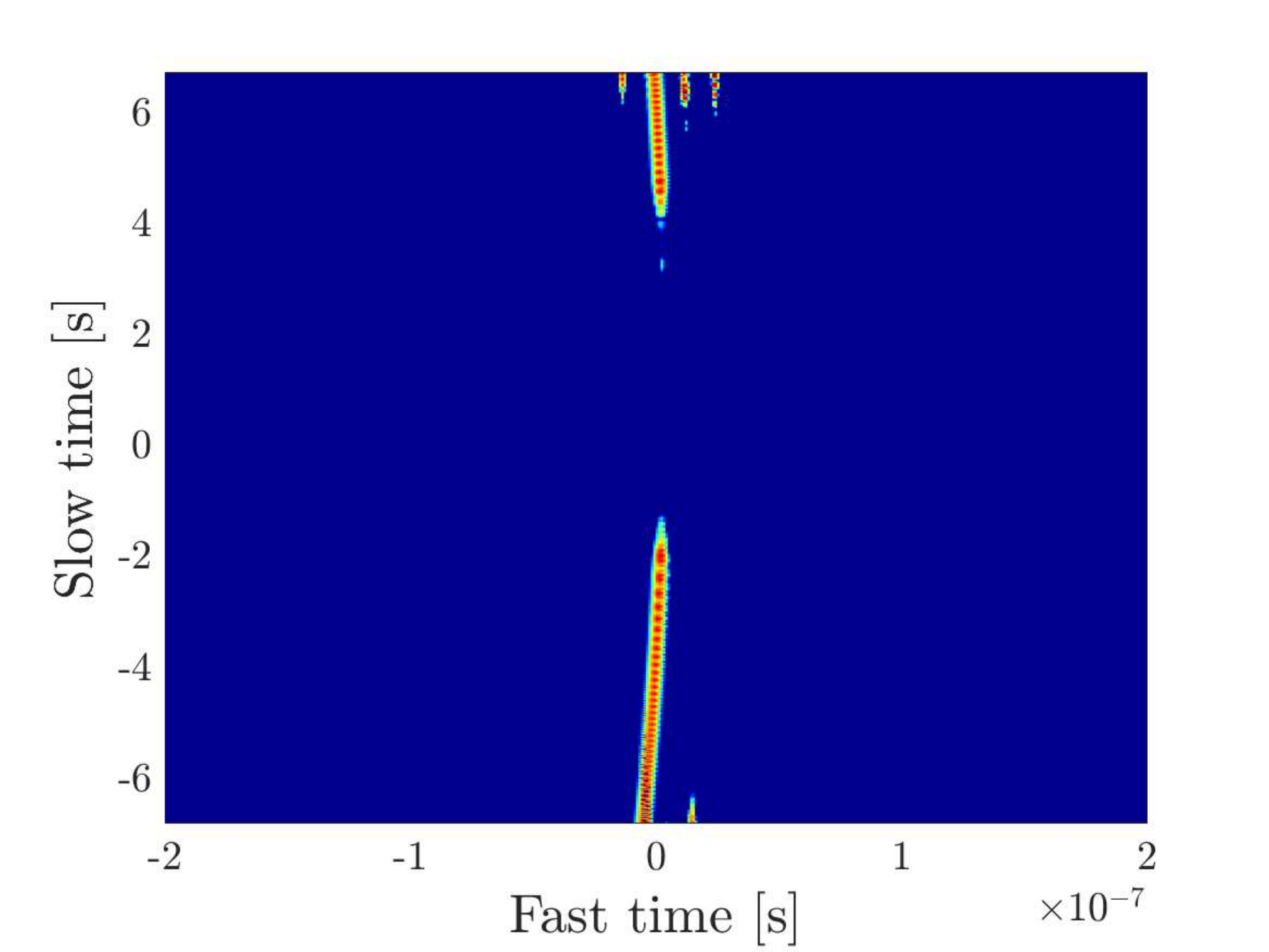}
		\caption{} 
		\label{fig:S_ten_3pi_16}
	\end{subfigure}
	\begin{subfigure}[t]{0.245\textwidth}
		\includegraphics[width=1\columnwidth]{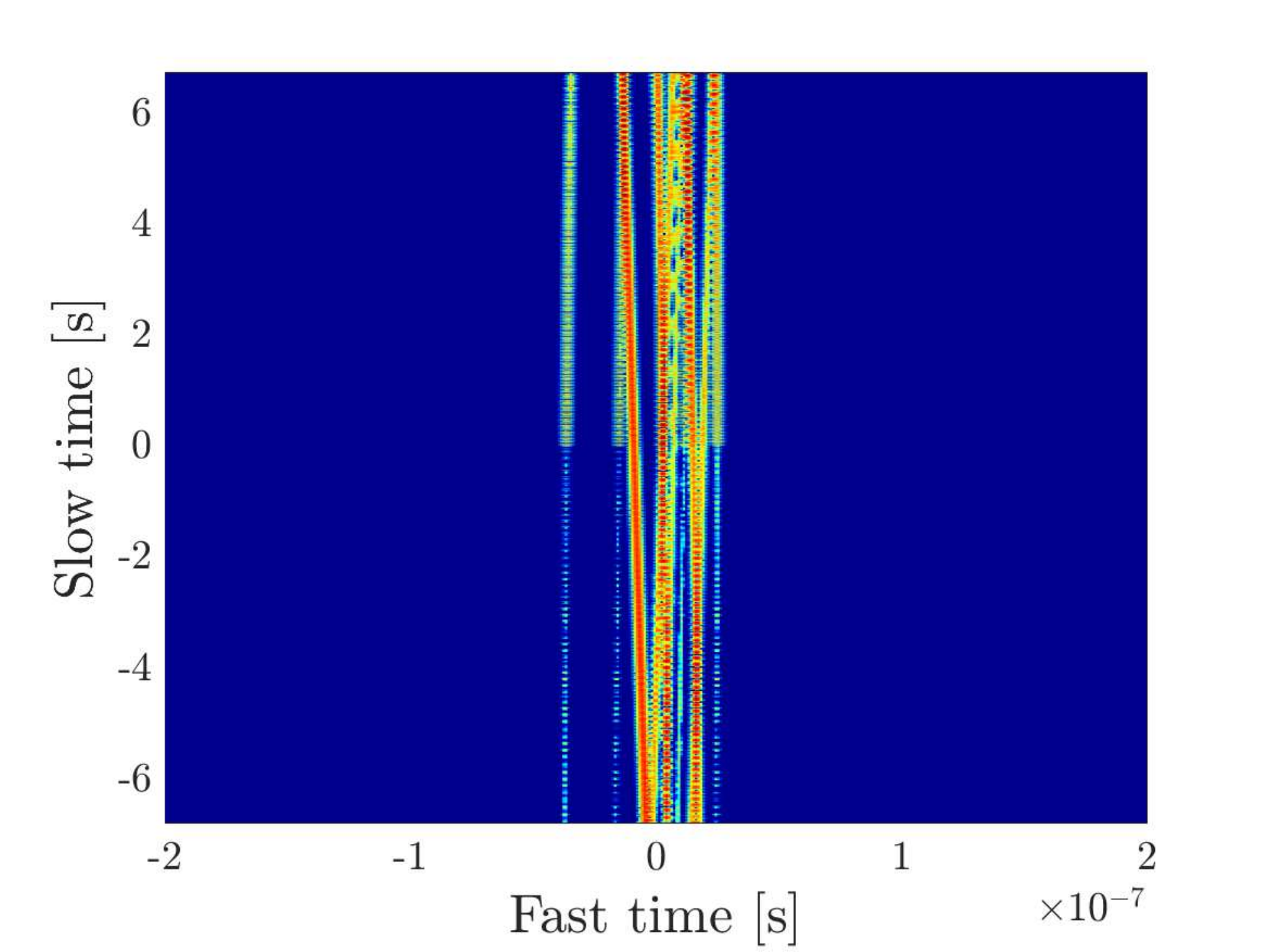}
		\caption{} 
		\label{fig:S_dec_pi_2}
	\end{subfigure}
	\begin{subfigure}[t]{0.245\textwidth}
		\includegraphics[width=1\columnwidth]{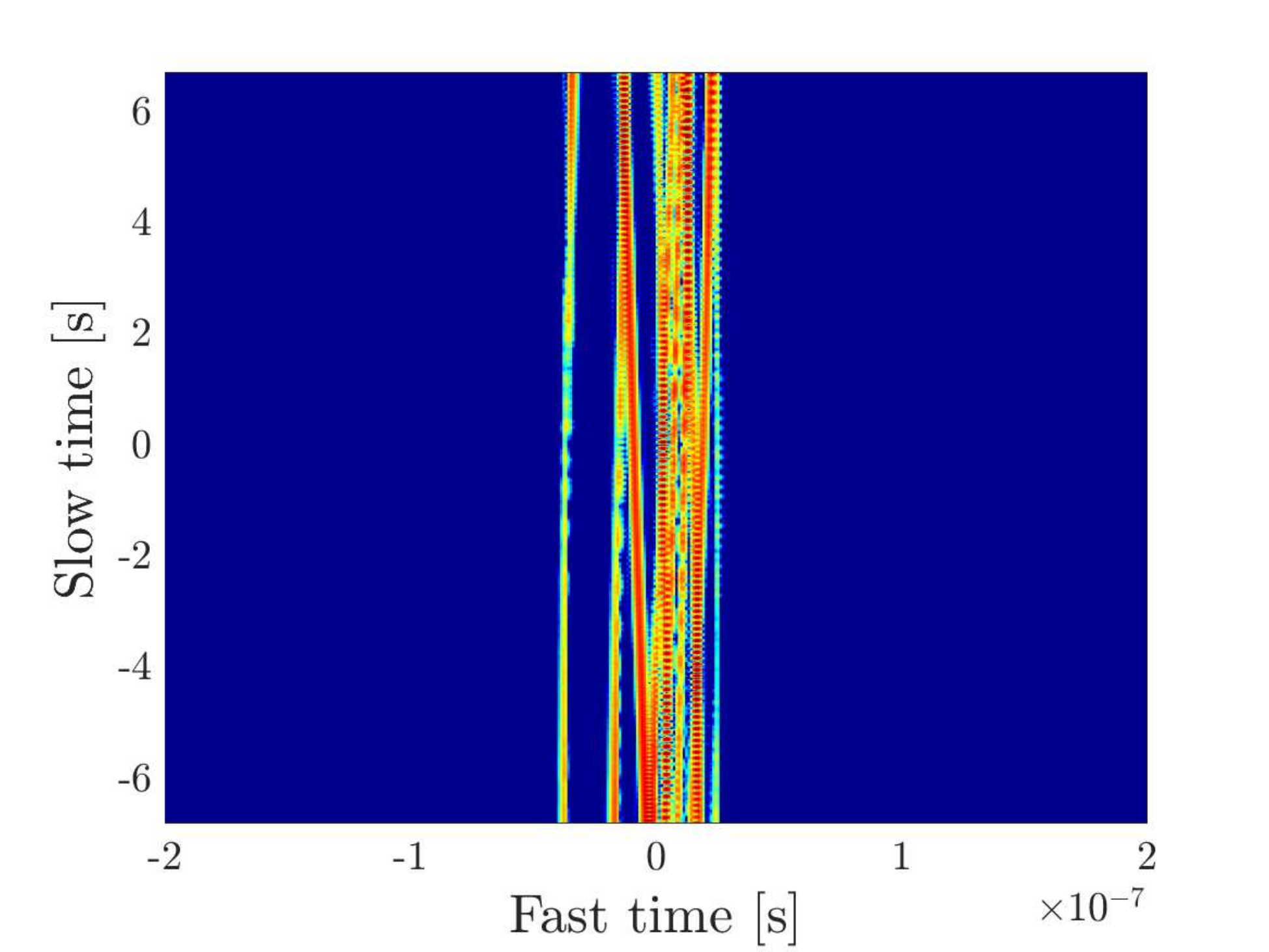}
		\caption{} 
		\label{fig:S_tot_pi_2}
	\end{subfigure}
\caption{RPCA performance. $(a)$ Original trace of moving target; $(b)$ TRPCA; $(c)$ `Decoupled' RPCA; $(d)$ Matrix RPCA. We can see that for $\alpha\ge \pi/4$ TRPCA outperforms other methods, with its performance improving with larger angles. For small angles, all methods are struggling. Some of the signal is lost around the points where the gradient with respect to the slow time is zero, which are stationary phase points.}
\label{fig:RPCA_results}
\end{figure}

\subsection{Imaging results }
\label{sec:4.2}
We present here the imaging results for the angles of Figure~\ref{fig:RPCA_results} for which a clear separation was possible, i.e., for $\alpha\ge \pi/4$. We do not present imaging results when good data separation is not achieved, since the extraction of motion parameters is challenging and prone to errors. To form an image for the moving target, we need to compensate for the target's velocity when evaluating the SAR functional of \eqref{eq:SARfunctional}. We explain below how this is done.

\paragraph{Motion Estimation}
We assume that the sparse part, after performing TRPCA, is composed of a single target. Hence, we can extract $y(s)=\Delta\tau(s)$ from stable peak locations, and compare it to a candidate one, depending on  trial target position and velocity
\begin{equation}
f_{\vrho,\vec{v}}(s)=\tau(s,\vrho+\vec{v}s)-\tau(s,\vrho_o).
\end{equation}
We can then extract $\vrho,\vec{v}$, from solving the following minimum loss problem
\begin{equation}
{\vrho}^*,{\vec{v}}^*=\arg\min \limits_{\vrho,\vec{v}}\sum\limits_s\mathcal{L}_\delta(y(s)-f_{\vrho,\vec{v}}(s)).
\end{equation}
With $\mathcal{L}_\delta$ a Huber loss:
\begin{equation}
\mathcal{L}_\delta(x)=\begin{cases} \frac{1}{2}x^2 ,& |x|\le \delta ,\\\delta(|x|-\frac{1}{2}\delta),&|x|>\delta,\end{cases} \mbox{ with }\quad \delta=10\Delta t.
\end{equation}
We only need the velocity parameters $\vec{v}$ for the SAR functional. We use MATLAB's \texttt{fmincon} routine to solve the optimization problem and obtain a robust estimate. An illustration of the deviation of the extracted trace from the exact one is given in Figure~\ref{fig:mov_param_extraction}, for $\alpha=\pi/4$.

Since the other two methods do not provide good separation results, it is impossible to extract the velocity parameters from the $S$ part of the data. 
Therefore to image the moving target, we form an exhaustive 4D imaging functional, where we vary both location and velocity parameters. 

\begin{equation}
I^{\text{SAR}}(\vrho,\vec{\pmb{v}})= \sum_{j=-n/2}^{n/2} 
S(s_j,\tau(s_j,\vrho+s_j \vec{\pmb{v}})-\tau(s_j,\vrho_o)) .
\end{equation}
We present the result of backpropagation for the other two methods for $I^{\text{SAR}}(\vrho, \vec{\pmb{v}}_t)$, for $\vec{\pmb{v}}=\vec{\pmb{v}}_t$ the target's actual velocity vector. 	
We present the imaging output of the three RPCA methods in Figure~\ref{fig:imaging_results}. One can see that TRPCA provides a clear image of the moving target, while the other two methods fail to do so. 
\begin{figure}[htbp!]
	\centering
	\includegraphics[width=0.5\columnwidth]{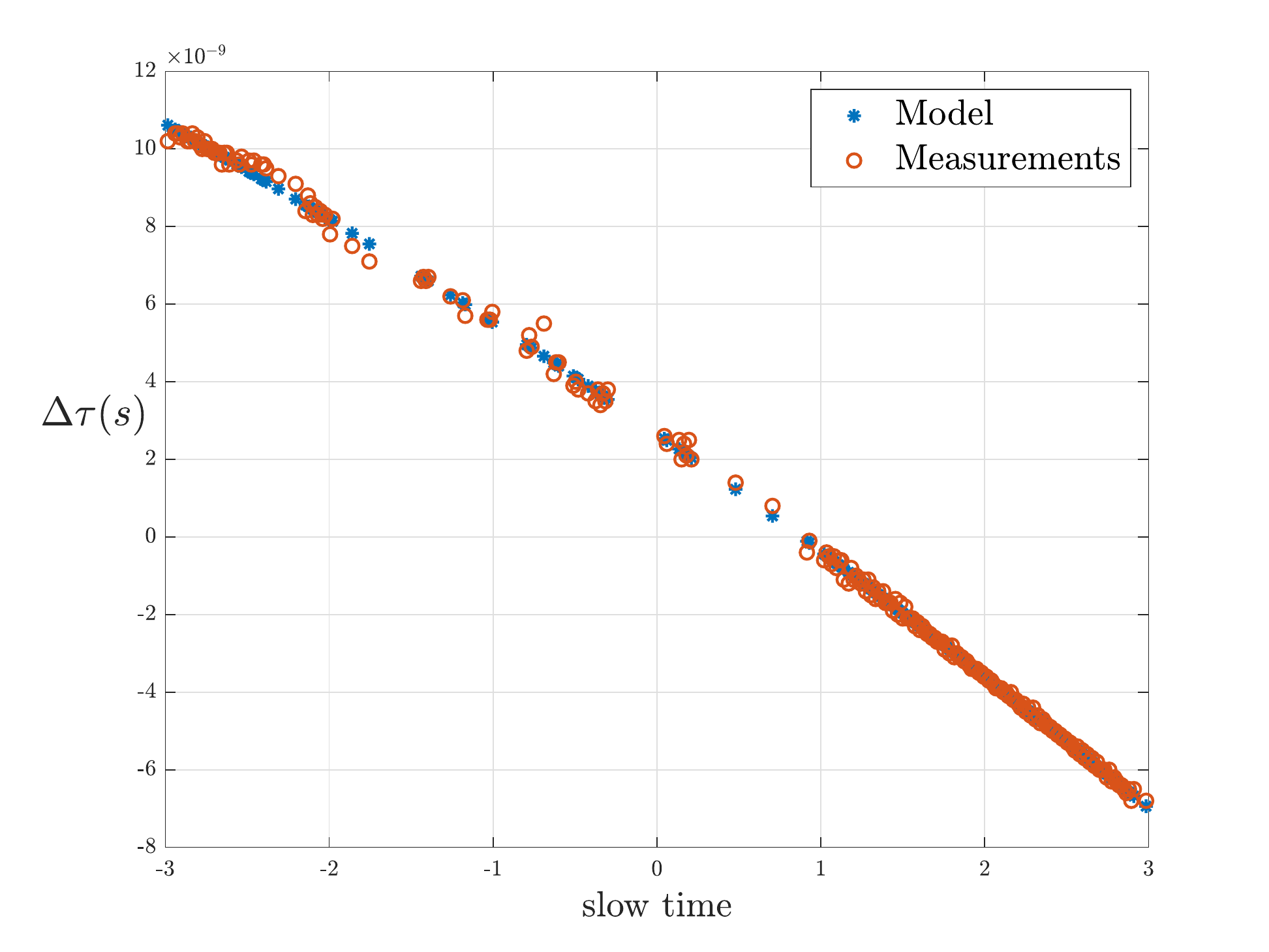}
	\caption{Parameter extraction. Comparison of measured $\Delta\tau(s)$ and analytical one, using the estimated parameters. The sparse part of the data as provided by the TRPCA algorithm is used to determine the velocity parameters, that we subsequently use in the imaging process.} 
	\label{fig:mov_param_extraction}
\end{figure}

\begin{figure}[htbp!]
	\centering
	%{\vskip 25mm\hskip -17.2cm\rotatebox{90}{$\alpha=\pi/4$}\vskip -25mm}
	\begin{subfigure}[t]{0.3275\textwidth}
		\includegraphics[width=1\columnwidth]{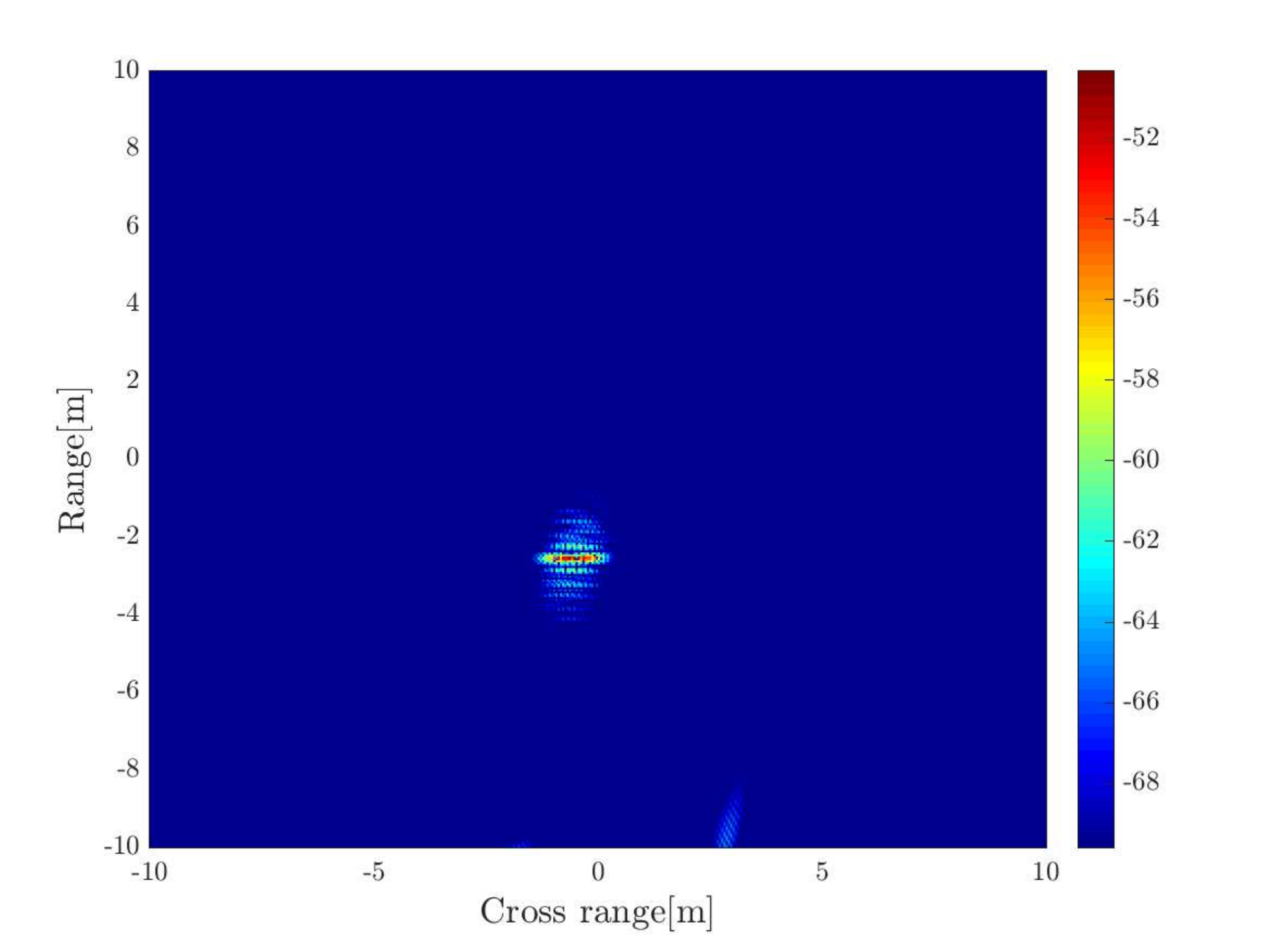}
		%\caption{} 
		\label{}
	\end{subfigure}
	\begin{subfigure}[t]{0.3275\textwidth}
		\centering
		\includegraphics[width=1\columnwidth]{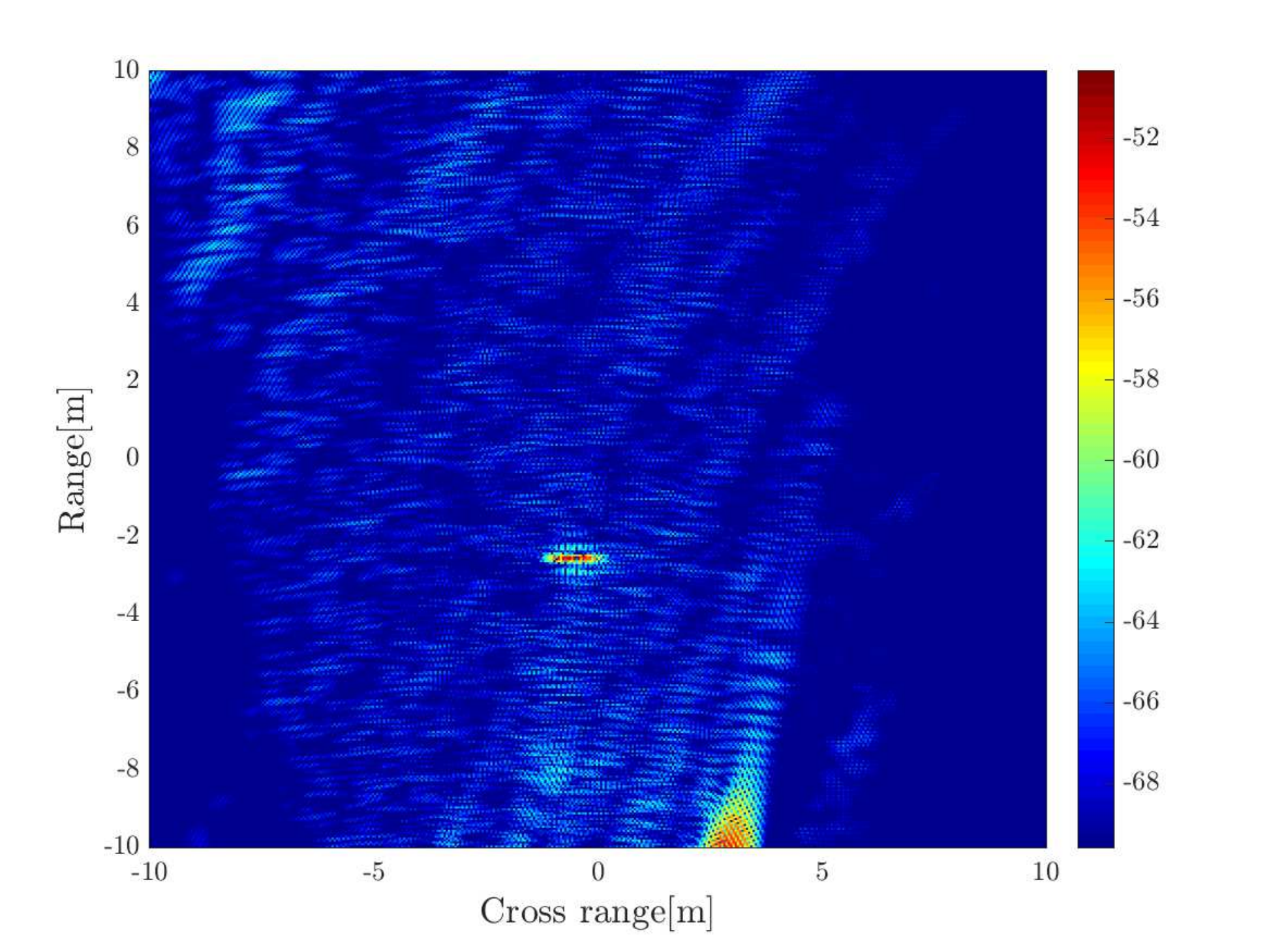}
		\caption*{$\alpha=\pi/4$}
		\label{}
	\end{subfigure}
	\begin{subfigure}[t]{0.3275\textwidth}
		\includegraphics[width=1\columnwidth]{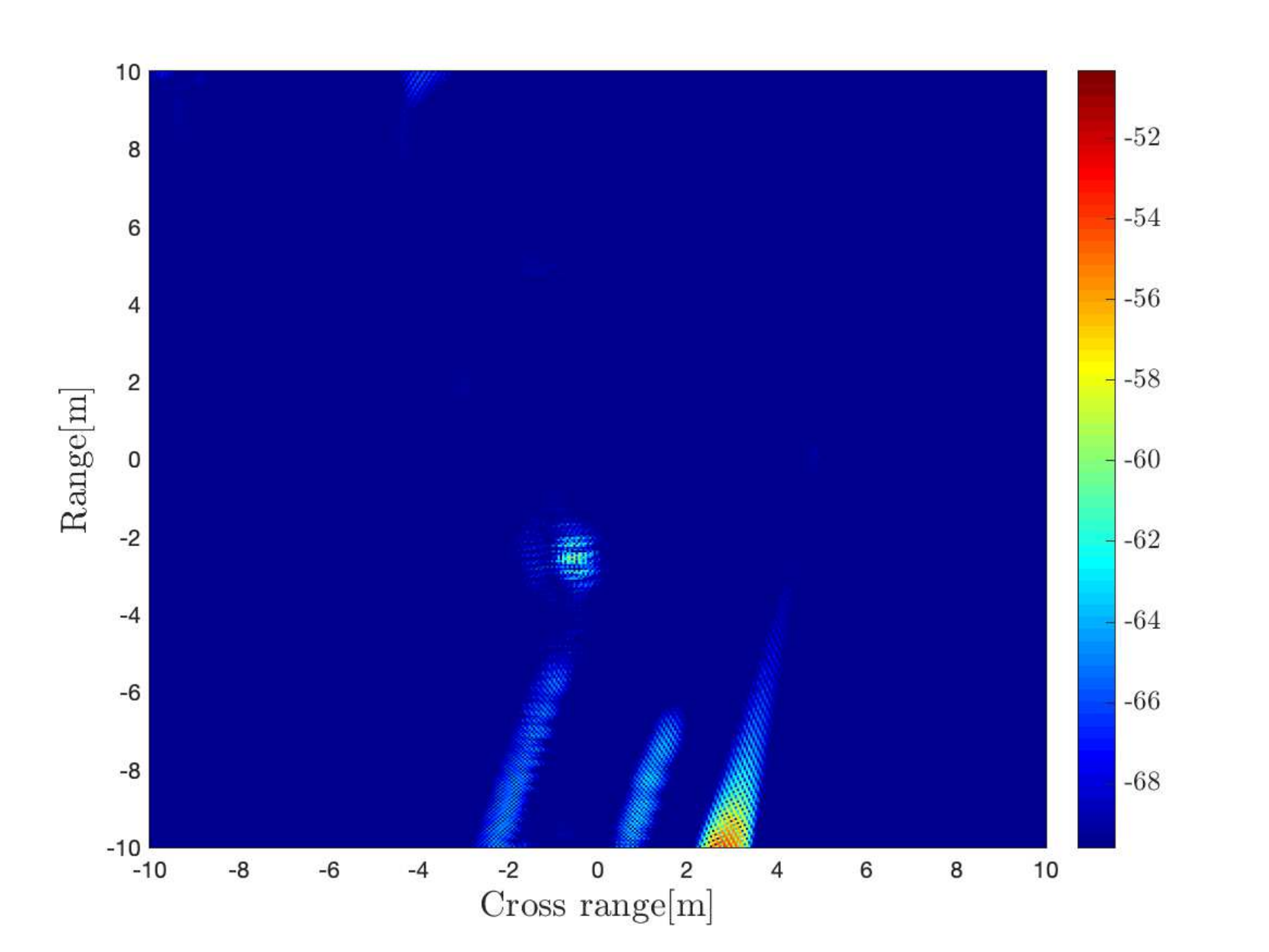}
		%\caption{} 
		\label{}
	\end{subfigure}
%{\vskip 21mm\hskip -18.2cm$\alpha=3\pi/8$\vskip -21mm}
	\begin{subfigure}[t]{0.3275\textwidth}
		\includegraphics[width=1\columnwidth]{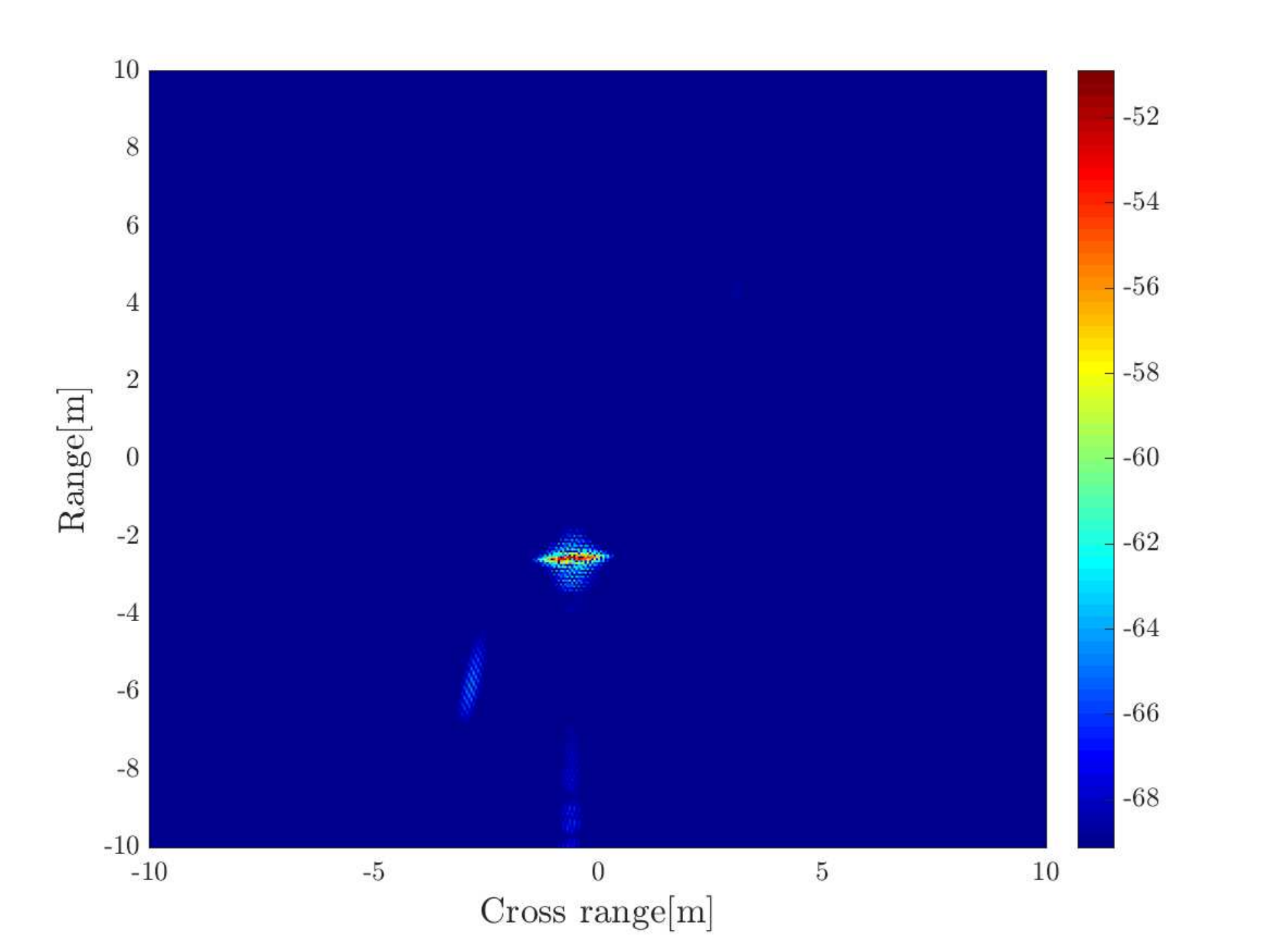}
		%\caption{} 
		\label{}
	\end{subfigure}
	\begin{subfigure}[t]{0.3275\textwidth}
		\centering
		\includegraphics[width=1\columnwidth]{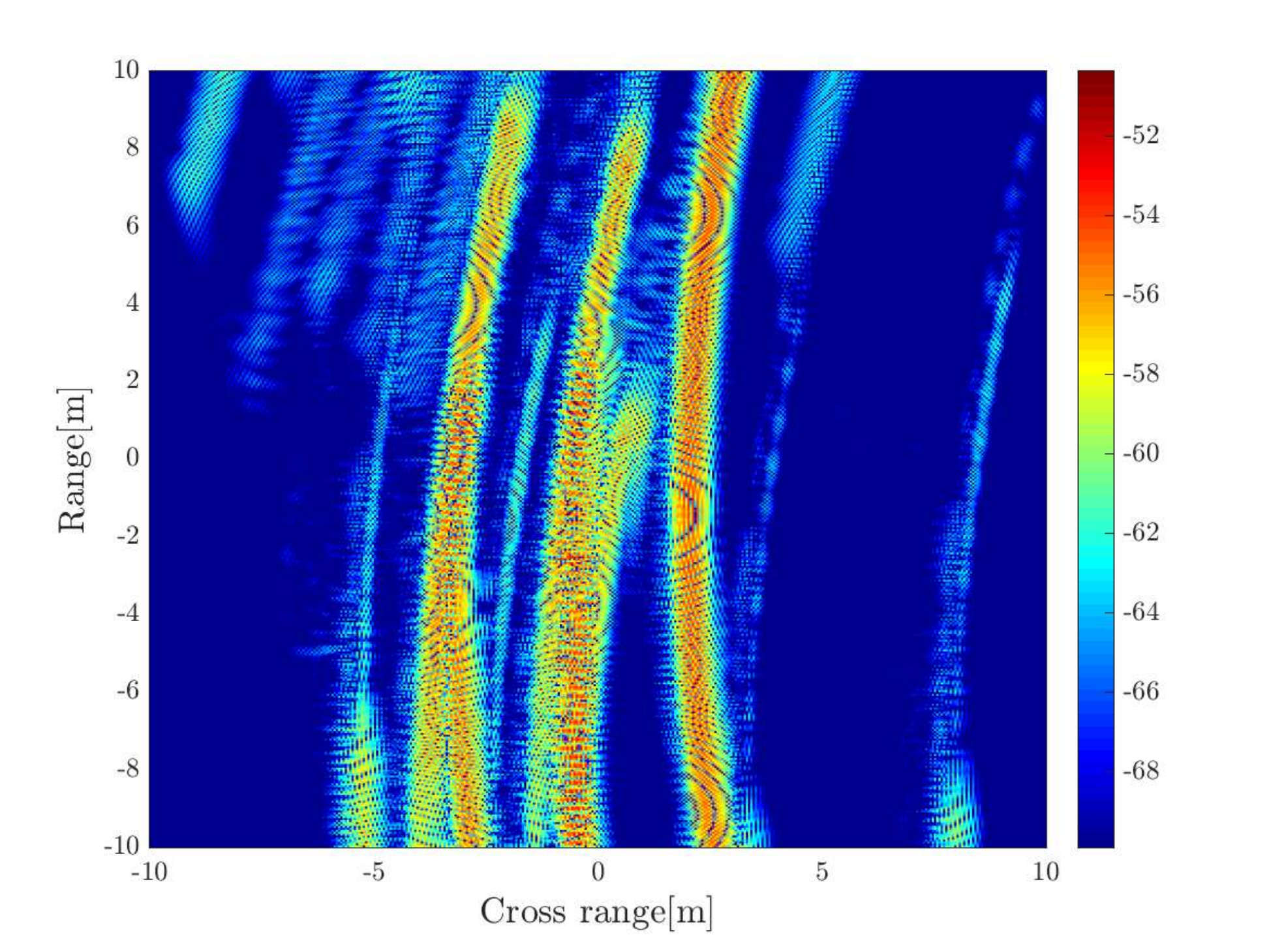}
		\caption*{$\alpha=3\pi/8$}
		\label{}
	\end{subfigure}
	\begin{subfigure}[t]{0.3275\textwidth}
		\includegraphics[width=1\columnwidth]{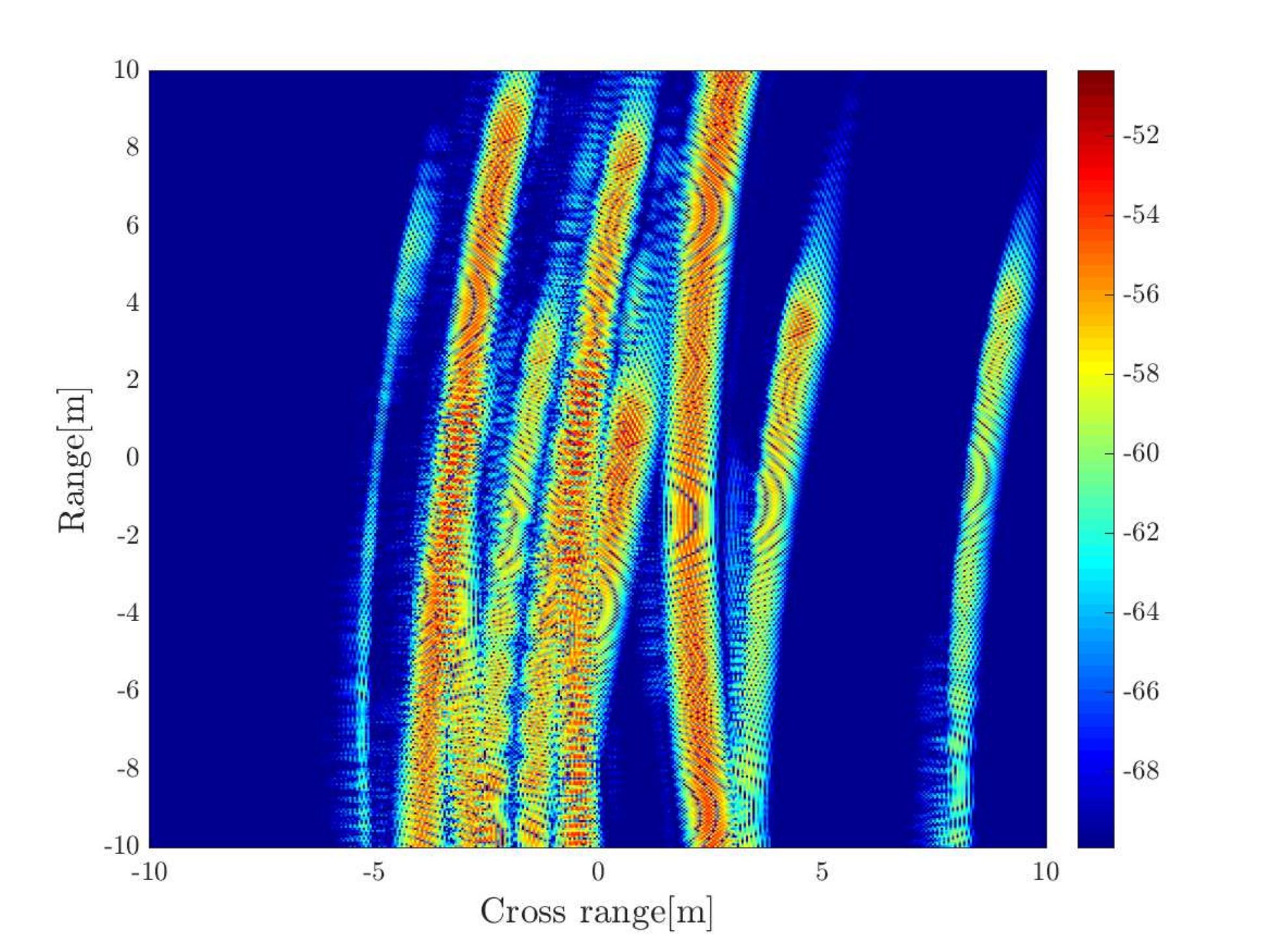}
		%\caption{} 
		\label{}
	\end{subfigure}
%{\vskip 21mm\hskip -18.2cm$\alpha=\pi/2$\vskip -21mm}
	\begin{subfigure}[t]{0.3275\textwidth}
		\includegraphics[width=1\columnwidth]{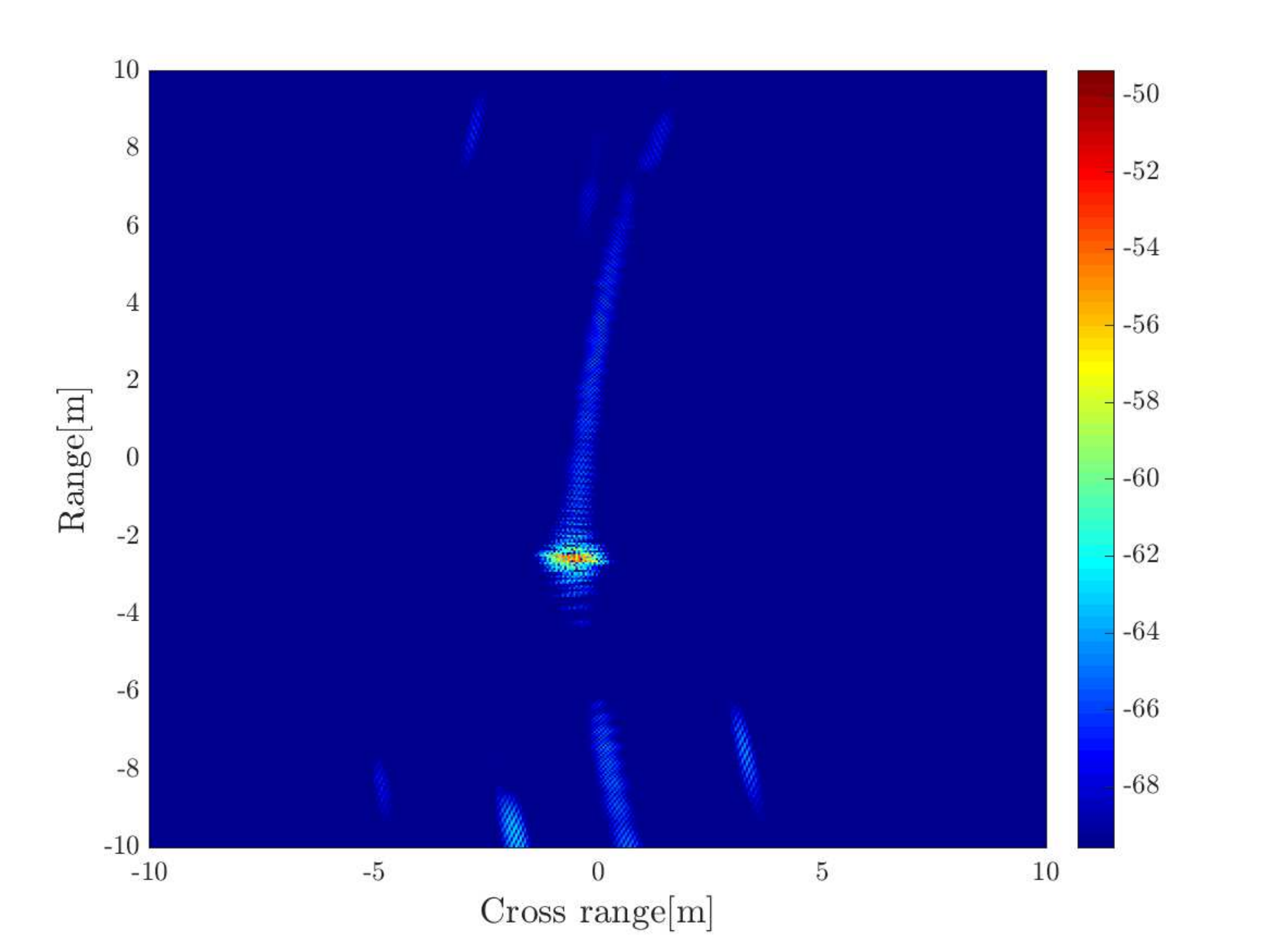}
		%\caption{} 
		%\label{}
	\end{subfigure}
	\begin{subfigure}[t]{0.3275\textwidth}
		\centering
		\includegraphics[width=1\columnwidth]{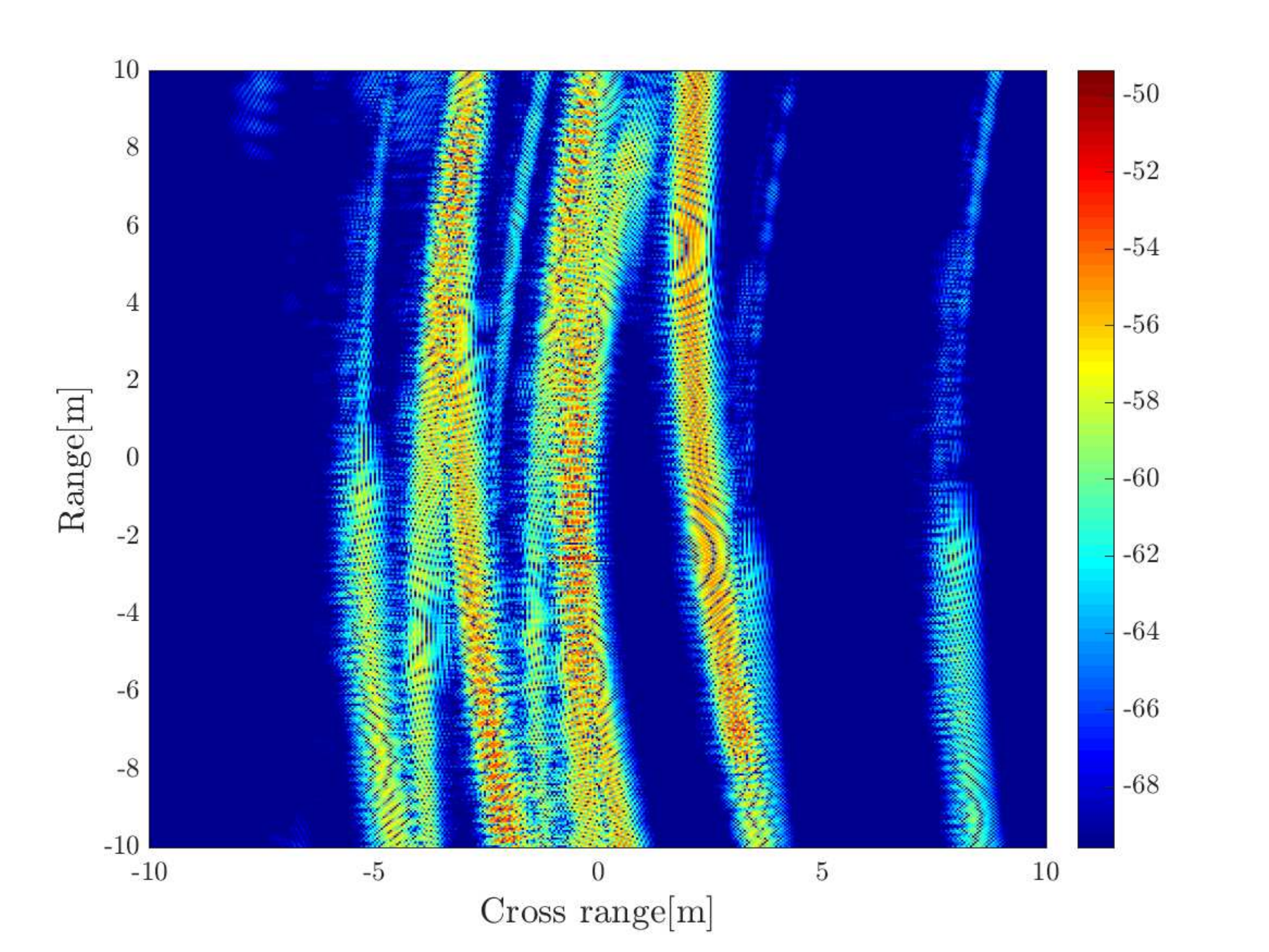}
		\caption*{$\alpha=\pi/2$}
		\label{}
	\end{subfigure}
	\begin{subfigure}[t]{0.3275\textwidth}
		\includegraphics[width=\textwidth]{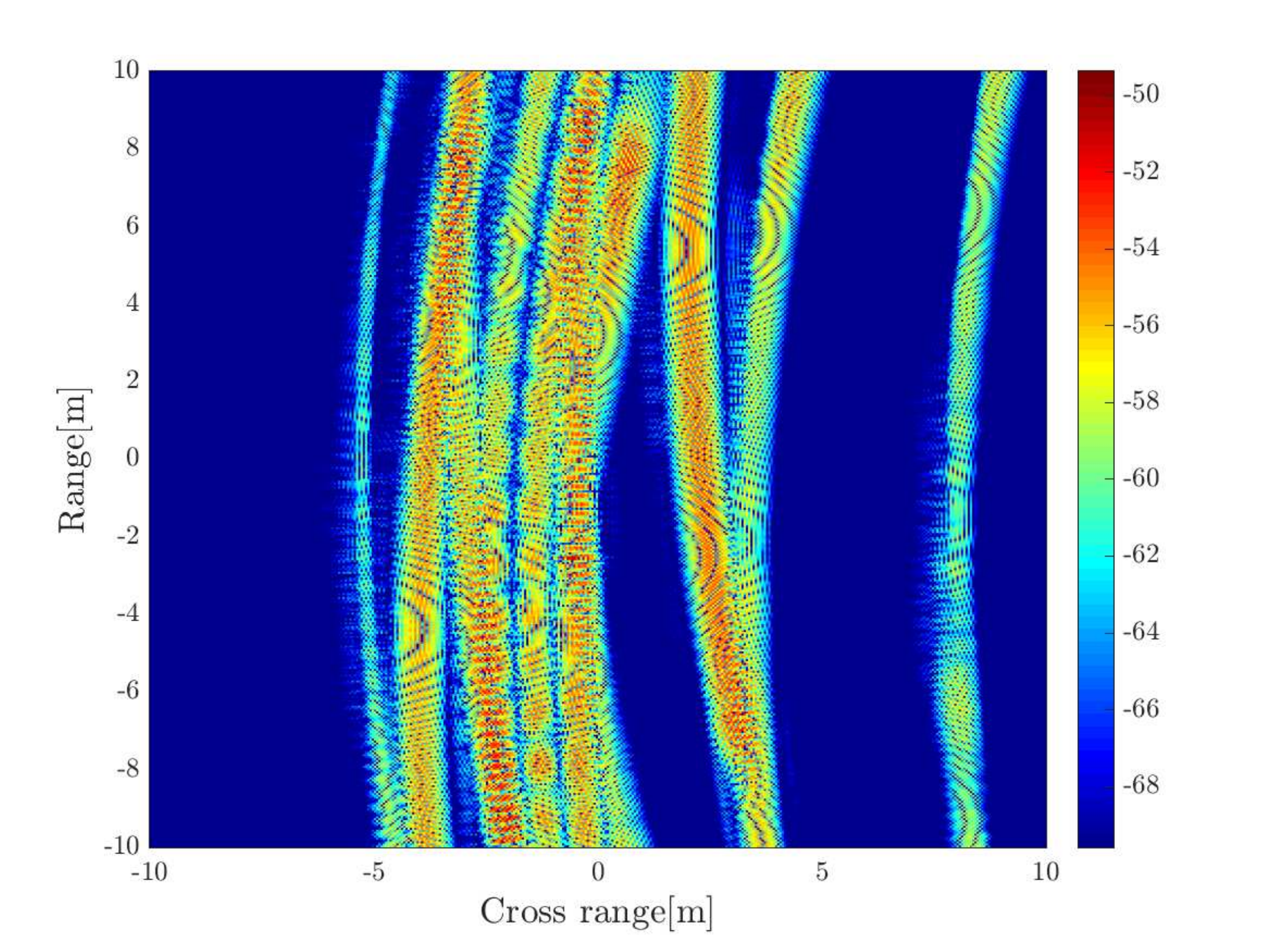}
		\label{}
	\end{subfigure}
\vskip 2mm
\begin{subfigure}[t]{0.3275\textwidth}
	\caption{} 
\end{subfigure}
\begin{subfigure}[t]{0.3275\textwidth}
	\caption{}
\end{subfigure}
\begin{subfigure}[t]{0.3275\textwidth}
	\caption{} 
\end{subfigure}
   \caption{Imaging results for the moving target for $\alpha=0,\pi/4,\pi/2$. $(a)$ using TRPCA ; $(b)$ decoupled RPCA $(c)$ Large aperture matrix RPCA. We observe that TRPCA 
   provides superior imaging results even for large angles ($\alpha=\pi/2$), whereas the other methods fail to suppress the strongly scattering background.}\label{fig:imaging_results}
\end{figure}
%%%%%%%%%%%%%%%%%%%%%%%%%%%%%%%%%%%%%%%%%%%%%%%%%%
\section{Conclusions} \label{sec:5}
In this paper, we considered the problem of motion detection in Synthetic Aperture Radar (SAR) systems. Our main focus was on separating the data originating from moving targets from those of the stationary background. This separation is necessary since SAR's fundamental assumption is that the objects to be imaged do not vary between successive acquisitions. 

We built upon the previously introduced Robust Principal Component Analysis (RPCA) algorithm, separating the data matrix $D$ into low-rank, $L$,  and sparse parts, $S$, associated with the background and the moving target, respectively.  Separation is achieved by solving a convex optimization problem that minimizes the objective $\|L\|_*+\eta\|S\|_1$. RPCA has been shown effective in detecting moving targets. However, there are limitations in terms of the speed and the direction of the moving target. In particular, matrix RPCA fails for targets that are slowly moving, or moving in a direction that is perpendicular to the direction of the SAR platform.

In order to achieve better separation, we have recasted the data matrix as a third order tensor $\mathcal{A}$, made of partially overlapping sub-apertures of the original aperture. We employed a specific extension of the nuclear norm, using the Fourier transform, to define a tensor version of RPCA.  We show with analysis and numerical simulations that this specific extension of the tensor nuclear norm, is well suited for the purpose of SAR motion detection, picking on the non linearity of the phase for moving targets at certain directions, while reducing the norm of the stationary background. Numerical simulations in the X-band SAR surveillance regime demonstrated the performance of TRPCA, especially in challenging cases, where motion separation is unattainable using matrix RPCA. 

This work provides a compelling example of a 2D problem that benefits from its reformulation in tensor form. We have shown that there is a natural specific extension of the nuclear norm for the tensor SAR data, which provides optimal results for the motion detection problem, picking up on features such as curvature, which are lost in the matrix representation.

%%%%%%%%%%%%%%%%%%%%%%%%%%%%%%%%%%%%%%%%%%%%%%%%%%
\section{Acknowledgements}
The work of M. Leibovich and G. Papanicolaou was partially supported by AFOSR FA9550-18-1-0519. The work of C. Tsogka was partially supported by AFOSR FA9550-17-1-0238 and AFOSR FA9550-18-1-0519.  

\newpage

\appendix
\setcounter{equation}{0}
\numberwithin{equation}{section}
\renewcommand{\theequation}{\Alph{section}.\arabic{equation}}
\section{The SAR data matrix}
\label{app:data_model}
We describe here our model for the SAR data matrix for a scene comprised of $N$ small point-like targets. We denote $\sigma_i$ the reflectivity of the  $i$th target and $\vrho_i(s)$ its location at slow time $s$.  Considering that the SAR platform emits a pulse $f(t)$, the down-ramped data are obtained by convolving the received echoes with $\overline{f(-t)}$. It is therefore as if the antenna emitted the pulse $f_p(t)$ defined as
\begin{equation}
f_p(t) = \int dt' f(t') \overline{f(t'-t)} .
\label{eq:f_p}
\end{equation}
Considering the free-space causal Green's function for the scalar wave equation
\begin{equation}
G(\vr,t,\vr',t')=\frac{\delta(\|\vr-\vr'\|-c(t-t'))}{4\pi\|\vr-\vr'\|}\mathbbm{1}_{t>t'},
\end{equation}
and assuming 
\begin{equation}
\|\vrho_i-\vrho_j\|\ll\|\vr(s)-\vrho_i\|,\forall i, s
\end{equation}
we can model the down-ramped SAR data matrix as 
\begin{equation} \label{eq:MOD1}
D_r(s,t)\propto\sum\limits_{i=1}^N\sigma_i  f_p(t - \Delta \tau_i(s)), %\cos(\omega_o(t-\Delta \tau_i(s)))f_B(t-\Delta_i(s)),
\end{equation}
i.e., the data are proportional to a superposition of pulses $f_p$ shifted by $\Delta \tau_i(s)$ and multiplied by $\sigma_i$, the reflectivity of each target.
 
The difference travel time $\Delta \tau_i(s)$ is the round trip travel time from the antenna location $\vr(s)$ to the target $\vrho_i(s)$ from which the round trip travel time from the antenna location $\vr(s)$ to the reference location $\vrho_o$ is subtracted to account for the range compression step of down-ramping,
\begin{equation}
\Delta \tau_i(s)=\frac{2(\|\vr(s)-\vrho_i(s)\|-\|\vr(s)-\vrho_o\|)}{c}.
\label{eq:Delta_tau}
\end{equation}
We used here the start-stop approximation which neglects the targets' displacement during the round trip travel time. This is justified in radar because the electromagnetic waves 
travel at the speed of light, which is many orders of magnitude larger than the speed of the targets and the platform.

It is common to assume that $f_p(t)$ consists of a base-band waveform
$f_B(t)$ modulated by a carrier frequency $\nu_o = \om_o/(2 \pi)$,
\begin{equation}
f_p(t) =  \cos(\om_o t) f_B(t).
\label{eq:1.1}
\end{equation}
Its Fourier transform is
\begin{eqnarray}
\hat f_p(\om) = \int d t \, f_p(t) e^{i \om t}
= \frac{1}{2} \left[\hat f_B(\om + \om_o) +
\hat f_B(\om - \om_o)\right].
\end{eqnarray}
Here $\hat f_B(\om)$ is supported in the interval $[-\pi B,\pi B]$, where
$B$ is the bandwidth while  $\hat f(\om)$ is supported in $[-\omega_o-\pi B,-\omega_o+\pi B] \cup [\omega_o-\pi B,\omega_o+\pi B]$.

The down-ramped SAR data matrix takes the form
\begin{equation}
D_r(s,t)=\sum\limits_{i=1}^N\sigma_i \cos(\omega_o(t-\Delta \tau_i(s)))f_B(t-\Delta \tau_i(s)),
\label{eq:Dr_time}
\end{equation}
and its Fourier transform is 
\begin{equation}
\hat{D}_r(s,\omega)=\sum\limits_{i=1}^N\frac{e^{-i\omega\Delta \tau_i(s)}}{2}\left[\hat{f}_B(\omega+\omega_o)+\hat{f}_B(\omega-\omega_o)\right].
\label{eq:Dr_freq}
\end{equation}

A lossless baseband transformation, introduced in \cite{leib2018RPCA}, is applied to the data, removing the carrier frequency $\omega_o$

\begin{equation}
\hat{D}_{r {\text {\tiny B}}}(s,\omega)=\hat{h}_{LP}(\omega)\hat{D}_o(s,\omega)= e^{-i(\omega-\omega_o) \Delta\tau(s)}\hat{f}_B(\omega).
\end{equation}

\begin{equation}
\hat{h}_{LP}(\omega)=\begin{cases} 2& |\omega|\le 2B\\ 0& |\omega|>\alpha B\end{cases},\quad{ \alpha B\ll \omega_o}.
\end{equation}
\subsection{Matrix robust principal component analysis}
\label{subsec:Matrix_rpca}
To produce good imaging results, one needs to detect and separate the echoes corresponding to moving targets from the ones of the complex background.

We follow here the approach proposed in \cite{borcea2013synthetic}, where the robust principal component analysis was used for separating the data matrix $\cD$ as in \eqref{eq:MOD6} in two subsets: the echoes due to stationary targets that form the low rank part of the data matrix and the moving targets echoes which constitute the sparse part. RPCA consists of solving the following convex optimization problem

\begin{equation}
\begin{split}
& \min_{L,S\in\mathbb C^{n_1\times n_2}}
\quad ||L||_*+ \eta ||S||_1\\
& \text{subject to} \quad L+S=D. 
\end{split}
\label{pca2_2}
\end{equation}
The idea is that the rank of a matrix, which is a non convex objective, can be relaxed to the nuclear norm, that is the sum of singular values, while the element wise $\ell_1$ norm, promotes sparsity. 
The parameter $\eta$ balances the ratio between the nuclear norm of $L$ and the $\ell_1$-norm of $S$. The recommended value for $\eta$ proposed in \cite{candesRPCA} and used for SAR data in \cite{borcea2013synthetic} is  
\begin{equation}
\eta = \frac{1}{\sqrt{\max
		\{n_1, n_2\}}}.
\label{eq:eta}
\end{equation}
It was shown in \cite{borcea2013synthetic} that with this choice for $\eta$ the RPCA algorithm is sensitive to the window size of the data.

As was shown in \cite{leib2018RPCA}, an optimal choice of $\eta$ exists for which the separation is robust for the SAR problem. By evaluating the different norms for stationary and moving objects, one can find a range of admissible values for $\eta$, and an optimal one can be derived for a target moving at a velocity $v_t$,
\begin{equation}
\eta^*=\sqrt{\frac{\Delta s B\Delta t}{4 S(a)\sqrt{\pi}}}   \sqrt{\frac{ 1}{\sqrt{\frac{ N(\vec{\pmb{v}}_t)B\Delta t}{2\sqrt{\pi}}+\frac{1}{2}}} \frac{\frac{\sqrt{2}N(\vec{\pmb{v}}_t)B \Delta t}{\pi}+1}{2}},
\label{eq:eta_opt_mat}
\end{equation}
where, $N(\vec{\pmb{v}}_t)$ is the column support of the moving target, to leading order,
\begin{equation}
N(\vec{\pmb{v}}_t)\approx\frac{4S(a) }{\Delta t }\frac{1}{\|\vr(0)-\vrho_o(0)\|}(\vr(0)-\vrho_o(0))\cdot \frac{v_t}{c},
\label{eq:Nvt}
\end{equation}
and $S(a)$ is the total slow-time aperture size.
In \cite{leib2018RPCA}, it is also shown, that the separation improves as $N(\vec{\pmb{v}}_t)$ increases.
   \section{Tensor representation and decomposition}
   \label{app:tensor_decomp}
Matrix decomposition methods such as SVD, Principal Component Analysis (PCA), Non negative Matrix Factorization (NMF), have enjoyed tremendous success in data science, and have proved to be an essential tool in processing high volume data, from dimensionality reduction to classification. 

Tensors are extension of vectors and matrices to higher orders. For example, a $p$-order tensor $\mathcal{A}$ is defined as
\begin{equation}
\mathcal{A} \in \mathbb{C}^{n_1\times n_2 \cdots\times n_p}, \qquad a_{i_1,i_2,\cdots,i_p}\in \mathbb{C},\quad  i_k\in\{1,\cdots,n_k\}
\label{eq:ten_def}
\end{equation}

The use of tensors is appealing, since it generalizes the properties of matrix data structures, and allows for the representation of more complex patterns. 
\subsection{Tensor Decomposition methods}
Indeed, tensor representations and decomposition of data have been of great interest for several decades. They arise naturally in applications involving high dimensional data, historically in psychometrics \cite{tucker1966some}, chemometrics \cite{appellof1981strategies}, and more recently in statistics \cite{shashua2005non}, bioinformatics \cite{hore2016tensor}, finance \cite{jondeau2018moment}, signal processing \cite{cichocki2015tensor} and many more. 

A tensor decomposition is defined as a representation of the tensor as a sum of other tensors. For example, a rank-1 SVD decomposition would be of the form
\begin{equation}
a_{i_1,i_2,\cdots,i_p}=\sum\limits_{j=1}^r \sigma_j u^{(j,1)}_{i_1}\otimes u^{(j,2)}_{i_2}\cdots\otimes u^{(j,p)}_{i_p}, \quad u^{(j,k)}\in \mathbb{R}^{n_k}, \|u^{(j,k)}\|_2=1,
\label{eq:ten_dec}
\end{equation}
where $\{u^{(\cdot,k)}\}$ are an orthogonal basis of $\mathbb{R}^{n_k}$.

We can see that the regular SVD is a particular case of this decomposition
\begin{equation}
a_{i_1,i_2}=\sum\limits_{j=1}^r \sigma_j u^{(j)}_{i_1}v^{(j)*}_{i_2},
\end{equation}
identifying $u^{(j)}=u^{(j,1)},v^{(j)}=u^{(j,2)*}$.

However, some of the properties of matrices do not carry on  to higher dimensional objects.
Decompositions of the form of \eqref{eq:ten_dec} do not exist for general higher order tensors (there are multiple counter examples). Notice that the orthogonality requirement requires $r\le \min\limits_k n_k$, which does not hold in general. Thus, a need arises for other extensions of matrix decomposition. 

Some extensions of matrix decomposition such as Tucker Higher Order Singular Value Decomposition (HOSVD), Canonical Polyadic (CPD), or Tensor Train (TT) have enjoyed applicability \cite{kolda2009tensor}, but they involve a more complex structure than \eqref{eq:ten_dec}. For example, in HOSVD \cite{de2000multilinear}, the singular values are replaced by a core tensor so that the representation is
\begin{equation}
\begin{split}
&a_{i_1,i_2,\cdots,i_p}=\sum\limits_{j_1=1}^{r_1}\cdots \sum\limits_{j_p=1}^{r_p} \sigma_{\mathbf{J}} \hspace{0.3em}u^{(j_1,1)}_{i_1}\otimes u^{(j_2,2)}_{i_2}\cdots\otimes u^{(j_p,p)}_{i_p}\\
& u^{(j_k,k)}\in \mathbb{R}^{n_k}, \quad \|u^{(j_k,k)}\|_2=1,\quad \mathbf{J}=\{j_1,\cdots,j_p\}
\label{eq:hosvd}
\end{split}
\end{equation}

Specifically, for our purposes, the notion of a rank of tensor, in the sense of a decomposition such as \ref{eq:ten_dec}, is not well understood, and its computation has been shown to be NP hard \cite{hillar2013most}.  

\subsection{Tensor nuclear norm  and its estimates}
RPCA relies on the nuclear norm as a relaxed rank estimate. Recall that for a matrix $A$  the nuclear norm is defined as
\begin{equation}
\|A\|_*=\sum\limits_{i=1}^r\sigma_i(A)=\max\limits_{\|X\|_\sigma\le1}\langle A,X\rangle,
\label{eq:nucnorm_def}
\end{equation} 
where
\begin{equation}
\langle A,X\rangle=\Tr(A^HX),
\end{equation}
is the regular matrix inner product and 
\begin{equation} \|X\|_\sigma= \max\limits_{\|u\|_2\le 1}\|Xu\|_2=\max\limits_{\|u\|_2,\|v\|_2\le1}\langle X,u\otimes v\rangle .
\end{equation}
is the matrix spectral norm (or two-norm) \cite{golub2012matrix}. 

The definition \eqref{eq:nucnorm_def} is also equivalent to 
\begin{equation}
\|A\|_*=\inf\left\{\sum\limits_{i=1}^r|\lambda_i| \hspace{0.5em}\Big| \hspace{0.5em}A=\sum\limits_{i=1}^r \lambda_i u_i\otimes v_i,\|u_i\|=\|v_i\|=1,r\in\mathbb{N}\right\}.
\label{eq:nuc_norm_matrix_alt}
\end{equation}

A possible extension of \eqref{eq:nuc_norm_matrix_alt} to higher order is  \cite{friedland2018nuclear}
\begin{equation}
\|\mathcal{A}\|_{*,\mathcal T}=\inf\left\{\sum\limits_{i=1}^r|\lambda_i| \hspace{0.5em}\Big| \hspace{0.5em}\mathcal{A}=\sum\limits_{i=1}^r \lambda_i u^1_i \otimes u^2_i\otimes\cdots \otimes u_i^d,\|u_i^j\|=1,r\in\mathbb{N}\right\}.
\label{eq:nuc_norm_matrix_alt_ten}
\end{equation}
Notice that $u_{i}^{j_1},u_{i}^{j_2}$ need not be orthogonal for $j_1\neq j_2$. 

Definition \eqref{eq:nuc_norm_matrix_alt_ten} retains many of the favorable properties of the matrix nuclear norm. Specifically, it is the dual norm of the tensor spectral norm.  

Define for two tensors $\mathcal{A,B}\in\mathbb{R}^{n_1\times n_2\times\cdots\times n_d}$ their inner product as
\begin{equation}
\langle \mathcal{A,B}\rangle=\sum\limits_{i_1=1}^{n_1}\sum\limits_{i_2=1}^{n_2}\cdots\sum\limits_{i_d=1}^{n_d}a^*_{i_1,i_2,\dots,i_d}b_{i_1,i_2,\dots,i_d}
\label{eq:ten_inner}
\end{equation}
The spectral norm is naturally defined via
\begin{equation}
\|\mathcal{A}\|_\sigma=\max\limits_{\|x_k\|_2\le 1,k=1,\cdots,d}\langle \mathcal{A},x_1\otimes x_2\otimes\cdots\otimes x_d\rangle
\end{equation}
and the nuclear norm is then defined as its dual
\begin{equation}
\|\mathcal{A}\|_{*,\mathcal T}=\max\limits_{\|\mathcal{X}\|_\sigma\le 1}\langle \mathcal{A,X}\rangle.
\label{eq:ten_nuc_norm_def}
\end{equation}

These definitions reduce to the usual matrix definition in two dimensions. However, the computation of both spectral and nuclear norm has been shown to be NP-hard problems in general \cite{friedland2018nuclear}.

Commonly used tensor decompositions do not necessarily provide natural extensions for the nuclear norm, or approximate bounds for it. For example,  we cannot take the core tensor of \eqref{eq:hosvd} as a nuclear norm estimate, since $\sum\limits_\mathbf{J}|\sigma_{\mathbf{J}}|$  does not obey the triangle inequality, and hence is not a norm (or a convex objective). We thus look for other extensions of singular values to higher dimensions, presented in the main paper.
\section{Proof of Theorem~\ref{theorem_block} and Corollary~\ref{col:cor_1}}
We will first prove Theorem~\ref{theorem_block} that we recall next. 
\begin{theorem*}
For matrices $A_1,A_2,\dots,A_k,\quad A_i\in\mathbb{C}^{m\times n_i}$,  for the matrix  
\begin{equation}
\pmb{A}=[A_1,A_2,\dots,A_k]\in \mathbb{C}^{m\times N},\quad N=\sum\limits_{i=1}^k n_i,
\end{equation}
the following inequalities hold
\begin{equation}
\left(\sum\limits_{i=1}^k\|A_i\|_*^2\right)^{1/2}\le	\|\pmb{A}\|_*\le \sum\limits_{i=1}^k\|A_i\|_*
%	\label{eq:nuc_norm_bounds}
\end{equation}
\end{theorem*}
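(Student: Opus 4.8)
The plan is to prove the two inequalities separately, both relying on the variational (dual) characterization of the nuclear norm, $\|M\|_* = \max_{\|X\|_\sigma \le 1} \langle M, X\rangle = \max_{\|X\|_\sigma \le 1} \Re\,\Tr(M^H X)$, together with the fact that $\|A_i\|_*$ equals $\Tr(A_i^H X_i^\star)$ for an optimal dual matrix $X_i^\star$ with $\|X_i^\star\|_\sigma \le 1$ (in fact $X_i^\star = U_i V_i^H$ from the SVD $A_i = U_i \Sigma_i V_i^H$).

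\textbf{Upper bound.} For the inequality $\|\pmb A\|_* \le \sum_i \|A_i\|_*$, the cleanest route is subadditivity of the nuclear norm under the natural block embedding: write $\pmb A = \sum_{i=1}^k \widetilde A_i$, where $\widetilde A_i \in \mathbb C^{m \times N}$ is the matrix that agrees with $A_i$ on the $i$-th block of columns and is zero elsewhere. Since padding with zero columns does not change singular values, $\|\widetilde A_i\|_* = \|A_i\|_*$, and the triangle inequality for $\|\cdot\|_*$ gives $\|\pmb A\|_* \le \sum_i \|\widetilde A_i\|_* = \sum_i \|A_i\|_*$. This direction is routine.

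\textbf{Lower bound.} For $\|\pmb A\|_* \ge \big(\sum_i \|A_i\|_*^2\big)^{1/2}$, the idea is to build a single good test matrix for the dual problem out of the individual optimal dual matrices. Let $A_i = U_i \Sigma_i V_i^H$ be thin SVDs, so $\|A_i\|_* = \Tr(\Sigma_i) = \langle A_i, U_i V_i^H\rangle$. Form the block matrix $X = [\,c_1 U_1 V_1^H,\ c_2 U_2 V_2^H,\ \dots,\ c_k U_k V_k^H\,] \in \mathbb C^{m\times N}$ for scalars $c_i \ge 0$ to be chosen. Then $\langle \pmb A, X\rangle = \sum_i c_i \|A_i\|_*$. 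The key computation is to bound $\|X\|_\sigma$: since $X X^H = \sum_i c_i^2 U_i V_i^H V_i U_i^H = \sum_i c_i^2 U_i U_i^H$ and each $U_i U_i^H$ is an orthogonal projection (hence $\preceq I$), we get $\|X\|_\sigma^2 = \|XX^H\|_\sigma \le \sum_i c_i^2$. Choosing $c_i = \|A_i\|_* / \big(\sum_j \|A_j\|_*^2\big)^{1/2}$ makes $\sum_i c_i^2 = 1$, so $\|X\|_\sigma \le 1$, and therefore
\begin{equation}
\|\pmb A\|_* \ge \langle \pmb A, X\rangle = \sum_i c_i \|A_i\|_* = \frac{\sum_i \|A_i\|_*^2}{\big(\sum_j \|A_j\|_*^2\big)^{1/2}} = \Big(\sum_i \|A_i\|_*^2\Big)^{1/2},
\end{equation}
which is the claimed lower bound.

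\textbf{Main obstacle.} The only genuinely delicate point is the spectral-norm estimate $\|X\|_\sigma \le \big(\sum_i c_i^2\big)^{1/2}$, i.e. the observation that the cross terms $U_i V_i^H V_j U_j^H$ for $i \ne j$ drop out of $XX^H$ because the column blocks of $\pmb A$ (and hence of $X$) sit in disjoint coordinate blocks, leaving only $\sum_i c_i^2 U_i U_i^H$. I would state this as a short lemma ("$\|[\,M_1,\dots,M_k\,]\|_\sigma^2 \le \sum_i \|M_i\|_\sigma^2$" for any column partition, which follows from $\|[M_1,\dots,M_k]\|_\sigma^2 = \|\sum_i M_i M_i^H\|_\sigma \le \sum_i \|M_i M_i^H\|_\sigma = \sum_i \|M_i\|_\sigma^2$) and apply it with $M_i = c_i U_i V_i^H$, noting $\|U_i V_i^H\|_\sigma = 1$. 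Everything else is bookkeeping. For Corollary~\ref{col:cor_1}, part~(b) the lower bound is already tight by the above computation once one checks $\|\pmb\beta\|_2 \|A\|_*$ is also an upper bound via subadditivity; part~(a) follows because when the $A_i$ have mutually orthogonal column spaces and total rank $\le m$, the block matrix $\pmb A$ has a single SVD whose singular values are the union of those of the $A_i$, so $\|\pmb A\|_* = \sum_i \|A_i\|_*$ directly.
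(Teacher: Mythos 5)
Your proof is correct and takes essentially the same route as the paper: the upper bound via the triangle inequality applied to the zero-padded blocks, and the lower bound via the dual characterization $\|\pmb{A}\|_*=\max_{\|\pmb{X}\|_\sigma\le 1}\langle\pmb{A},\pmb{X}\rangle$ combined with the block spectral-norm inequality $\|[Y_1,\dots,Y_k]\|_\sigma^2\le\sum_i\|Y_i\|_\sigma^2$. The only cosmetic difference is that you exhibit the optimal dual witness $X=[c_1U_1V_1^H,\dots,c_kU_kV_k^H]$ explicitly and certify $\|X\|_\sigma\le 1$ through the Gram matrix $XX^H=\sum_i c_i^2U_iU_i^H\preceq\bigl(\sum_i c_i^2\bigr)I$, whereas the paper proves the spectral-norm lemma by splitting the test vector into blocks and then performs the Cauchy--Schwarz optimization of $\sum_i y_i\|A_i\|_*$ over $\sum_i y_i^2\le 1$; both arguments are sound and land in the same place.
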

\begin{proof}
	Write 
\begin{equation}
\pmb{A}=[A_1,\underbrace{0,\dots,0}_{k-1 \text{ times} }]+[0,A_2,\underbrace{0,\dots,0}_{k-2 \text{ times} }]+\dots+[\underbrace{0,\dots,0}_{k-1 \text{ times} },A_k],
\end{equation}
and apply the triangle inequality to get the upper bound. 

The lower bound is attained following \cite{li2016bounds}. First, we can bound the spectral norm in the following way:
write
\begin{equation}
\pmb{x}^T=[y_1^T,y_2^T,\dots,y_k^T],\quad \pmb{x}\in\mathbb{C}^N,y_i\in\mathbb{C}^{n_i}
\end{equation}
\begin{equation}
\begin{split}
\|\pmb{A}\|_\sigma=\max\limits_{\pmb{x}\in\mathbb{C}^{N},\|\pmb{x}\|_2^2\le 1}\|\pmb{A} \pmb{x}\|_2&=\max\limits_{y_i\in\mathbb{C}^{n_i},\sum\limits_{i=1}^k\|y_i\|_2^2	\le 1}\|\sum\limits_{i=1}^kA_iy_i\|_2\le \max\limits_{y_i\in\mathbb{C}^{n_i},\sum\limits_{i=1}^k\|y_i\|_2^2	\le 1}\sum\limits_{i=1}^k\|A_iy_i\|_2.
\end{split}
\end{equation}

Write $\xi_i=\|y_i\|_2,\quad y_i=\xi_i\tilde{y_i}$

Then
\begin{equation}
\begin{split}
\|\pmb{A}\|_\sigma&\le \max\limits_{\sum\limits_{i=1}^k\xi_i^2\le 1,\hspace{0.1em}\|\tilde{y_i}\|\le 1}\sum\limits_{i=1}^k\xi_i\|A_i\tilde{y_i}\|_2=\max\limits_{{\sum\limits_{i=1}^k}\xi_i^2\le 1}\sum\limits_{i=1}^k\xi_i\max\limits_{\|\tilde{y_i}\|\le 1}\|A_i\tilde{y_i}\|_2=\max\limits_{\sum\limits_{i=1}^k \xi_i^2\le 1} \sum\limits_{i=1}^k \xi_i\|A_i\|_\sigma=\left(\sum\limits_{i=1}^k\|A_i\|_\sigma^2\right)^{1/2}.
\end{split}
\end{equation}
The last result is achieved by noting that, identifying $u_i=\xi_i,v_i=\|A_i\|_\sigma$, we have for $v\in\mathbb{R}^k_+$
\begin{equation}
\max\limits_{u\in\mathbb{R}_+^k,\|u\|_2\le 1}\left\langle u,v\right\rangle=\left\langle \frac{v}{\|v\|_2},v\right\rangle= \|v\|_2=\left(\sum\limits_{i=1}^d v_i^2\right)^{1/2}
\end{equation}

The nuclear norm is defined as
\begin{equation}
\|\pmb{A}\|_*=\max\limits_{\|\pmb{X}\|_\sigma\le1}\langle \pmb{A},\pmb{X}\rangle=\max\limits_{\|\pmb{X}\|_\sigma\le1}\sum\limits_{i=1}^k\langle A_i,Y_i\rangle,\quad \pmb{X}=[Y_1,Y_2,\dots,Y_k]\in\mathbb{C}^{m\times N}, Y_i\in\mathbb{C}^{m\times n_i}.
\end{equation}
From the bound on the spectral norm we have 
\begin{equation}
\left\{\pmb{X}\in\mathbb{C}^{m\times N}\Big|\sum\limits_{i=1}^k\|Y_i\|_\sigma^2\le 1\right\}\subseteq \left\{\pmb{X}\in\mathbb{C}^{m\times N}\Big| \|X\|_\sigma\le 1\right\}.
\end{equation}

Hence,
\begin{equation}
\begin{split}
\|\pmb{A}\|_*=\max\limits_{\|\pmb{X}\|_\sigma\le1}\sum\limits_{i=1}^k\langle A_i,Y_i\rangle\ge \max\limits_{\sum\limits_{i=1}^k\|Y_i\|_\sigma^2\le1}\sum\limits_{i=1}^k\langle A_i,Y_i\rangle
\end{split}
\end{equation}
Define $y_i=\|Y_i\|_\sigma,\quad Y_i=y_i\tilde{Y_i}$.

Then,
\begin{equation}
\begin{split}
\|\pmb{A}\|_*\ge &\max\limits_{\sum\limits_{i=1}^k y_i^2\le1,\hspace{0.1em}\|\tilde{Y_i}\|_\sigma\le1 }\sum\limits_{i=1}^ky_i\langle A_i,\tilde{Y_i}\rangle=\max\limits_{\sum\limits_{i=1}^k y_i^2\le1}\sum\limits_{i=1}^ky_i \max\limits_{\|\tilde{Y_i}\|_\sigma\le1}\langle A_i,\tilde{Y_i}\rangle
=\max\limits_{\sum\limits_{i=1}^k y_i^2\le1}\sum\limits_{i=1}^ky_i\|A_i\|_*=\left(\sum\limits_{i=1}^k\|A_i\|_*^2\right)^{1/2}.
\end{split}
\end{equation}
\end{proof}

We next prove Corollary~\ref{col:cor_1} that we recall first. 
\begin{corollary*}
	For $A_1,\dots,A_k$ in \eqref{eq:nuc_norm_mat}
	\begin{enumerate}[(a)]
	\item If all the matrices are mutually orthogonal  
	\begin{equation}
	\sum\limits_{i=1}^k \text{rank}(A_i)\le m, \quad A_i^HA_j=0 , \quad \forall i\neq j,
	\end{equation}
	the upper bound of \eqref{eq:nuc_norm_bounds} is attained.
	%\label{col:cor_1_b}
	\item If $A_i=\beta_iA, \hspace{0.5em} \beta_i\in\mathbb{C}$, then the lower bound of \eqref{eq:nuc_norm_bounds} is attained and
	\begin{equation}
	\|\pmb{A}\|_*=\|\pmb{\beta}\|_2\|A\|_*,\quad \|\pmb{\beta}\|_2=\left(\sum\limits_{i=1}^k|\beta_i|^2\right)^{1/2}.
	\end{equation}	
	\end{enumerate}
	\label{}
	\nonumber
\end{corollary*}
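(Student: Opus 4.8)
The plan is to prove both statements by exhibiting an explicit singular value decomposition of the block matrix $\pmb{A}$ and reading off its nuclear norm; the two-sided estimate of Theorem~\ref{theorem_block} then identifies these as the extremal cases, so in each part it suffices to verify the inequality opposite to the claimed equality.

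For part (a), I would start from reduced SVDs $A_i=U_i\Sigma_i V_i^H$, with $U_i\in\mathbb{C}^{m\times r_i}$ and $r_i=\operatorname{rank}(A_i)$. The hypothesis $A_i^HA_j=0$ for $i\ne j$ says precisely that the column spaces $\operatorname{range}(U_i)$ are mutually orthogonal — which already forces $\sum_i r_i\le m$ — so the horizontally stacked matrix $\widetilde U=[U_1,\dots,U_k]$ has orthonormal columns. Setting $\widetilde\Sigma=\operatorname{blockdiag}(\Sigma_1,\dots,\Sigma_k)$ and taking $\widetilde V=\operatorname{blockdiag}(V_1,\dots,V_k)$, whose columns are orthonormal because the $V_i$ occupy disjoint row blocks, one checks that $\widetilde U\widetilde\Sigma\widetilde V^H=[\,U_1\Sigma_1V_1^H,\dots,U_k\Sigma_kV_k^H\,]=\pmb{A}$, which is a bona fide SVD. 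Hence $\|\pmb{A}\|_*=\operatorname{tr}\widetilde\Sigma=\sum_i\operatorname{tr}\Sigma_i=\sum_i\|A_i\|_*$, the upper bound in \eqref{eq:nuc_norm_bounds}. (Equivalently: under the hypothesis $\pmb{A}^H\pmb{A}$ is block diagonal with diagonal blocks $A_i^HA_i$, so the singular values of $\pmb{A}$ are, with multiplicity, the union of those of the $A_i$.)

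For part (b), I would write $A=U\Sigma V^H$ and factor $\pmb{A}=[\beta_1A,\dots,\beta_kA]=U\Sigma W^H$ with $W^H=[\beta_1V^H,\dots,\beta_kV^H]$. Then $W^HW=\big(\sum_i|\beta_i|^2\big)V^HV=\|\pmb{\beta}\|_2^2\,I$, so $W/\|\pmb{\beta}\|_2$ has orthonormal columns and $\pmb{A}=U\big(\|\pmb{\beta}\|_2\Sigma\big)\big(W/\|\pmb{\beta}\|_2\big)^H$ is an SVD; therefore $\|\pmb{A}\|_*=\|\pmb{\beta}\|_2\operatorname{tr}\Sigma=\|\pmb{\beta}\|_2\|A\|_*$. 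Since $\|A_i\|_*=|\beta_i|\,\|A\|_*$, this equals $\big(\sum_i\|A_i\|_*^2\big)^{1/2}$, i.e., the lower bound of \eqref{eq:nuc_norm_bounds} is attained, as asserted.

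The steps are routine linear-algebra bookkeeping; the only point that genuinely requires care is checking that the candidate factorizations really are singular value decompositions — i.e., orthonormality of the columns of $\widetilde U$ and $\widetilde V$ in part (a), and of $W/\|\pmb{\beta}\|_2$ in part (b). This is exactly where the orthogonality hypothesis (with its consequence $\sum_i r_i\le m$) and the rank-one structure of the coefficient pattern $(\beta_i)$ are used. No ingredient beyond Theorem~\ref{theorem_block} is needed.
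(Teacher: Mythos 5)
Your proof is correct, and both parts land on the same underlying fact the paper uses: in each extremal case the singular values of $\pmb{A}$ can be computed exactly, so the nuclear norm is read off directly rather than merely bounded. Part (a) is essentially the paper's argument — you even note the equivalent formulation that $\pmb{A}^H\pmb{A}$ is block diagonal with blocks $A_i^HA_i$, which is precisely how the paper phrases it; your explicit factorization $\widetilde U\widetilde\Sigma\widetilde V^H$ is just a constructive repackaging, with the bonus observation that the rank condition $\sum_i r_i\le m$ is already forced by $A_i^HA_j=0$ rather than being an independent hypothesis. Part (b) genuinely diverges in mechanics: the paper writes $\pmb{A}^H\pmb{A}=\mathcal{B}\otimes A^HA$ with $\mathcal{B}_{ij}=\beta_i^*\beta_j$ a rank-one Hermitian matrix, diagonalizes $\mathcal{B}=T\Lambda T^H$, and block-diagonalizes via $T\otimes\mathbb{I}$ to conclude $\sigma_j(\pmb{A})=\|\pmb{\beta}\|_2\sigma_j(A)$, whereas you exhibit the SVD $\pmb{A}=U\bigl(\|\pmb{\beta}\|_2\Sigma\bigr)\bigl(W/\|\pmb{\beta}\|_2\bigr)^H$ directly. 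Your route is shorter and avoids the Kronecker-product eigenvalue bookkeeping; the paper's route makes the rank-one structure of the coefficient pattern $(\beta_i)$ more visible as the reason all but one block of singular values vanish. Both are complete; neither needs Theorem~\ref{theorem_block} except to identify which bound has been attained.
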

\label{app:cor_proof}
\begin{proof}
\begin{enumerate}[(a)]
	\item In this case $\pmb{A}^H\pmb{A}$ is block diagonal
	\begin{equation}
	\pmb{A}^H\pmb{A}=\begin{pmatrix}A_1^HA_1&0&\cdots &0\\0&A_2^HA_2&\cdots&0\\
	&&\ddots&\\0&&&A_k^HA_k\end{pmatrix},
	\end{equation}
	and the result is trivial, since
	\begin{equation}
	\|\pmb{A}\|_*=\sum\limits_{i=1}^N \left(\lambda_i(\pmb{A}^H\pmb{A})\right)^{1/2}=\sum\limits_{i=1}^k \sum\limits_{j=1}^{n_i}\left(\lambda_j(A_i^HA_i)\right)^{1/2}=\sum\limits_{i=1}^k\|A_i\|_*.
	\end{equation}
	\item
	Let us look at $\pmb{A}^H\pmb{A}$
	\begin{equation}
	\pmb{A}^H\pmb{A}=\begin{pmatrix}|\beta_1|^2A^{H}A&\beta_1^*\beta_2 A^HA&\cdots &\beta_1^*\beta_{k}A^HA\\\beta_2^* \beta_1A^HA&|\beta_2|^2A^HA&\cdots&\cdots\\
	&&\ddots&\\ \beta_{k}^* \beta_1A^HA&&&|\beta_{k}|^2A^HA\end{pmatrix}=\mathcal{B}\otimes A^HA,
	\end{equation}
\end{enumerate}
where $\mathcal{B}_{ij}=\beta_i^*\beta_j$, and $\otimes$ is the tensor product
\begin{equation}
A\in\mathbb{R}^{m\times n},B\in\mathbb{R}^{p\times 	qy},\quad C=B\otimes A =\begin{pmatrix} b_{11}A&b_{12}A&\cdots&b_{1q}A\\
b_{21}A&b_{22}A&\cdots&a_{2q}A\\
\vdots&\vdots & &\vdots \\b_{p1}A&b_{p2}A&\cdots&b_{pq}A\end{pmatrix}\in\mathbb{R}^{mp\times nq}
\end{equation}

$\mathcal{B}$ has an eigenvalue $\|\pmb{\beta}\|_2^2=\left(\sum\limits_{i=1}^k|\beta_i|^2\right)$  with corresponding eigenvector $v_j=\beta_j^*$, and it is easy to see it is an Hermitian, rank one matrix. i.e.,
\begin{equation}
\mathcal{B}=T\Lambda T^H, \Lambda=\text{diag}(\|\pmb{\beta}\|_2^2,0,\dots,0).
\end{equation}
Thus,  $\pmb{A}^H\pmb{A}$ can be block-diagonalized by $T\otimes\mathbb{I}$, to get
\begin{equation}
(T\otimes\mathbb{I})^HA^HA(T\otimes\mathbb{I})=\begin{pmatrix}\|\pmb{\beta}\|_2^2A^{H}A&0&\cdots &0\\0&0&\cdots&0\\\vdots
&\vdots&&\vdots\\0&0&\cdots&0\end{pmatrix}.
\end{equation}
Hence $\sigma_j(\pmb{A})=\left(\lambda_j(\pmb{A}^H\pmb{A}))\right)^{1/2}=\|\pmb{\beta}\|_2\sigma_j(A)$, and $\|\pmb{A}\|_*=\|\pmb{\beta}\|_2\|A\|_*$.
\end{proof}

\section{Estimate of the the cross terms for two panels}
\label{app:cross_terms}
We assume an overall quadratic dependence of $\Delta \tau(s)$ on $s$,
\begin{equation}
\Delta\tau(s)=a+bs+cs^2.
\end{equation}
Since the total aperture is decoupled into many, small, sub-apertures, it is reasonable to approximate $\Delta\tau(s)$ as being linear in each panel, with
\begin{equation}
\Delta\tau_\ell(s)=\Delta\tau_\ell(s+\ell\theta)\approx a_\ell+b_\ell s ,\quad a_\ell=a+b\ell\theta +c\ell^2\theta^2, b_\ell=b+2c\ell \theta.
\end{equation}
The inner product of two columns is
\begin{equation}
\begin{split} \label{Dstat}
&\sum\limits_s e^{-\frac{B^2}{2}(t_j-\Delta\tau_\ell(s))^2}e^{i\omega \Delta \tau_\ell(s)}e^{-\frac{B^2}{2}(t_k-\Delta\tau_{\ell'}(s))^2}e^{-i\omega \Delta \tau_{\ell'}(s)}\\
=&e^{-\frac{B^2}{2}((t_j-a_\ell)^2+(t_k-a_{\ell'})^2)}e^{i\omega(a_\ell-a_{\ell'})}
\sum\limits_s e^{-\frac{B^2}{2}(-2(t_j-a_\ell )b_\ell s+b_\ell^2 s^2)}e^{i\omega b_\ell s}e^{-\frac{B^2}{2}(-2(t_k-a_{\ell'})b_{\ell'} s+b_{\ell'}^2 s^2)}e^{-i\omega b_{\ell'} s}\\
=&e^{-\frac{B^2}{2}((t_j-a_\ell)^2+(t_k-a_{\ell'})^2)}e^{i\omega(a_\ell-a_{\ell'})}
\sum\limits_s e^{-\frac{B^2}{2}((-2(t_j-a_{\ell})b_\ell -2(t_k-a_{\ell'})b_{\ell'}+i\frac{2\omega}{B^2}(b_\ell-b_{\ell'}))s+(b_\ell^2+b_{\ell'}^2)s^2)}\\
\approx& e^{-\frac{B^2}{2}((t_j-a_\ell)^2+(t_k-a_{\ell'})^2)}e^{i\omega(a_\ell-a_{\ell'})}\frac{1}{\Delta s}\int\limits_{s^-}^{s^+}e^{-\frac{B^2}{2}\phi(s)}ds\\
&\phi(s)\equiv (-2(t_j-a_{\ell})b_\ell -2(t_k-a_{\ell'})b_{\ell'}+i\frac{2\omega}{B^2}(b_\ell-b_{\ell'}))s+(b_\ell^2+b_{\ell'}^2)s^2
\end{split}
\end{equation}
For a quadratic form $\phi(s)=\alpha s^2+\beta s+\gamma$ stationary phase value is at $s^*=-\frac{\beta}{2\alpha}$, and $\phi(s^*)=\gamma-\frac{\beta^2}{4\alpha}$, which implies 
\begin{equation}
\begin{split}
s^*=\frac{(t_j-a_\ell)b_\ell+(t_k-a_{\ell'})b_{\ell'}}{b_\ell^2+b_{\ell'}^2}+i\frac{\omega(b_\ell-b_{\ell'})}{B^2(b_\ell^2+b_{\ell'}^2)},\\
\phi(s^*)=-\frac{((t_j-a_\ell)b_\ell+(t_k-a_{\ell'})b_{\ell'}-i\frac{\omega}{B^2}(b_\ell-b_{\ell'}))^2}{b_\ell^2+b_{\ell'}^2}.
\end{split}
\end{equation}
Plugging into \eqref{Dstat},  
we get for the real part of the argument of the exponent
\begin{equation}
\begin{split}
&-\frac{B^2}{2}\left[(t_j-a_\ell)^2+(t_k-a_{\ell'})^2-\frac{(t_j-a_\ell)^2 b_\ell^2-2(t_j-a_\ell) (t_k-a_{\ell'}) b_\ell b_{\ell'}+(t_k-a_{\ell'})^2 b_{\ell'}^2}{b_\ell^2+b_{\ell'}^2}+\frac{\omega^2(b_\ell-b_{\ell'})^2}{B^4(b_\ell^2+b_{\ell'}^2)}+i\cdots\right]\\
=&-\frac{B^2}{2}\frac{(b_{\ell'}(t_j-a_\ell)-b_\ell (t_k-a_{\ell'}))^2}{b_\ell^2+b_{\ell'}^2}-\frac{\omega^2}{2B^2}\frac{(b_\ell-b_{\ell'})^2}{b_\ell^2+b_{\ell'}^2}
\end{split}
\end{equation}
We see there is a suppression term, independent of the specific column indices, proportional to the difference in the slopes. Since $\omega\gg B$, a small difference is enough to create a large suppression- effectively making traces with different slopes orthogonal. We also see that a smaller overlap (larger $\theta$, and more distance between the panels improves the results).

%------------------
\bibliographystyle{abbrv} \bibliography{SBIR}
% ---------------------
 	
\end{document}